\documentclass[a4paper,10pt]{scrartcl}
\usepackage[latin1]{inputenc}
\usepackage{amsmath,amsthm}
\usepackage{amsfonts}
\usepackage{amssymb}
\usepackage{color}
\usepackage{a4wide}
\usepackage{graphicx}
\usepackage{mathabx}
\usepackage{mathrsfs}
\usepackage{subfigure}
\usepackage{pgfplots}
\usepackage{float}
\usepackage{tikz}
\usepackage[authoryear]{natbib}
\usepackage{bbm}
\usepackage{capt-of}
\usetikzlibrary{patterns}
\pgfplotsset{compat=newest}
\newlength\fheight \newlength\fwidth
\bibliographystyle{apalike}

\newcommand{\R}{\mathbb{R}}
\newcommand{\bk}{\mathbf{k}}
\newcommand{\bl}{\mathbf{l}}
\newcommand{\balpha}{\pmb{\alpha}}
\newcommand{\bxi}{\pmb{\xi}}
\newcommand{\bt}{\mathbf{t}}
\newcommand{\bz}{\mathbf{z}}
\newcommand{\bx}{\mathbf{x}}
\newcommand{\bj}{\mathbf{j}}
\newcommand{\by}{\mathbf{y}}
\newcommand{\bh}{\mathbf{h}}
\newcommand{\bs}{\mathbf{s}}

\newcommand{\bp}{\mathbf{p}}
\newcommand{\bq}{\mathbf{q}}
\newcommand{\be}{\mathbf{1}}
\newcommand{\tail}{\overline{\psi}}
\newcommand{\psf}{k}

\newcommand{\Leb}{\mathrm{Leb}}
\newcommand{\grid}{\log(n)^{\frac{1}{\gamma}}\log\log(n)}
\usepackage{booktabs}
\usepackage{multirow}

\renewcommand{\phi}{\varphi}

\begin{document}

\newtheorem{Theorem}{Theorem}
\newtheorem{Corollary}{Corollary}
\newtheorem{Lemma}{Lemma}
\newtheorem{Assumption}{Assumption}
\newtheorem{Remark}{Remark}
\newtheorem{Definition}{Definition}

\begin{center}
\begin{minipage}{.8\textwidth}
\centering 
\LARGE Multiscale scanning in inverse problems\\[0.5cm]

\normalsize
\textsc{Katharina Proksch}\footnotemark[1]\\[0.1cm]
\verb+kproksc@uni-goettingen.de+\\
Institute for Mathematical Stochastics, University of G\"ottingen\\[0.1cm]

\textsc{Frank Werner}\\[0.1cm]
\verb+frank.werner@mpibpc.mpg.de+\\
Max Planck Institute for Biophysical Chemistry, G\"ottingen, Germany\\[0.1cm]

\textsc{Axel Munk}\\[0.1cm]
\verb+munk@math.uni-goettingen.de+\\
Institute for Mathematical Stochastics, University of G\"ottingen\\
and\\
Felix Bernstein Institute for Mathematical Statistics in the Bioscience, University of G\"ottingen\\
and\\
Max Planck Institute for Biophysical Chemistry, G\"ottingen, Germany
\end{minipage}
\end{center}

\footnotetext[1]{Corresponding author}

\begin{abstract}
In this paper we propose a multiscale scanning method to determine active components of a quantity $f$ w.r.t. a dictionary $\mathcal{U}$ from observations $Y$ in an inverse regression model $Y=Tf+\xi$ with linear operator $T$ and general random error $\xi$. To this end, we provide uniform confidence statements for the coefficients $\langle \varphi, f\rangle$, $\varphi \in \mathcal U$, under the assumption that $(T^*)^{-1} \left(\mathcal U\right)$ is of wavelet-type. Based on this we obtain a multiple test that allows to identify the active components of $\mathcal{U}$, i.e. $\left\langle f, \varphi\right\rangle \neq 0$, $\varphi \in \mathcal U$, at controlled, family-wise error rate. Our results rely on a Gaussian approximation of the underlying multiscale statistic with a novel scale penalty adapted to the ill-posedness of the problem. The scale penalty furthermore ensures weak convergence of the statistic's distribution towards a Gumbel limit under reasonable assumptions. 
The important special cases of tomography and deconvolution are discussed in detail. Further, the regression case, when $T = \text{id}$ and the dictionary consists of moving windows of various sizes (scales), is included, generalizing previous results for this setting.
We show that our method obeys an oracle optimality, i.e. it attains the same asymptotic power as a single-scale testing procedure at the correct scale.
Simulations support our theory and we illustrate the potential of the method as an inferential tool for imaging. 
As a particular application we discuss super-resolution microscopy and analyze experimental STED data to locate single DNA origami.
\end{abstract}

\textit{Keywords:} multiscale analysis, scan statistic, ill-posed problem, deconvolution, super-resolution, Gumbel extreme value limit\\[0.1cm]

\textit{AMS classification numbers:} Primary 62G10, Secondary 62G15, 62G20, 62G32. \\[0.3cm]
	\section{Introduction}\label{Sec:Intro}
Suppose we have access to observations $Y_{\bj}$ which are linked to an unknown quantity $f \in \mathbb{H}_1$ via the inverse regression model
	\begin{align}\label{model}
	Y_{\bj}=Tf(\bx_{\bj})+\xi_{\bj}, \quad \bj\in I_{n}^d:=\{1,\ldots,n\}^d,\;d\in\mathbb{N}.
	\end{align}
	Here, $T : \mathbb{H}_1 \to \mathbb{H}_2\subset C \left[0,1\right]^d$ is a bounded linear operator acting between proper Hilbert spaces $\mathbb{H}_1$ and $\mathbb{H}_2$. In model \eqref{model}, $n$ stands for the level of discretization such that, more rigorously, the model reads $Y_{\bj,n}=Tf(\bx_{\bj,n})+\xi_{\bj,n}$ with triangular schemes of sampling points $\bx_{\bj}=\bx_{\bj,n}$ in the $d$-cube $\left[0,1\right]^d$ and independent, centered but not necessarily identically distributed random variables $\xi_{\bj}=\xi_{\bj,n},\; \bj\in I_n^d$. For ease of notation, this dependence on $n$ is suppressed whenever it is not relevant. Here and throughout the paper, bold print letters and numbers denote vectors and multi-indices, whereas scalars are printed in regular type face.
	
	Models of the kind \eqref{model} underly a plenitude of applied problems varying from astrophysics and tomography to cell biology \citep[see e.\,g.][]{os86,bbdv09} and have received considerable interest in the statistical literature. Most of research targets (regularized) estimation of $f$ and associated theory.
	An early approach for estimation is based on a singular value decomposition (SVD) of the operator, where $f$ is expanded in a series of eigenfunctions  of $T^*T$ \citep[see e.\,g.][]{Mair,jkpr04,RiskHull,BissantzSiam,Kerkyacharian2010,jp14,Scherzer2016}. Given a proper choice of the regularization parameter, SVD-based estimators are well-known to be minimax optimal \citep{Johnstone}. Adaptive estimation in this context was studied, e.\,g. by \cite{Goldenshluger1999,Tsybakov,cglt03,cg14}.  Since in SVD-based estimation the basis for the expansion is entirely defined by the operator, as an alternative, wavelet-based methods which incorporate the properties of the function of interest have also been frequently employed. Examples are wavelet-vaguelette \citep{Donoho1995} and vaguelette-wavelet methods  \citep{Abramovich}, where $f$ and $Kf$ are expanded in a wavelet and vaguelette basis or vice versa, and the coefficients are estimated by proper thresholding. This allows for a natural adaptation to the local smoothness of the unknown function \citep[see e.\,g.][]{ct02}. Related to this, \cite{Reiss2004} proposed an adaptive estimator based on a combination of linear Galerkin projection methods and adaptive wavelet thresholding. Besides of these selective references a vast amount of work has been devoted to recovery of $f$ during the last decades and the common ground of all these works is that the ill-posedness of an inverse problem usually only gives poor (minimax) rates for estimation and makes full recovery of $f$ a very difficult problem in general (in the setup of \eqref{model} see, e.\,g. \citet{w09} or for deconvolution, see, e.\,g., the monograph by \citet{Meister} and the references given there).
	
	A possibility to deal with this intrinsic difficulty is to relax the ambitious goal of recovering the entire function $f$.
	Indeed, in many applications, only certain properties or aspects of $f$ are of primary interest and a full, precise reconstruction is not necessary any more. Examples of practical relevance are the detection and localization of \textquotedblleft hot spots\textquotedblright\, in astrophysical image analysis \citep{SourceDetection}, functional magnetic resonance imaging \citep{sdt09}, non-destructive testing \citep{Kazantsev2002791}, and image deformation in microscopy \citep{bchm09}, to mention a few. For a theoretical account in deconvolution see \citep{ButuceaCompte}. In a similar spirit, the detection of certain geometric shapes in image analysis has been studied by \citet{Filaments}, but the authors do not take into account the underlying inverse problem. All these issues  can be treated by means of statistical testing, presumably a simpler task than estimation.
	
	In contrast to estimation, hypothesis testing in inverse problems has been investigated much less, early references are \citet{b07,Holzmannetal07}. \citet{iss12} treat the problem of testing $f= 0$ against $f \in \Theta_q \left(r\right)$ where $\Theta_q\left(r\right)$ is a suitable smoothness class restricted to $\left\Vert f\right\Vert\geq r$ by means of the classical minimax testing approach (see e.g. the series of papers by \cite{i93}). Also \citet{llm11,llm12} follow this path and investigate the differences and commonalities of testing in the image space ($Tf = 0$) and the preimage space ($f = 0$). The authors prove that in several situations it does not matter if first $f$ is approximately reconstructed using an SVD-based regularization method and then tested to be $0$, or if $Tf$ is directly tested to be $0$, see also \citet{Holzmannetal07} for a similar observation. More precisely, minimax testing procedures for one of these problems are also minimax for the rephrased problem and the asymptotic detection boundary for both testing problems coincides. For related results in the multivariate setting or for more general regularization schemes see \citet{ilm14,mm14}. In contrast to the problem treated here, in all these studies only ``global'' features of the full signal are investigated, such as testing that the full signal is zero, and no simultaneous inference on sub-structures of the signal is targeted. In fact, this is a much more challenging task in an inverse problems setup and it turns out also to be substantially different to the corresponding direct testing problem of \textquotedblleft hot spot\textquotedblright\, detection. This will be the topic of this paper.
	
	\smallskip
	
	In direct problems ($T=\mathrm{id}$ in  \eqref{model}), finding relevant sub-structures, such as the detection of regions of activity, is of \textquotedblleft scanning-type\textquotedblright, which means that it can be reformulated as a (multiple) testing problem for structures on the grid $I_n^d$ in \eqref{model} and scanning-type procedures can be employed. These have received much attention in the literature over the past decades. \citet{Walther} considers the two dimensional problem of detecting spatial clusters in the Bernoulli model by scanning with rectangular windows of varying sizes, 
	see also \citet{Kabluchko2011} and \citet{ExactScan} for results in a Gaussian setting. 
	In a similar spirit, scan statistics have been employed in the context of multiscale inference about higher order qualitative characteristics such as modes of a  density \citep[see][]{DuemWalt, RufibachWalther,lmsw16,ebdpe17}.
	
	However, in an inverse problem as in \eqref{model}, it is not obvious how to perform statistically efficient \textquotedblleft scanning\textquotedblright~ because local properties of $f$ may propagate in a non-local manner into $Tf$. If, e.\,g., $f$ is a function on $\left[0,1\right]^d$  and we want to infer on the support of $f$, we find that despite the fact that \textit{globally} testing $f \equiv 0$ is equivalent to testing $Tf \equiv 0$, this is not true for localized tests on regions $B \subset \left[0,1\right]^d$ we are interested in here.  This is due to the fact that $\left(Tf\right)_{|_B}$ is not necessarily related to $f_{|_B}$ only. Indeed, we will see that reducing this problem to the image domain $\mathbb{H}_2$, i.\,e., simultaneously testing $H_B : \left(Tf\right)_{|_B} \equiv 0$ against $K_B : \left(Tf\right)_{|_B} >0 $ cannot lead to a competitive procedure as it does not take into account the propagation of (multiscale) features of $f$ by $T$ (cf. Figure \ref{fig:appl_intro}(f)).
	Instead, it becomes necessary to employ probe functionals $\varphi_i=\varphi_{i,n}$ (again dependent on the discretization level $n$, but this dependence will be suppressed whenever not relevant below), which are compatible with the operator $T$ and hence allow for transportation of ``local'' information from $Tf$ back to $\left\langle f, \varphi_{i}\right\rangle$. If the probe functionals $\varphi_{i}$ are chosen properly, the values $\left\langle f, \varphi_{i}\right\rangle$ hold information about ``local'' features of $f$, e.\,g. in form of a wavelet-type analysis, see also \citet{SHMunkDuem2013,ebd16}, who infer on shape characteristics in i.i.d. density deconvolution. 
	\citet{CastroDonoho1} propose a scanning procedure based on a multiscale dictionary of beamlets that allows to detect line segments hidden in a noisy image, however, not in an inverse problems context.
	
	\smallskip
	
	The problem we consider in our paper is as follows: Given model \eqref{model} and an associated  sequence of dictionaries
	\begin{equation}\label{def:U}
	\mathcal{U}=\mathcal{U}_n=\left\{\varphi_{1,n},...,\varphi_{N\left(n\right),n}\right\} \subset R \left(T^*\right),
	\end{equation}
	of cardinality $N = N(n) \to\infty$ as $n\to \infty$, we provide a sequence of multiple tests (``scanning'') for the associated sequence of multiple testing problems
	\begin{equation}\label{hypothesis}
	\left\langle f,\varphi_{i,n}\right\rangle = 0 \qquad\text{for all}\qquad i \in J \tag{$H_{J,n}$}
	\end{equation}
	vs.
	\begin{equation}\label{alternative}
	\exists~i \in J \qquad\text{such that}\qquad \langle\varphi_{i,n},f\rangle>0,
	\tag{$K_{J,n}$}
	\end{equation}
	simultaneously over all subsets $J \subset I_{N(n)}=:\{1,\ldots,N(n)\}$.
	It is clear that the structure of the testing problem stays the same if $\cdot > 0$ in ($K_{J,n}$) is replaced by $ \cdot < 0$ or $\left|\cdot\right| \neq 0$, hence we restrict ourselves to ($K_{J,n})$ in the following. Moreover, it is also clear that as $n\to\infty$, there is a detection boundary, given by a sequence $(\mu_{i,n})_{i\in\mathbb{N}}$, dividing the space of all signals into the asymptotically detectable region and the non-detectable region  such that $\cdot > 0$ will be replaced by   $\cdot > \mu_{i,n}$ later on.
	
	With this choice of a sequence of multiple tests we will not simply control the error of a wrong rejection of $f \equiv 0$, rather we control the \textit{family wise error rate (FWER)} of making any wrong decision, cf. \citet[Def. 1.2]{d14}. Mathematically, our test is a  level-$\alpha$-test for the simultaneous testing problem $H_{J,n}$ against $K_{J,n}$, $J \subset I_N(n)$, i.\,e., it guarantees that
	\begin{equation}\label{eq:FWER}
	\sup\limits_{J: J \subset I_{N(n)}} \mathbb P_{H_{J,n}} \left[\text{''at least one (wrong) rejection in }J\text{``}\right] \leq \alpha +o(1),
	\end{equation}
	as $n$ and hence $N(n)\to\infty.$
	Consequently, all rejections (i.\,e. decisions for signal strength $>0$) will be made at a \textit{uniform} error control, no matter what the underlying configuration of $\langle f, \varphi_{i,n}\rangle$'s is.

	Fundamental to our simultaneous scanning procedure are uniform confidence statements for the coefficients $\left\langle f, \varphi_{i,n}\right\rangle$, $i \in I_{N(n)}$ in the inverse regression model \eqref{model}. Conceptually related, \citet{st12} and \citet{NicklReiss} provide uniform Donsker-type results in the context of i.i.d. deconvolution for single-scale contrasts $\left\langle f, \varphi\right\rangle$, however, not uniform in a sequence of multiscale dictionaries $\mathcal U_n$, a much more challenging task. 
	
	\subsection{\textbf{M}ultiscale \textbf{I}nverse \textbf{SCA}nning \textbf{T}est: MISCAT} 
	
	As we have assumed that $\phi_{i,n}\in\mathcal{R}(T^{\ast})$ for all $i\in I_{N(n)}$, there exists a sequence of  dictionaries $\mathcal{W}=\mathcal{W}_n=\{\Phi_{i,n}\,|\,i \in I_{N(n)}\}\subset\mathbb{H}_2$ such that $\varphi_{i,n} = T^* \Phi_{i,n}$.	
	In the following we will assume that $\mathcal W$ obeys a certain wavelet-type structure, i.e. for each $i \in I_{N(n)}$ there is an associated \textit{scale} $\bh_{i,n} = (h_{i,n,1},\ldots,h_{i,n,d})^T \in \left(0,1\right]^d$ and an associated \textit{translation} $\bt_{i,n}\in[\bh_{i,n},\be]$. The products $\bh_{i,n}^{\be} := h_{i,n,1}\cdot\ldots\cdot h_{i,n,d}$ will be referred to as \textit{sizes of scales}. In contrast to the direct problem ($T = \text{id}$), in an inverse problem the condition $\varphi_i = T^* \Phi_i$ implies a non-standard scaling of the $\Phi_i$'s which can be chosen to depend only on $\bh_i$ and not on $\bt_i$ in many cases. To highlight this scaling property, with a slight abuse of notation, we will also introduce a sequence of dictionary functions $\Phi_{\bh_{i,n}}$ and assume that 
	$\mathcal{W}_n$ is as follows:
	\begin{align}\label{Wavelet}
	\mathcal{W}_n=\biggl\{\Phi_{i,n}(\bz):=\Phi_{\bh_{i,n}}\Bigl(\frac{t_{i,n,1}-z_1}{h_{i,n,1}},\ldots,\frac{t_{i,n,d}-z_d}{h_{i,n,d}}\Bigr)\,\bigg|\,\mathrm{supp}(\Phi_{\bh_{i,n}})\subset[0,1]^d,\;i\in I_{N(n)}\biggr\}.
	\end{align}
	All quantities depend on $n$, and this dependence is suppressed in the following. Note that if $\Phi_{\bh_i} \equiv \Phi$ for all $i \in I_N$, then the dictionary \eqref{Wavelet} is a wavelet dictionary in the classical sense, which is appropriate for direct regression problems, i.e. $T = \text{id}$ in \eqref{model} \citep[see e.\,g.][]{CastroDonoho1}. For our asymptotic results we will further assume that the normed functions $\Phi_{\bh_i} / \left\Vert \Phi_{\bh_i}\right\Vert$ satisfy an average H\"older condition, see \eqref{S1} or \eqref{S2} below. 
	Such conditions are satisfied for many important operators $T$ such as the Radon transform (see Section \ref{Sec:RT}) and convolution operators (see Section \ref{Sec:Dec}). 
	
	\smallskip
	
	To construct a level-$\alpha$-test for simultaneously testing $H_{J,n}$ against $K_{J,n}$, $J \subset I_N$ we can now employ
	\begin{align}\label{PI}
	\langle f,\varphi_i\rangle_{\mathbb{H}_1}=\langle Tf,\Phi_i\rangle_{\mathbb{H}_2}
	\end{align}
	to estimate the local coefficients $\left\langle f,\varphi_i\right\rangle$ by their empirical counterparts
	\begin{align}\label{EmpCo}
	\langle Y,\Phi_i\rangle_n:=\frac{1}{n^d}\sum_{\bj\in I_n^d}Y_{\bj}\Phi_i(\bx_\bj).
	\end{align}
	
	MISCAT combines these local statistics by taking their maximum into a multiple \textquotedblleft dictionary scanning\textquotedblright~ test statistic of the form
	\begin{align}\label{scan2}
	\mathcal{S}(Y):=\max_{i\in I_N} S(Y,i), \quad\text{with}\quad
	S(Y,i):=\omega_i\biggl(\frac{\langle Y,\Phi_i\rangle_n}{\sigma_i}-\omega_{i}\biggr),
	\end{align}
	where $\sigma^2_i:=\mathrm{Var}[\langle Y,\Phi_i\rangle_n]$ depend on the variances $\sigma^2(\bj)$ of the errors $\xi_{\bj}$, which are unknown in general. For simplicity, all results will be stated with known $\sigma_i^2$, as
	all results remain valid
	if the unknown ones are
	replaced by estimates (see Remark \ref{rem:variance_estimation}). The weights
	\begin{align}\label{eq:calibration}
	\omega_i=\omega_{\bh_i}\left(K, C_d\right)=\sqrt{2\log(K/\bh_{i}^{\be})}+C_d\frac{\log(\sqrt{2\log(K/\bh_{i}^{\be})})}{\sqrt{2\log(K/\bh_{i}^{\be})}}
	\end{align}
	provide a proper scale calibration (see Section \ref{Sec:GT}) if $K/\bh_i\geq\sqrt{e}$ for all $i\in I_N$. Since for all results $\max_{i\in I_N}\bh_i\to\mathbf{0}$, this is satisfied for any fixed $K>0$ if $n$ is large enough and we may assume throughout this paper, without loss of generality, that $\min_{i\in I_N}K/\bh_i^{\be}\geq\sqrt{e}$. In this sense, our results hold for any constant $K>0$, however, in many situations $K$ can be chosen such that the weak limit of $\mathcal{S}(Y)$ in \eqref{scan2} is a standard Gumbel distribution (see Remark \ref{BT1}(c) and Theorems \ref{Thm:Radon} and \ref{Thm:Conv1}).
	$C_d$ is an explicit constant only depending on the dimension, the system of scales considered and the degree of $L^2$-smoothness of $\Phi_{\bh_i}$ (see Theorem \ref{Th:1} and Remark \ref{BT1}(b)). Our scale balancing \eqref{eq:calibration} is in line with \citet{DuemSpok2001} and others (but notably different as explained in detail below), who pointed out that, in a multiscale setting, some elements of the dictionary may dominate the behavior of the maximum of a scanning statistic and it is most important to balance all local tests on the different scales in order to obtain good overall power, i.e. a scale dependent correction is necessary. 	
	MISCAT now selects all probe functionals $\Phi_{i,n}$ as \textquotedblleft active\textquotedblright, where $\mathcal{S}(Y,i)$ is above a certain (universal) threshold, which guarantees \eqref{eq:FWER}, to be specified now.
	To this end, notice that in \eqref{eq:FWER} we have
	\begin{equation}
	\sup\limits_{J: J \subset I_N} \mathbb P_{H_{J,n}} \left[\text{''at least one rejection in }J\text{``}\right] \leq \mathbb{P}_0\left[\text{''at least one rejection in }I_{N(n)}\text{``}\right],
	\end{equation}
	where $\mathbb{P}_0=\mathbb{P}_{0,n}= \mathbb P_{H_{I_{N(n)},n}}$, corresponding to $f \perp \mathcal U_n$. The reason for this is that the chance of a false positive among a selection of possible false positives is highest if this selection is as large as possible and all positives are false. Therefore, in order to control the FWER, we only need a universal global threshold $q_{1-\alpha}$ such that $\mathbb P_{0} \left[\mathcal S \left(Y\right) > q_{1-\alpha}\right] \leq \alpha$. To obtain this universal threshold $q_{1-\alpha}$ we will determine the $\mathbb{P}_0$-limiting distribution of $\mathcal{S}(Y)$
	under a general moment condition including many practically relevant models. Theorem \ref{Th:1}(a) in Section \ref{Sec:GT} provides a distribution free (i.\,e. independent of any unknown quantities such as $f$) limit, which is obtained as an almost surely bounded Gaussian approximation for the scan statistic \eqref{scan2} by replacing the errors by a standard Brownian sheet $W,$ i.\,e.
	\begin{align}\label{scan3}
	\mathcal{S}(W):=\max_{i\in I_N}S(W,i),\quad\text{with}\quad 
	S(W,i):=\omega_{i}\biggl(\frac{|\int \Phi_i(\mathbf{z})\,\mathrm dW_{\mathbf{z}}|}{\|\Phi_i\|_2}-\omega_{i}\biggr).
	\end{align}
	Since $\mathcal{S}(W)$ does not depend on any unknown quantities, it can be used to simulate $q_{1-\alpha}$. Exploiting the specific and new choice of calibration in \eqref{eq:calibration} we will furthermore show in Theorem \ref{Th:1}(b) that $\mathcal{S}(Y)$ convergences in distribution towards a Gumbel limit for a wide-range of dictionary functions $\Phi_i$. As $\mathcal S(Y)$ can be seen as a maximum over extreme value statistics of different scales, it follows that the contributions of the different scales are balanced in an ideal way. This result is remarkable, as it provides a general recipe how to calibrate multiscale statistics depending on the degree of smoothness of the probe functionals $\Phi_i$ and the system of scales considered. To best of our knowledge, this is new even in $d = 1$, and in addition, it generalizes results by \citet{ExactScan} to other systems than rectangular scanning (see Remark \ref{BT1}), and to inverse problems and non-Gaussian errors. Note that the calibration proposed by \citet{DuemSpok2001} for direct regression problems (which is frequently employed in multiscale procedures, see e.g. \citet{Rohde2008,Walther,SHMunkDuem2013,ebd16}) is tailored to a continuous observation setting in which all scales within a range $(0,a],\,a\in\R^+$ are considered. If this calibration is used in a discrete setting like \eqref{model}, the overall test-statistic converges to a degenerate limit, since the largest scale $h_{\max}$ has to satisfy $h_{\max}\to0$ as $n\to\infty$, otherwise the finite sample approximations do not converge to their continuous counterparts. Therefore, we propose a different scale calibration which also takes into account the ill-posedness and yields a proper weak limit in many of such cases.
	
	The approximation in \eqref{scan3} requires a coupling technique to replace the observation errors by i.i.d. Gaussian random variables. To this end we do not make use of strong approximations by KMT-like constructions (see \citet{KMT} for the classical KMT results and, e.\,g. \citet{Rio}  or \citet{DMR:2014} for generalizations) as, for instance, \citet{SHMunkDuem2013} in the univariate case, $d=1$, but we take a different route and employ a coupling of the supremum based on recent results by \citet{CheCheKat2014}. Doing so, we can prove the approximation in \eqref{scan3} to hold for a much larger range of scales.
	
	A major benefit of MISCAT is its wide range of applicability and its multiscale detection power. Given the operator $T$, one chooses a dictionary $\mathcal U$ of probe functionals as in \eqref{def:U} such that $\mathcal W$ is of the form \eqref{Wavelet}. We will demonstrate this for the case of $T$ being the Radon transform in Section \ref{Sec:RT} and for $T$ being a convolution operator in Section \ref{Sec:Dec}. For the latter situation we will also discuss an optimal choice of the probe functionals $\varphi_i$. Once the dictionaries $\mathcal U$ and $\mathcal W$ have been obtained, the quantiles $q_{1-\alpha}$ from the Gaussian approximation \eqref{scan3} or its finite sample analogues can be simulated. As it is well-known that convergence towards the Gumbel limit is extremely slow, it is beneficial that for deconvolution we find that the limit only depends on the degree of smoothness (see Theorem \ref{Thm:Conv2}), and hence the finite sample distribution can be pre-simulated in a universal manner.  
	
	We will show in Section \ref{Sec:Opt} that the power of MISCAT asymptotically coincides with the power of a single-scale oracle test which knows the correct size of the unknown object beforehand. More generally, if prior scale information is available, our method can be adapted immediately to this situation by restricting \eqref{scan2} to this subset, which mat lead to different calibration constants in \eqref{eq:calibration} (see  Remark \ref{BT1}(b)). This will further increase detection power in finite sample situations. 
	
	\subsection{MISCAT in action: Locating fluorescent markers in STED super-resolution microscopy}\label{intro:appl}
	In Section \ref{Sec:Dec}, we specify and refine our results to deconvolution which is applied  to a data example from nanobiophotonics in Section \ref{Application} which we briefly review in the following.
	Suppose that the operator $T$ is a convolution operator  having a kernel $\psf$ such that
	\begin{equation}\label{eq:conv}
	\left(Tf\right) \left(\by\right) = \left(\psf \ast f\right) \left(\by\right) = \int\limits_{\mathbb R^d} \psf\left(\bx-\by\right) f\left(\by\right) \,\mathrm d \by.
	\end{equation}
	In our subsequent application the convolution kernel $\psf$ corresponds to the point spread function of a microscope and the object of interest, $f$, is an image such that $d=2$. We assume that $\psf$ is finitely smooth, which is equivalent to a polynomial decay of its Fourier coefficients. In this situation, we may choose the dictionaries $\mathcal U$ and $\mathcal{W}$ such that each $\varphi_i \geq 0$ has compact support $\text{supp} \left(\varphi_i\right) \subset \left[\bt_i-\bh_i,\bt_i\right]$. Consequently, if $f \geq 0$, we find
	\begin{equation}\label{eq:support_inference}
	\left\langle f, \varphi_i\right\rangle > 0 \qquad\Rightarrow\qquad \exists \bx\in \left[\bt_i-\bh_i,\bt_i\right] \quad\mathrm{s.t.}\quad f(\bx)>0,
	\end{equation}
	i.\,e., there must be a point $\bx \in \left[\bt-\bh,\bt\right]$ belonging to the support of $f$. Employing this, we can use MISCAT to segment $f$ into active and (most likely) inactive parts, which is of particular interest in many imaging modalities.
	
	With this setup, MISCAT will be used to infer on the location of fluorescent markers in DNA origami imaged by a super-resolution STED microscope \citep[cf.][]{H:07}. In STED microscopy, the specimen is illuminated by a laser beam along a grid with a diffraction-limited spot centered at the current grid point and the entire specimen is scanned this way, pixel by pixel, leading to observations as in \eqref{model} with a convolution $T$ as in \eqref{eq:conv}. The error distribution and the kernel $\psf$ in \eqref{eq:conv} are well-known experimentally, see Appendix \ref{appB} for a detailed description of the mathematical model.
	
	The investigated specimen consists of DNA origami, which have been designed in a way such that each of the clusters contains up to $24$ fluorescent markers, arrayed in two strands of up to $12$ having a distance of $71$ nanometers (nm) (cf. the sketch in the upper left of Figure \ref{fig:exp_data}). As the ground truth is basically known, this serves as a real world phantom. Data were provided by the lab of Stefan Hell of the Department of NanoBiophotonics of the Max Planck Institute for Biophysical Chemistry, cf. Figure \ref{fig:exp_data}.
	
	\begin{figure}[!htb]
		\centering
		\setlength\fheight{6cm} \setlength\fwidth{6cm}
		\input{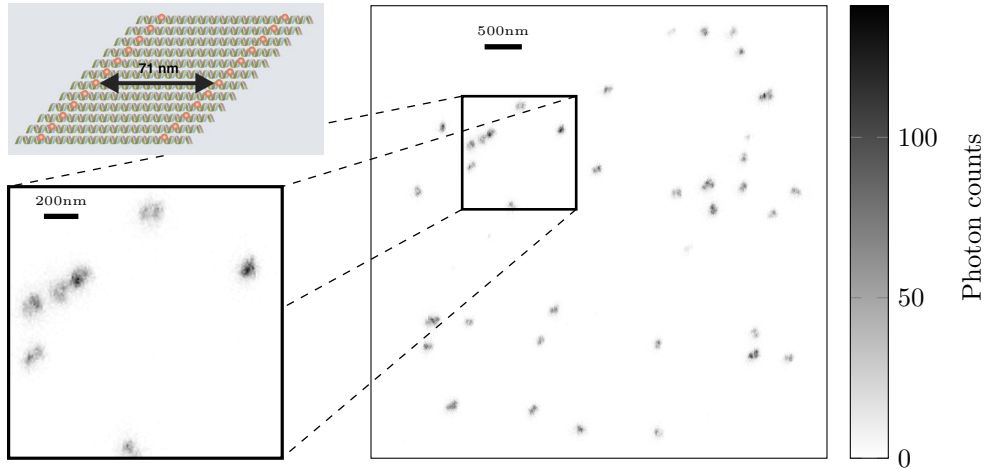} 
		\caption{Experimental data of the DNA origami sample and zoomed region (150$\times$150 pixels). The sketch in the upper left shows the structure of the investigated DNA origami sample (red dots represent possible positions for fluorophores) see \citep{t15}).}
		\label{fig:exp_data}
	\end{figure}
	
	To infer on the positions of the fluorescent markers, we apply MISCAT with a set of scales defined by boxes of size $k_x \times k_y$ pixels, $k_x, k_y = 4,6,...,20$. One pixel in the measurements in Figure \ref{fig:exp_data} is of size $10$ nm $\times$ $10$ nm. 
	To highlight our multiscale approach we also display results of a single scale version of MISCAT (see Remark \ref{BT1}(b) and Section \ref{Application}) using only boxes of size $4 \times 6$ pixels (these are the smallest boxes found by MISCAT), and to highlight the deconvolution effect, we apply a direct multiscale  scanning test not designed for deconvolution (i.e. $T = \text{id}$ in the model \eqref{model} and $\Phi_i = \varphi_i$ in \eqref{scan2}) based on indicator functions as probe functionals using the scale calibration suggested by \citet{DuemSpok2001}. For details see Section \ref{Application}.
	
	In Figure \ref{fig:appl_intro} the zoomed region of Figure \ref{fig:exp_data} is shown together with \textit{significance maps} for all three tests. The significance map color-codes for each pixel the smallest scale (volume of the box in $\mathrm{nm}^2$) on which it is significant. In case that a pixel belongs to significant boxes of different scales, only the smallest one is displayed for ease of visualization by the color coding.
	\begin{figure}[!htb]
		\setlength\fheight{4cm} \setlength\fwidth{4cm}
		\centering
		\begin{tabular}{cccl}
			\begin{tikzpicture}

\begin{axis}[
width=\fwidth,
height=\fheight,
axis on top,
scale only axis,
trim axis left,
trim axis right,
xmin=0.5,
xmax=151.5,
y dir=reverse,
ymin=0.5,
ymax=151.5,
xtick = \empty,
ytick = \empty
]
\addplot [forget plot] graphics [xmin=0.5,xmax=151.5,ymin=0.5,ymax=151.5] {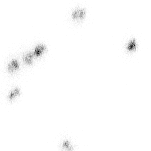};
\draw[solid, draw=black] (axis cs:65.5,0.5) rectangle (axis cs:95.5,30.5) node[below left] {(1)};
\draw[solid, draw=black] (axis cs:118.5,32.5) rectangle (axis cs:148.5,62.5) node[below left] {(2)};
\end{axis}
\end{tikzpicture}
			&
%
%
\begin{tikzpicture}

\begin{axis}[%
width=\fwidth,
height=\fheight,
axis on top,
scale only axis,
trim axis left,
trim axis right,
xmin=0.5,
xmax=31.5,
ymin=0.5,
ymax=31.5,
xtick = \empty,
ytick = \empty
]
\addplot [forget plot] graphics [xmin=0.5,xmax=31.5,ymin=0.5,ymax=31.5] {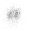};
\end{axis}
\end{tikzpicture}%
			&
%
%
\begin{tikzpicture}

\begin{axis}[%
width=\fwidth,
height=\fheight,
axis on top,
scale only axis,
xmin=0.5,
xmax=31.5,
trim axis left,
trim axis right,
ymin=0.5,
ymax=31.5,
xtick = \empty,
ytick = \empty
]
\addplot [forget plot] graphics [xmin=0.5,xmax=31.5,ymin=0.5,ymax=31.5] {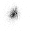};
\end{axis}
\end{tikzpicture}%
			&
			\begin{tikzpicture}
\begin{axis}[
hide axis,
scale only axis,
height=0pt,
width=0pt,
colormap={mymap}{[1pt] rgb(0pt)=(1,1,1); rgb(63pt)=(0,0,0)},
colorbar,
colorbar/width = .4cm,
colorbar style={
width = .4cm,
height = .9\fheight,
scaled ticks = false
},
point meta min=0,
point meta max=141
]
\addplot [draw = none] coordinates {(0,0)};
\end{axis}
\end{tikzpicture} \\
			
			(a) data
			&
			(b) zoomed data, region (1)
			&
			(c) zoomed data, region (2) 
			&
			\\
			
			\begin{tikzpicture}

\begin{axis}[
width=\fwidth,
height=\fheight,
axis on top,
scale only axis,
trim axis left,
trim axis right,
xmin=0.5,
xmax=151.5,
y dir=reverse,
ymin=0.5,
ymax=151.5,
xtick = \empty,
ytick = \empty
]
\addplot [forget plot] graphics [xmin=0.5,xmax=151.5,ymin=0.5,ymax=151.5] {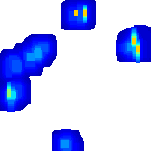};
\end{axis}

\end{tikzpicture}
			&
%
%
\begin{tikzpicture}

\begin{axis}[%
width=\fwidth,
height=\fheight,
axis on top,
scale only axis,
xmin=0.5,
xmax=151.5,
trim axis left,
trim axis right,
ymin=0.5,
ymax=151.5,
xtick = \empty,
ytick = \empty
]
\addplot [forget plot] graphics [xmin=0.5,xmax=151.5,ymin=0.5,ymax=151.5] {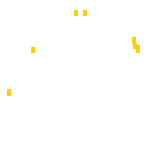};

\end{axis}
\end{tikzpicture}%
			&
%
%
\begin{tikzpicture}

\begin{axis}[%
width=\fwidth,
height=\fheight,
axis on top,
scale only axis,
xmin=0.5,
xmax=151.5,
trim axis left,
trim axis right,
ymin=0.5,
ymax=151.5,
xtick = \empty,
ytick = \empty
]
\addplot [forget plot] graphics [xmin=0.5,xmax=151.5,ymin=0.5,ymax=151.5] {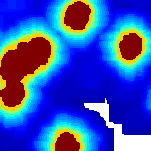};
\end{axis}
\end{tikzpicture}%
			& 
			\begin{tikzpicture}
\begin{axis}[
hide axis,
scale only axis,
height=0pt,
width=0pt,
colormap={mymap}{[1pt] rgb(0pt)=(1,1,1); rgb(1pt)=(0,0,0.625); rgb(7pt)=(0,0,1); rgb(23pt)=(0,1,1); rgb(39pt)=(1,1,0); rgb(55pt)=(1,0,0); rgb(63pt)=(0.5,0,0)},
colorbar,
colorbar/width = .4cm,
colorbar style={
ytick={0,0.0125,0.025,0.0375,0.05,0.0625},
yticklabels={{},{8000},{4000},{2667},{2000},{1600}},
scaled ticks = false,
width = .4cm,
height = .95\fheight
},
point meta min=0,
point meta max=0.0625
]
\addplot [draw = none] coordinates {(0,0)};
\end{axis}
\end{tikzpicture} \\
			
			(d) MISCAT
			&
			(e) single scale deconvolution test
			& 
			(f) multiscale direct test
			&
		\end{tabular}
		\caption{Experimental data and corresponding $90\%$ significance maps computed by different tests. The color-coding of the significance maps always show the size of smallest significance in nm$^2$, cf. the main text. 
			(a)--(c) data and zoomed regions, (d) MISCAT, (e) a single scale test with deconvolution, (f) a multiscale scanning test without deconvolution. We emphasize that MISCAT performs $2.125.764$ tests on the data in (a), and out of those $94.824$ local hypotheses are rejected. The FWER control ensures that with (asymptotic) probability at least $90\%$ among the selected regions there is no wrong detection.}
		\label{fig:appl_intro}
	\end{figure}
	For instance,  in Figure \ref{fig:appl_intro}(d) MISCAT marked several boxes as significant, and the smallest scale on which significant boxes were found is of size $2400\,\text{nm}^2$ (yellow).
	These results show that MISCAT is able (at least for some of the single DNA origamis) to distinguish both strands. In view of the zoomed data in Figure \ref{fig:appl_intro}(b) and Figure \ref{fig:appl_intro}(c) this is quite remarkable as not visible from the data. The latter is due to the fact that the distance between the two strands of $71\mathrm{nm}$ is slightly smaller than the full width at half maximum (FWHM, see Appendix \ref{appA} for details) of the convolution kernel $k$ ($\approx 76 \mathrm{nm}$), and there is a common understanding that objects which are closer to each other than a distance of approximately the FWHM cannot be identified as separate objects. Hence, MISCAT allows to discern objects below the resolution level of the STED microscope. The single scale variant of MISCAT (for explanation see Section 4) in Figure \ref{fig:appl_intro}(e) has clearly more power in detecting small features on this single scale. While the multiscale test detects 4 boxes of $4\times 6$ pixels, the single scale test detects several more, however, at the price of overseeing many DNA origamis at different scales. Note that the investigated specimen consists only of structures, which are present on a few (known) scales. For illustrative purposes, MISCAT, as employed here, does not use this information, as in general, these scales are not a priori known in living cell imaging. It is also clearly visible in Figure \ref{fig:appl_intro}(f) that ignoring the deconvolution does not lead to a competitive test: distinguishing between different DNA origamis fails completely, as the support of the DNA-origami has been severely blurred by the STED microscope. We emphasize that the FWER control in \eqref{eq:FWER} with $\alpha = 0.1$ implies that with (asymptotic) probability $\geq 90 \%$, each of the 94.824 detections out of 2.125.764 local tests in Figure \ref{fig:appl_intro}(d) is correct.
	
	\section{General Theory}\label{Sec:GT}
	\subsection{Framework and Notation}\label{Sec:GT1}
	Recall the general  framework introduced in Section \ref{Sec:Intro} and model \eqref{model} and that all quantities may depend on the sample size $n.$
	Throughout this paper,
	$\{Tf(\bx_{\bj,n})\,|\,\bj\in I_n^d\}$ is the discretization of the function $Tf$ on the grid $\{(j_1/n,\ldots,j_d/n)\,|\,1\leq j_k\leq n,\;1\leq k\leq d\}.$ This discretization model is a prototype for many inverse problems and in particular matches the application to imaging considered in Section \ref{Sec:Appl} below. For different applications alternative discretization schemes may be of interest as well but, for the sake of a clearer display, we consider uniform sampling on a complete grid since most of the results presented below do not crucially depend on the specific discretization. We make the following assumption on the dictionaries $\mathcal U$ and $\mathcal W$ in \eqref{def:U} and \eqref{Wavelet}.
	\begin{Assumption}\label{Ass:Dict} 
		Let $\mathcal U$ as in \eqref{def:U} and $\mathcal W$ as in \eqref{Wavelet}.
		\begin{description}
			\item[(a) Dictionary source condition] Let
			\begin{align}\label{A1}
			\varphi_i\in \mathcal{R}(T^*),\quad\text{i.\,e.,}\quad \varphi_i=T^*\Phi_i.\tag{DSC}
			\end{align}
			\item[(b) Growth of the dictionary] For some $\kappa>0$ 
			\begin{align}\label{growth}
			|\mathcal{U}|=|\mathcal{W}|=N=O(n^{\kappa}).\tag{G}
			\end{align}
			\item[(c) Scale restrictions] For $\bh_{\min}=(h_{\min},\ldots,h_{\min})^T$ and $\bh_{\max}=(h_{\max},\ldots,h_{\max})^T$ the smallest and the largest scale in \eqref{Wavelet}, respectively,
			\begin{align}\label{R1}
			h_{\min}\gtrsim n^{-1}\log(n)^{15/d\vee3}\log\log(n)^2\quad\text{and}\quad h_{\max}=o\bigl(\log(n)^{-2}\bigr).\tag{SR}
			\end{align}
			\item[(d) Average H\"older condition] Suppose that $\Phi_{\bh_i}$ in \eqref{Wavelet} is uniformly bounded, supported on $[0,1]^d$, vanishing at the boundary and 
			\begin{align}\label{S1}
			\int|\Phi_{\bh_i}(\bt-\bz)-\Phi_{\bh_i}(\bs-\bz)|^2\,\mathrm{d}\bz\leq L\|\bt-\bs\|_{2}^{2\gamma} \|\Phi_{\bh_i} \|_2^2,\tag{AHC}
			\end{align}
			for some $\gamma\in(0,1]$ and all $i \in I_N$ uniformly as $n$ and hence $N \to \infty$.
		\end{description}
	\end{Assumption}
	\begin{Remark}
		\begin{enumerate}
			\item Assumption \eqref{A1} is a smoothness condition on the functions of the dictionary $\mathcal{U}$ related to $T$. Instead of posing such an assumption on the dictionary, it is common to pose such an assumption on $f$, e.\,g. the so-called \textit{benchmark source condition} $f \in \mathcal{R} \left(T^*\right)$, which requires the unknown solution $f$ to be at least as smooth as any function in the range of $T^*$. For deconvolution problems with real-valued kernel this means that $f$ is at least as smooth as the kernel itself. 
			In this paper, as we want to reconstruct pairings $\left\langle f, \varphi_i\right\rangle$ instead of $f$, we may relax this and pose conditions on the functions $\varphi_i$ instead of $f$, see also \citep{bfh13}. Note, that if additionally $f$ admits a sparse representation w.r.t. the dictionary $\mathcal U$, then \eqref{A1} implies $f \in \mathcal{R}\left(T^*\right)$. We emphasize that our approach strongly relies on the condition \eqref{A1}, see also \citet{Donoho1995,a86}. For a strategy how to estimate a linear functional $\left\langle f, \varphi\right\rangle$ for $\varphi \notin R \left(T^*\right)$ we refer to \citet{mp02}.
			\item Assumption \eqref{growth} is rather mild. In particular it implies that positions and scales $(\bt_i,\bh_i)$ from any grid of polynomial size can be used. In the example of imaging this is naturally satisfied as the $\bt_i$ are grid points of the pixel grid and the sizes of the scales $\bh_i$ are given by rectangular groups of pixels and are hence also only of polynomial order in $n$. Furthermore, to serve as an approximation for a continuous version, the grid can be chosen sufficiently fine and still \eqref{growth} is satisfied. The constant $\kappa$ only enters into our results via some constants.
			\item As already discussed in the introduction, the scale restrictions \eqref{R1} are also rather mild. The lower bound on $h_{\min}$ is up to a poly-log factor of the same order as the sampling error, and the upper bound on $h_{\max}$ is required to ensure asymptotic unbiasedness of our local test statistics. For some of the results presented below, a slightly stricter bound on $h_{\max}$ will be necessary, and this is emphasized in the corresponding theorems.
			\item Assumption \eqref{S1} is a smoothness condition on the dictionary $\mathcal{W}$. It is satisfied, for instance, if all $\Phi_{\bh_i}$ are H\"older-continuous of order $\gamma$. In case $T = \text{id}$, the 'classical' scanning function $\Phi_{\bh_i} \equiv I_{(0,1)^d}$ satisfies condition \eqref{S1} with $\gamma=1/2$ and $L = d$. In Section \ref{Sec:Ex} we discuss this condition in more detail and show its validity if $T$ is the Radon transform and if $T$ is a convolution operator (see Section \ref{Sec:RT} and Section \ref{Sec:Dec}, respectively).
		\end{enumerate}
	\end{Remark}
		
	The following assumptions concern the noise $\xi_{\bj},\;\bj\in I_n^d$ in model \eqref{model}. 
	\begin{Assumption}\label{Ass:Noise} Let $\xi_{\bj},\;\bj\in I_n^d$ in \eqref{model} be independent and centered random variables.
		Assume  that there exists a function $\sigma\in C^1[0,1]^d$ such that $\mathrm{Var}[\xi_{\bj}]=\sigma^2(\bx_{\bj})$ 
		and 
		\begin{align}\label{Moments}
		\mathbb{E}|\xi_{\bj}|^{2J}\leq\frac{1}{2}J!\,\mathbb{E}\xi_{\bj}^4\qquad\text{for all}\qquad J\geq2.\tag{M1}
		\end{align}
		Assume further that  
		\begin{align}\label{Moments2}
		0<\liminf_{n\to\infty}\inf_{\bj\in I_n^d}\mathbb{E}[|\xi_{\bj}|^{2}]\quad\text{and}\quad \limsup_{n\to\infty}\sup_{\bj\in I_n^d}\mathbb{E}[|\xi_{\bj}|^{4}] <\infty.\tag{M2}
		\end{align}
	\end{Assumption}
	Note that \eqref{Moments} is in fact equivalent to the well-known Cram\'er condition that the moment generating function exists in a small neighborhood of $0$ \citep[cf.][Thm. 1]{l17} and is satisfied by many distributions, including Gaussian and Poisson. The latter is most relevant for our subsequent application. 
	
	\subsection{Asymptotic Theory}\label{Sec:Sub}
	We are now in the  position to provide some general asymptotic properties of MISCAT such as a uniform Gaussian approximation of the test statistic, a.s. boundedness of the simulated quantiles, and weak convergence under further specification of assumptions towards an explicit Gumbel-type distribution. The latter is for ease of presentation only shown when using the full set of possible scales. If MISCAT is restricted to smaller subsets of scales (e.g. resulting from prior information), this may change the limit distribution, see Remark \ref{BT1} below.
	\begin{Theorem}\label{Th:1}
		Suppose we are given observations from model \eqref{model} with random noise satisfying Assumption \ref{Ass:Noise} and dictionaries $\mathcal{U}$ and $\mathcal{W}$ as specified in Assumption \ref{Ass:Dict}.
		Let $h_{\max}\leq n^{-\delta}$ for some (small) $\delta>0$ in \eqref{R1} and suppose that the  approximation error of $\langle \mathbb{E}[Y],\Phi_i\rangle_n:=\frac{1}{n^d}\sum_{\bj\in I_n^d}Tf(\bx_{\bj})\Phi_i(\bx_{\bj})$ is asymptotically negligible, i.\,e.,
		\begin{align}\label{bias}
		n^{\frac{d}{2}}\max_{i \in I_N}\frac{ \langle\mathbb{E}[Y],\Phi_i\rangle_n -\langle Tf,\Phi_i\rangle}{\|\Phi_i\|_2}=o\biggl(\frac{1}{\log(n)^2\log\log(n)^2}\biggr).
		\end{align}
		For any constant $K>0$ and $C_d=2d+d/\gamma-1$ consider the calibration values $\omega_i = \omega_i \left(K, C_d\right)$ as in \eqref{eq:calibration}.
		\begin{enumerate}
			\item Then, for a standard Brownian sheet $W$ on $[0,1]^d$, it holds
			\begin{align*}
			\lim_{n\to\infty}\biggl|\mathbb{P}_{0}\Bigl(\mathcal{S}(Y)\leq q\Bigr)-\mathbb{P}_{0}\Bigl(\mathcal{S}(W)\leq q\Bigr)\biggr|=0,\qquad q \in \mathbb R
			\end{align*}
			where $\mathcal{S}(Y)$ and $\mathcal{S}(W)$ are defined in \eqref{scan2} and \eqref{scan3}, respectively. Consequently, under $H_0$, $\mathcal S \left(Y\right)$ and $\mathcal S \left(W\right)$ converge weakly towards the same limit. Furthermore, the approximating statistic $\mathcal{S}(W)$ is almost surely bounded and does not depend on any unknown quantity.
			
			\medskip
			
			\item Instead of \eqref{S1} assume the stronger condition that there exists a function $\Xi$ supported on $\left[0,1\right]^d$ with $\left\Vert \Xi \right\Vert_2 = 1$ such that
			\begin{equation}\label{eq:Phi_h_convergence}
			\max\limits_{i \in I_N} \left| \int \left(\frac{\Phi_{\bh_i} \left(\bt_i -\bz\right)}{\left\Vert \Phi_{\bh_i} \right\Vert_2}- \Xi \left(\bt_i-\bz\right)\right) \,\mathrm d W_{\bz} \right| = o_{\mathbb P} \left(\frac{1}{\sqrt{\log\left(n\right)}} \right)
			\end{equation}
			and
			\begin{align}\label{S2}
			\int\left|\Xi\left(D_{\Xi}(\bt-\bz)\right)-\Xi\left(D_{\Xi}(\bs-\bz)\right)\right|^2\,\mathrm d\bz= \sum_{j=1}^d|t_j-s_j|^{2\gamma}+o\biggl(\sum_{j=1}^d|t_j-s_j|^{2\gamma}\biggr)\tag{AHCb}
			\end{align}
			with $\gamma \in \left(0,1\right]$ and a symmetric, positive definite matrix $D_\Xi \in \mathbb R^{d \times d}$. Suppose that  the set of scales $\mathcal H := \left\{\bh_i ~\big|~ i \in I_N\right\}$ is complete, i.e. $\mathcal H = \left\{h_{\min}, ...,h_{\max}\right\}^d$, where 
			\begin{equation}\label{eq:scale_logs}
			-\log(h_{\max})=\delta\log(n)+o(\log(n))\qquad\text{and}\qquad-\log(h_{\min})=\Delta\log(n)+o(\log(n))
			\end{equation}
			with $0 < \delta < \Delta \leq 1$. If the grids of positions $\bt$ and scales $\bh$ are furthermore sufficiently fine, i.e.
			\begin{align} \label{Grid1}
			\max_{i \in I_N}\min_{j \in I_N:\bt_i\neq\bt_j}\|\bt_i-\bt_{j}\|_{\infty}=O\biggl(\frac{1}{n}\biggr)
			\end{align}
			and 
			\begin{align}\label{Grid2}
			\max\limits_{i \in I_N}\min\limits_{\substack{j \in I_N \\ h_{i,l} \neq h_{j,l}}}\left|\frac{h_{j,l} - h_{i,l}}{\sqrt{h_{i,l}h_{j,l}}}\right| \to 0\qquad\text{for all}\qquad 1\leq l \leq d
			\end{align}
			then it holds
			\begin{align}\label{eq:GumbelGW}
			\lim_{n\to\infty}\mathbb{P}_0\biggl(\mathcal{S}(Y)\leq\lambda\biggr)=\exp\left(-\exp\left(-\lambda\right)\cdot\frac{H_{2\gamma}\det\left(D_{\Xi}^{-1}\right) I_d(\delta,\Delta)}{\sqrt{2\pi}K}\right),
			\end{align}
			with
			\begin{align}\label{Def:Id}
			I_d(\delta,\Delta):=\frac{(-1)^{d-1}}{(d-1)!}\sum_{k=0}^d(-1)^k\binom{d}{k}\log\big(k\delta+(d-k)\Delta\big) >0
			\end{align}
			and Pickands' constant $H_{2\gamma}$ \citep[cf.][]{p69}.
		\end{enumerate}
	\end{Theorem}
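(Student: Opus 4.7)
My plan is to split $\langle Y,\Phi_i\rangle_n$ into the deterministic part $\langle\mathbb{E}Y,\Phi_i\rangle_n$ and the centered noise part $n^{-d}\sum_{\bj}\xi_{\bj}\Phi_i(\bx_{\bj})$. Under $H_0$ the exact pairing $\langle Tf,\Phi_i\rangle=\langle f,\varphi_i\rangle$ vanishes, so the deterministic contribution is the discretization residual controlled by \eqref{bias}, which, after multiplication by $\omega_i\sim\sqrt{2\log(K/\bh_i^{\be})}$, is $o(1)$ uniformly in $i$. For the noise part I would invoke the coupling of \citet{CheCheKat2014} for suprema of sums of independent, centered, high-dimensional vectors: the moment bound \eqref{Moments} supplies the sub-exponential envelope, \eqref{growth} bounds the dimension by $n^{\kappa}$, and the lower bound on $h_{\min}$ in \eqref{R1} keeps the local variances $\sigma_i^2$ sufficiently bounded away from $0$ to satisfy the hypotheses of CCK. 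This produces Gaussian surrogates with matching covariance and a coupling error well below $(\log n)^{-2}(\log\log n)^{-2}$. A final Gaussian-to-Gaussian comparison (Slepian/Sudakov--Fernique, or the CCK Gaussian comparison lemma) replaces those surrogates by $\int\Phi_i\,\mathrm dW/\|\Phi_i\|_2$; the covariance discrepancy is exactly the Riemann-sum error $\langle\Phi_i,\Phi_j\rangle_n-\langle\Phi_i,\Phi_j\rangle$, which is controlled by the average H\"older condition \eqref{S1} combined with \eqref{R1}. Almost sure boundedness of $\mathcal{S}(W)$ then follows from a Dudley entropy bound for the Gaussian process indexed by the dictionary, together with the scale calibration $\omega_i$.

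\textbf{Plan for Part (b): Gumbel limit.} With part (a) in hand it suffices to analyze $\mathcal{S}(W)$. First I would apply \eqref{eq:Phi_h_convergence} to replace each $\Phi_{\bh_i}/\|\Phi_{\bh_i}\|_2$ inside the stochastic integral by the universal profile $\Xi$; the $o_{\mathbb{P}}(1/\sqrt{\log n})$ error is absorbed by $\omega_i$. The resulting centered Gaussian field $X_{\bh,\bt}=\int\Xi((\bt-\bz)/\bh)\,\mathrm dW_{\bz}/\|\Xi((\bt-\cdot)/\bh)\|_2$ is, for each fixed $\bh$, stationary in $\bt$ with local variogram determined by \eqref{S2}: an elementary change of variables gives $1-\mathrm{Corr}(X_{\bh,\bt_1},X_{\bh,\bt_2})\sim\tfrac{1}{2}\det(D_\Xi^{-1})\sum_l|t_{1,l}-t_{2,l}|^{2\gamma}/h_l^{2\gamma}$. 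This is exactly the setup for Piterbarg's multi-dimensional Pickands theorem, producing the constant $H_{2\gamma}\det(D_\Xi^{-1})$ per effective $\bt$-unit volume at each scale. Summing over the scale grid via the change of variables $x_l=-\log h_l/\log n$, condition \eqref{eq:scale_logs} maps the scale range onto $[\delta,\Delta]^d$ and \eqref{Grid2} ensures a valid Riemann-sum approximation. The key point is that $C_d=2d+d/\gamma-1$ is tuned so that the $\omega_{\bh}^{d/\gamma-1}$ factor from Pickands, the density $(\log n)^d$ of the scale grid in $x$-coordinates, and the Gaussian density $\phi(\omega_{\bh})=(\bh^{\be}/K)\omega_{\bh}^{-C_d}(1+o(1))/\sqrt{2\pi}$ combine into a scale integrand proportional to $(x_1+\cdots+x_d)^{-d}$. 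A standard inclusion--exclusion identity (differentiating the integral over a product of intervals $d-1$ times in one variable) evaluates $\int_{[\delta,\Delta]^d}(\sum_l x_l)^{-d}\,\mathrm dx$ as the alternating binomial sum \eqref{Def:Id}, giving $I_d(\delta,\Delta)$. Finally, the Gumbel form $\exp(-ce^{-\lambda})$ arises from the usual Poisson approximation $\mathbb{P}_0(\mathcal{S}(W)\leq\lambda)\approx\exp(-\sum_{\bh}\mathbb{P}(\max_{\bt}X_{\bh,\bt}>\omega_{\bh}+\lambda/\omega_{\bh}))$ together with the first-order expansion $\phi(\omega_{\bh}+\lambda/\omega_{\bh})=\phi(\omega_{\bh})e^{-\lambda}(1+o(1))$.

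\textbf{Main obstacles.} Two steps carry essentially all of the technical weight. First, the Gaussian coupling in part (a) must beat the very tight scale $(\log n)^{-2}(\log\log n)^{-2}$ that the Gumbel normalization tolerates; this requires a delicate interplay between CCK's rates, the sub-exponential envelope supplied by \eqref{Moments}, and the complexity of the dictionary measured through \eqref{S1}--\eqref{R1}, and is the reason for the precise poly-log factor on $h_{\min}$ in \eqref{R1}. Second, and more subtle, is the asymptotic independence of different scales in part (b): neighboring log-scales produce strongly correlated Gaussian fields, so the Poisson approximation across scales is not automatic. I would address this by combining a Berman-type decorrelation argument on well-separated scales with a direct Piterbarg-type analysis on the full $2d$-dimensional $(\bh,\bt)$ parameter space at nearby scales, which requires identifying the local variogram of $X_{\bh,\bt}$ in the scale direction as well. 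Once this is in place, the remaining tasks---verifying Piterbarg's hypotheses via \eqref{S2}, performing the change of variables, and computing $I_d(\delta,\Delta)$ by inclusion--exclusion---are technical but largely mechanical.
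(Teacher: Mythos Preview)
Your plan for part~(a) is essentially the paper's argument: CCK coupling (their Lemma~4) to pass to Gaussian noise, a discrete--to--continuous replacement of Riemann sums by Wiener integrals (their Lemma~3), a localization step to remove the heteroscedasticity $\sigma(\cdot)$ (their Lemma~2, which you implicitly fold into the Gaussian comparison), and an entropy bound for the a.s.\ boundedness of $\mathcal{S}(W)$ (their Theorem~5). No objection here.

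For part~(b) you have the right ingredients---Piterbarg's tail asymptotics, the calibration $C_d=2d+d/\gamma-1$ as the exponent that balances the Mills ratio against the Pickands factor and the scale volume, and the evaluation of $\int_{[\delta,\Delta]^d}(\sum_l x_l)^{-d}\,\mathrm d x$ as $I_d(\delta,\Delta)$---and you correctly single out the across-scale dependence as the crux. But your proposed resolution (Berman decorrelation on separated scales plus a full $2d$-dimensional Piterbarg analysis in $(\bh,\bt)$) is a different and harder route than the one the paper takes. The paper's device is this: after rescaling $\bt\mapsto\bt/\bh$ for each fixed $\bh$, every scale produces the \emph{same} stationary field $Z_{\bt}=\int\Xi(\bt-\bz)\,\mathrm d W_{\bz}$, only the domain changes to $[\be,\be/\bh]$. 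These domains are nested, so one partitions the largest domain $[\be,\be/\bh_{\min}]$ into dyadic annuli $B_{\bp}\approx[\be/\bh_{\bp-\be},\be/\bh_{\bp}]$ separated by unit-width buffers. The buffers give \emph{exact} independence between the blocks (since $\mathrm{supp}\,\Xi\subset[0,1]^d$), and on each block $B_{\bp}$ only scales $\bh_{\bq}$ with $\bq\geq\bp$ are present; monotonicity of $\lambda/\omega+\omega$ then collapses the inner maximum over $\bq$ to $\bq=\bp$. This turns the multiscale statistic into a product of independent single-scale pieces with no decorrelation or $2d$-dimensional variogram needed, after which Piterbarg's theorem and the scale integral do the rest. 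Your approach could in principle be pushed through, but it requires controlling the variogram of $X_{\bh,\bt}$ in the $\bh$-direction and a double-sum/long-range argument that the paper avoids entirely.

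One minor slip: in your local expansion $1-\mathrm{Corr}(X_{\bh,\bt_1},X_{\bh,\bt_2})$, the factor $\det(D_\Xi^{-1})$ does not sit in the variogram; condition \eqref{S2} puts the matrix $D_\Xi$ inside the argument of $\Xi$, and $\det(D_\Xi^{-1})$ only emerges as a Jacobian after the Pickands change of variables, multiplying $H_{2\gamma}$ in the tail asymptotics rather than the increment variance.
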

	
	All proofs will be given in in Section \ref{Proofs}.
	
	\begin{Remark}\label{BT1}~\\
		\vspace{-0.5cm}
		\begin{enumerate}
			\item Assumption \eqref{bias} is a mild assumption on the integral approximation as the required rate is very slow. It is satisfied, in particular, if $Tf$ and $\Phi$ in \eqref{Wavelet} are H\"older-continuous of some order, or if $Tf$ is H\"older-continuous and $\Phi$ is an indicator function. Note that due to the ill-posedness of the problem, $Tf$ being H\"older-continuous does typically not require $f$ to be continuous. 
			\item Although it might seem marginal, a proper choice of the constant $C_d$ is crucial for the boundedness of $\mathcal S(W)$. The choice $C_d = 2d + d/\gamma-1$ used in the formulation of the theorem is adjusted to the case where a dense grid of scales in the sense of \eqref{Grid2} is considered. In particular, this includes the case where  \textbf{all} scales in Assumption \ref{Ass:Dict} \eqref{R1} ranging from $\bh_{\min}$ to $\bh_{\max}$ are used. If now, for instance, $T = \mathrm{id}$ and $\Phi$ in \eqref{Wavelet} is chosen to be the indicator function of $\left[0,1\right]^d$, we have $\gamma = 1/2$ and consequently $C_d = 4d-1$, which coincides with the constant of \citet{ExactScan} for the Gaussian case.
			
			However, in many situations a less dense grid of scales might be of interest, e.g. under prior scale information on the object of interest $f$. Then for the choice $C_d = 2d + d/\gamma-1$ the statistic $\mathcal S \left(W\right)$ is still a.s. bounded from above, but \eqref{eq:GumbelGW} might not be valid anymore. To avoid this, $C_d$ has to be adjusted. Suppose in what follows that the grid of positions still satisfies \eqref{Grid1}. In the least dense regime, when $\mathcal S(W)$ behaves as in a single scale scenario, the proper choice is $C_d = d/\gamma -1$. Another interesting special case is when only squares in a dense range are considered (this is $\bh_i = \left(h_i, ..., h_i\right)$ and \eqref{Grid2} is satisfied), where one should choose $C_d = 1+d/\gamma$.
			
			All these choices of $C_d$ are specified in more detail in Corollary \ref{Cor} in Section \ref{Sec:Aux} and follow from our general result in Theorem \ref{Thm:Gumbel}.
			\item As specified in the theorem, $\mathcal S (W)$ is bounded for any choice of the constant $K>0$. In fact, $K$ does not affect the asymptotic power of MISCAT as it only determines the location of the limiting distribution. For $\gamma\in\{1/2,1\},$ $H_{2\gamma}$ can be computed explicitly \citep[see][]{p69}, i.e. $H_1 = 1$ and $H_2= \pi^{-\frac{d}{2}}$. In this case the choices
			\begin{align*}
			K=
			\begin{cases}
			\frac{|\det D_{\Xi}^{-1}|I_d(\delta,\Delta)}{\sqrt{2\pi}},&\text{if }\gamma=\frac{1}{2}\\
			\frac{|\det D_{\Xi}^{-1}|I_d(\delta,\Delta)}{(2\pi)^{\frac{d+1}{2}}},&\text{if }\gamma=1,
			\end{cases}
			\end{align*}
			yield standard Gumbel limit distributions. If $\gamma =1$ and if the correlation function $r_\Xi$ of the Gaussian field $Z_{\bt}=\int\Xi\bigl(\bt-\bz\bigr)\,\mathrm dW_{\bz}$ is twice differentiable in $\mathbf{0}$, the matrix $D_{\Xi}$ can be computed via $D^*_{\Xi}D_{\Xi}=\mathrm{Hess}_{r_{\Xi}}(\mathbf{0})^{-1}.$ For $T$ being the Radon transform or a convolution operator, this allows us to give explicit constants $K$ in \eqref{eq:CalRadon} and \eqref{eq:CalDec}, respectively, ensuring standard Gumbel limit distributions.
			\item In the situation of Theorem \ref{Th:1} (b) under a  weaker assumption than  \eqref{eq:Phi_h_convergence} and \eqref{S2} it can be shown that the limiting distribution is stochastically bounded by Gumbel distributions and is hence non-degenerate in the limit. This will be done in Theorem \ref{Thm:Conv2} in the situation of deconvolution.  
		\end{enumerate}
	\end{Remark}
	
	\subsection{Statistical Inference}\label{Sec:GT3}
	
	In the following, let $q_{1-\alpha}$ denote the $1-\alpha$-quantile of the approximating process $\mathcal{S}(W)$. To compare the local test statistics $\mathcal S \left(Y,i\right)$ in \eqref{scan2} with $q_{1-\alpha}$, we have assumed so far to know the local variances $\sigma^2_i=\mathrm{Var}[\langle Y,\Phi_i\rangle_n]$. The next Remark shows that they can easily be estimated without changing the limiting distribution of $\mathcal S \left(W\right)$.

	\begin{Remark}\label{rem:variance_estimation}
		As mentioned before, the local variances $\sigma_i^2$, $i \in I_N$, depend on $\text{Var}\left[\xi_{\bj}\right] = \sigma^2(\bx_{\bj})$ (cf. Assumption \ref{Ass:Noise}), $\bj \in I_n^d$, which are typically unknown in applications. Nevertheless, all results remain valid if the $C^1$-function $\sigma^2$ (see Assumption \ref{Ass:Noise}) can be estimated from the data by $\hat\sigma^2$ such that
		\begin{align}\label{eq:var}
		\max_{i \in I_N}\bigl|\hat\sigma^2(\bt_i)-\sigma^2(\bt_i)\bigr|=o_{\mathbb{P}}\left(\log(n)^{-\frac{1}{2}}\right).\tag{V}
		\end{align}
		The local variances $\sigma_i^2$ can then be estimated by $\hat \sigma_i^2 := \left\langle \hat\sigma^2, \Phi_i^2 \right\rangle_n$. Condition \eqref{eq:var} is e.g. satisfied for (suitable) kernel-type estimators or point-wise maximum likelihood estimators as used in Section \ref{Application}.
	\end{Remark}
	
	We conclude by Theorem \ref{Th:1} that
	\begin{align*}
	\lim_{n\rightarrow\infty}\mathbb{P}_{0}\bigl(\mathcal S \left(Y,i\right)\leq q_{1-\alpha}\quad\text{for all}\quad i\in I_N\bigr)\geq1-\alpha,
	\end{align*}
	and hence \eqref{eq:FWER} is valid, i.e. all rejections are significant findings. Conversely, it can be shown that, with overall confidence of approximately $(1-\alpha)\cdot100\%,$ all relevant components are found, provided that the signal is sufficiently strong. 
	\begin{Lemma}\label{Lemma:Power1} 
		Suppose we are given observations from model \eqref{model} with random noise satisfying Assumption \ref{Ass:Noise} and dictionaries $\mathcal{U}$ and $\mathcal{W}$ as specified in Assumption \ref{Ass:Dict}. Let $\mathcal{I}_{\alpha}$ denote the set of all large components, i.\,e.
		\begin{align*}
		\mathcal{I}_{\alpha}:= \left\{i~\big|~\langle\phi_i\,,\,f\rangle>2\biggl(\frac{q_{1-\alpha}}{\omega_i}+\omega_i\biggr)\sigma_i\right\}.
		\end{align*}
		Then, under the assumptions of Theorem \ref{Th:1}
		\begin{align*}
		\lim_{n\rightarrow\infty}\mathbb{P}\bigl(\mathcal S \left(Y,i\right)>q_{1-\alpha}\quad\text{for all}\quad i\in\mathcal{I}_{\alpha}\bigr)\geq1-\alpha
		\end{align*}
	\end{Lemma}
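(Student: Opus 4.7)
The plan is to reduce the event that some large component escapes detection to an upper-tail event for a scan statistic built from $-\xi$, and then to reuse the Gaussian approximation of Theorem \ref{Th:1}(a). First I would decompose $\langle Y,\Phi_i\rangle_n = \langle f,\varphi_i\rangle + b_i + \langle \xi,\Phi_i\rangle_n$ with $b_i := \langle \mathbb{E}[Y],\Phi_i\rangle_n - \langle Tf,\Phi_i\rangle$, and use the defining inequality $\omega_i\langle f,\varphi_i\rangle/\sigma_i > 2(q_{1-\alpha}+\omega_i^2)$ of $\mathcal{I}_\alpha$. A direct rearrangement then shows that for every $i\in\mathcal{I}_\alpha$, on the event $\{S(Y,i)\leq q_{1-\alpha}\}$ one has
\begin{equation*}
\omega_i\left(\frac{-\langle\xi,\Phi_i\rangle_n}{\sigma_i}-\omega_i\right) \;>\; q_{1-\alpha} + \frac{\omega_i b_i}{\sigma_i}.
\end{equation*}
Hence the probability of a missed detection inside $\mathcal{I}_\alpha$ is dominated by a one-sided deviation probability of the scan statistic built from the negated noise $-\xi$.

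Next I would show that the bias contribution is asymptotically negligible. Since $\sigma_i^2 = \mathrm{Var}[\langle Y,\Phi_i\rangle_n]$ satisfies $\sigma_i^2\asymp\|\Phi_i\|_2^2/n^d$ under Assumption \ref{Ass:Noise} and $\omega_i=O(\sqrt{\log n})$, condition \eqref{bias} yields $\max_i|\omega_i b_i/\sigma_i|=o(1)$. Writing $\widetilde{\mathcal{S}}(-\xi) := \max_{i\in I_N}\omega_i(-\langle\xi,\Phi_i\rangle_n/\sigma_i - \omega_i)$, this gives, for any fixed $\epsilon>0$ and all sufficiently large $n$,
\begin{equation*}
\mathbb{P}\bigl(\exists\,i\in\mathcal{I}_\alpha:\,S(Y,i)\leq q_{1-\alpha}\bigr) \;\leq\; \mathbb{P}\bigl(\widetilde{\mathcal{S}}(-\xi) \geq q_{1-\alpha}-\epsilon\bigr).
\end{equation*}

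The core step is then to apply Theorem \ref{Th:1}(a) to the negated noise. This is legitimate because $-\xi$ inherits Assumption \ref{Ass:Noise} verbatim: independence, centring and the variance function $\sigma^2$ are preserved, and the even moments of $-\xi_{\bj}$ coincide with those of $\xi_{\bj}$, so \eqref{Moments} and \eqref{Moments2} carry over. Since moreover $-W\stackrel{d}{=}W$, the statistic $\widetilde{\mathcal{S}}(-\xi)$ shares the limit law of $\mathcal{S}(W)$. Using the definition of $q_{1-\alpha}$ and continuity of that limit at $q_{1-\alpha}$ (guaranteed under the Gumbel regime of Theorem \ref{Th:1}(b), and otherwise handled by an infinitesimal right-shift),
\begin{equation*}
\limsup_{n\to\infty}\mathbb{P}\bigl(\widetilde{\mathcal{S}}(-\xi)\geq q_{1-\alpha}-\epsilon\bigr) \;\leq\; \mathbb{P}\bigl(\mathcal{S}(W)\geq q_{1-\alpha}-\epsilon\bigr) \;\xrightarrow[\epsilon\downarrow 0]{}\; \alpha,
\end{equation*}
which combined with the previous reduction gives $\liminf_n\mathbb{P}(S(Y,i)>q_{1-\alpha}\text{ for all }i\in\mathcal{I}_\alpha)\geq 1-\alpha$.

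The step I expect to be the main obstacle is the clean transfer of Theorem \ref{Th:1}(a) to $-\xi$: one must verify that the Chernozhukov--Chetverikov--Kato type coupling on which that theorem rests depends only on the absolute-moment bounds \eqref{Moments}--\eqref{Moments2} and is therefore insensitive to a sign flip, rather than on any specific feature of the joint law of $\xi$. A secondary technicality is the lower variance estimate $\sigma_i^2\gtrsim\|\Phi_i\|_2^2/n^d$ needed to convert \eqref{bias} into uniform smallness of $\omega_i b_i/\sigma_i$, which in turn relies on the uniform second moment lower bound in \eqref{Moments2} and the regularity of $\sigma$.
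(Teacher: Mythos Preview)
Your proposal is correct and follows essentially the same route as the paper: decompose $\langle Y,\Phi_i\rangle_n$ into signal, bias, and noise, use the defining inequality of $\mathcal{I}_\alpha$ to reduce a missed detection to the event $-\langle\xi,\Phi_i\rangle_n > q_{i,1-\alpha}$ (up to the negligible bias), and then invoke Theorem~\ref{Th:1} for the sign-flipped noise. The paper's proof is a three-line sketch that leaves the points you flag (sign-invariance of Assumption~\ref{Ass:Noise} for the coupling, the lower variance bound needed to absorb \eqref{bias}) implicit, so your worries about the ``main obstacle'' are legitimate but resolve exactly as you indicate.
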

	
	For general $T$ it is not clear if the detection guarantee in Lemma \ref{Lemma:Power1} is optimal in the sense that weaker signals cannot be detected by any procedure. However, in the next subsection we will show that in special situations MISCAT obeys an oracle optimality property.
	
	\subsection{Asymptotic Optimality}\label{Sec:Opt}
	
	For signals built from block signals, the asymptotic power of MISCAT can be computed explicitly  which reveals an oracle optimality property of MISCAT in the following sense. Suppose that $f=\mu_{n,\bh_{\star}}I_{\left[\bt_\star-\bh_\star,\bt_\star\right]}$.
	If one knew the correct scale $\bh_\star$, one would perform a single-scale test in order to find the location $\bt_\star$. Hence, in this idealized situation, the \textquotedblleft oracle scan statistic\textquotedblright\, $\mathcal{S}^\star(Y)$ given by
	\begin{align*}
	\mathcal{S}^\star(Y)=\sup_{i\in I_N}\omega_{\bh_{\star}}\big(K, \tfrac{d}{\gamma}-1\big)\bigg(\sigma_i^{-1}\left\langle Y,\Phi_{\bh_{\star}}\bigg(\frac{\bt_i-\cdot}{\bh_{\star}}\bigg)\right\rangle_n-\omega_{\bh_{\star}}\big(K, \tfrac{d}{\gamma}-1\big)\bigg)
	\end{align*} 
	would be used. Note the different adjustment of weights due to Remark \ref{BT1}(b). It turns out that MISCAT performs as well in terms of its asymptotic power as the oracle test corresponding to $\mathcal S^\star \left(Y\right)$. Moreover, the following theorem guarantees that signals will be detected asymptotically with probability 1, if 
	\begin{align*}
	\mu_{n,\bh}\geq\max_{\bt}\sigma(\bt)(\sqrt{2\log(1/\bh_{\star})}+\beta_n) n^{-\frac{d}{2}} \|\Phi_{i_\star}\|_2,
	\end{align*}
	where $i_\star$ is such that $(\bt_i,\bh_i)=(\bt_\star,\bh_\star)$ and $\beta_n\to\infty$. In this setting, if the errors are i.i.d. standard normal and $T=\mathrm{id}$, the single scale test is minimax optimal if $\|\Phi_i\|_2=\sqrt{\bh_i^{\mathbf 1}}$ \citep[see][]{DuemSpok2001,ChanWalther,k17}. Thus, also the multiscale procedure MISCAT is minimax optimal in this case. If $T\neq\mathrm{id}$, optimality depends on both dictionaries $\mathcal{W}$ and $\mathcal{U}$ and special care has to be put into the choice of dictionary functions. This is discussed in more detail in Section \ref{Sec:OptDec} below.
	
	Theorem \ref{Thm:Power} provides an expansion of the asymptotic power of MISCAT under general noise assumptions. This is a generalization of Theorems 4 and 6 in \citet{ExactScan}. 
	\begin{Theorem}[Asymptotic Power of MISCAT]\label{Thm:Power}
		Suppose we are given observations from model \eqref{model} with random noise satisfying Assumption \ref{Ass:Noise} and dictionaries $\mathcal{U}=\{\phi_i\,|\,\phi_i(\bz)=\phi((\bt_i-\bz)/\bh_i),\,\phi(\bz)>0,\bz\in(0,1)^d \}$ and $\mathcal{W}$ as specified in Assumption \ref{Ass:Dict}.  Suppose \eqref{eq:scale_logs} with $0 < \delta < \Delta \leq 1$ and fix a scale $\mathbf{h}_\star = \bh_{\star}(n) \in \left[\bh_{\min},\bh_{\max}\right]$ and a subset $\mathcal T_\star \subset I_N$ such that $\bh_i = \bh_{\star}$ for all $i \in \mathcal T_{\star}$. Now consider the set of functions $f$ with support given by the union of all corresponding boxes which are sufficiently strong, i.e. 
		\[
		\mathscr{S}_{\mathcal{T}_\star}(\bh_\star, \mu_n):=\Big\{f\,\big|\, \eqref{bias} \text{ holds, }\mathrm{supp}(f)=\bigcup_{i\in\mathcal{T}_\star}[\bt_i-\bh_\star,\bt_i], \quad \left\langle \varphi_i, f\right\rangle \geq \mu_n  \frac{\left\Vert \Phi_{i} \right\Vert_{2}}{n^{d/2}}, i\in\mathcal{T}_\star\Big\}.
		\]
		Assume that $\sigma\in C^{1}([0,1]^d)$ and $\bt_\star\in(0,1)^d$ where $\bt_\star\in\mathrm{argmax}\{\sigma(\bt)\,|\,\bt\in[0,1]^d\}$ and let $K>0$. 
		\begin{enumerate}
			\item If $\{\bh_i\,|\;i \in I_N\}=\{\bh_\star\}$, i.\,e. for each $\bt$ we consider scanning windows of (correct) size $\bh_\star$, then MISCAT with the single-scale-calibration $\omega_i \left(K, d/\gamma-1 \right)$ as in \eqref{eq:calibration} (cf. Remark \ref{BT1}(b)) attains power
			\begin{align*}
			\inf_{f\in \mathscr{S}_{\mathcal{T}_\star}(\bh_\star, \mu_n)}
			\mathbb{P}_{f}\big(\mathcal{S}^\star(Y)>q_{1-\alpha}\big)
			&=\inf_{f\in \mathscr{S}_{\{\bt_\star\}}(\bh_\star, \mu_n)}\mathbb{P}_{f}\big(\mathcal{S}^\star(Y)>q_{1-\alpha}\big)\\
			&=\alpha+(1-\alpha)\cdot\tail\bigg(\sqrt{2\log\left(\tfrac{1}{\bh_\star^{\be}}\right)}-\frac{\mu_n}{\sigma(\bt_{\star})}\bigg)+o(1).
			\end{align*}
			Here and in the following, $\tail \left(x\right) := \int_x^\infty \left(2\pi\right)^{-1/2} \exp\left(-y^2/2\right) \,\mathrm d y$ is the tail function of the standard normal distribution.
			\item In general, MISCAT with the multiscale-calibration $\omega_i \left(K, 2d+d/\gamma-1 \right)$ as in \eqref{eq:calibration} satisfies
			\begin{align}\label{eq:ExpPower}
			\inf_{f\in \mathscr{S}_{\mathcal{T}_\star}(\bh_\star, \mu_n)}
			\mathbb{P}_{f}\big(\mathcal{S}(Y)>q_{1-\alpha}\big)+o(1)
			\geq\inf_{f\in \mathscr{S}_{\mathcal{T}_\star}(\bh_\star, \mu_n)}
			\mathbb{P}_{f}\big(\mathcal{S}^\star(Y)>q_{1-\alpha}\big),
			\end{align}
			i.\,e. the multiscale procedure performs at least as well as the oracle procedure.
		\end{enumerate}
	\end{Theorem}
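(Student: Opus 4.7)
The plan is to reduce the power computation to an extreme-value calculation via the Gaussian approximation of Theorem~\ref{Th:1}(a), and then to isolate the contribution at the ``oracle'' index $i_\star$ whose scale equals $\bh_\star$ and whose center is closest to $\bt_\star$. The structure $\alpha+(1-\alpha)\tail(\cdot)$ in the target formula is precisely what one expects from asymptotic independence between the local Gaussian statistic at $i_\star$ and the maximum over the remaining indices, and this will be the organizing principle. As preparation I would decompose $\langle Y,\Phi_i\rangle_n=\langle f,\varphi_i\rangle+r_{n,i}+\eta_i$ (the Riemann-sum bias $r_{n,i}$ is negligible by~\eqref{bias}), and use $\sigma\in C^1$ together with the compact support of $\Phi_i$ to show, uniformly in $i$,
\begin{equation*}
\sigma_i^2=\sigma^2(\bt_i)\,\|\Phi_i\|_2^2\,n^{-d}\,(1+o(1)),
\end{equation*}
and set $Z_i:=\eta_i/\sigma_i$ so that Theorem~\ref{Th:1}(a) provides a Gaussian coupling of the vector $(Z_i)_i$ uniformly up to $o_{\mathbb P}(1)$.

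For part~(a), using~\eqref{Grid1} pick $i_\star\in\mathcal T_\star$ with $\bh_{i_\star}=\bh_\star$ and $\|\bt_{i_\star}-\bt_\star\|_\infty=O(1/n)$. The constraint $\langle f,\varphi_{i_\star}\rangle\geq\mu_n\|\Phi_{i_\star}\|_2/n^{d/2}$ combined with the variance asymptotics yields $\langle f,\varphi_{i_\star}\rangle/\sigma_{i_\star}\geq\mu_n/\sigma(\bt_\star)+o(1)$. Since $\omega_{i_\star}=\sqrt{2\log(1/\bh_\star^{\be})}(1+o(1))$ and $q_{1-\alpha}=O(1)$, the event $\{S^\star(Y,i_\star)>q_{1-\alpha}\}$ reduces to $\{Z_{i_\star}>\sqrt{2\log(1/\bh_\star^{\be})}-\mu_n/\sigma(\bt_\star)+o(1)\}$, whose probability equals $\tail(\sqrt{2\log(1/\bh_\star^{\be})}-\mu_n/\sigma(\bt_\star))+o(1)$ by the CLT under Assumption~\ref{Ass:Noise}. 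A Berman-type comparison (as in \citet{ExactScan}) applied to the Gaussian field provided by Theorem~\ref{Th:1}(a) shows that excluding the single index $i_\star$ preserves the limiting distribution, giving $\mathbb P_f(\max_{i\neq i_\star}S^\star(Y,i)>q_{1-\alpha})=\alpha+o(1)$, and that this event is asymptotically independent of $\{S^\star(Y,i_\star)>q_{1-\alpha}\}$. Inclusion--exclusion then produces
\begin{equation*}
\mathbb P_f(\mathcal S^\star(Y)>q_{1-\alpha})=\alpha+(1-\alpha)\tail\Big(\sqrt{2\log(1/\bh_\star^{\be})}-\mu_n/\sigma(\bt_\star)\Big)+o(1)
\end{equation*}
uniformly in $f\in\mathscr S_{\mathcal T_\star}(\bh_\star,\mu_n)$. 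Equality of the two infima (with the RHS) follows by choosing $f$ that saturates the constraints with the signal concentrated near $\bt_\star$, verifying that the extra boxes in $\mathcal T_\star$ contribute to the maximum only through shifts that are absorbed by the $o(1)$ correction at the Gumbel scale.

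For part~(b), the key observation is $\mathcal S(Y)\geq S(Y,i_\star)$ for the same $i_\star$. The two calibrations $\omega_{\bh_\star}(K,d/\gamma-1)$ and $\omega_{\bh_\star}(K,2d+d/\gamma-1)$ differ only by $O(\log\log(n)/\sqrt{\log(n)})=o(1)$, so the marginal probability at $i_\star$ coincides with that of part~(a) to leading order. The multiscale quantile $q_{1-\alpha}^{\mathrm{multi}}$ is $O(1)$ by Theorem~\ref{Th:1}(b), and $\mathbb P_0(\mathcal S(Y)>q_{1-\alpha}^{\mathrm{multi}})=\alpha+o(1)$ by Theorem~\ref{Th:1}(a); repeating the asymptotic-independence decomposition (with the maximum now running over all remaining indices across all scales) produces the same $\alpha+(1-\alpha)\tail(\cdot)+o(1)$ lower bound, matching the oracle infimum of part~(a) and establishing~\eqref{eq:ExpPower}. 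The principal obstacle is the asymptotic-independence step: in the multiscale setting the underlying Gaussian field has non-stationary correlations across scales and positions, so the Berman-type comparison must be carried out uniformly over $N=O(n^\kappa)$ indices, with careful tracking of how the scale-dependent calibration $\omega_i$ rebalances contributions from different scales so that removing $i_\star$ (and more generally its correlation neighborhood) does not disturb the limiting Gumbel law.
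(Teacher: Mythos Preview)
Your high-level architecture (Gaussian approximation, isolate the oracle index, combine via an inclusion--exclusion identity to produce $\alpha+(1-\alpha)\tail(\cdot)$) matches the paper's, but the way you propose to obtain the independence step is different and, as you yourself flag, is where the real work lies.

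The paper does \emph{not} use a Berman-type normal comparison. Instead it exploits the compact support of the dictionary: one removes not just the single index $i_\star$ but an entire neighborhood $\overline{\mathcal N}=\{i:\,|\bt_i-\bt_\star|\leq 3\bh_\star\}$. For $i\notin\overline{\mathcal N}$ the boxes $[\bt_i-\bh_\star,\bt_i]$ and $[\bt_\star-\bh_\star,\bt_\star]$ are disjoint, so (i) the signal contribution $\langle f,\varphi_i\rangle$ vanishes there, and (ii) the Gaussian variables $Z_{\bt_i,\bh_\star}$ are \emph{exactly} independent of $Z_{\bt_\star,\bh_\star}$. The neighborhood $\overline{\mathcal N}$ itself is negligible because after rescaling it sits inside a fixed cube $[\mathbf 0,\mathbf 3]$, hence $\sup_{i\in\overline{\mathcal N}}Z_{\bt_i,\bh_\star}=O_{\mathbb P}(1)$ while the calibration subtracts $\omega_{\bh_\star}\to\infty$. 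This is simpler and more robust than a Berman argument, which would struggle precisely because correlations between $Z_{i_\star}$ and nearby $Z_i$ are close to $1$; removing only the single index $i_\star$, as you suggest, leaves a highly correlated cluster behind and does not yield asymptotic independence.

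Two further points. First, for the \emph{upper} bound in~(a) (needed to get equality, not just $\geq$), the paper identifies the worst-case signal explicitly as the flat profile $f_\star=\mu_n n^{-d/2}\|\Phi_{i_\star}\|_2\|\varphi_{i_\star}\|_1^{-1}I_{[\bt_\star-\bh_\star,\bt_\star]}$ and then introduces a shrinking inner neighborhood $\mathcal N^o=\{i:\langle\varphi_i,I_{[\bt_\star-\bh_\star,\bt_\star]}\rangle>(1-\varepsilon_n)\|\varphi_i\|_1\}$ with $\varepsilon_n\to 0$, $\mu_n\varepsilon_n\to\infty$; on $\overline{\mathcal N}\setminus\mathcal N^o$ the signal is too weak to trigger detection, and on $\mathcal N^o$ all statistics collapse onto $V_{i_\star}$ via the metric $\rho$ from the proof of Theorem~\ref{Thm:Gumbel}. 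Your sentence about ``saturating the constraints'' does not capture this two-layer neighborhood argument. Second, for~(b) the multiscale neighborhood $\widetilde{\mathcal N}=\{i:[\bt_i-\bh_i,\bt_i]\cap[\bt_\star-\bh_\star,\bt_\star]\neq\emptyset\}$ is no longer small after rescaling, so the paper splits it by scale into $\{\bh_\star/\bh_i\leq\log n\}$ (handled by $O_{\mathbb P}(\sqrt{\log\log n})$ bounds) and the complement (handled block-by-block via Borell's inequality, showing each block contributes $O(\bh_\star^{\mathbf 1})^{C_\star}$). This scale-splitting is the substantive content of~(b); ``repeating the decomposition'' does not cover it.
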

	
	\section{Examples}\label{Sec:Ex}
	
	\subsection{The $d$-dimensional Radon transform} \label{Sec:RT}
	
	Assume one observes a discretized and noisy sample of the Radon transform of $f$,
	\begin{align}\label{modelRadon}
	Y_{\mathbf{k},l}=Tf(\pmb{\vartheta}_\mathbf{k},u_l)+\xi_{\mathbf{k},l};\quad  u_l=\frac{l-1/2}{n}, \quad l=1,\ldots,n
	\end{align}
	and $\pmb{\vartheta}_\mathbf{k}\in\mathbb{S}^{d-1}, \mathbf k \in I_n^{d-1}$ are design points which are uniformly distributed w.r.t. the angles in a parametrization using polar coordinates,
	where
	\begin{align*}
	Tf(u,\pmb{\vartheta})=\int_{\langle\mathbf{ v},\pmb{\vartheta}\rangle=u}f(\mathbf{v})\,\mathrm d\mu_{d-1}(\mathbf{v})
	\end{align*}
	denotes Radon transformation \citep[cf.][]{n86}, $\,\mathrm d\mu_{d-1}$ denotes the $(d-1)$-dimensional Lebesgue measure on the hyperplane $\{\mathbf{v}\,|\,\langle\mathbf{ v},\pmb{\vartheta}\rangle=u\}$  and $\xi_{\mathbf{k},l}$ are i.i.d., Var$[\xi_{(\be,1)}]=\sigma^2$.  In this case fix $\widetilde\varphi:\R^+\rightarrow\R$, set $\varphi \left(\bx\right) := \widetilde \varphi \left(\|x\|_2\right),$ $\mathrm{supp}(\widetilde \varphi)\subset[0,1]$ and define
	\begin{align}\label{eq:URadon} 
	\mathcal{U}=\left\{\varphi_i=h_i^{-d/2}\varphi\left(\frac{\cdot-\bt_i}{h_i}\right)\,\bigg|\,i \in I_N\right\},
	\end{align} 
	i.e. we consider a dictionary $\mathcal{U}$ of rotationally invariant functions. We now construct the corresponding dictionary $\mathcal{W}$. To this end we need to fix some more notation.
	Let $\mathrm d\pmb{\vartheta}$ denote the common surface measure on $\mathbf{S}^{d-1}$ such that for measurable $S\subset\mathbb{S}^{d-1}$ we have
	\begin{align*}
	|S|=\int_{S}\,\mathrm d\pmb{\vartheta}.
	\end{align*}
	Let further $\mathcal{F}_df$ denote the $d$-dimensional Fourier transform of $f$, defined by
	\begin{align*}
	\mathcal{F}_df(\pmb{\xi})=\int f(\bx)\,\exp\left(\textup{i}\langle\bx,\bxi\rangle\right)\,\mathrm d\bx\quad\text{such that}\quad f(\bx)=\frac{1}{(2\pi)^d}\int \mathcal{F}_df(\bxi)\,\exp\left(-\textup{i}\langle\bxi,\bx\rangle\right)\,\mathrm d\bxi.
	\end{align*}
	\begin{Lemma}\label{Le:WRadon}
		Let $\mathcal{U}$ be as in \eqref{eq:URadon}, $\varphi\in\mathcal{R}(T^*)$. Then
		\begin{align*}
		\mathcal{W}=\biggl\{\Phi_i\,\bigg|\,\Phi_{i}(u,\pmb{\vartheta}) = h_i^{-\frac{d}{2}} \Phi \left(\frac{u-\langle\pmb{\vartheta},\bt_i\rangle}{h_i} \right)\biggr\},
		\end{align*}
		where, due to the rotational invariance of $\varphi$, the function $\Phi$, defined by 
		\begin{align}\label{DefPhiRadon}
		\Phi \left(x\right) := \frac{1}{2(2\pi)^{d}}\mathcal{F}_1\left(\left(\mathcal{F}_d\varphi\right)\left(\cdot\pmb{\vartheta}\right)|\cdot|^{d-1}\right)\left(x\right), \qquad x\in \mathbb R,
		\end{align}
		is independent of $\pmb{\vartheta}$.
	\end{Lemma}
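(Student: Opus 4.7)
The natural approach is to invert $T^{*}$ on the rotationally invariant template $\varphi$ via the Fourier slice theorem, and then propagate the translation by $\mathbf{t}_i$ and the dilation by $h_i$ through the Fourier transform. First, I would recall that the $L^2$-adjoint of the Radon transform is the back-projection operator
\begin{align*}
T^{*}\Phi(\mathbf{x})=\int_{\mathbb{S}^{d-1}}\Phi\bigl(\langle\pmb{\vartheta},\mathbf{x}\rangle,\pmb{\vartheta}\bigr)\,\mathrm{d}\pmb{\vartheta},
\end{align*}
so it suffices to exhibit $\Phi(u,\pmb{\vartheta})$ such that this integral reproduces a given $\varphi\in\mathcal{R}(T^{*})$.

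Next, I would apply $d$-dimensional Fourier inversion to $\varphi$ and pass to spherical coordinates $\bxi=r\pmb{\vartheta}$, $r>0$, with Jacobian $r^{d-1}$. The rotational invariance of $\varphi$ implies that $\mathcal{F}_d\varphi(r\pmb{\vartheta})$ depends only on $r$, so I can symmetrize the radial variable to $r\in\mathbb{R}$, picking up $|r|^{d-1}$ and a prefactor $1/2$. This yields
\begin{align*}
\varphi(\mathbf{x})=\frac{1}{2(2\pi)^{d}}\int_{\mathbb{S}^{d-1}}\int_{-\infty}^{\infty}\mathcal{F}_d\varphi(r\pmb{\vartheta})\,|r|^{d-1}e^{-ir\langle\pmb{\vartheta},\mathbf{x}\rangle}\,\mathrm{d}r\,\mathrm{d}\pmb{\vartheta}.
\end{align*}
Because $r\mapsto\mathcal{F}_d\varphi(r\pmb{\vartheta})|r|^{d-1}$ is even, the inner integral equals $\mathcal{F}_1\bigl((\mathcal{F}_d\varphi)(\cdot\,\pmb{\vartheta})|\cdot|^{d-1}\bigr)(u)$ at $u=\langle\pmb{\vartheta},\mathbf{x}\rangle$ in the paper's Fourier convention. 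Comparing with the back-projection representation identifies $\Phi$ as in \eqref{DefPhiRadon}; the rotational invariance of $\mathcal{F}_d\varphi$ makes the inner integrand independent of the angular argument, so $\Phi(u,\pmb{\vartheta})=\Phi(u)$.

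The final step is to transport this template identity to $\varphi_{i}$. Translation and dilation rules of $\mathcal{F}_d$ give $\mathcal{F}_d\varphi_{i}(\bxi)=h_i^{d/2}e^{i\langle\mathbf{t}_i,\bxi\rangle}\mathcal{F}_d\varphi(h_i\bxi)$. Substituting this into the displayed formula for $\Phi_i$ and performing the change of variables $s=h_i r$ in the radial integral collects a factor $h_i^{-d/2}$ while shifting and rescaling the argument to $(u-\langle\pmb{\vartheta},\mathbf{t}_i\rangle)/h_i$. This produces exactly $\Phi_{i}(u,\pmb{\vartheta})=h_i^{-d/2}\Phi\bigl((u-\langle\pmb{\vartheta},\mathbf{t}_i\rangle)/h_i\bigr)$, matching the statement.

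The main obstacle is bookkeeping: one has to be careful with the Fourier conventions used in the paper and justify the radial symmetrization, which relies precisely on the fact that $|r|^{d-1}\mathcal{F}_d\varphi(r\pmb{\vartheta})$ is even in $r$ (a direct consequence of $\varphi$ being rotationally invariant). Once the symmetrization is in place, the remaining argument is a clean chain of Fourier substitutions; no further analytic input is required because $\varphi\in\mathcal{R}(T^{*})$ is assumed a priori, guaranteeing that the formal manipulations produce a function $\Phi\in\mathbb{H}_2$.
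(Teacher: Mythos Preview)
Your argument is correct, but it takes a different route from the paper's. The paper never writes down the back-projection operator; instead it works dually, computing both sides of the identity $\langle f,\varphi_i\rangle_{L^2(\mathbb{R}^d)}=\langle Tf,\Phi_i\rangle_{L^2(\mathbb{R}\times\mathbb{S}^{d-1})}$ for arbitrary $f$. On the left it uses Plancherel in $\mathbb{R}^d$ and passes to polar coordinates; on the right it uses one-dimensional Plancherel in $u$ together with the Fourier slice identity $\mathcal{F}_1(Tf(\cdot,\pmb{\vartheta}))(s)=\mathcal{F}_d f(s\pmb{\vartheta})$. Matching the two integrands in $(s,\pmb{\vartheta})$ forces the condition on $\mathcal{F}_1(\Phi_i(\cdot,\pmb{\vartheta}))$, from which $\Phi_i$ is read off by one-dimensional Fourier inversion.

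Your direct approach---writing $T^*$ explicitly as back-projection and recognising polar Fourier inversion as $T^*$ applied to a filtered profile---is arguably more transparent and makes the connection to filtered back-projection reconstruction explicit. The paper's duality argument, by contrast, never needs the formula for $T^*$ and stays closer to the pairing condition \eqref{PI} that drives the whole testing construction. One small point of exposition: the factor $1/2$ in your symmetrisation comes from the double cover $(r,\pmb{\vartheta})\leftrightarrow(-r,-\pmb{\vartheta})$ of $\mathbb{R}^d\setminus\{0\}$ and holds for any $\varphi$; rotational invariance enters only afterwards, to flip the sign $e^{-iru}\to e^{+iru}$ (via evenness of $r\mapsto\mathcal{F}_d\varphi(r\pmb{\vartheta})|r|^{d-1}$) so that the inner integral matches the paper's $\mathcal{F}_1$ convention. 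This does not affect the validity of your proof.
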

	
	Consequently, the functions $\Phi_{h_i}$ as in \eqref{Wavelet} satisfy
	\[
	\Phi_{h_i}(u,\pmb{\vartheta}) = h_i^{-\frac{d}{2}} \Phi \left(\frac{u-\langle\pmb{\vartheta},\bt_i\rangle}{h_i} \right),
	\] 
	i.e. we have the special structure $\Phi_{h_i}=C_{h_i}\Phi$ and hence we can set $\Phi_{h_i}/\|\Phi_{h_i}\|_{L^2(\R\times\mathbb{S}^{d-1})}=:\Xi$.
	It turns out that \eqref{S1} and \eqref{S2} are satisfied if $\varphi$ is sufficiently smooth. This is made more precise in the following Lemma.
	\begin{Lemma}
		\label{Le:AHCbRadon}
		Let $4\pi\|\mathcal{F}_1\big((\mathcal{F}_d\varphi)(\cdot\pmb{\vartheta})|\cdot|^{d-1}\big)(u-\langle \bt_i,\pmb{\vartheta}\rangle)\|_{L^2(\R\times\mathbb{S}^{d-1})}^{-1}=:C_{\varphi,d}$. If $\varphi\in H^{\frac{d+1}{2}}(\R^d)$, \eqref{S2} holds with
		\begin{align}\label{eq:DXiRadon}
		D_{\Xi}^{-2}:= \mathrm{diag}\biggl(C_{\varphi,d}\int_{\mathbb{R}^d}\omega_1^2\|\pmb{\omega}\|^{d-1}\big|(\mathcal{F}_d\varphi)(\pmb{\omega})\bigr|^2\,\mathrm d\pmb{\omega}\biggr).
		\end{align}
	\end{Lemma}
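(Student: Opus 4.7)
The proof is a Fourier-analytic computation. By Lemma \ref{Le:WRadon}, the normalized probe function is $\Xi(u,\pmb{\vartheta}) = C_\Phi^{-1}\Phi(u - \langle\pmb{\vartheta},\bt_i\rangle)$, where the position $\bt$ enters the ambient variable $(u,\pmb{\vartheta})\in\R\times\mathbb{S}^{d-1}$ only through the one-dimensional shift $u\mapsto u-\langle\pmb{\vartheta},\bt\rangle$, and $C_\Phi^{2} := \|\Phi(\cdot-\langle\pmb{\vartheta},\bt_i\rangle)\|^2_{L^2(\R\times\mathbb{S}^{d-1})}$ is independent of $\bt_i$ by translation invariance in $u$. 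Condition \eqref{S2} thus reduces to showing
\[
I(\bt,\bs):= C_\Phi^{-2}\int_{\mathbb{S}^{d-1}}\int_{\R}\bigl|\Phi\bigl(u-\langle\pmb{\vartheta},D_\Xi\bt\rangle\bigr)-\Phi\bigl(u-\langle\pmb{\vartheta},D_\Xi\bs\rangle\bigr)\bigr|^2\,du\,d\pmb{\vartheta} = \sum_{j=1}^d (t_j-s_j)^2 + o(\|\bt-\bs\|^2),
\]
with $D_\Xi$ diagonal as stated, yielding the exponent $\gamma=1$.

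The first main step is to apply Plancherel's theorem in $u$, which turns the inner integral into $(2\pi)^{-1}\int_{\R}|1-e^{i\langle\pmb{\vartheta},D_\Xi(\bs-\bt)\rangle\omega}|^2|\mathcal{F}_1\Phi(\omega)|^2\,d\omega$. Writing $|1-e^{ic\omega}|^2 = c^2\omega^2\cdot (\mathrm{sinc}(c\omega/2))^2\cdot 4 \cdot \tfrac{1}{4}$ and using dominated convergence with dominating function $c^2\omega^2|\mathcal{F}_1\Phi(\omega)|^2$, the leading quadratic term is extracted with a remainder $o(c^2)$ uniform in $\pmb{\vartheta}\in\mathbb{S}^{d-1}$. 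Combined with the isotropy identity $\int_{\mathbb{S}^{d-1}}\vartheta_j\vartheta_k\,d\pmb{\vartheta} = (|\mathbb{S}^{d-1}|/d)\,\delta_{jk}$, which gives $\int_{\mathbb{S}^{d-1}}\langle\pmb{\vartheta},D_\Xi(\bt-\bs)\rangle^2\,d\pmb{\vartheta} = (|\mathbb{S}^{d-1}|/d)\sum_j d_j^2(t_j-s_j)^2$ for diagonal $D_\Xi=\mathrm{diag}(d_1,\ldots,d_d)$, this produces
\[
I(\bt,\bs) = \frac{C_\Phi^{-2}\,|\mathbb{S}^{d-1}|}{2\pi\,d}\Biggl(\sum_{j=1}^d d_j^2 (t_j-s_j)^2\Biggr)\int_{\R}\omega^2|\mathcal{F}_1\Phi(\omega)|^2\,d\omega + o(\|\bt-\bs\|^2).
\]

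The second step invokes the explicit formula \eqref{DefPhiRadon}: Fourier inversion gives $\mathcal{F}_1\Phi(\omega) = \frac{1}{2(2\pi)^{d-1}}(\mathcal{F}_d\varphi)(\omega\pmb{\vartheta})|\omega|^{d-1}$, independent of $\pmb{\vartheta}$ by the rotational symmetry of $\varphi$. Passing to polar coordinates $\pmb{\omega}=\rho\pmb{\vartheta}$ on $\R^d$ (with Jacobian $\rho^{d-1}$) and reusing isotropy converts the one-dimensional Fourier integral to a $d$-dimensional one:
\[
\int_{\R}\omega^2|\mathcal{F}_1\Phi(\omega)|^2\,d\omega = \frac{d}{2(2\pi)^{2(d-1)}|\mathbb{S}^{d-1}|}\int_{\R^d}\omega_1^2\|\pmb{\omega}\|^{d-1}|(\mathcal{F}_d\varphi)(\pmb{\omega})|^2\,d\pmb{\omega}.
\]
Substituting back and equating the coefficient of $(t_j-s_j)^2$ to $1$ yields $d_j^{-2}=\kappa \cdot \int_{\R^d}\omega_1^2\|\pmb{\omega}\|^{d-1}|(\mathcal{F}_d\varphi)(\pmb{\omega})|^2\,d\pmb{\omega}$ for an explicit constant $\kappa$. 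Identifying $\kappa = C_{\varphi,d}$ via the defining relation of $C_{\varphi,d}$ and the identity $\|\mathcal{F}_1((\mathcal{F}_d\varphi)(\cdot\pmb{\vartheta})|\cdot|^{d-1})(u-\langle\bt_i,\pmb{\vartheta}\rangle)\|_{L^2(\R\times\mathbb{S}^{d-1})} = 2(2\pi)^d C_\Phi$ (immediate from \eqref{DefPhiRadon}) delivers the asserted diagonal form of $D_\Xi^{-2}$.

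The main obstacles are twofold. First, the bookkeeping: three separate Fourier normalizations—Plancherel in one dimension, the prefactor $1/[2(2\pi)^d]$ in \eqref{DefPhiRadon}, and the polar-coordinate Jacobian—must be assembled to collapse into the single constant $C_{\varphi,d}$, which is the most error-prone part. Second, the Sobolev hypothesis $\varphi\in H^{(d+1)/2}(\R^d)$ is used exactly to secure finiteness of $\int_{\R^d}\|\pmb{\omega}\|^{d+1}|(\mathcal{F}_d\varphi)(\pmb{\omega})|^2\,d\pmb{\omega}$, which by the polar-coordinate identity above is equivalent to $\int_{\R}\omega^2|\mathcal{F}_1\Phi(\omega)|^2\,d\omega<\infty$; this integrability both validates the dominated-convergence step and makes the leading coefficient well-defined, pinning down the Hölder exponent at $\gamma=1$.
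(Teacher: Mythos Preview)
Your proof is correct and follows essentially the same strategy as the paper: Plancherel in the $u$-variable, a Taylor expansion of the exponential difference, and rotational symmetry to obtain the diagonal form. The only difference is the order of operations. The paper first converts the $(u,\pmb{\vartheta})$-integral directly to a $d$-dimensional Euclidean integral via $\pmb{\omega}=u\pmb{\vartheta}$ (the missing Jacobian $|u|^{d-1}$ is supplied by $|\mathcal F_1^{-1}\Xi(u)|^2$ through \eqref{DefPhiRadon}), obtaining
\[
C_{\varphi,d}\int_{\R^d}\bigl|e^{i\langle\pmb{\omega},\bt\rangle}-e^{i\langle\pmb{\omega},\bs\rangle}\bigr|^2\,|(\mathcal F_d\varphi)(\pmb{\omega})|^2\,\|\pmb{\omega}\|_2^{d-1}\,\mathrm d\pmb{\omega},
\]
and only then Taylor-expands, with rotational symmetry killing the off-diagonal terms $\int\omega_j\omega_k\cdots\,\mathrm d\pmb{\omega}$ directly. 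You instead Taylor-expand first, then use the sphere isotropy identity, and finally convert the radial integral to Cartesian. Both routes are equivalent; the paper's ordering simply avoids splitting the polar-to-Cartesian passage into two separate steps and keeps the constant bookkeeping a bit lighter.
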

	
	In general, the dictionary functions $\Phi$ may be of unbounded support.
	In this case the results from Theorem \ref{Th:1} b) remain valid if we exclude a small boundary region from our analysis. Here, we only consider positions $\bt_i\in[\mathbf{0},\be-\pmb{\rho}]$, where $\pmb{\rho}=(\rho,\ldots,\rho)^T,\,\rho>0$ and we obtain the following extreme value theorem for MISCAT in the case of the Radon transform.
	\begin{Theorem}[MISCAT for the Radon Transform]\label{Thm:Radon}
		Suppose that we have access to observations following model \eqref{modelRadon}. Let $\bt_i\in[\mathbf{0},\be-\pmb{\rho}]$, where $\pmb{\rho}=(\rho,\ldots,\rho)^T,\,\rho>0.$
		Assume further that
		the  approximation error of $\langle \mathbb{E}[Y],\Phi_{h_i}\rangle_n$ is asymptotically negligible, i.\,e., \eqref{bias} holds and  $\varphi\in H^{\frac{d+1}{2}}(\R^d)$, such that the integral in \eqref{eq:DXiRadon} is finite.
		If furthermore \eqref{eq:scale_logs} holds true with $0 < \delta < \Delta \leq 1$ and the grids of positions $\bt$ and scales $h$ are sufficiently fine, i.e. satisfy \eqref{Grid1} and \eqref{Grid2} and if the calibration 
		\begin{align}\label{eq:CalRadon}
		\omega(K,1+d)\quad\text{with}\quad K=(1-\rho)^d(2\pi)^{-\frac{d+1}{2}}\mathrm{det}(D_{\Xi}^{-2})^{\frac{1}{2}}\log(\Delta/\delta)
		\end{align}
		is used (see \eqref{eq:calibration} and Remark \ref{BT1}(b)), where $D_{\Xi}^{-2}$ is defined in \eqref{eq:DXiRadon}, the following holds
		\begin{align*}
		\lim_{n\to\infty}\mathbb{P}_0\bigl[\mathcal{S}(Y)\leq\lambda\bigr]=e^{-e^{-\lambda}}.
		\end{align*}
		Furthermore the statements of Lemma \ref{Lemma:Power1} and Theorem \ref{Thm:Power} also hold.
	\end{Theorem}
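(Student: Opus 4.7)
The plan is to deduce Theorem \ref{Thm:Radon} from the general Gumbel statement Theorem \ref{Th:1}(b) by verifying its hypotheses in the Radon setting and then choosing the constant $K$ in \eqref{eq:CalRadon} so that the prefactor in \eqref{eq:GumbelGW} equals $1$. The structural input is provided by Lemma \ref{Le:WRadon} (identification of $\mathcal{W}$) and Lemma \ref{Le:AHCbRadon} (the average Hölder condition \eqref{S2} with explicit $D_\Xi$).

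First I would assemble the hypotheses of Theorem \ref{Th:1}(b). By Lemma \ref{Le:WRadon} each $\Phi_i$ has the form required by \eqref{Wavelet} with a one-dimensional scale parameter $h_i$, so effectively $\bh_i=(h_i,\ldots,h_i)$ and we are in the cubes-only regime. Since Lemma \ref{Le:WRadon} gives $\Phi_{h_i}=h_i^{-d/2}\Phi$ with $\Phi$ independent of $i$, the normalized profile $\Phi_{h_i}/\|\Phi_{h_i}\|_2$ equals $\Xi:=\Phi/\|\Phi\|_2$ identically, so \eqref{eq:Phi_h_convergence} is trivial. The Sobolev hypothesis $\varphi\in H^{(d+1)/2}(\mathbb{R}^d)$ together with Lemma \ref{Le:AHCbRadon} yields \eqref{S2} with $\gamma=1$ and $D_\Xi$ as in \eqref{eq:DXiRadon}; in particular \eqref{S1} holds. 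Assumption \ref{Ass:Dict}(a) is given, (b)-(c) follow from the polynomial-size grid of $(\bt_i,h_i)$'s together with \eqref{eq:scale_logs}. Assumption \ref{Ass:Noise} is immediate for the i.i.d.\ errors in \eqref{modelRadon}. Condition \eqref{bias}, and the grid conditions \eqref{Grid1}-\eqref{Grid2}, are in the theorem hypothesis.

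Next I would specialize the constants. Since only cubes in a dense scale range are considered, Remark \ref{BT1}(b) dictates the calibration $C_d=1+d/\gamma=1+d$ used in \eqref{eq:CalRadon}, and the effective scale dimension is one, so the scale factor $I_d(\delta,\Delta)$ in \eqref{eq:GumbelGW} collapses to $\log(\Delta/\delta)$. With $\gamma=1$ the Pickands constant becomes $H_2=\pi^{-d/2}$ (Remark \ref{BT1}(c)). The boundary restriction $\bt_i\in[\mathbf{0},\be-\pmb{\rho}]$ contributes a spatial Lebesgue volume factor $(1-\rho)^d$ (replacing the implicit unit volume of positions in the general statement). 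Substituting these into the squares-only analogue of \eqref{eq:GumbelGW} gives a Gumbel prefactor proportional to $(1-\rho)^d\,\pi^{-d/2}\,|\!\det D_\Xi^{-1}|\,\log(\Delta/\delta)/(\sqrt{2\pi}\,K)$, and the choice of $K$ in \eqref{eq:CalRadon} is precisely what is needed to make this prefactor equal to $1$, yielding the standard Gumbel limit. The validity of Lemma \ref{Lemma:Power1} and Theorem \ref{Thm:Power} in this setting is then automatic since we have already verified exactly the assumptions under which they were derived.

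The main obstacle will be the boundary handling. The kernel $\Phi$ from \eqref{DefPhiRadon} is constructed via an inverse Fourier transform and need not have compact support, whereas \eqref{Wavelet} requires $\mathrm{supp}(\Phi_{\bh_i})\subset[0,1]^d$. Restricting $\bt_i\in[\mathbf{0},\be-\pmb{\rho}]$ forces the effective mass of $h_i^{-d/2}\Phi((u-\langle\pmb{\vartheta},\bt_i\rangle)/h_i)$ to concentrate inside the observation region $[0,1]\times\mathbb{S}^{d-1}$ up to tails which, using the Sobolev decay of $\varphi$, are $o(1)$ uniformly in $i\in I_N$. Making this quantitative—so that both the uniform Gaussian coupling of Theorem \ref{Th:1}(a) and the extreme-value computation leading to \eqref{eq:GumbelGW} survive the truncation with only the advertised $(1-\rho)^d$ boundary correction—is where the bulk of the technical work lies.
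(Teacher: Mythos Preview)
Your plan is essentially the paper's: verify the hypotheses of the general Gumbel machinery via Lemmas \ref{Le:WRadon} and \ref{Le:AHCbRadon}, handle the unbounded support of $\Phi$ through the boundary restriction $\bt_i\in[\mathbf 0,\be-\pmb\rho]$, and tune $K$ so the prefactor equals $1$. Your constant bookkeeping (squares-only calibration $C_d=1+d$, $\gamma=1$, $H_2=\pi^{-d/2}$, a single effective scale dimension giving $\log(\Delta/\delta)$, and the $(1-\rho)^d$ volume factor) matches the paper.

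One point you gloss over that the paper treats explicitly: the observation space is $[0,1]\times\mathbb{S}^{d-1}$, not $[0,1]^d$, so Theorem \ref{Th:1} and its supporting Lemmas \ref{Le:SumInt}, \ref{Le:coupling} and Theorems \ref{GWSMain}, \ref{Thm:Gumbel} do not apply verbatim. The paper therefore first constructs Gaussian white noise on $\R\times\mathbb{S}^{d-1}$ with respect to the product measure $du\otimes d\pmb\vartheta$ and argues that, after parametrization by polar coordinates, the resulting Wiener integrals have exactly the same structure as the Euclidean ones (since $d\pmb\vartheta$ is independent of $u$), so the auxiliary lemmas transfer. Your sentence ``each $\Phi_i$ has the form required by \eqref{Wavelet}'' is thus not literally correct, and this adaptation step should be inserted before you invoke the general theory. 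For the unbounded support, the paper packages the truncation argument you sketch as a separate lemma (Lemma \ref{Lemma:Unbounded}), whose decay hypothesis $\int(1+\|\bz\|^2)^{1/2}\Xi^2(\bz)\,d\bz<\infty$ is verified here not from the Sobolev assumption on $\varphi$ but from the compact support of $\varphi$: this makes $\mathcal F_d\varphi$ smooth, whence $\mathcal F_d\varphi(\cdot\pmb\vartheta)|\cdot|^{d-1}\in H^{d-1}(\R)$ and $\Phi$ decays polynomially.
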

	
	\subsection{Deconvolution}\label{Sec:Dec}
	We discuss now in detail the case of deconvolution, i.\,e. \eqref{model} specializes to
	\begin{align}\label{model2}
	Y_{\bj}=(\psf \ast f)(\bx_{\bj})+\xi_{\bj}, \quad \bj\in\{1,\ldots, n\}^d,
	\end{align}
	where the function $\psf$ is a convolution kernel and the operation "$\ast$" denotes convolution as defined in \eqref{eq:conv}. In our subsequent data example $\psf$ corresponds to the point-spread function (PSF) of a microscope \citep[see e.\,g.][]{bbdv09,aem15,HW:16} .
	
	Assume that there exist positive constants $\underline{c},\overline{C}$ and $a$ such that
	\begin{align}\label{D1}
	\underline{c}\bigl(1+\|\bxi\|_2^2\bigr)^{-a}\leq|\mathcal{F}_d\psf(\bxi)|\leq\overline{C}\bigl(1+\|\bxi\|_2^2\bigr)^{-a}.\tag{D1}
	\end{align} 
	Assumption \eqref{D1} is a standard assumption characterizing mildly ill-posed deconvolution problems \citep[see e.\,g.][]{f91,Meister}.
	For any fixed function $\varphi$, $\|\varphi\|_2>0,$ generating a dictionary
	\begin{equation}\label{eq:dec_dict}
	\mathcal{U}=\Bigl\{\varphi_{i}\,\Big|\,\varphi_i(\bz)=\varphi\Big(\frac{\bt_i-\bz}{\bh_i}\Big),\;i\in I_N\Bigr\},
	\end{equation}
	the corresponding dictionary $\mathcal{W}$ inherits the required wavelet-type structure:
	\begin{align}\label{DictConvolution}
	\mathcal{W}=\Bigl\{\Phi_{i}\,\Big|\,\Phi_i(\bz)=\Phi_{\bh_i}\Big(\frac{\bt_i-\bz}{\bh_i}\Big),\;\Phi_{\bh_i}:=\mathcal{F}_d^{-1}\biggl(\frac{\mathcal{F}_d\phi}{\overline{\mathcal{F}_d\psf(\cdot/\bh_i)}}\biggr),\;i\in I_N\Bigr\},
	\end{align}
	and the results from the previous section transfer to deconvolution as follows.
	\begin{Theorem}[MISCAT for deconvolution]\label{Thm:Conv2}
		Suppose  model \eqref{model2} with convolution kernel $\psf$ satisfying Assumption \eqref{D1} and $\xi_{\bj}$ satisfying Assumption \ref{Ass:Noise}. Let $\bt_i\in[\pmb{\rho}+\bh_i,\be-\pmb{\rho}]$, where $\pmb{\rho}=(\rho,\ldots,\rho)^T,\,\rho>0.$ Consider the dictionary $\mathcal{W}$, given by 
		\eqref{DictConvolution} such that Assumption \ref{Ass:Dict} is satisfied and, in addition, 
		$\phi$ belongs to a Sobolev space $H^{2a+\gamma\vee 1/2}(\mathbb{R}^d)$. Assume further that
		the  approximation error of $\langle \mathbb{E}[Y],\Phi_i\rangle_n$ is asymptotically negligible, i.\,e., \eqref{bias} holds.
		\begin{enumerate}
			\item The results of Theorem \ref{Th:1}a) carry over to this general  convolution setting.\\
			\item Furthermore, let the  grids of positions $\bt$ and scales $\bh$ sufficiently fine, i.e. satisfy \eqref{Grid1} and \eqref{Grid2}. Then there exist  positive constants $ \underline{D}_{\gamma}$ and  $ \overline{D}_{\gamma}$ such that for any fixed $\lambda\in\R$ 
			\begin{align*}
			e ^{-\underline{D}_{\gamma} \displaystyle e^{-\lambda}}\leq\lim_{n\to\infty}\mathbb{P}_0\bigl[\mathcal{S}(Y)\leq\lambda\bigr]\leq e ^{-\overline{D}_{\gamma} \displaystyle e^{-\lambda}}.
			\end{align*}		
			Hence, under $H_0$, $\mathcal{S}(Y)$ is asymptotically non-degenerate.
			\item 	In the situation of (b), let $\bh_i=(h_i,\ldots,h_i)$ for all $i\in I_N$ and assume that \eqref{eq:scale_logs} holds true with $0 < \delta < \Delta \leq 1$. If the stronger condition \eqref{eq:Phi_h_convergence} holds, then with the calibration $w(K,1+d)$ we obtain
			\begin{align*}
			\lim_{n\to\infty}\mathbb{P}_0\bigl(\mathcal{S}(Y)\leq\lambda\bigr)=\exp\left(-\exp\left(-\lambda\right)\cdot\tfrac{H_{2\gamma}\det\left(D_{\Xi}^{-1}\right) \log(\Delta/\delta)}{\sqrt{2\pi}K}\right),
			\end{align*}
		\end{enumerate}
	\end{Theorem}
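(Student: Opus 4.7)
For part (a) the plan is to reduce to Theorem~\ref{Th:1}(a) by verifying that the dictionary $\mathcal{W}$ in \eqref{DictConvolution} satisfies Assumption~\ref{Ass:Dict}. The only non-trivial point is the average Hölder condition~\eqref{S1} for the family $\Phi_{\bh_i}=\mathcal{F}_d^{-1}\bigl(\mathcal{F}_d\phi/\overline{\mathcal{F}_d\psf(\cdot/\bh_i)}\bigr)$, uniformly in $i\in I_N$. By Plancherel and \eqref{D1} we have
\begin{equation*}
\int |\Phi_{\bh_i}(\bt-\bz)-\Phi_{\bh_i}(\bs-\bz)|^2\,\mathrm d\bz = \int |e^{\mathrm i\langle\bxi,\bt-\bs\rangle}-1|^2 \, \frac{|\mathcal{F}_d\phi(\bxi)|^2}{|\mathcal{F}_d\psf(\bxi/\bh_i)|^2}\,\mathrm d\bxi,
\end{equation*}
and since $|\mathcal{F}_d\psf(\bxi/\bh_i)|^{-2}\lesssim (1+\|\bxi\|_2^2/\bh_i^{2})^{2a}\lesssim \bh_{\min}^{-4a}(1+\|\bxi\|_2^2)^{2a}$ together with $\phi\in H^{2a+\gamma\vee 1/2}$ gives $\Phi_{\bh_i}\in H^{\gamma\vee 1/2}$ with controlled norm, an interpolation $|e^{\mathrm i\langle\bxi,\bt-\bs\rangle}-1|^2\leq 2(|\langle\bxi,\bt-\bs\rangle|^{2\gamma}\wedge 2)$ yields~\eqref{S1} with exponent $\gamma$, uniformly in $\bh_i$. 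The support and boundary conditions on $\Phi_{\bh_i}$ are absorbed into the assumption of the theorem, so Theorem~\ref{Th:1}(a) applies verbatim.

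For part (b) the strategy is to sandwich $\mathcal S(Y)$ between two variants of the scan statistic for which a Gumbel-type limit is available, without assuming convergence to a common shape $\Xi$. Let $Z_{i}:=\int\Phi_i(\bz)/\|\Phi_i\|_2\,\mathrm dW_{\bz}$; by part (a) it suffices to analyse $\max_i \omega_i(Z_i-\omega_i)$. The correlations of $Z_i$ between neighbouring scales are governed by the same Fourier ratio as above, so the local covariance behaves like $1-c_i\|D_{\Xi_i}(\bt-\bs)\|^{2\gamma}+o(\cdot)$ for a family of matrices $D_{\Xi_i}$ taking values in a \emph{compact} set (by \eqref{D1} and $\phi\in H^{2a+\gamma\vee 1/2}$). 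Taking upper and lower envelopes of these local covariance structures and invoking Slepian's lemma yields two Gaussian processes with the same marginals whose scan statistic converges to a Gumbel law by (a restricted version of) Theorem~\ref{Thm:Gumbel}; this produces the two constants $\underline D_\gamma, \overline D_\gamma$. One has to check that the envelope constants are strictly positive and finite, which follows from compactness plus the fact that Pickands' constant $H_{2\gamma}$ is strictly positive.

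For part (c), under the isotropic restriction $\bh_i=(h_i,\ldots,h_i)$ and the strong approximation \eqref{eq:Phi_h_convergence}, the limiting Gaussian field is, up to $o_{\mathbb P}(1/\sqrt{\log n})$, the stationary field $\widetilde Z_{\bt,h}:=\int \Xi((\bt-\bz)/h)\,\mathrm dW_{\bz}/\|\Xi\|_2$ evaluated on the one-parameter scale family $h\in[h_{\min},h_{\max}]$ with positions on a fine grid in $[\bh_i+\pmb\rho,\be-\pmb\rho]$. Thus Theorem~\ref{Th:1}(b) applies with the $d$-dimensional scale integration replaced by a one-dimensional one; the computation of Section~\ref{Sec:Aux}/Remark~\ref{BT1}(b) shows that the correct calibration is $\omega_i(K,1+d)$ and that the constant $I_d(\delta,\Delta)$ is replaced by $I_1(\delta,\Delta)=\log(\Delta/\delta)$. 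Plugging this into \eqref{eq:GumbelGW} and using $\mathrm{Leb}([\pmb\rho+\bh_i,\be-\pmb\rho])\to 1$ as $n\to\infty$ delivers the announced Gumbel limit.

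The main obstacle is uniformity in $\bh_i$ in part~(a): the Fourier multiplier $1/\overline{\mathcal{F}_d\psf(\cdot/\bh_i)}$ degenerates as $h_{\min}\to 0$, so the Hölder constant $L$ in \eqref{S1} must be controlled against $\bh_{\min}^{-4a}$, and the Sobolev order $2a+\gamma\vee 1/2$ imposed on $\phi$ is precisely what is needed to absorb this blow-up when combined with the lower bound on $h_{\min}$ in~\eqref{R1}. Verifying this bookkeeping carefully — and then ensuring that the comparison arguments in part~(b) give matching upper and lower envelopes rather than degenerate ones — is where the bulk of the technical work sits.
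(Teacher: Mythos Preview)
Your plan for part (a) contains a genuine gap in the verification of \eqref{S1}. The condition requires
\[
\int|\Phi_{\bh_i}(\bt-\bz)-\Phi_{\bh_i}(\bs-\bz)|^2\,\mathrm d\bz\;\leq\; L\,\|\bt-\bs\|_{2}^{2\gamma}\,\|\Phi_{\bh_i} \|_2^2
\]
with $L$ independent of both $i$ and $n$. Your Fourier bound correctly produces a factor $(\min_j h_{i,j})^{-4a}$ on the left, but you then claim this blow-up is ``absorbed'' by the Sobolev order on $\phi$ combined with the lower bound on $h_{\min}$ in \eqref{R1}. That cannot work: \eqref{R1} gives only $h_{\min}\gtrsim n^{-1}\mathrm{polylog}(n)$, so $h_{\min}^{-4a}\sim n^{4a}$ diverges and $L$ would not be a constant. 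The paper's mechanism is entirely different and uses \emph{both} inequalities in \eqref{D1}: the upper bound $|\mathcal F_d\psf|\leq\overline C(1+\|\cdot\|_2^2)^{-a}$ yields a \emph{lower} bound $\|\Phi_{\bh_i}\|_2^2\geq C_\Phi\,(\min_j h_{i,j})^{-4a}$, and this factor on the right-hand side of \eqref{S1} exactly cancels the blow-up on the left, leaving the uniform constant $L\asymp\int\|\bxi\|_2^{4a+2\gamma}|\mathcal F_d\phi(\bxi)|^2\,\mathrm d\bxi$. The Sobolev order $2a+\gamma\vee 1/2$ on $\phi$ is needed only to make this last integral finite, not to compensate any scale-dependent growth. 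You also cannot dismiss the compact-support requirement of Assumption~\ref{Ass:Dict}(d) as ``absorbed into the assumption'': in general $\Phi_{\bh_i}$ has unbounded support (cf.\ Remark~\ref{BT3}), and the paper deals with this explicitly via the decay condition \eqref{decay} together with the boundary restriction $\bt_i\in[\pmb{\rho}+\bh_i,\be-\pmb{\rho}]$, invoking Lemma~\ref{Lemma:Unbounded} to show the tails are negligible.

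For part (b), your Slepian-envelope idea is a different route from the paper and is not fully specified. The paper does not compare globally to two fixed processes; instead, for each dyadic scale $\bh_{\bp}$ it shows \eqref{S2} holds with a $\bp$-\emph{dependent} matrix $D_{\Xi,\bp}$, runs the block argument of Theorem~\ref{Thm:Gumbel} scale by scale, and obtains
\[
\mathbb P(\mathcal S(W)\leq\lambda)\sim\exp\Bigl(-e^{-\lambda}\tfrac{H_2}{K\sqrt{2\pi}}\sum_{\bp\in\mathcal P^d}\sqrt{\det D_{\Xi,\bp}^{-2}}\,\bigl(\log(K/\bh_{\bp}^{\be})\bigr)^{-d}\Bigr);
\]
the two-sided Gumbel bound then follows because $\det D_{\Xi,\bp}^{-2}$ is uniformly bounded above and below via \eqref{D1}. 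Your expansion ``$1-c_i\|D_{\Xi_i}(\bt-\bs)\|^{2\gamma}$'' describes the covariance in $\bt$ at \emph{fixed} scale, not across scales, so a global Slepian comparison would still need the dyadic block decoupling to handle the scale direction. Your outline for part (c) is essentially the paper's argument.
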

	\begin{Remark}\label{BT3}
		\begin{enumerate}
			\item	In Theorem \ref{Thm:Conv2} we need to exclude a small boundary region of the observations from the analysis since, in general, the functions $\Phi_{\bh_i}$ in $\mathcal{W}$ might be of unbounded support. Then the results of Theorem \ref{Th:1} transfer to this setting.
			\item The results from Theorem \ref{Thm:Conv2} (c) require assumption \eqref{eq:Phi_h_convergence} which basically means that the convolution kernel $\psf$ should decay exactly like a polynomial if $\|\bxi\|_2\to\infty$ in contrast to the weaker assumption \eqref{D1} which only requires upper and lower polynomial bounds and can hence only ensure upper and lower Gumbel bounds. In Section \ref{Sec:Appl} we provide a specific example for which both \eqref{D1} and \eqref{eq:Phi_h_convergence} are satisfied.
		\end{enumerate}
	\end{Remark}
	
	\subsubsection{Optimal detection in deconvolution} \label{Sec:OptDec}
	In this section we discuss and specify the results from Sections \ref{Sec:GT3} and \ref{Sec:Opt} for deconvolution. The results given in Lemma \ref{Lemma:Power1} also hold in the general deconvolution setting. 
	The following lemma contains a related result in the situation of \eqref{Def:PSF} concerning the support inference about the signal $f$ itself.
	
	\begin{Lemma}
		\label{Lemma:Power3}Given observations from model \eqref{model2} with random noise satisfying Assumption \ref{Ass:Noise} and $\psf$ as in \eqref{Def:PSF} and given a non-negative function $\phi\in \mathcal{R}(T^*)$, define the  dictionary  $\mathcal{W}$ as  in  \eqref{DictConvolution}. Suppose that the signal $f$ is non-negative as well.
		Let further $\mathcal{I}_{\alpha}(f)$ denote the set
		\begin{align*}
		\mathcal{I}_{\alpha}(f):= \bigl\{i\,\big|\,f\big|_{\mathrm{supp(\varphi_i)}}>2q_{i,1-\alpha}\|\sigma\Phi_{i}\|_2/(h_{i,1} h_{i,2}n^{d/2})\bigr\}.
		\end{align*}
		Then, under the assumptions of Theorem \ref{Th:1},
		\begin{align*}
		\lim_{n\rightarrow\infty}\mathbb{P}\bigl(\langle\Phi_i\,,\,Y\rangle_n>q_{i,1-\alpha}\|\sigma\Phi_i\|_2/n\quad\text{for all}\quad i\in\mathcal{I}_{\alpha}(f)\bigr)\geq1-\alpha.
		\end{align*} 
	\end{Lemma}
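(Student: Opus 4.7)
The plan is to reduce Lemma \ref{Lemma:Power3} directly to Lemma \ref{Lemma:Power1} via a deterministic pointwise lower bound on $\langle f,\varphi_i\rangle$ that exploits the non-negativity of $f$ and $\varphi_i$; no new stochastic input is needed beyond what is already provided by Theorem \ref{Th:1}(a) (which applies in the deconvolution setting by Theorem \ref{Thm:Conv2}(a)).

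First I would translate the multiscale scan quantile into a componentwise threshold: setting $q_{i,1-\alpha} := q_{1-\alpha}/\omega_i + \omega_i$, the event $\{\mathcal{S}(Y,i) > q_{1-\alpha}\}$ appearing in Lemma \ref{Lemma:Power1} is the same as $\{\langle Y,\Phi_i\rangle_n > q_{i,1-\alpha}\sigma_i\}$, and the identity
\[
\sigma_i^2 \;=\; n^{-2d}\sum_{\bj\in I_n^d}\sigma^2(\bx_{\bj})\Phi_i(\bx_{\bj})^2 \;=\; n^{-d}\|\sigma\Phi_i\|_2^2\bigl(1+o(1)\bigr),
\]
which follows from the $C^1$-regularity of $\sigma$ together with \eqref{bias}-type integral approximation, converts this into the threshold $q_{i,1-\alpha}\|\sigma\Phi_i\|_2/n^{d/2}$ appearing in the statement (with $d=2$, so $n^{d/2}=n$ and $\bh_i^{\be}=h_{i,1}h_{i,2}$). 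Second, using $f,\varphi_i\ge 0$ and $\mathrm{supp}(\varphi_i)\subset[\bt_i-\bh_i,\bt_i]$, the change of variables $\bz\mapsto(\bt_i-\bz)/\bh_i$ yields the deterministic lower bound
\[
\langle f,\varphi_i\rangle \;\ge\; \Bigl(\inf_{\mathrm{supp}(\varphi_i)} f\Bigr)\int\varphi_i(\bz)\,\mathrm d\bz \;=\; \Bigl(\inf_{\mathrm{supp}(\varphi_i)} f\Bigr)\,\bh_i^{\be}\!\int\phi.
\]
Under the normalization $\int\phi = 1$ implicit in the statement, the defining inequality of $\mathcal{I}_\alpha(f)$ then gives $\langle f,\varphi_i\rangle > 2 q_{i,1-\alpha}\sigma_i$ for every $i\in\mathcal{I}_\alpha(f)$; that is, every such index also belongs to the set $\mathcal{I}_\alpha$ used in Lemma \ref{Lemma:Power1}. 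Invoking \eqref{PI} to identify $\langle Tf,\Phi_i\rangle = \langle f,\varphi_i\rangle$ and then applying Lemma \ref{Lemma:Power1} simultaneously delivers
\[
\lim_{n\to\infty}\mathbb{P}\Bigl(\langle\Phi_i,Y\rangle_n > q_{i,1-\alpha}\|\sigma\Phi_i\|_2/n^{d/2}\ \text{for all}\ i\in\mathcal{I}_\alpha(f)\Bigr) \;\ge\; 1-\alpha,
\]
which is the claim.

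The main obstacle is strictly bookkeeping: tracking the precise normalization of $\phi$ so that the $\int\phi$ factor disappears into the constant, checking that the scale product $\bh_i^{\be}$ of the change of variables matches the $h_{i,1}h_{i,2}$ in the statement (i.e.\ that the statement is read in dimension $d=2$), and verifying the asymptotic identity $\sigma_i = \|\sigma\Phi_i\|_2/n^{d/2}(1+o(1))$ uniformly in $i$ via the $C^1$ assumption on $\sigma$ and the \eqref{bias}-type integral approximation. The genuine probabilistic content — the uniform Gaussian control of the noise across all scales — is entirely inherited from Theorem \ref{Th:1}(a) through Lemma \ref{Lemma:Power1}.
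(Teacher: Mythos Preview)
Your proposal is correct and aligns with the paper's own treatment: the paper simply states that Lemma~\ref{Lemma:Power3} ``follows by the same arguments as Lemma~\ref{Lemma:Power1}'' without spelling out the reduction. You have correctly identified the only additional ingredient needed beyond Lemma~\ref{Lemma:Power1}, namely the deterministic lower bound $\langle f,\varphi_i\rangle \ge (\inf_{\mathrm{supp}(\varphi_i)} f)\,\bh_i^{\be}\!\int\phi$ from non-negativity, which converts the pointwise condition defining $\mathcal{I}_\alpha(f)$ into the coefficient condition defining $\mathcal{I}_\alpha$; the normalization caveat you flag about $\int\phi$ is real but, as you say, pure bookkeeping.
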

	The result above immediately shows that the choice of $\varphi$ in \eqref{eq:dec_dict} has a high influence on the detection properties of the corresponding test via the variances $\Vert \sigma \Phi_{i}\Vert_2^2$. Extending an argument from \citet{SHMunkDuem2013} for $d = 1$ to general $d$, we can provide a mother wavelet $\varphi$ which minimizes the asymptotic variance of the test statistic over all tensor-type probe functions. It only depends on the polynomial order of decay  of the convolution kernel in Fourier space ($\hat =$ degree of ill-posedness) and is (for $d = 2$) given by
	\begin{align}\label{Probe}
	\varphi\left(x,y\right) = x^{\beta_1 + 1}\left(1-x\right)^{\beta_1 + 1}y^{\beta_2 + 1}\left(1-y\right)^{\beta_2 + 1} \mathbf 1_{\left(0,1\right)} \left(x\right)\mathbf 1_{\left(0,1\right)} \left(y\right),
	\end{align}
	where the two parameters $\beta_1, \beta_2 \in \mathbb N$ equal the polynomial order of decay  of the convolution kernel in $x$ and $y$ direction. This choice will be considered in the following. 
	
	The previous lemma implies the consistency of the testing procedure for the signal itself, i.\,e., testing $f=0$ versus $f>0$, if the minimal scale satisfies $h_{\min}\gtrsim(\log(n)/n)^{1/(4a+1)}.$
	Moreover, in the situation of Theorem \ref{Thm:Conv1} (c) the optimality results of Section \ref{Sec:Opt} carry over to the deconvolution setting.
	For a comparison consider the rate of estimation of the $2a$-th derivative of a H\"older $\beta$ function w.r.t $L^\infty$ risk in $d = 1$. We restrict to this case as otherwise the deconvolution is no longer equivalent to estimating derivatives, cf. \eqref{eq:tildePhi}. This is possible with minimax rate
	\[
	\left(\frac{\log n}{n}\right)^{\beta/(2 \beta + 4a + 1)},
	\]
	which is attained for $h \sim \left(\log n/n\right)^{1/(2 \beta + 4a + 1)}$ \citep[see e.\,g.][]{jkpr04}, i.e. such a function can be distinguished from $0$  by means of estimation on a box $\left[t-h, t\right]$ as long as it is asymptotically larger than $h^\beta$. Posing the same question to MISCAT, the above result show that for $f\big|_{\left[t-h,t\right]} \sim h^\beta$ and $h \sim \left(\log n/n\right)^{1/(2 \beta + 4a + 1)}$ it recognizes $\left[t-h,t\right]$ as active with (asymptotic) probability $\geq 1-\alpha$. Consequently any support points found by estimation will also be found by MISCAT.
	
	\section{Simulations and real data applications}\label{Sec:Appl}
	
	In this section we investigate the finite sample properties of the proposed multiscale test. To this end, we apply MISCAT in a $2$-dimensional mildly ill-posed deconvolution problem. In Section \ref{Application} we then analyze experimental STED data to locate single DNA origami in a sample.
	
	Specifying the setting described in Section \ref{Sec:Dec} to this situation, the data is given by \eqref{model2}. The convolution kernel $\psf$ is chosen from the  family $\left\{\psf_{a,b} ~\big|~ a \in \mathbb N, b >0\right\}$ defined in Fourier space via
	\begin{align}\label{Def:PSF}
	\left(\mathcal F_2 \psf_{a,b}\right) \left(\pmb{\xi}\right)= (1 + b^2 \left\Vert \pmb{\xi}\right\Vert_2^2)^{-a}, \qquad \pmb{\xi} \in \mathbb R^2.
	\end{align}
	Model \eqref{Def:PSF} is a $2$-dimensional generalization of the one-dimensional family of auto-convolu\-tions of a scaled version of the density of the Laplace distribution with itself with radially symmetric PSF. 
	
	For any convolution kernel $\psf_{a,b}$ Assumption \eqref{D1} is obviously satisfied and we obtain
	\begin{equation}\label{eq:tildePhi}
	\Phi_{\bh_i}=\sum_{j=0}^a\sum_{k=0}^j\binom{a}{j}\binom{j}{k}\Bigl(\frac{b}{h_{i,1}}\Bigr)^{2k}\Bigl(\frac{b}{h_{i,2}}\Bigr)^{2(j-k)}\partial^{(2k\,,\,2(j-k))}\varphi.
	\end{equation}
	This shows that a compactly supported function $\phi$ results in a dictionary $\mathcal{W}$ which consists of compactly supported functions as well. Consequently, the results from Theorem \ref{Thm:Conv2} can be obtained even without excluding a small boundary region, and furthermore a Gumbel limit theorem can be obtained as follows. Let
	\begin{align}\label{Def:XiDec}
	\Xi=\frac{\widetilde{\Xi}}{\|\widetilde{\Xi}\|_2},\quad\text{where}\quad\widetilde{\Xi}=b^{2a}\sum_{k=0}^{a}\binom{a}{k}\partial^{2k,2(a-k)}\varphi.
	\end{align}
	and consider the case $\bh_i=(h_i,h_i)$ for all $i\in I_N$.
	Then
	\begin{align}\label{eq:XiDec1}
	\|\Phi_{\bh_i}\|_2=\bigg(\frac{1}{h_i}\bigg)^{2a}\|\Xi+h_i^2\Xi_{n,i}\|_2\quad\text{and}\quad\frac{\Phi_{\bh_i}}{\|\Phi_{\bh_i}\|_2}=\frac{\Xi+h_i^2\Xi_{n,i}}{\|\Xi+h_i^2\Xi_{n,i}\|_2},
	\end{align}
	where
	\begin{align}\label{eq:XiDec2}
	\Xi_{n,i}:=\sum_{j=0}^{a-1}h_i^{2(a-1-j)}\sum_{k=0}^j\binom{j}{k}\binom{a}{j}\partial^{2k,2(j-k)}\varphi.
	\end{align}
	In this setting, it is easy to verify that condition \eqref{eq:Phi_h_convergence} holds. 
	\begin{Theorem}[MISCAT for our application]\label{Thm:Conv1}~\\
		Suppose that we have access to observations following model \eqref{model2} with convolution kernel $\psf_{a,b}$ satisfying Assumption \eqref{Def:PSF}, $d=2$ and random noise satisfying Assumption \ref{Ass:Noise}. Assume that the dictionary is given by 
		\eqref{DictConvolution} with dictionary functions $\Phi_{\bh_i}$ defined in \eqref{eq:tildePhi} such that Assumption \ref{Ass:Dict} is satisfied, and that \eqref{bias} holds.
		\begin{enumerate} 
			\item The results of Theorem \ref{Th:1}(a) carry over to this particular convolution setting.
			\item If $\varphi \in H^{2a+\gamma \wedge 1/2} \left(\mathbb R^2\right)$ and if the grids of positions $\bt$ and scales $\bh$ are sufficiently fine, i.e. satisfy \eqref{Grid1} and \eqref{Grid2}, then the results of Theorem \ref{Thm:Conv2}(b) carry over to this particular convolution setting.
			\item Suppose furthermore that $\bh_i=(h_i,h_i)$ for all $i\in I_N$, that \eqref{eq:scale_logs} holds true with $0 < \delta < \Delta \leq 1$ and that the grids of positions $\bt$ and scales $\bh$ are sufficiently fine, i.e. satisfy \eqref{Grid1} and \eqref{Grid2}. If in addition, $\varphi$ is $(2a+1)$-times differentiable in $L^{2}(\R^d)$, let $\varphi_{\balpha}=\sum_{k=0}^a\binom{a}{k}\partial^{2k+\alpha_1,2(a-k)+\alpha_2}\varphi$, $\balpha\in\{0,1\}^2,\,|\balpha|=1.$ Then, for
			\begin{align}\label{eq:CalDec}
			\omega(K,1+d)\quad\text{with}\quad K=b^{4a}\log(\Delta/\delta)(2\pi)^{-\frac{3}{2}}\|\widetilde{\Xi}\|_2^{-1}\sqrt{\|\varphi_{0,1}\|_2^2\|\varphi_{1,0}\|_2^2-\langle\varphi_{0,1}\varphi_{1,0}\rangle}
			\end{align}
			(see \eqref{eq:calibration} and Remark \ref{BT1}(b)), we obtain
			\[
			\lim_{n\to\infty}\mathbb{P}_0\bigl[\mathcal{S}(Y)\leq\lambda\bigr]=e^{-e^{-\lambda}}.
			\]
		\end{enumerate}
	\end{Theorem}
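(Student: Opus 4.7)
The plan is to deduce (a) and (b) immediately from the general convolution results of Theorem \ref{Thm:Conv2}, and to obtain (c) by upgrading the argument to satisfy the strong approximation \eqref{eq:Phi_h_convergence}. The kernel $\psf_{a,b}$ defined via \eqref{Def:PSF} satisfies \eqref{D1} with equality, so all hypotheses of Theorem \ref{Thm:Conv2}(a) are in force; (a) follows. For (b), the explicit expression \eqref{eq:tildePhi} writes $\Phi_{\bh_i}$ as a finite sum of partial derivatives of $\varphi$ of total order at most $2a$, with coefficients polynomial in $b/h_{i,\ell}$. Hence, under the Sobolev assumption $\varphi \in H^{2a+\gamma\wedge 1/2}(\mathbb{R}^2)$, each normalized $\Phi_{\bh_i}/\|\Phi_{\bh_i}\|_2$ lies in $H^{\gamma\wedge 1/2}(\mathbb{R}^2)$ uniformly over $i$, which gives the Average Hölder condition \eqref{S1}; Theorem \ref{Thm:Conv2}(b) then yields (b).

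The essential new step for (c) is the verification of \eqref{eq:Phi_h_convergence}. By \eqref{eq:XiDec1}--\eqref{eq:XiDec2},
\[
\frac{\Phi_{\bh_i}}{\|\Phi_{\bh_i}\|_2} - \Xi = \frac{\Xi + h_i^2\Xi_{n,i}}{\|\Xi + h_i^2\Xi_{n,i}\|_2} - \Xi.
\]
Since $\varphi$ is $(2a+1)$-times $L^2$-differentiable, the summands composing $\Xi_{n,i}$ are finite combinations of derivatives of $\varphi$ of total order at most $2a$ with coefficients polynomial in $h_i$, so $\|\Xi_{n,i}\|_2$ is uniformly bounded in $i$. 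Taylor-expanding the normalization delivers the uniform bound $\bigl\|\Phi_{\bh_i}/\|\Phi_{\bh_i}\|_2 - \Xi\bigr\|_2 = O(h_i^2)$. The centered Gaussian stochastic integral in \eqref{eq:Phi_h_convergence} therefore has variance $O(h_i^4)$, and a standard Gaussian maximal inequality combined with $|\mathcal{U}| = O(n^\kappa)$ from \eqref{growth} yields
\[
\max_{i\in I_N}\left|\int\!\!\left(\frac{\Phi_{\bh_i}(\bt_i-\bz)}{\|\Phi_{\bh_i}\|_2}-\Xi(\bt_i-\bz)\right)\,\mathrm dW_{\bz}\right|=O_{\mathbb{P}}\bigl(\sqrt{\log n}\,h_{\max}^2\bigr).
\]
Since \eqref{R1} forces $h_{\max}^2 = o(\log(n)^{-4})$, this is $o_{\mathbb{P}}(1/\sqrt{\log n})$. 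Condition \eqref{S2} with $\gamma=1$ is then the usual $L^2$-Taylor expansion applied to the smooth $\Xi = \widetilde\Xi/\|\widetilde\Xi\|_2$.

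It remains to pin down $D_{\Xi}$ and the constant $K$. For diagonal scales $\bh_i=(h_i,h_i)$, \eqref{eq:tildePhi} collapses to $\Phi_{\bh_i}=\sum_{j=0}^a\binom{a}{j}(b/h_i)^{2j}\Delta^j\varphi$, so $\widetilde\Xi = b^{2a}\Delta^a\varphi$ and $\partial_\ell\widetilde\Xi = b^{2a}\varphi_{e_\ell}$ for $\ell\in\{1,2\}$, where $e_\ell$ denotes the $\ell$-th unit vector. Integration by parts in the representation $r_\Xi(\bt)=\int\Xi(\bt+\bw)\Xi(\bw)\,\mathrm d\bw$ gives
\[
-\mathrm{Hess}(r_\Xi)(\mathbf 0)_{j\ell} = \langle\partial_j\Xi,\partial_\ell\Xi\rangle = \frac{b^{4a}}{\|\widetilde\Xi\|_2^2}\langle\varphi_{e_j},\varphi_{e_\ell}\rangle,
\]
so by Remark \ref{BT1}(c) one obtains $|\det D_\Xi^{-1}|^2 \propto b^{8a}\|\widetilde\Xi\|_2^{-4}\bigl(\|\varphi_{1,0}\|_2^2\|\varphi_{0,1}\|_2^2 - \langle\varphi_{1,0},\varphi_{0,1}\rangle^2\bigr)$. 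Because only diagonal scales are used and \eqref{Grid2} holds, the limit formula \eqref{eq:GumbelGW} specializes in this squares-only regime to one in which $I_d(\delta,\Delta)$ is replaced by $\log(\Delta/\delta)$, with the matching calibration constant $C_d = 1+d/\gamma = 3$; this is precisely the setting addressed by Corollary \ref{Cor} and Theorem \ref{Thm:Gumbel}. Combining these with $H_2 = \pi^{-1}$ for $d=2$ and choosing $K$ so that the prefactor in the Gumbel law equals $1$ produces a standard Gumbel limit and yields \eqref{eq:CalDec}.

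The main obstacle is the uniform $O(h_i^2)$ bound for the $L^2$-error in \eqref{eq:Phi_h_convergence}: the Taylor residue from the normalization must be genuinely of order $h_i^2$ (not merely pointwise), and the bound must be uniform across every scale in $[h_{\min},h_{\max}]$ so as to survive the $\sqrt{\log n}$ factor produced by the Gaussian union bound over $|\mathcal{U}| = O(n^\kappa)$. The scale restriction $h_{\max}=o(\log(n)^{-2})$ from \eqref{R1} is exactly what closes this gap. All remaining steps—Fourier identification of $\widetilde\Xi$ and its derivatives, the Hessian computation, and the extraction of $K$—are essentially bookkeeping once \eqref{eq:Phi_h_convergence} is established.
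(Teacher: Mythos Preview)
Your proof is correct and follows essentially the same route as the paper: parts (a) and (b) are reduced to Theorem~\ref{Thm:Conv2}, and part (c) is obtained by verifying \eqref{eq:Phi_h_convergence} from the explicit decomposition \eqref{eq:XiDec1}--\eqref{eq:XiDec2} and then invoking the Gumbel limit (Theorem~\ref{Th:1}(b)/Corollary~\ref{Cor}) in the squares-only regime. The paper's own proof is considerably terser---it simply cites \eqref{eq:tildePhi}, \eqref{Def:XiDec}--\eqref{eq:XiDec2}, and the relevant theorems---whereas you spell out the $O(h_i^2)$ bound, the Gaussian maximal inequality, and the Hessian computation for $D_\Xi$, but the logical skeleton is identical.
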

	
	\subsection{$2$-dimensional support inference}\label{subsec:2D_simulations}
	
	Now we will infer on the support of a testfunction of realistic size from simulated data. In accordance with our subsequent data example we choose $n = 512$, and the kernel parameters are set to $a = 2$ and $b = 0.0243$. This implies that the order of decay  of the convolution kernel in Fourier space is $4$ and the full width at half maximum (FWHM) of $\psf_{a,b}$ is about 10 pixels in the above setting (cf. Appendix \ref{appA}). 
	
	In this situation we investigate three tests, which differ in the choice of the probe function $\varphi$ acting as mother wavelet. We use three different setups:
	\begin{itemize}
		\item[$\Psi_{\mathrm c}$] \textbf{Correct setup.} In this case, the test uses the true ill-posedness, which corresponds to choosing $\beta_1 = \beta_2 = 2a = 4$.
		\item[$\Psi_{\mathrm o}$] \textbf{Oversmoothing setup.} In this case, the probe function is smoother than necessary, i.e. the ill-posedness is overspecified with $\beta_1 = \beta_2 = 10$. From a practitioners point of view, this choice can be considered to be pessimistic.
		\item[$\Psi_{\mathrm u}$] \textbf{Undersmoothing setup.} In this case, the probe function is not smooth enough, i.e. the ill-posedness is underestimated with $\beta_1 = \beta_2 = 1$. From a practitioners point of view, this choice can be considered to be optimistic.
	\end{itemize}
	In all of these cases, the constant $K$, appearing in the definition of the calibration $\omega_{i}$ in \eqref{eq:calibration} is chosen by a numerical approximation of \eqref{eq:CalDec}.
	
	All our tests use 196 different scales defined by boxes consisting of $k_x \times k_y$ pixels, $k_x, k_y = 4,6, ..., 30$. Concerning the positions $\bt$ we  use again all possible upper left points of boxes fitting in the image. 
	
	\subsubsection{Distribution of the approximating Gaussian version}
	
	Figure \ref{fig:2D_hist} shows histograms of $10.000$ runs of the approximating Gaussian test statistic $\mathcal S \left(W\right)$. Note again that this statistic is independent of all unknown quantities including the variance due to the standardization of the local test statistics. The simulations suggest a stable behavior of the distribution of the maximum statistic for all three tests. 
	
	\begin{figure}[!htb]
		\setlength\fheight{4cm} \setlength\fwidth{4cm}
		\centering
		\tiny
		\subfigure[test statistic for $\Psi_{\mathrm c}$]{
			\label{subfig:2D_hist_correct_stat}
%
%
%
\definecolor{mycolor1}{rgb}{0.00000,0.00000,0.56250}%
\begin{tikzpicture}

\begin{axis}[%
width=\fwidth,
height=\fheight,
area legend,
scale only axis,
xmin=0,
xmax=14,
ymin=0,
ymax=1000
]
\addplot[ybar,bar width=0.0181389359998107\fwidth,draw=black,fill=mycolor1] plot table[row sep=crcr] {%
2.26797695247994	4\\
2.44936631247805	29\\
2.63075567247616	61\\
2.81214503247426	127\\
2.99353439247237	216\\
3.17492375247048	359\\
3.35631311246859	478\\
3.53770247246669	592\\
3.7190918324648	680\\
3.90048119246291	763\\
4.08187055246102	743\\
4.26325991245912	774\\
4.44464927245723	724\\
4.62603863245534	680\\
4.80742799245345	604\\
4.98881735245155	550\\
5.17020671244966	478\\
5.35159607244777	392\\
5.53298543244588	336\\
5.71437479244398	299\\
5.89576415244209	221\\
6.0771535124402	166\\
6.25854287243831	142\\
6.43993223243641	109\\
6.62132159243452	94\\
6.80271095243263	90\\
6.98410031243074	55\\
7.16548967242884	51\\
7.34687903242695	42\\
7.52826839242506	35\\
7.70965775242317	23\\
7.89104711242127	17\\
8.07243647241938	23\\
8.25382583241749	6\\
8.4352151924156	11\\
8.6166045524137	2\\
8.79799391241181	6\\
8.97938327240992	2\\
9.16077263240803	0\\
9.34216199240613	2\\
9.52355135240424	4\\
9.70494071240235	3\\
9.88633007240046	4\\
10.0677194323986	0\\
10.2491087923967	0\\
10.4304981523948	2\\
10.6118875123929	0\\
10.793276872391	0\\
10.9746662323891	0\\
11.1560555923872	1\\
};
\addplot [color=black,solid,forget plot]
  table[row sep=crcr]{%
2	0\\
12	0\\
};
\end{axis}
\end{tikzpicture}%
		}
		\subfigure[test statistic for $\Psi_{\mathrm o}$]{
			\label{subfig:2D_hist_oversmooth_stat}
%
%
%
\definecolor{mycolor1}{rgb}{0.00000,0.00000,0.56250}%
\begin{tikzpicture}

\begin{axis}[%
width=\fwidth,
height=\fheight,
area legend,
scale only axis,
xmin=0,
xmax=14,
ymin=0,
ymax=1000
]
\addplot[ybar,bar width=0.016965444819422\fwidth,draw=black,fill=mycolor1] plot table[row sep=crcr] {%
5.20171166131466	2\\
5.32046977505061	8\\
5.43922788878657	32\\
5.55798600252252	81\\
5.67674411625847	168\\
5.79550222999443	295\\
5.91426034373038	513\\
6.03301845746633	627\\
6.15177657120229	760\\
6.27053468493824	836\\
6.3892927986742	897\\
6.50805091241015	839\\
6.6268090261461	878\\
6.74556713988206	689\\
6.86432525361801	608\\
6.98308336735397	544\\
7.10184148108992	420\\
7.22059959482587	360\\
7.33935770856183	326\\
7.45811582229778	248\\
7.57687393603374	184\\
7.69563204976969	147\\
7.81439016350564	127\\
7.9331482772416	90\\
8.05190639097755	63\\
8.17066450471351	56\\
8.28942261844946	44\\
8.40818073218541	28\\
8.52693884592137	31\\
8.64569695965732	23\\
8.76445507339328	17\\
8.88321318712923	17\\
9.00197130086518	9\\
9.12072941460114	5\\
9.23948752833709	6\\
9.35824564207305	7\\
9.477003755809	5\\
9.59576186954495	2\\
9.71451998328091	4\\
9.83327809701686	2\\
9.95203621075282	0\\
10.0707943244888	0\\
10.1895524382247	0\\
10.3083105519607	1\\
10.4270686656966	0\\
10.5458267794326	0\\
10.6645848931685	0\\
10.7833430069045	0\\
10.9021011206404	0\\
11.0208592343764	1\\
};
\addplot [color=black,solid,forget plot]
  table[row sep=crcr]{%
5	0\\
12	0\\
};
\end{axis}
\end{tikzpicture}%
		}
		\subfigure[test statistic for $\Psi_{\mathrm u}$]{
			\label{subfig:2D_hist_undersmooth_stat}
%
%
%
\definecolor{mycolor1}{rgb}{0.00000,0.00000,0.56250}%
\begin{tikzpicture}

\begin{axis}[%
width=\fwidth,
height=\fheight,
area legend,
scale only axis,
xmin=0,
xmax=14,
ymin=0,
ymax=1000
]
\addplot[ybar,bar width=0.0161648850911559\fwidth,draw=black,fill=mycolor1] plot table[row sep=crcr] {%
1.2187485261771	1\\
1.44505691745329	3\\
1.67136530872947	30\\
1.89767370000566	88\\
2.12398209128184	142\\
2.35029048255802	296\\
2.57659887383421	505\\
2.80290726511039	673\\
3.02921565638658	823\\
3.25552404766276	857\\
3.48183243893895	896\\
3.70814083021513	895\\
3.93444922149131	827\\
4.1607576127675	759\\
4.38706600404368	629\\
4.61337439531987	565\\
4.83968278659605	435\\
5.06599117787223	345\\
5.29229956914842	278\\
5.5186079604246	224\\
5.74491635170079	185\\
5.97122474297697	131\\
6.19753313425316	120\\
6.42384152552934	72\\
6.65014991680552	59\\
6.87645830808171	39\\
7.10276669935789	29\\
7.32907509063408	36\\
7.55538348191026	18\\
7.78169187318644	2\\
8.00800026446263	12\\
8.23430865573881	5\\
8.460617047015	4\\
8.68692543829118	2\\
8.91323382956737	2\\
9.13954222084355	3\\
9.36585061211973	2\\
9.59215900339592	2\\
9.8184673946721	2\\
10.0447757859483	0\\
10.2710841772245	1\\
10.4973925685007	0\\
10.7237009597768	1\\
10.950009351053	0\\
11.1763177423292	0\\
11.4026261336054	1\\
11.6289345248816	0\\
11.8552429161578	0\\
12.0815513074339	0\\
12.3078596987101	1\\
};
\addplot [color=black,solid,forget plot]
  table[row sep=crcr]{%
0	0\\
14	0\\
};
\end{axis}
\end{tikzpicture}%
		}
		\caption{Histograms for $10^4$ runs of the test statistic applied to pure Gaussian white noise.}
		\label{fig:2D_hist}
	\end{figure}
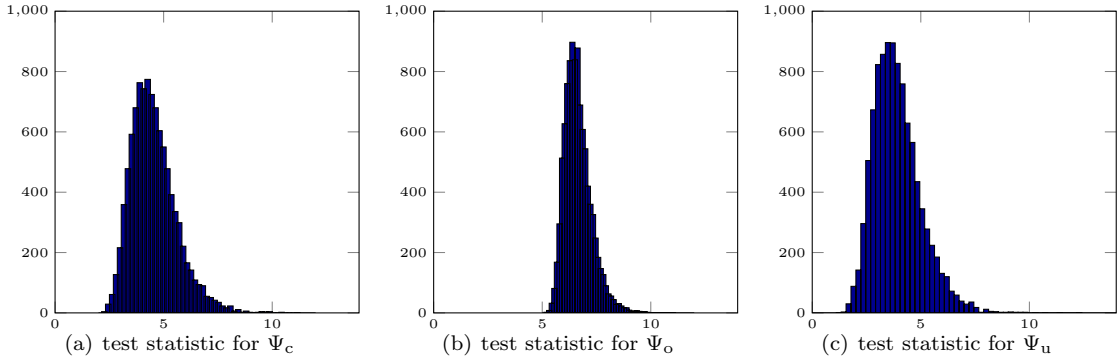
	
	\subsubsection{Detection properties}
	
	To investigate more detailed the influence of the proper specification of ill-posedness via the three tests $\Psi_{\mathrm c}$, $\Psi_{\mathrm o}$ and $\Psi_{\mathrm u}$, empirical quantiles $q_\alpha$ are computed from the simulations in Figure \ref{fig:2D_hist}, which are also shown in Table \ref{tab:2D_quantiles}. 
	
	\begin{table}[!htb]
		\centering
		\begin{tabular}{c|c|cccccc}
			\toprule[2pt]
			&$\alpha$ & $0.1$ & $0.5$ &$0.8$ &$0.9$ &$0.95$ &$0.99$ \\
			\midrule[1pt]
			$\Psi_{\mathrm c}$&$q_\alpha$ & $3.3420$ & $4.4006$ & $5.3194$ & $5.8828$ & $6.4722$ & $7.6442$\\
			\midrule[1pt]
			$\Psi_{\mathrm o}$&$q_\alpha$ & $5.9541$ & $6.5553$ & $7.1012$ & $7.4512$ & $7.7888$ & $8.5752$\\
			\midrule[1pt]
			$\Psi_{\mathrm u}$&$q_\alpha$ & $2.6635$ & $3.7673$ & $4.7307$ & $5.3649$ & $5.9227$ & $7.1658$\\
			\bottomrule[2pt]
		\end{tabular}
		\caption{Empirical quantiles of the test statistic for the three different test in Figure \ref{fig:2D_hist}.}
		\label{tab:2D_quantiles}
	\end{table}
	
	We expect the test to reach a twofold aim: To detect weak signals and to distinguish between different signals, which are strong enough (see Lemma \ref{Lemma:Power1}). To illustrate the difference in power of the three tests, we show an exemplary simulation based on a synthetic testfunction which is a binary image (see Figure \ref{fig:2D_testfunction}(a)). In agreement with our prospects, the testfunction consists of some comparably large and well-distributed circles (3 in the top row, 6 in the second row, 12 in the third row), of a line of squares in the fourth row, and of a continuous line of same height in the bottom row at different sizes.
	
	\begin{figure}[!htb]
		\setlength\fheight{5cm} \setlength\fwidth{5cm}
		\centering
		\begin{tabular}{ll}
%
%
\begin{tikzpicture}[baseline]

\begin{axis}[%
width=\fwidth,
height=\fheight,
axis on top,
scale only axis,
xmin=0.5,
xmax=512.5,
trim axis left,
trim axis right,
xtick = \empty,
ytick = \empty,
ymin=0.5,
ymax=512.5,
colormap={mymap}{[1pt] rgb(0pt)=(1,1,1); rgb(63pt)=(0,0,0)},
colorbar,
point meta min=0,
point meta max=1
]
\addplot [forget plot] graphics [xmin=0.5,xmax=512.5,ymin=0.5,ymax=512.5] {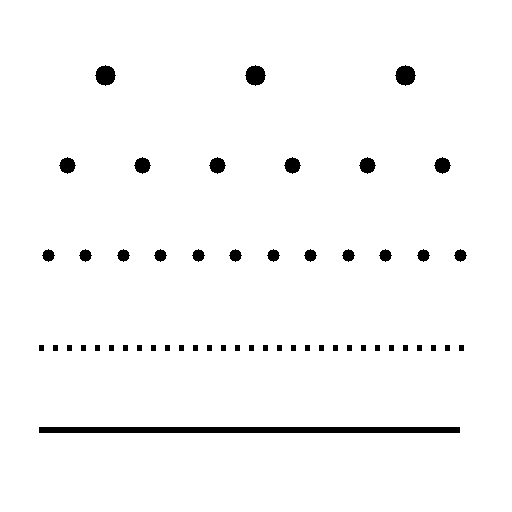};
\end{axis}
\end{tikzpicture}%
			&
%
%
\begin{tikzpicture}[baseline]

\begin{axis}[%
width=\fwidth,
height=\fheight,
axis on top,
scale only axis,
xmin=0.5,
xmax=512.5,
trim axis left,
trim axis right,
ymin=0.5,
ymax=512.5,
xtick = \empty,
ytick = \empty,
colormap={mymap}{[1pt] rgb(0pt)=(1,1,1); rgb(63pt)=(0,0,0)},
colorbar,
point meta min=-2.42042138563527,
point meta max=2.30336918456527
]
\addplot [forget plot] graphics [xmin=0.5,xmax=512.5,ymin=0.5,ymax=512.5] {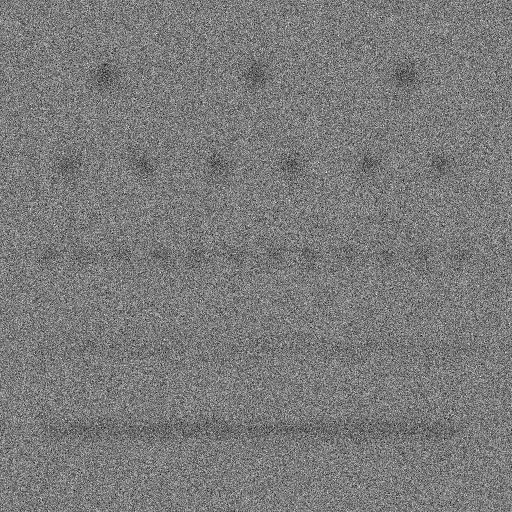};
\end{axis}
\end{tikzpicture}%
			\\
			\hspace*{1cm}(a) testfunction
			&
			\hspace*{1cm}(b) normal noise, $\sigma = 0.5$
			\\[0.2cm]
%
%
\begin{tikzpicture}[baseline]

\begin{axis}[%
width=\fwidth,
height=\fheight,
axis on top,
scale only axis,
xmin=0.5,
xmax=512.5,
trim axis left,
trim axis right,
ymin=0.5,
ymax=512.5,
xtick = \empty,
ytick = \empty,
colormap={mymap}{[1pt] rgb(0pt)=(1,1,1); rgb(63pt)=(0,0,0)},
colorbar,
point meta min=-0.231716263892619,
point meta max=0.741994574433032
]
\addplot [forget plot] graphics [xmin=0.5,xmax=512.5,ymin=0.5,ymax=512.5] {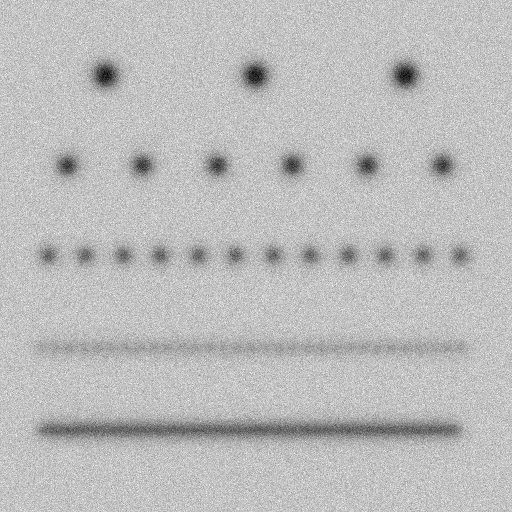};
\end{axis}
\end{tikzpicture}%
			&
%
%
\begin{tikzpicture}[baseline]

\begin{axis}[%
width=\fwidth,
height=\fheight,
axis on top,
scale only axis,
xmin=0.5,
xmax=512.5,
trim axis left,
trim axis right,
ymin=0.5,
ymax=512.5,
xtick = \empty,
ytick = \empty,
colormap={mymap}{[1pt] rgb(0pt)=(1,1,1); rgb(63pt)=(0,0,0)},
colorbar,
point meta min=-0.0231670093725303,
point meta max=0.636341983959635
]
\addplot [forget plot] graphics [xmin=0.5,xmax=512.5,ymin=0.5,ymax=512.5] {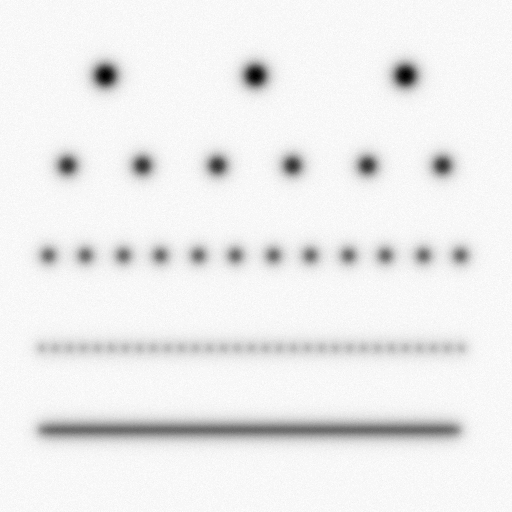};
\end{axis}
\end{tikzpicture}%
			\\
			\hspace*{1cm}(c) normal noise, $\sigma = 0.05$
			&
			\hspace*{1cm}(d) normal noise, $\sigma = 0.005$
		\end{tabular}
		\caption{Synthetic testfunction and simulated data for three different noise levels $\sigma \in \left\{0.005,0.05,0.5\right\}$.}
		\label{fig:2D_testfunction}
	\end{figure}
	
	For three different noise levels $\sigma \in \left\{0.005,0.05,0.5\right\}$ we simulate from a homogeneous Gaussian model (see Figure \ref{fig:2D_testfunction}(b)--(d)). To avoid masking effects caused by variance estimation, we always use the true variance in our test statistic. In Figure \ref{fig:2D_significance_maps} the data and the resulting significance maps for the three different tests are depicted. The significance map color-codes for each pixel the smallest scale on which it was significant. For example, in the top left subfigure the optimal test $\Psi_{\mathrm c}$ marked several boxes as significant, and the smallest scale on which significant boxes were found contains $624$ pixels. Any pixel contained in such a box is marked in dark red, which is the case in the third object in the top row and in objects number $2$ and $4$ in the second row of the top left subfigure. The other colors indicate pixels which belong to larger boxes marked as significant. In case that a pixel belongs to significant boxes of different scales, the smallest one is indicated by the color coding.
	
	\begin{figure}[!htb]
		\setlength\fheight{4cm} \setlength\fwidth{4cm}
		\centering
		\begin{tabular}{rcccl}
			
			&
			correct test $\Psi_{\mathrm c}$
			&
			oversmooth test $\Psi_{\mathrm o}$
			&
			undersmooth test $\Psi_{\mathrm u}$
			&
			\\
			
			\raisebox{1.5cm}{\rotatebox{90}{$\sigma = 0.5$}}
			&
%
%
\begin{tikzpicture}

\begin{axis}[%
width=\fwidth,
height=\fheight,
trim axis left,
trim axis right,
axis on top,
scale only axis,
xmin=0.5,
xmax=512.5,
ymin=0.5,
ymax=512.5,
xtick = \empty,
ytick = \empty
]
\addplot [forget plot] graphics [xmin=0.5,xmax=512.5,ymin=0.5,ymax=512.5] {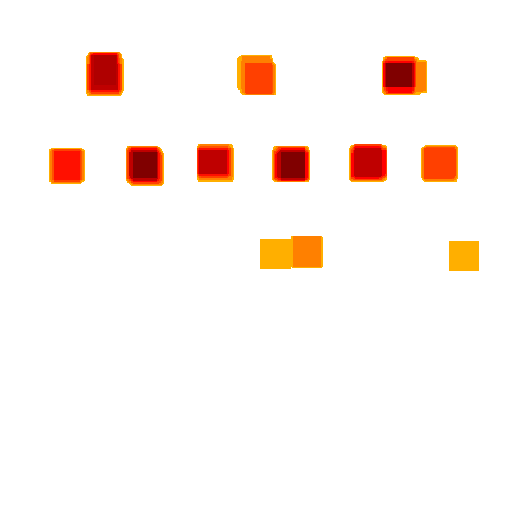};
\end{axis}
\end{tikzpicture}%
			& 
%
%
\begin{tikzpicture}

\begin{axis}[%
width=\fwidth,
height=\fheight,
axis on top,
scale only axis,
xmin=0.5,
xmax=512.5,
trim axis left,
trim axis right,
ymin=0.5,
ymax=512.5,
xtick = \empty,
ytick = \empty
]
\addplot [forget plot] graphics [xmin=0.5,xmax=512.5,ymin=0.5,ymax=512.5] {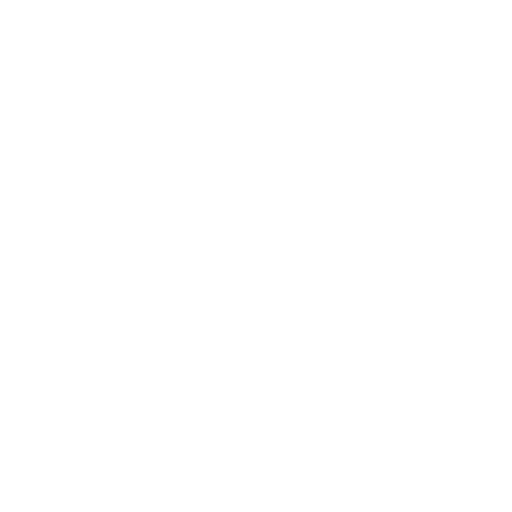};
\end{axis}
\end{tikzpicture}%
			& 
%
%
\begin{tikzpicture}

\begin{axis}[%
width=\fwidth,
height=\fheight,
axis on top,
scale only axis,
xmin=0.5,
xmax=512.5,
trim axis left,
trim axis right,
ymin=0.5,
ymax=512.5,
xtick = \empty,
ytick = \empty
]
\addplot [forget plot] graphics [xmin=0.5,xmax=512.5,ymin=0.5,ymax=512.5] {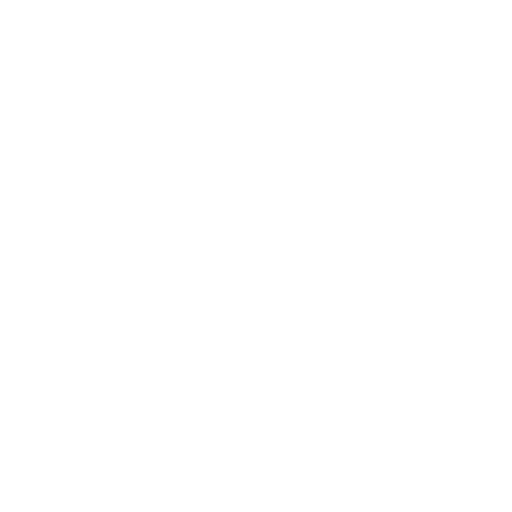};
\end{axis}
\end{tikzpicture}%
			&
			\begin{tikzpicture}
\begin{axis}[
hide axis,
scale only axis,
height=0pt,
width=0pt,
colormap={mymap}{[1pt] rgb(0pt)=(1,1,1); rgb(1pt)=(0,0,0.625); rgb(7pt)=(0,0,1); rgb(23pt)=(0,1,1); rgb(39pt)=(1,1,0); rgb(55pt)=(1,0,0); rgb(63pt)=(0.5,0,0)},
colorbar,
colorbar/width = .4cm,
colorbar style={
ytick={0,0.000320512820512821,0.000641025641025641,0.000961538461538462,0.00128205128205128,0.0016025641025641},
yticklabels={{},{3120},{1560},{1040},{780},{624}},
scaled ticks = false,
width = .4cm,
height = .95\fheight
},
point meta min=0,
point meta max=0.0016025641025641
]
\addplot [draw = none] coordinates {(0,0)};
\end{axis}
\end{tikzpicture} \\
			
			\raisebox{1.35cm}{\rotatebox{90}{$\sigma = 0.05$}}
			&
%
%
\begin{tikzpicture}

\begin{axis}[%
width=\fwidth,
height=\fheight,
axis on top,
scale only axis,
xmin=0.5,
xmax=512.5,
trim axis left,
trim axis right,
ymin=0.5,
ymax=512.5,
xtick = \empty,
ytick = \empty
]
\addplot [forget plot] graphics [xmin=0.5,xmax=512.5,ymin=0.5,ymax=512.5] {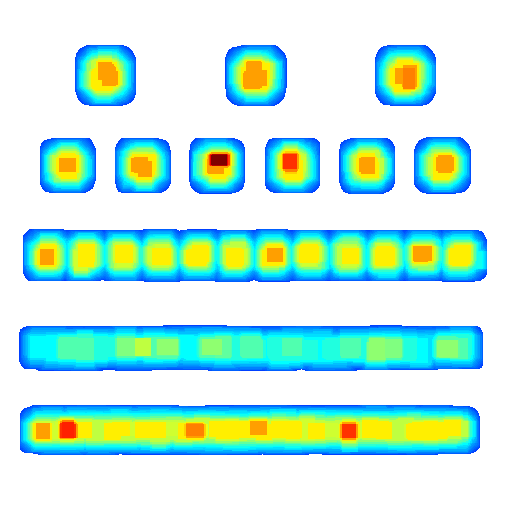};
\end{axis}
\end{tikzpicture}%
			& 
%
%
\begin{tikzpicture}

\begin{axis}[%
width=\fwidth,
height=\fheight,
axis on top,
scale only axis,
xmin=0.5,
xmax=512.5,
trim axis left,
trim axis right,
ymin=0.5,
ymax=512.5,
xtick = \empty,
ytick = \empty
]
\addplot [forget plot] graphics [xmin=0.5,xmax=512.5,ymin=0.5,ymax=512.5] {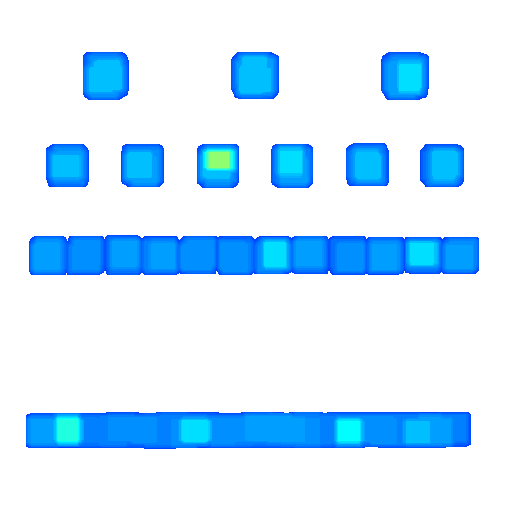};
\end{axis}
\end{tikzpicture}%
			& 
%
%
\begin{tikzpicture}

\begin{axis}[%
width=\fwidth,
height=\fheight,
axis on top,
scale only axis,
xmin=0.5,
xmax=512.5,
trim axis left,
trim axis right,
ymin=0.5,
ymax=512.5,
xtick = \empty,
ytick = \empty
]
\addplot [forget plot] graphics [xmin=0.5,xmax=512.5,ymin=0.5,ymax=512.5] {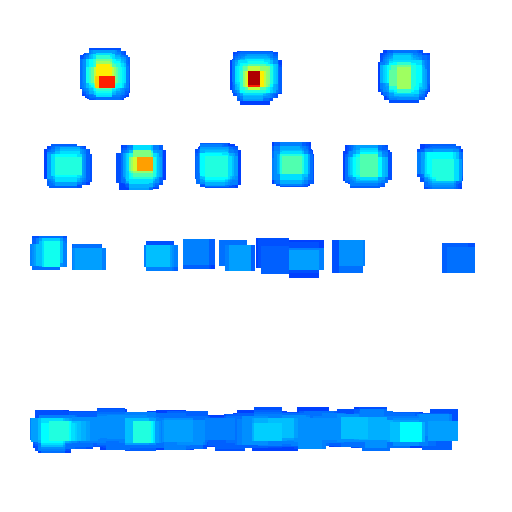};
\end{axis}
\end{tikzpicture}%
			&
			\begin{tikzpicture}
\begin{axis}[
hide axis,
scale only axis,
height=0pt,
width=0pt,
colormap={mymap}{[1pt] rgb(0pt)=(1,1,1); rgb(1pt)=(0,0,0.625); rgb(7pt)=(0,0,1); rgb(23pt)=(0,1,1); rgb(39pt)=(1,1,0); rgb(55pt)=(1,0,0); rgb(63pt)=(0.5,0,0)},
colorbar,
colorbar/width = .4cm,
colorbar style={
ytick={0,0.00125,0.0025,0.00375,0.005,0.00625},
yticklabels={{},{800},{400},{266.7},{200},{160}},
scaled ticks = false,
width = .4cm,
height = .95\fheight
},
point meta min=0,
point meta max=0.00625
]
\addplot [draw = none] coordinates {(0,0)};
\end{axis}
\end{tikzpicture} \\
			
			\raisebox{-.9cm}{\rotatebox{90}{$\sigma = 0.005$}}
			&
			\begin{tikzpicture}[baseline]

\begin{axis}[
width=4cm,
height=4cm,
axis on top,
scale only axis,
xmin=0.5,
xmax=512.5,
trim axis left,
trim axis right,
ymin=0.5,
ymax=512.5,
xtick = \empty,
ytick = \empty
]
\addplot [forget plot] graphics [xmin=0.5,xmax=512.5,ymin=0.5,ymax=512.5] {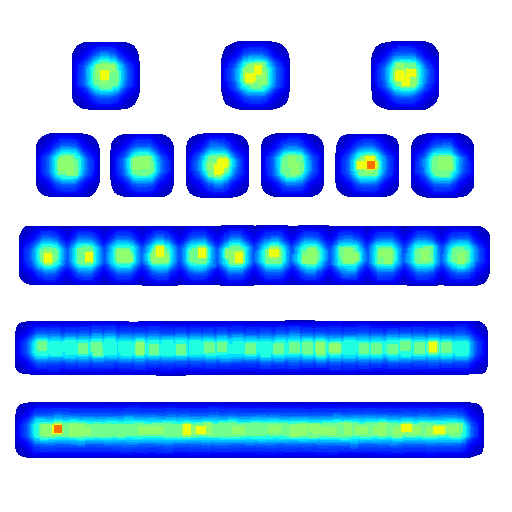};
\end{axis}

\coordinate (C) at (40.5/128,192/128);
\coordinate (D) at (40.5/128,62/128);
\coordinate (E) at (170.5/128,192/128);
\coordinate (F) at (170.5/128,62/128);
\coordinate (G) at (0,-40/128);
\coordinate (H) at (0,-552/128);
\coordinate (I) at (4,-40/128);
\coordinate (J) at (4,-552/128);

\draw[solid, draw=black, line width=1.0pt] (C) rectangle (F);
\draw[dashed, draw=black, line width=.5pt] (C) -- (G);
\draw[dashed, draw=black, line width=.5pt] (D) -- (H);
\draw[dashed, draw=black, line width=.5pt] (E) -- (I);
\draw[dashed, draw=black, line width=.5pt] (F) -- (J);

\begin{axis}[%
width=\fwidth,
height=\fheight,
axis on top,
scale only axis,
xmin=0.5,
xmax=131.5,
trim axis left,
trim axis right,
ymin=0.5,
ymax=131.5,
xtick = \empty,
ytick = \empty,
axis line style = very thick,
at={(0,-140)}
]
\addplot [forget plot] graphics [xmin=0.5,xmax=131.5,ymin=0.5,ymax=131.5] {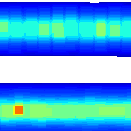};
\end{axis}

\end{tikzpicture}%
			& 
			\begin{tikzpicture}[baseline]

\begin{axis}[
width=4cm,
height=4cm,
axis on top,
scale only axis,
xmin=0.5,
xmax=512.5,
trim axis left,
trim axis right,
ymin=0.5,
ymax=512.5,
xtick = \empty,
ytick = \empty
]
\addplot [forget plot] graphics [xmin=0.5,xmax=512.5,ymin=0.5,ymax=512.5] {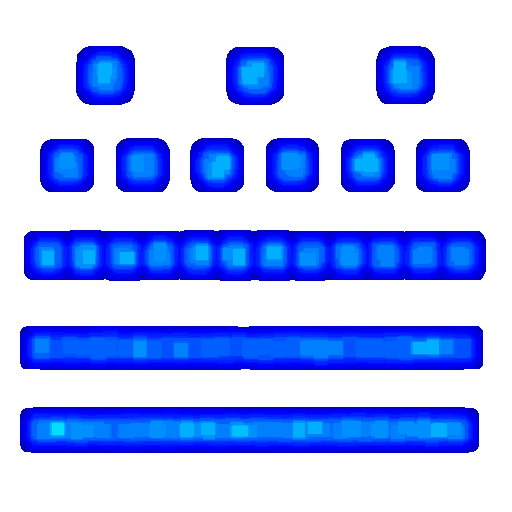};
\end{axis}

\coordinate (C) at (40.5/128,192/128);
\coordinate (D) at (40.5/128,62/128);
\coordinate (E) at (170.5/128,192/128);
\coordinate (F) at (170.5/128,62/128);
\coordinate (G) at (0,-40/128);
\coordinate (H) at (0,-552/128);
\coordinate (I) at (4,-40/128);
\coordinate (J) at (4,-552/128);

\draw[solid, draw=black, line width=1.0pt] (C) rectangle (F);
\draw[dashed, draw=black, line width=.5pt] (C) -- (G);
\draw[dashed, draw=black, line width=.5pt] (D) -- (H);
\draw[dashed, draw=black, line width=.5pt] (E) -- (I);
\draw[dashed, draw=black, line width=.5pt] (F) -- (J);

\begin{axis}[%
width=\fwidth,
height=\fheight,
axis on top,
scale only axis,
xmin=0.5,
xmax=131.5,
trim axis left,
trim axis right,
ymin=0.5,
ymax=131.5,
xtick = \empty,
ytick = \empty,
axis line style = very thick,
at={(0,-140)}
]
\addplot [forget plot] graphics [xmin=0.5,xmax=131.5,ymin=0.5,ymax=131.5] {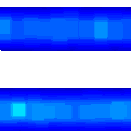};
\end{axis}

\end{tikzpicture}%
			& 
			\begin{tikzpicture}[baseline]

\begin{axis}[
width=4cm,
height=4cm,
axis on top,
scale only axis,
xmin=0.5,
xmax=512.5,
trim axis left,
trim axis right,
ymin=0.5,
ymax=512.5,
xtick = \empty,
ytick = \empty
]
\addplot [forget plot] graphics [xmin=0.5,xmax=512.5,ymin=0.5,ymax=512.5] {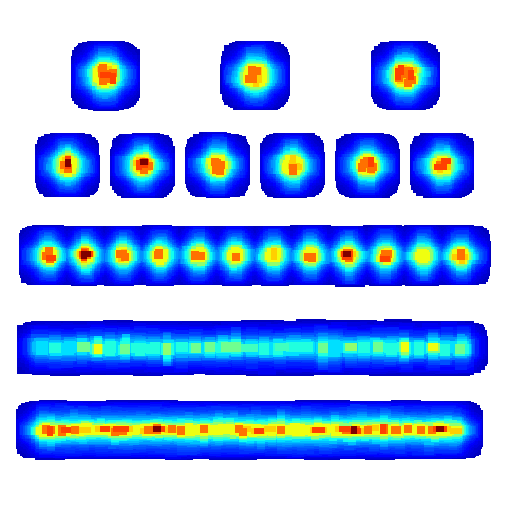};
\end{axis}

\coordinate (C) at (40.5/128,192/128);
\coordinate (D) at (40.5/128,62/128);
\coordinate (E) at (170.5/128,192/128);
\coordinate (F) at (170.5/128,62/128);
\coordinate (G) at (0,-40/128);
\coordinate (H) at (0,-552/128);
\coordinate (I) at (4,-40/128);
\coordinate (J) at (4,-552/128);

\draw[solid, draw=black, line width=1.0pt] (C) rectangle (F);
\draw[dashed, draw=black, line width=.5pt] (C) -- (G);
\draw[dashed, draw=black, line width=.5pt] (D) -- (H);
\draw[dashed, draw=black, line width=.5pt] (E) -- (I);
\draw[dashed, draw=black, line width=.5pt] (F) -- (J);

\begin{axis}[%
width=\fwidth,
height=\fheight,
axis on top,
scale only axis,
xmin=0.5,
xmax=131.5,
trim axis left,
trim axis right,
ymin=0.5,
ymax=131.5,
xtick = \empty,
ytick = \empty,
axis line style = very thick,
at={(0,-140)}
]
\addplot [forget plot] graphics [xmin=0.5,xmax=131.5,ymin=0.5,ymax=131.5] {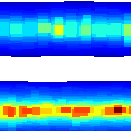};
\end{axis}

\end{tikzpicture}%
			&
			\raisebox{-4cm}{\begin{tikzpicture}
\begin{axis}[
hide axis,
scale only axis,
height=0pt,
width=0pt,
colormap={mymap}{[1pt] rgb(0pt)=(1,1,1); rgb(1pt)=(0,0,0.625); rgb(7pt)=(0,0,1); rgb(23pt)=(0,1,1); rgb(39pt)=(1,1,0); rgb(55pt)=(1,0,0); rgb(63pt)=(0.5,0,0)},
colorbar,
colorbar/width = .4cm,
colorbar style={
ytick={0,0.00416666666666667,0.00833333333333333,0.0125,0.0166666666666667,0.0208333333333333},
yticklabels={{},{240},{120},{80},{60},{48}},
scaled ticks = false,
width = .4cm,
height = 1.9\fheight
},
point meta min=0,
point meta max=0.0208333333333333,
at={(0,0)}
]
\addplot [draw = none] coordinates {(0,0)};
\end{axis}
\end{tikzpicture}}
		\end{tabular}
		\caption{Support detected by the three different tests at $90\%$-significance level on the data from Figure \ref{fig:2D_testfunction}. Each column corresponds to the results of a fixed test for different noise levels, each row to the results for the different tests for a fixed noise level. The color-coding shows the smallest scale (in pixels) on which the corresponding pixel was significant.}
		\label{fig:2D_significance_maps}
	\end{figure}
	
	In conclusion we find that the test $\Psi_{\mathrm c}$ with correctly specified degree of ill-posedness measures up to our expectations. It detects the large objects even in the large noise regime Figure \ref{fig:2D_testfunction}(b) (where the other two tests do not find any significant boxes), and from the zoomed plot in Figure \ref{fig:2D_significance_maps}, bottom left, it is apparent that $\Psi_{\mathrm c}$ is able to distinguish between the line in the bottom row and the sequence of squares in the second last row, i.e. it is able to separate objects which have a distance of $9$ pixels, which is even less than the FWHM.
	
	It is immediately apparent that the undersmoothing test $\Psi_{\mathrm u}$ is more sensitive on small scales compared to $\Psi_{\mathrm c}$ and $\Psi_{\mathrm o}$, but both $\Psi_{\mathrm o}$ and $\Psi_{\mathrm u}$ cannot keep up with the large scale detection power of $\Psi_{\mathrm c}$, as this test is the only one to detect any signal in the large noise situation, $\sigma = 0.5$. From our theory (cf. Lemma \ref{Lemma:Power3}) we know that a box $B_i$ will be detected as soon as $f_{|_{B_i}} \geq 2\biggl(\frac{q_{1-\alpha}}{\omega_i}+ \omega_i\biggr)\frac{\|\sigma \Phi_{\bh_i}\|_2}{\sqrt{h_{i,1} h_{i,2}n^d}}$ with probability $\geq 1-\alpha$ (the penalties $\omega_i$ are as in Theorem \ref{Th:1} and $ \Phi_{\bh_i}$ is the corresponding transformed mother wavelet as in Section \ref{Sec:Dec}). This provides the largest $\sigma$ which still ensures detection with probability $\geq 1-\alpha$, 
	\[
	\sigma  = \frac{\sqrt{h_{i,1} h_{i,2}n^2}}{\|\Phi_{\bh_i}\|_2} \left(2\biggl(\frac{q_{1-\alpha}}{\omega_i}+ \omega_i\biggr)\right)^{-1}.
	\]
	Numerical approximations of the values of the right-hand side are depicted in Figure \ref{fig:2D_detection_bound} against the size (being $h_{i,1} h_{i,2}$) of the considered scales with $\alpha = 0.1$. For visual appeal, the computed values have been smoothed.
	Comparing $\Psi_{\mathrm c}$ and $\Psi_{\mathrm u}$ it becomes apparent that $\Psi_{\mathrm u}$ allows for more detections on small scales, but will necessarily oversee several weak signals on large scales. Comparing with $\Psi_{\mathrm c}$ and $\Psi_{\mathrm u}$ we find that $\Psi_{\mathrm o}$ has less detection power on all considered scales. Concerning $\Psi_{\mathrm o}$ and $\Psi_{\mathrm u}$ it seems that $\Psi_{\mathrm u}$ has better detection properties on nearly all scales but the largest ones. In conclusion, misspecification of ill-posedness does not provide false detections, but loss in detection power, where underestimating the ill-posedness has less severe effects to MISCAT than overestimation.
	
	\begin{figure}[!htb]
		\setlength\fheight{3cm} \setlength\fwidth{8cm}
		\centering
		\input{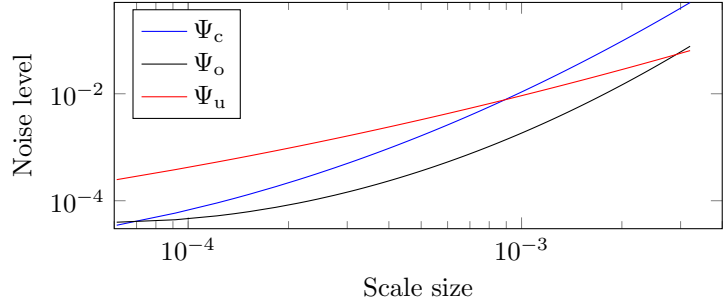}
		\caption{Upper bound for $\sigma$ ensuring a $90\%$ probability for detection of a signal of intensity $1$ on the corresponding scale in the three different test setups.}
		\label{fig:2D_detection_bound}
	\end{figure}
	
	\subsubsection{Robustness to the noise distribution}
	
	For the test $\Psi_{\mathrm c}$ with correctly specified degree of smoothness of the convolution we furthermore investigate the empirical level using the exact variance for different data setups, cf. Table \ref{tab:2D_levels}. We consider three scenarios:
	\begin{itemize}
		\item[(1)] data with Gaussian additive noise $\mathcal N \left(0,\sigma^2\right)$,
		\item[(2)] data with Student's t additive noise $t\left(\nu\right)$ with different degrees of freedom $\nu$,
		\item[(3)] data with mixed Poisson and Gaussian noise.
	\end{itemize}
	
	For scenario (3) the data is generated as follows: For specified parameters $t, \sigma, b>0$ we have
	\[
	Y_{\bj} \stackrel{\text{independent}}{\sim} \frac{1}{t}\text{Poi} \left(t \left[\left(\psf \ast f\right) \left(\bx_{\bj}\right) + b\right] \right) - b + \mathcal N \left(0, \sigma^2\right), \qquad \bj \in \left\{1,...,512\right\}^2.
	\]
	Here $\text{Poi} \left(\lambda\right)$ denotes the Poisson distribution with parameter $\lambda >0$, and the $\mathcal N \left(0, \sigma^2\right)$ distribution is independent of this. This model aims for mimicking the data coming from CCD sensors \citep[see e.\,g.][]{SWH:93,SHLFW:95}, modeling the (known) photon background by $b$, the (known) observation time by $t$, and the read-out errors by $\mathcal N \left(0, \sigma^2\right)$ with known variance $\sigma^2>0$. 
	
	\begin{table}[!htb]
		\centering
		\begin{tabular}{cll}
			\toprule[2pt]
			Noise scenario & Parameters & false positive detections in $\%$ \\
			\midrule[1pt]
			(1) & & $8.8$ \\
			\midrule[1pt]
			\multirow{7}{*}{(2)}
			& $\nu = 3$ & $100$ \\
			& $\nu = 6$ & $94.7$ \\
			& $\nu = 7$ & $72.3$ \\
			& $\nu = 11$ & $21.8$ \\
			& $\nu = 15$ & $15.7$ \\
			& $\nu = 19$ & $13.0$ \\
			& $\nu = 23$ & $13.3$ \\
			\midrule[1pt]
			\multirow{3}{*}{(3)}
			& $t = 100,~ b = 0.5,~~ \sigma = 0.01$ & $9.8$\\
			& $t = 1000, b = 0.005, \sigma = 0.01$ & $8.1$\\
			& $t = 100,~ b = 0.005, \sigma = 0.01$ & $14.5$\\
			\bottomrule[2pt]
		\end{tabular}
		\caption{Some empirical levels computed in $1000$ runs for different data settings. We always choose $\alpha = 0.1$, i.e. (asymptotically) there should $\leq 10\%$ false positives.}
		\label{tab:2D_levels}
	\end{table}
	
	Note that in scenario (1) the level will be independent of $\sigma$ even for finite $n$, because of the standardization of the local test statistics.
	
	In conclusion we find that the heavy tail behavior of the student's t distribution strongly influences the level of the test. Note that $t(\nu)$ has $\nu-1$ moments and hence does not satisfy \eqref{Moments} for any $\nu$. In contrast to that, in the mixture model all moments exist and satisfy \eqref{Moments} whenever $b>0$. Our simulations suggest that the test keeps its level quite stable over a large range of parameters. In the situation of a very low Poisson intensity ($t = 100,~ b = 0.005$), the heavier tail behavior of the Poisson distribution dominates.
	
	\subsection{Locating fluorescent markers in STED super-resolution microscopy} \label{Application}
	
	Based on the results from Section \ref{subsec:2D_simulations}, we are now able to rigorously treat the real world application from Section \ref{intro:appl} from $2$-dimensional STED (stimulated emission depletion) super-resolution microscopy \citep{HW:94,KH:99,H:07}. A brief overview over the experimental setup is already given in the introduction, and for a detailed mathematical model we refer to Appendix \ref{appB}, where we argue there that our measurements are described reasonably by
	\[
	Y_{\bj} \stackrel{\text{independent}}{\sim} \text{Bin}\left(t, \left(\psf_{2,0.016} \ast f\right) \left(\bx_{\bj}\right)\right), \qquad \bj \in \left\{1,...,600\right\}^2.
	\]
	Here $\text{Bin}\left(t,p\right)$ denotes the Binomial distribution with parameters $t \in \mathbb N$ and $p \in \left[0,1\right]$, observations are obtained on the grid $\bigl\{\bx_{\bj} ~\big|~ \bj \in \left\{1, ..., 600\right\}^2\bigr\}$ and $f \left(\bx\right)$ is the probability that a photon emitted at grid point $\bx$ is recorded at the detector in a single excitation pulse. The kernel $\psf_{2,0.016}$ is as in \eqref{Def:PSF}, and in actual experiments $t$ is roughly $10^3$.
	
	With this kernel we design a test using the optimal probe function $\varphi$ in \eqref{Probe} (i.e. $\beta_1 = \beta_2 = 4$) and a set of scales defined by boxes of size $k_x \times k_y$ pixels, $k_x, k_y = 4,6,...,20$. The variances $\sigma_i^2$ in \eqref{scan2} used in the test statistic are estimated from the data point-wise using a maximum likelihood estimator. Furthermore we ease the problem by neglecting all boxes in which no photons where observed, i.e. we drop all pairs $\left(\bt_i, \bh_i\right)$ such that $Y_{\bj}=0$ for all $\bx_{\bj} \in [\bt_i-\bh_i,\bt_i]$. Even though this choice is data dependent and hence random, the uniformity over all pairs $\left(\bt_i, \bh_i\right)$ of our confidence statements ensures that those stay valid. 
	
	With this test we analyze the data shown in Figure \ref{fig:exp_data}, cf. Section \ref{intro:appl} for details. For a comparison, we also use two different tests, namely an analog of MISCAT using only one single scale of size $4 \times 6$ pixels (these are the smallest boxes found by MISCAT, see Remark \ref{BT1}(b)), and the multiscale scanning test ignoring the deconvolution ($T = \text{id}$), boiling down to the test statistic of \citet{DuemSpok2001}:
	\[
	\max_{i} \frac{\sqrt{\log \left(\frac{3}{\bh_{i}^{\mathbf 1}}\right)}}{\log\left(\log\left(\frac{3}{
			\bh_{i}^{\mathbf 1}}\right)\right)} \left[ \frac{1}{\sqrt{\bh_{i}^{\mathbf 1}}}\sum\limits_{\bx_{\bj} \in [\bt_i-\bh_i,\bt_i]} Y_{\bj} - \sqrt{2\log\left(\frac{3}{
			\bh_{i}^{\mathbf 1}}\right)} \right].
	\]
	For all tests we again use empirical quantiles computed in $10^4$ runs of the test statistics applied to Gaussian white noise.
	
	The full result is depicted in Figure \ref{fig:exp_result}. As mentioned in the introduction, MISCAT is able (at least for some of the single DNA origamis) to infer on position and rotation as indicated in the first panel in Figure \ref{fig:exp_result}. Remarkably, this information is not visible by eye, cf. Figure \ref{fig:appl_intro}. 
	
	\begin{figure}[!htb]
		\centering
		\input{exp_result.tikz} 
		\begin{tikzpicture}

\begin{axis}[
width=6cm,
height=6cm,
axis on top,
scale only axis,
xmin=0.5,
xmax=600.5,
xtick = \empty,
ytick = \empty,
y dir = reverse,
ymin=0.5,
ymax=600.5,
colormap={mymap}{[1pt] rgb(0pt)=(1,1,1); rgb(1pt)=(0,0,0.625); rgb(7pt)=(0,0,1); rgb(23pt)=(0,1,1); rgb(39pt)=(1,1,0); rgb(55pt)=(1,0,0); rgb(63pt)=(0.5,0,0)},
colorbar,
colorbar style={
ytick={0,0.0125,0.025,0.0375,0.05,0.0625},
yticklabels={{Not significant},{8000},{4000},{2667},{2000},{1600}},
scaled ticks = false
},
point meta min=0,
point meta max=0.0625
]
\addplot [forget plot] graphics [xmin=0.5,xmax=600.5,ymin=0.5,ymax=600.5] {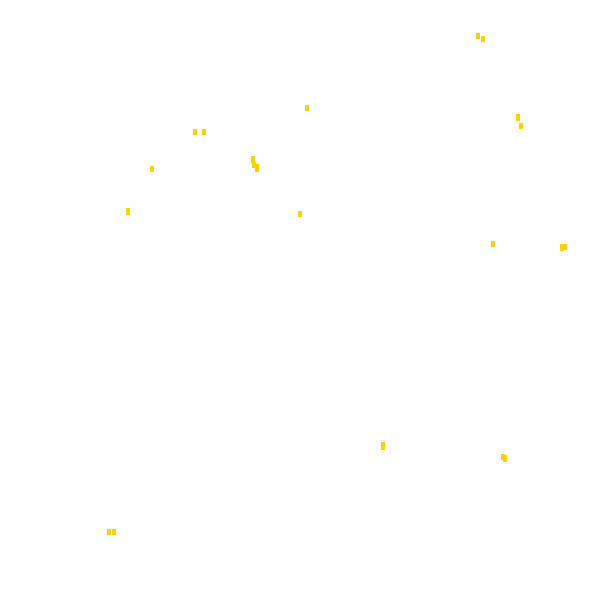};

\coordinate (A) at (150,55);
\coordinate (B) at (199,55);

\draw[color=black,solid,line width=2.0pt] (A) -- (B) node[midway,above] {\tiny 500nm};

\end{axis}

\coordinate (C) at (1.2,3.3);
\coordinate (D) at (2.7,4.8);
\coordinate (E) at (1.2,4.8);
\coordinate (F) at (2.7,3.3);
\coordinate (G) at (-4.775,3.6);
\coordinate (H) at (-1.175,3.6);
\coordinate (I) at (-4.775,0);
\coordinate (J) at (-1.175,0);

\draw[solid, draw=black, line width=1.0pt] (C) rectangle (D);
\draw[dashed, draw=black, line width=.5pt] (C) -- (I);
\draw[dashed, draw=black, line width=.5pt] (D) -- (H);
\draw[dashed, draw=black, line width=.5pt] (E) -- (G);
\draw[dashed, draw=black, line width=.5pt] (F) -- (J);

\begin{axis}[
width=3.6cm,
height=3.6cm,
axis on top,
scale only axis,
xmin=0.5,
xmax=151.5,
axis line style = very thick,
xtick = \empty,
ytick = \empty,
ymin=0.5,
ymax=151.5,
at={(-200,0)}
]
\addplot [forget plot] graphics [xmin=0.5,xmax=151.5,ymin=0.5,ymax=151.5] {exp_single_scale_result_snap-1.png};

\coordinate (K) at (20,135);
\coordinate (L) at (39,135);

\draw[color=black,solid,line width=2.0pt] (K) -- (L) node[midway,above] {\tiny 200nm};
\end{axis}

\node [rotate=90] at (8.4,3.2) {Size of smallest significance in nm$^2$};
\end{tikzpicture}%
		\\
		\begin{tikzpicture}

\begin{axis}[
width=6cm,
height=6cm,
axis on top,
scale only axis,
xmin=0.5,
xmax=600.5,
xtick = \empty,
ytick = \empty,
y dir = reverse,
ymin=0.5,
ymax=600.5,
colormap={mymap}{[1pt] rgb(0pt)=(1,1,1); rgb(1pt)=(0,0,0.625); rgb(7pt)=(0,0,1); rgb(23pt)=(0,1,1); rgb(39pt)=(1,1,0); rgb(55pt)=(1,0,0); rgb(63pt)=(0.5,0,0)},
colorbar,
colorbar style={
ytick={0,0.0125,0.025,0.0375,0.05,0.0625},
yticklabels={{Not significant},{8000},{4000},{2667},{2000},{1600}},
scaled ticks = false
},
point meta min=0,
point meta max=0.0625
]
\addplot [forget plot] graphics [xmin=0.5,xmax=600.5,ymin=0.5,ymax=600.5] {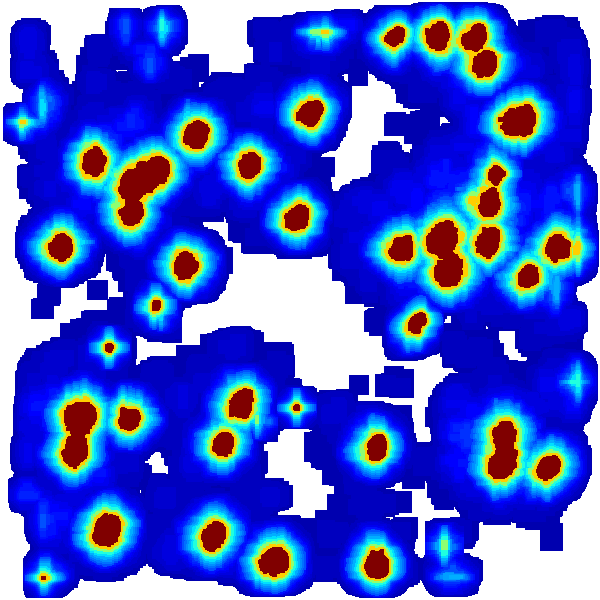};

\coordinate (A) at (150,55);
\coordinate (B) at (199,55);

\draw[color=black,solid,line width=2.0pt] (A) -- (B) node[midway,above] {\tiny 500nm};

\end{axis}

\coordinate (C) at (1.2,3.3);
\coordinate (D) at (2.7,4.8);
\coordinate (E) at (1.2,4.8);
\coordinate (F) at (2.7,3.3);
\coordinate (G) at (-4.775,3.6);
\coordinate (H) at (-1.175,3.6);
\coordinate (I) at (-4.775,0);
\coordinate (J) at (-1.175,0);

\draw[solid, draw=black, line width=1.0pt] (C) rectangle (D);
\draw[dashed, draw=black, line width=.5pt] (C) -- (I);
\draw[dashed, draw=black, line width=.5pt] (D) -- (H);
\draw[dashed, draw=black, line width=.5pt] (E) -- (G);
\draw[dashed, draw=black, line width=.5pt] (F) -- (J);

\begin{axis}[
width=3.6cm,
height=3.6cm,
axis on top,
scale only axis,
xmin=0.5,
xmax=151.5,
axis line style = very thick,
xtick = \empty,
ytick = \empty,
ymin=0.5,
ymax=151.5,
at={(-200,0)}
]
\addplot [forget plot] graphics [xmin=0.5,xmax=151.5,ymin=0.5,ymax=151.5] {exp_scanning_result_snap-1.png};

\coordinate (K) at (20,135);
\coordinate (L) at (39,135);

\draw[color=black,solid,line width=2.0pt] (K) -- (L) node[midway,above] {\tiny 200nm};
\end{axis}

\node [rotate=90] at (8.4,3.2) {Size of smallest significance in nm$^2$};

\end{tikzpicture}%
		\caption{$90\%$ significance maps and excerpts for different tests computed from the data in Figure \ref{fig:exp_data}. From top to bottom: MISCAT, a single scale test with deconvolution, and the standard multiscale test without deconvolution.}
		\label{fig:exp_result}
	\end{figure}
	
	\section{Multiscale Extreme Value Theory}\label{Sec:Aux}
	The following theorem guarantees that the Gaussian approximation $\mathcal{S}(W)$ is asymptotically  bounded from above almost surely. 	
	\begin{Theorem}[MISCAT: a.s. boundedness]\label{GWSMain}
		Let $\Xi$ be a normed function, i.e. $\|\Xi\|_2=1$, supported on $[0,1]^d$ such that \eqref{S1} holds. Let $(\bt,\bh)\in\mathcal{H}\times\mathcal{T}_{\bh}\subset[\bh_{\min},\bh_{\max}]\times[\bh,\be]$, where $ h_{\max}\leq n^{-\delta}$. There exists a function $F$ which is independent of $n$, such that $\lim_{\lambda\to\infty}F(\lambda)=0$ and for $\lambda>0$
		\begin{align*}	\mathbb{P}\biggl(\sup_{\bh\in\mathcal{H}}\sup_{\bt\in\mathcal{T}_{\bh}}\omega_{\bh}\biggr(\int\Xi\bigl(\bt-\bz\bigr)\,dW_{\bz}-\omega_{\bh}\biggr)>\lambda\biggr)\leq F(\lambda).
		\end{align*} 
		This implies in particular that $\mathcal{S}(W)$ is almost surely bounded. Furthermore, there exists a positive constant $ \underline{D}_{\gamma}$  such that for any fixed $\lambda\in\R$
		\begin{align*}
		e ^{-\underline{D}_{\gamma} \displaystyle e^{-\lambda}}\leq\lim_{n\to\infty}\mathbb{P}\biggl(\sup_{\bh\in\mathcal{H}}\sup_{\bt\in\mathcal{T}_{\bh}}\omega_{\bh}\biggr(\int\Xi\bigl(\bt-\bz\bigr)\,dW_{\bz}-\omega_{\bh}\biggr)\leq\lambda\biggr).
		\end{align*}
	\end{Theorem}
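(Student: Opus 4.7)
Set $Z_\bt := \int \Xi(\bt-\bz)\,dW_\bz$. Since $\|\Xi\|_2 = 1$, $Z$ is a stationary, unit-variance Gaussian field on $[0,1]^d$, and \eqref{S1} yields the incremental bound $\mathbb{E}(Z_\bt - Z_\bs)^2 \leq L\|\bt-\bs\|_2^{2\gamma}$. Write $M_n$ for the displayed supremum. The theorem splits into two pieces: the uniform tail estimate $\mathbb{P}(M_n>\lambda)\leq F(\lambda)\to 0$ (which yields a.s.\ boundedness) and the Gumbel-type lower bound $\liminf_n\mathbb{P}(M_n\leq\lambda)\geq \exp(-\underline D_\gamma e^{-\lambda})$; I address them in turn.

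For the tail bound I would apply a non-asymptotic Pickands--Piterbarg inequality: for a compact $A\subset[0,1]^d$ and $u\geq u_0$,
\[
\mathbb{P}\bigl(\sup_{\bt\in A}Z_\bt>u\bigr)\,\leq\,C(L,\gamma,d)\,\Leb(A)\,u^{d/\gamma}\,\Phi(-u).
\]
Evaluated at $u=u_\bh:=\omega_\bh+\lambda/\omega_\bh$ and expanded using \eqref{eq:calibration}, the specific constant $C_d=2d+d/\gamma-1$ is chosen precisely so that the Piterbarg factor $u^{d/\gamma}$ is absorbed by the secondary correction term in $\omega_\bh$, leaving
\[
u_\bh^{d/\gamma}\,\Phi(-u_\bh)\,\lesssim\,\bh^{\be}\,(\log(K/\bh^{\be}))^{-d}\,e^{-\lambda}.
\]
A union bound over $\bh\in\mathcal{H}$, combined with summability of $\sum_{\bh\in\mathcal{H}}\bh^{\be}(\log(K/\bh^{\be}))^{-d}$ (uniformly in $n$ thanks to \eqref{R1}), gives $\mathbb{P}(M_n>\lambda)\leq \underline D_\gamma e^{-\lambda}=:F(\lambda)$. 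Almost-sure boundedness of $\mathcal{S}(W)$ then follows by Borel--Cantelli along $\lambda_k\to\infty$.

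The naive consequence $\mathbb{P}(M_n\leq\lambda)\geq 1-\underline D_\gamma e^{-\lambda}$ is strictly weaker than the claimed Gumbel lower bound, since $1-x<e^{-x}$ for $x>0$; to upgrade it I would follow a Pickands double-sum/Poisson approximation. Partition the position grid at each retained scale into small sub-cubes $\mathcal{B}_1,\ldots,\mathcal{B}_M$ on which the local exceedance events $E_k=\{\sup_{\mathcal{B}_k}\omega_\bh(Z_\bt-\omega_\bh)>\lambda\}$ are approximately Bernoulli$(p_k)$. A Slepian/Berman comparison with the independent-block surrogate $\widetilde{Z}$---valid because the inter-block covariances of $Z$ are non-negative (when $\Xi\geq 0$) and, in general, absolutely summable by compact support of $\Xi$---gives $\mathbb{P}(M_n\leq\lambda)\geq \mathbb{P}(\widetilde M_n\leq\lambda)-o(1)=\prod_k(1-p_k)-o(1)$. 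Since the Pickands--Piterbarg asymptotic applied block-by-block delivers $\max_k p_k\to 0$ and $\sum_k p_k\to \underline D_\gamma e^{-\lambda}$ in the limit (mirroring the first part), one obtains $\prod_k(1-p_k)\geq \exp\bigl(-\sum_k p_k/(1-\max_k p_k)\bigr)\to \exp(-\underline D_\gamma e^{-\lambda})$, which is the desired bound.

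The main obstacle is the Berman--Slepian comparison between $M_n$ and its independent-block surrogate. Because the field $Z_\bt$ does not depend on $\bh$, different scales share the same underlying white noise $W$ and are strongly positively correlated; Slepian's inequality is essential, since positive correlations can only \emph{improve} the lower bound on $\mathbb{P}(M_n\leq\lambda)$, but the Berman correction term of the form $\sum_{k\neq\ell}|\mathrm{Cov}(Z_{\mathcal B_k},Z_{\mathcal B_\ell})|\exp(-u^2/(1+|r_{k,\ell}|))$ must be shown to vanish uniformly as the block partition is refined. This is where the compact support of $\Xi$ (blocks at position distance $>1$ are \emph{exactly} independent) and the scale restrictions \eqref{R1} enter; bookkeeping these correction terms uniformly across the multiscale grid, while simultaneously keeping $\sum_k p_k$ finite, is the technical heart of the argument.
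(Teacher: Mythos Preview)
Your argument rests on a misreading of the field. Read in context (cf.\ \eqref{scan3} and the rescaling at the start of the proof of Theorem~\ref{Thm:Gumbel}), the process at scale $\bh$ is $Z_{\bt,\bh}=\frac{1}{\sqrt{\bh^{\be}}}\int\Xi\bigl(\frac{\bt-\bz}{\bh}\bigr)\,dW_\bz$, not a single field on $[0,1]^d$. Your $Z_\bt$ on $[0,1]^d$ has supremum $O_{\mathbb P}(1)$, so $\omega_\bh(\sup_\bt Z_\bt-\omega_\bh)\to-\infty$ and the statement becomes vacuous. Under the correct reading, $\sup_{\bt\in[\bh,\be]}Z_{\bt,\bh}\stackrel{\mathcal D}{=}\sup_{\bt\in[\be,\be/\bh]}Z_\bt$, so the relevant domain has Lebesgue measure $\sim 1/\bh^{\be}$; this cancels the factor $\bh^{\be}$ in your Piterbarg display and leaves a per-scale contribution $\asymp(\log(K/\bh^{\be}))^{-d}e^{-\lambda}$. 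A union bound over all $\bh\in\mathcal H$ (which may have $O(n^\kappa)$ elements by \eqref{growth}) then need not be uniformly bounded; one must first reduce to a dyadic grid $\mathcal H_{\mathrm{dyad}}$ of cardinality $O((\log n)^d)$ via Dudley's entropy integral, which also contributes an extra $C/\lambda$ term to $F$.

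More importantly, the ``main obstacle'' you identify---a Slepian/Berman comparison across scales---is a detour the paper avoids entirely. After the rescaling, the scale-$\bh$ supremum is $\sup_{[\be,\be/\bh]}Z_\bt$ for the \emph{same} stationary field, so the scales correspond to nested boxes inside the single enlarged domain $[\be,\be/\bh_{\min}]$. Partition this domain into blocks $B_\bp=[\be/\bh_{\bp-\be},\,\be/\bh_{\bp}-\be]$, $\bp\in\mathcal P^d$, separated by unit-width strips: since $\mathrm{supp}(\Xi)\subset[0,1]^d$, the sub-suprema over distinct $B_\bp$ are \emph{exactly} independent, and one obtains the identity
\[
\mathbb P(M_{\mathcal B}\leq\lambda)=\prod_{\bp\in\mathcal P^d}\mathbb P\Bigl(\sup_{\bt\in B_\bp}Z_\bt\leq\Lambda_{\min,\bp}\Bigr),\qquad \Lambda_{\min,\bp}=\min_{\bq\geq\bp}\Bigl(\tfrac{\lambda}{\omega_{\bh_\bq}}+\omega_{\bh_\bq}\Bigr).
\]
Inserting the Adler--Taylor bound $\mathbb P(\sup_{B_\bp}Z>\Lambda)\leq D_{\gamma}\Leb(B_\bp)\Lambda^{d/\gamma-1}e^{-\Lambda^2/2}$ and using $\Lambda_{\min,\bp}^2\geq 2\lambda+\omega_{\bh_\bp}^2$ (respectively $\Lambda_{\min,\bp}=\Lambda_\bp$ for fixed $\lambda$ and large $n$) yields both the uniform tail bound and the Gumbel lower bound directly---no positive-dependence argument or Berman second-moment bookkeeping is required.
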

	The following theorem yields a weak limit for multiscale statistics of the type $\mathcal{S}(W)$.		
	\begin{Theorem}[A general multiscale Gumbel limit theorem]\label{Thm:Gumbel} Let $\Xi$ be a normed function, i.e. $\|\Xi\|_2=1$, supported on $[0,1]^d$ such that \eqref{S2} holds.
		Let $K>0$ be a fixed, positive constant.
		If $-\log(h_{\max})=\delta\log(n)+o(\log(n))$ and $-\log(h_{\min})=\Delta\log(n)+o(\log(n))$, with $0 < \delta < \Delta \leq 1$, it holds
		\begin{align*}
		\lim_{n\to\infty}\mathbb{P}\biggl(\sup_{\bh\in[h_{\min},h_{\max}]^d}\sup_{\bt\in[\bh,\be]}\omega_{\bh}\biggr(\int\Xi\bigl(\bt-\bz\bigr)\,dW_{\bz}-\omega_{\bh}\biggr)\leq\lambda\biggr)=e^{-e^{-\lambda}\cdot\frac{H_{2\gamma}|\det D_{\Xi}^{-1}|I_d(\delta,\Delta)}{\sqrt{2\pi}K}},
		\end{align*}
		where  $\omega_{\bh}$ and $I_d(\delta,\Delta)$ are defined as in \eqref{eq:calibration} and \eqref{Def:Id}, respectively.
	\end{Theorem}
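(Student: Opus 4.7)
The plan is to derive the multiscale Gumbel limit via the classical Pickands double-sum method applied at each fixed scale, combined with a Berman-type Poisson convergence argument run over the scale grid. For each fixed $\bh$, the rescaled integral $Z_{\bt,\bh}:=\int\bh^{-\be/2}\,\Xi((\bt-\bz)/\bh)\,\mathrm dW_\bz$ is a centered unit-variance Gaussian field indexed by $\bt\in[\bh,\be]$. By \eqref{S2}, after the affine change of coordinates induced by $D_{\Xi}$, its variogram locally behaves as $\mathbb{E}[(Z_{\bt,\bh}-Z_{\bs,\bh})^2]=2\sum_{j=1}^d|(D_{\Xi}(\bt-\bs)/\bh)_j|^{2\gamma}(1+o(1))$, which is the locally $2\gamma$-stationary setting covered by Piterbarg's refinement of Pickands' theorem. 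The standard discrete-grid/chaining argument, using \eqref{S2} to control the increments, then yields the tail asymptotics
\begin{equation*}
\mathbb{P}\Bigl(\sup_{\bt\in[\bh,\be]}Z_{\bt,\bh}>u\Bigr)=H_{2\gamma}\,|\det D_{\Xi}^{-1}|\,\bh^{-\be}\,p(u)\,\tail(u)\,(1+o(1)),
\end{equation*}
for a suitable polynomial factor $p(u)$ depending on $d$ and $\gamma$, where $\tail$ is the standard normal tail.

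Next I would substitute the calibrated level $u_{\bh}:=\omega_{\bh}+\lambda/\omega_{\bh}$ with $\omega_{\bh}$ from \eqref{eq:calibration} and the particular exponent $C_d=2d+d/\gamma-1$. Expanding $u_{\bh}^2/2=\log(K/\bh^{\be})+(C_d/2)\log(2\log(K/\bh^{\be}))+\lambda+o(1)$ and applying Mills' ratio, one verifies by a tedious but elementary calculation that the polynomial factor $p(u_{\bh})$, the volume factor $\bh^{-\be}$, and the $(2\log(K/\bh^\be))^{-(C_d+1)/2}\bh^{\be}$ factor arising from $\tail(u_{\bh})$ combine to give a per-scale probability of the form
\begin{equation*}
\mathbb{P}\Bigl(\sup_{\bt\in[\bh,\be]}Z_{\bt,\bh}>u_{\bh}\Bigr)\sim\frac{H_{2\gamma}\,|\det D_{\Xi}^{-1}|\,e^{-\lambda}}{\sqrt{2\pi}\,K}\cdot\frac{1}{(-\log\bh^{\be})^d}.
\end{equation*}
Parametrizing the scale grid by $s_j:=-\log(h_j)/\log(n)\in[\delta,\Delta]$, the $\asymp(\log n)^d$ summands form a Riemann sum for $\int_{[\delta,\Delta]^d}(s_1+\cdots+s_d)^{-d}\,\mathrm d\bs$, and a direct evaluation (by induction on $d$, or by iterated application of $\int(s+a)^{-d}\,\mathrm ds$) identifies this integral with the combinatorial constant $I_d(\delta,\Delta)$ from \eqref{Def:Id}.

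Finally, a Berman-type Poisson convergence argument upgrades this first-moment computation to the full Gumbel law. Slepian's lemma together with Berman's comparison inequality reduces the upgrade to a bound on the joint tail probabilities for pairs $(\bh_1,\bt_1)\neq(\bh_2,\bt_2)$: pairs at comparable scales with well-separated locations contribute negligibly by near-disjointness of the supports of the rescaled kernels, while pairs at widely separated scales are controlled via a Cauchy--Schwarz bound on the covariance $\mathrm{Cov}(Z_{\bt_1,\bh_1},Z_{\bt_2,\bh_2})$ using the smoothness condition \eqref{S1}. The main obstacle is to execute this Berman comparison uniformly across the $\asymp(\log n)^d\times(\log n)^d$ scale-location pairs in the geometric grid; the usual pairwise covariance bounds must be sharpened by polylog factors so that, after the union over all such pairs, the second-moment remainder remains negligible against the first-moment leading term. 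Combining the three steps then produces the claimed Gumbel limit with the explicit scale constant $I_d(\delta,\Delta)$.
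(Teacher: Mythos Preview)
Your first-moment computation and the evaluation of the scale integral $\int_{[\delta,\Delta]^d}(s_1+\cdots+s_d)^{-d}\,\mathrm d\bs=I_d(\delta,\Delta)$ match the paper exactly (the paper carries out the same Mills-ratio expansion and the same induction on $d$). The divergence is entirely in how you pass from the per-scale tail asymptotics to the joint Gumbel law.

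The paper does \emph{not} use Berman's comparison inequality or any Slepian-type argument. Instead it exploits an exact independence structure that your sketch misses. After rescaling, each $Z_{\bt,\bh}$ is equal in law to the single stationary field $Z_{\bt}=\int\Xi(\bt-\bz)\,\mathrm dW_{\bz}$ with $\bt$ ranging over the enlarged box $[\be,\be/\bh]$. These location domains are \emph{nested} in $\bh$, so the paper partitions $[\be,\be/\bh_{\min}]$ into dyadic annuli $B_{\bp}\approx[\be/\bh_{\bp-\be},\be/\bh_{\bp}-\be]$ separated by strips of width $1$; because $\mathrm{supp}(\Xi)\subset[0,1]^d$, the suprema of $Z_{\bt}$ over different $B_{\bp}$ are \emph{exactly independent}. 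The multiscale maximum then rewrites as $\max_{\bp}\omega_{\bh_{\bp}}(\sup_{\bt\in B_{\bp}}Z_{\bt}-\omega_{\bh_{\bp}})$ (after checking that $\min_{\bq\geq\bp}\Lambda_{\bq}=\Lambda_{\bp}$), and the distribution factors as a genuine product $\prod_{\bp}\mathbb{P}(\sup_{B_{\bp}}Z_{\bt}\leq\Lambda_{\bp})$. Piterbarg's theorem is applied to each annulus, not to each full domain $[\bh,\be]$.

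Your route has a real gap at precisely this point. You propose to sum the probabilities $\mathbb{P}(\sup_{\bt\in[\bh,\be]}Z_{\bt,\bh}>u_{\bh})$ over a geometric scale grid and invoke a Poisson limit. But these events are \emph{heavily dependent} across neighboring scales: for $\bh'$ close to $\bh$ the two suprema are nearly the same random variable, so the pairwise joint tails are of the same order as the marginal tails and the Chen--Stein second-moment condition fails outright. Your decomposition into ``well-separated locations'' and ``widely separated scales'' omits the dominant regime---comparable scales with overlapping locations---where the correlation is essentially $1$. Making a comparison argument work here would require a genuine multiscale double-sum analysis that tracks clumping \emph{across} scales, not just within a single scale; this is substantially harder than the fixed-scale Pickands bound you cite, and is exactly what the paper's annulus trick circumvents.
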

	
	Corollary \ref{Cor} below follows immediately from the proofs of the previous Theorems. Two  special cases are discussed in 
	Remark \ref{BT1}.
	\begin{Corollary}\label{Cor}
		Suppose that the assumptions of Theorem \ref{GWSMain} hold.  Assume that $\bh_i\in\mathcal{H}_1\times\ldots\times\mathcal{H}_d$, where possibly $\mathcal{H}_i\neq\mathcal{H}_j$ for $i\neq j.$ Let for 
		$\mathcal{P}:=\bigl\{\lfloor\log(1/h_{\max})\rfloor-2,\lfloor\log(1/h_{\max})\rfloor-1,\ldots,\lceil\log(1/h_{\min})\rceil\bigr\}$,  and $j\in\{1,\ldots,d\}$
		\begin{align*}
		\mathcal{P}_j:=\big\{p\in\mathcal{P}\,\big|\,\exists\;h_{i,j}\in\mathcal{H}_j:\;h_{i,j}\in \bigl[e^{-(p+1)},e^{-p}\bigr)\big\}.
		\end{align*}
		Choose the constant $C_d$ in \eqref{eq:calibration} such that there exist positive constants $\underline{d}$ and $\overline{D} $ such that
		\begin{align*}
		\underline{d}\leq\log(n)^{-\frac{C_d-d/\gamma+1}{2}}\bigl|\mathcal{P}_1\times\ldots\times\mathcal{P}_d\bigr|\leq\overline{D}.
		\end{align*}
		\begin{itemize}
			\item[(a)]  The results of Theorem \ref{GWSMain} remain valid.
			\item[(b)] If, in addition, the grid of positions $\bt$ is sufficiently fine, i.e.
			\eqref{Grid1} holds and for each $j$, the sets $\mathcal{P}_j=\mathcal{P}_{j,n}$ are increasing sets with respect to $n\in\mathbb{N}$, i.\,e., $|\mathcal{P}_{j,n}|\leq|\mathcal{P}_{j,n+1}|$ and $\sum_{p_{j,n}\in\mathcal{P}_{j,n}}p_{j,n}$ is increasing, there exists a constant $C_{\mathcal{P}}>0$ such that
			\begin{align*}
			\lim_{n\to\infty}\mathbb{P}\biggl(\sup_{\bh\in \mathcal{H}_1\times\ldots\times\mathcal{H}_d}\sup_{\bt\in\mathcal{T}_{\bh}}\omega_{\bh}\biggr(\int\Xi\bigl(\bt-\bz\bigr)\,\mathrm{d}W_{\bz}-\omega_{\bh}\biggr)\leq\lambda\biggr)=e^{-e^{-\lambda}\cdot\frac{H_{2\gamma}|\det D_{\Xi}^{-1}|C_{\mathcal{P}}}{\sqrt{2\pi}K}}.
			\end{align*}
			
		\end{itemize}
	\end{Corollary}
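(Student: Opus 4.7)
The plan is to read off both claims directly from the proof architecture already used for Theorem~\ref{GWSMain} and Theorem~\ref{Thm:Gumbel}, tracking only where the cardinality of the dyadic scale index set enters. First I would record the key algebraic identity satisfied by the calibration in \eqref{eq:calibration}: for any $C_d>0$,
\begin{align*}
\frac{\exp(-\omega_{\bh}^2/2)}{\sqrt{2\pi}}\,\omega_{\bh}^{d/\gamma-1}
= \frac{\bh^{\be}}{K\sqrt{2\pi}}\bigl(2\log(K/\bh^{\be})\bigr)^{-\frac{C_d-d/\gamma+1}{2}}(1+o(1))
\end{align*}
as $\bh^{\be}\to 0$. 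This is the Gaussian-tail-times-polynomial factor that Pickands' theorem produces when applied to the field $\bt\mapsto\int\Xi((\bt-\bz)/\bh)\,\mathrm dW_{\bz}$ at a single scale, and it is the only point in the argument where $C_d$ interacts with the $\log(n)$ power.

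I would then revisit the two preceding proofs, both of which partition the scale range in each coordinate $j$ into dyadic blocks $[e^{-(p+1)},e^{-p})$ indexed by $p\in\mathcal{P}$ and bound the supremum by a union bound (for Theorem~\ref{GWSMain}) or a Slepian-type comparison combined with a Pickands asymptotic (for Theorem~\ref{Thm:Gumbel}) across these blocks. In the present heterogeneous setting the only blocks that can contribute are those with $p_j\in\mathcal{P}_j$, by the very definition of $\mathcal{P}_j$; hence the effective number of non-empty blocks equals $|\mathcal{P}_1\times\ldots\times\mathcal{P}_d|$, and using the identity above together with $2\log(K/\bh^{\be})\asymp\log(n)$ uniformly over the admissible scales, the contribution of a single block to $\mathbb{P}(\sup>\lambda)$ is asymptotically
\begin{align*}
\frac{H_{2\gamma}\,|\det D_{\Xi}^{-1}|}{\sqrt{2\pi}\,K}\,e^{-\lambda}\,\log(n)^{-\frac{C_d-d/\gamma+1}{2}}(1+o(1)).
\end{align*}

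For part (a), summing these contributions and invoking the upper bound $\log(n)^{-(C_d-d/\gamma+1)/2}|\mathcal{P}_1\times\ldots\times\mathcal{P}_d|\leq\overline{D}$ yields an upper tail bound $F(\lambda)=O(e^{-\lambda})$, hence a.s.\ boundedness of the supremum; the matching Gumbel-type lower bound with constant proportional to $\underline{d}$ comes from the Bonferroni anti-concentration step used in the lower-bound half of Theorem~\ref{GWSMain}, applied blockwise. For part (b), the monotone-increasing assumption on the $\mathcal{P}_{j,n}$ together with the fine position grid \eqref{Grid1} lets the per-block contributions assemble into a convergent Riemann sum on the irregular scale grid, so
\begin{align*}
\log(n)^{-\frac{C_d-d/\gamma+1}{2}}\,|\mathcal{P}_1\times\ldots\times\mathcal{P}_d|\longrightarrow C_{\mathcal{P}}>0,
\end{align*}
and the claimed Gumbel limit follows by inserting $C_{\mathcal{P}}$ in place of $I_d(\delta,\Delta)$ in the computation that closes Theorem~\ref{Thm:Gumbel}.

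The main obstacle I anticipate is not a new probabilistic estimate but the \emph{uniformity} of the Pickands-type error terms across the heterogeneous grids $\mathcal{H}_1,\ldots,\mathcal{H}_d$: one must verify that the $o(1)$ terms in the two displayed identities above are uniform in the block index $\bp$, and that the Slepian decoupling between distinct blocks used in Theorem~\ref{Thm:Gumbel} still goes through when the coordinate-wise scale sets have vastly different cardinalities. The monotonicity hypothesis imposed in (b) is precisely the minimal assumption needed to guarantee the Riemann-sum convergence without further structural requirements on the $\mathcal{H}_j$.
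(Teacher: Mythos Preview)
Your proposal is correct and takes essentially the same approach as the paper, which in fact gives no proof beyond the sentence ``Corollary~\ref{Cor} below follows immediately from the proofs of the previous Theorems.'' Your sketch correctly isolates the one place where the argument of Theorems~\ref{GWSMain} and~\ref{Thm:Gumbel} must be modified---replacing the full index set $\mathcal{P}^d$ by $\mathcal{P}_1\times\ldots\times\mathcal{P}_d$ and choosing $C_d$ so that the aggregate $\sum_{\bp}(2\log(K/\bh_{\bp}^{\be}))^{-(C_d-d/\gamma+1)/2}$ stays bounded---and your algebraic identity for the calibration is exactly the computation appearing in Step~I.2 of the proof of Theorem~\ref{Thm:Gumbel}.
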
 
\section{Proofs}\label{Proofs}

Throughout the proofs the letter $C$ without a subscript denotes a generic, positive constant whose value may vary from line to line.

We will first prove the auxiliary results from Section \ref{Sec:Aux}, as they are needed for the proofs of the main results.
\begin{proof}[Proof of Theorem \ref{Thm:Gumbel}]
	Recall that we denote
	\begin{align*}
	\bh^{\pmb{\alpha}}=h_1^{\alpha_1}\cdot\ldots\cdot h_d^{\alpha_d},\quad\bh_{\bp}=(h_1,\ldots,h_d)^T,\quad\frac{\be}{\bh_{\mathbf{p}}}=\biggl(\frac{1}{h_1},\ldots,\frac{1}{h_d}\biggr)^T
	\end{align*}
	and inequalities between vectors or multi-indices are meant component-wise.\\
	\textbf{Step I: Proof for scales on a dyadic grid.}~\\
	We show later in Step II.3 that the supremum over $[h_{\min}/\grid,h_{\max}]$ and supremum over $[h_{\min},h_{\max}]$ are asymptotically equivalent and consider first the supremum over the slightly enlarged set.
	Define a dyadic grid $\mathcal{H}_{\mathrm{dyad}}\subset[h_{\min}/\grid,h_{\max}]$ as follows:
	\begin{align*}
	\mathcal{H}_{\mathrm{dyad}}:=\{2^{-p}\,|\,p\in\mathcal{P}\}, \quad\mathcal{P}=\big\{\big\lceil b_{\gamma}(h_{\max})\Big\rceil,\ldots,\big\lfloor b_{\gamma}(h_{\min})\big\rfloor\big\},
	\end{align*}
	where $b_{\gamma}(h):=\log\big(\frac{\grid}{h}\big)/\log(2)$.
	Here and below $\log$ denotes the natural logarithm. Define $p_{\min}:=\min\mathcal{P}$ and $p_{\max}:=\max\mathcal{P}$. 
	Let
	\begin{align}\label{Mn}
	&\max_{\bh\in\mathcal{H}_{\mathrm{dyad}}^d}\sup_{\bt\in[\bh,\be]}\omega_{\bh}\biggl(\frac{1}{\sqrt{\bh^{\be}}}\int\Xi\Bigl(\frac{\bt-\bz}{\bh}\Bigr)\,\mathrm{d}W_{\bz}-\omega_{\bh}\biggr)\notag\\
	&\stackrel{\mathcal{D}}{=}\max_{\bh\in\mathcal{H}_{\mathrm{dyad}}^d}\sup_{\bt\in[\be,\be/\bh]}\omega_{\bh}\biggl(\int\Xi(\bt-\bz)\,\mathrm{d}W_{\bz}-\omega_{\bh}\biggr)=:M_n,
	\end{align}
	by stationarity of $Z_{\bt,\bh}$ for fixed $\bh$.
	We now consider the term $M_n.$\\
	\textbf{Step I.1: Partition of the parameter set}~\\
	The form of $M_n$ in \eqref{Mn} reveals a redundancy pattern that we will exploit later on. Observe that the suprema with respect to $\bt$ of the rescaled version $M_n$ are taken over subsets of the rectangle $[\be,\be/\bh_{\min}].$ For smaller scales, the supremum with respect to $\bt$ is taken over larger sets. Obviously, for $\bp\in\mathcal{P}^d,$
	\begin{align*}
	[\be/\bh_{\bp},\be/\bh_{\bp+\be}]\subset [\be,\be/\bh_{\bs}]\quad\forall\quad \;\bs>\bp+\be.
	\end{align*}   
	In order to exploit this fact we partition the parameter set $[\be,\be/\bh_{\min}]$ into suitable blocks, i.e. into blocks $B_{\bp+\be,\bq+\be}$ that are approximately equal to $[\be/\bh_{\bp},\be/\bh_{\bp+\be}]$
	in order to split the suprema with respect to $\bt$ into suitable sub-suprema. To achieve that those  sub-suprema are independent, we separate the blocks by small bands of width 1. This ensures independence, since supp$(\Xi)\subset[0,1]^d.$ The bands only yield a contribution which is asymptotically negligible, which we will show in Step I.3 below.
	
	To be precise, we define subsets of $[\be,\be/\bh_{\min}]$ as follows
	\begin{align}\label{partition1}
	B_{\bp}&:=\bigl[\tfrac{\be}{\bh_{\bp-\be}},\tfrac{\be}{\bh_{\bp}}-\be\bigr],\;\;\text{and}\;\;
	R_{\bp}=\bigl[\tfrac{\be}{\bh_{\bp-\be}},\tfrac{\be}{\bh_{\bp}}\bigr]\backslash B_{\bp},
	\end{align}
	where $\bh_{\bp_{\min}-1}:=\be.$		
	The large blocks $B_{\bp}$  yield the main contributions. The sets $R_{\bp}$ are asymptotically negligible (see Step I.3 below).
	Define further for $\bq\in\mathcal{P}^d$
	\begin{align*}
	\mathcal{B}_{\bq}:=\bigcup_{\substack{ \bp\in\mathcal{P}^d,\, \bp\leq \bq}}B_{\bp}\quad\text{and}\quad M_{\mathcal{B}}:=\max_{\bp\in\mathcal{P}^d}\omega_{\bh_{\bp}}\Bigl(M_{\mathcal{B}_{\bp}}-\omega_{\bh_{\bp}}\Bigr).
	\end{align*}
	Write
	\begin{align*}
	M_{\mathcal{B}}&=\max_{\bq\in\mathcal{P}^d}\max_{\bp\leq\bq}\sup_{\bt\in B_{\bp}}\omega_{\bh_{\bq}}\biggl(\int\Xi(\bt-\bz)\,\mathrm{d}W_{\bz}-\omega_{\bh_{\bq}}\biggr)\\
	&=\max_{\bp\in\mathcal{P}^d}\max_{\bq\geq\bp}\sup_{\bt\in B_{\bp}}\omega_{\bh_{\bq}}\biggl(\int\Xi(\bt-\bz)\,\mathrm{d}W_{\bz}-\omega_{\bh_{\bq}}\biggr).
	\end{align*}
	Fix $\lambda\in\R.$ Since the blocks $B_{\bp}$ are constructed such that the sub-maxima over different blocks are independent, we find
	\begin{align*}
	\mathbb{P}\bigl(M_{\mathcal{B}}\leq\lambda\bigr)
	&=\prod_{\bp\in\mathcal{P}^d}\mathbb{P}\biggr(\max_{\bp\leq\bq}\sup_{\bt\in B_{\bp}}\omega_{\bh_{\bq}}\biggl(\int\Xi(\bt-\bz)\,\mathrm{d}W_{\bz}-\omega_{\bh_{\bq}}\biggr) \leq \lambda\biggr)\\
	&=\prod_{\bp\in\mathcal{P}^d}\mathbb{P}\biggr(\sup_{\bt\in B_{\bp}}\int\Xi(\bt-\bz)\,\mathrm{d}W_{\bz} \leq \Lambda_{\min,\bp}\biggr),
	\end{align*}
	where
	\begin{align*}
	\Lambda_{\min,\bp}:=\min_{\bp\leq\bq}\biggl(\frac{\lambda}{\omega_{\bh_{\bq}}}+\omega_{\bh_{\bq}}\biggr).
	\end{align*}
	
	Now we have broken the proof down into $|\mathcal{P}|^d$ \textquotedblleft one-scale\textquotedblright~ extreme value problems and  use standard results for those. Let $\mathrm{Leb}(B_{\bp})$ denote the Lebesgue-measure of $B_{\bp}$ and let $\Lambda_{\bp}$ denote
	\begin{align}\label{Lambdakl}
	\Lambda_{\bp}:=\frac{\lambda}{\omega_{\bh_{\bp}}}+\omega_{\bh_{\bp}}.
	\end{align}
	For any fixed $\lambda\in\mathbb{R}$ we have that
	\begin{align*}
	\Lambda_{\min,\bp}=\Lambda_{\bp},
	\end{align*}
	for sufficiently large $n.$ Thus,
	\begin{align*}
	\mathbb{P}\bigl(M_{\mathcal{B}}\leq\lambda\bigr)
	&=\prod_{\bp\in\mathcal{P}^d}\mathbb{P}\biggr(\sup_{\bt\in B_{\bp}}\int\Xi(\bt-\bz)\,\mathrm{d}W_{\bz} \leq \Lambda_{\bp}\biggr)\\
	&=\prod_{\bp\in\mathcal{P}^d}\biggl(1-\mathbb{P}\biggr(\sup_{\bt\in B_{\bp}}\int\Xi(\bt-\bz)\,\mathrm{d}W_{\bz} > \Lambda_{\bp}\biggr)\biggr).
	\end{align*}
	\textbf{Step I.2: Derivation of the weak limit on the dyadic grid.}~\\
	To proceed, we need the following definition and theoretical result from the monograph by \citet{piterbarg1996}.
	\begin{Definition}[slowly blowing up sets]
		A system of sets $A_u,\,u>0$ is said to blow up slowly with rate $\kappa>0$ if each of the sets contains a unit cube and
		\begin{align*}
		\Leb(A_u)=O(e^{\kappa u^2/2}),\quad\text{as}\quad n\to\infty.
		\end{align*}
	\end{Definition}
	\begin{Theorem}[Theorem 7.2 in \cite{piterbarg1996}]\label{Thm:Pit}
		If Assumption \eqref{S2} holds, there exists a constant $\kappa>0$ such that for any system of closed Jordan-sets, blowing up slowly with the rate $\kappa$ we have
		\begin{align}\label{eq:Pit}
		\mathbb{P}\Big(\max_{\bt \in A_u}X(\bt)>u\Big)=H_{2\gamma}\mathrm{mes}(A_u)\big|\mathrm{det}D_{\Xi}^{-1}\big|u^{\frac{d}{\gamma}}\tail(u)(1+o(1)),
		\end{align}
		where $H_{2\gamma}$ denotes Pickands' constant for the rescaled process, i.\,e. the process satisfying \eqref{S2} with $D_{\Xi}=\mathrm{id}$ and $\tail$ denotes the tail function of the standard normal distribution.
	\end{Theorem}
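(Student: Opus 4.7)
The plan is to follow the classical Pickands double-sum method, as developed in Chapter 7 of Piterbarg's monograph, adapted to a stationary Gaussian field whose local covariance satisfies \eqref{S2}. As a preliminary reduction, the linear change of coordinates $\bt \mapsto D_\Xi \bt$ converts the variogram expansion $\sum_{j=1}^d |t_j - s_j|^{2\gamma}$ of the rescaled field into its isotropic form, and produces the Jacobian factor $|\det D_\Xi^{-1}|$ in the Lebesgue measure of any Jordan set. After this reduction one may assume $D_\Xi = \mathrm{id}$ and carry the factor $|\det D_\Xi^{-1}|$ through multiplicatively at the end.

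The next step is Pickands' lemma on small cubes. Fix $T > 0$ and consider $Q_u(T) := [0, T u^{-1/\gamma}]^d$. Conditioning on $\{X(\mathbf 0) = u - x/u\}$ and substituting $\bt = \bs u^{-1/\gamma}$, \eqref{S2} implies that the rescaled conditional process $u\bigl(X(\bs u^{-1/\gamma}) - u\bigr) + x$ converges in distribution to $\chi(\bs) - \sum_j |s_j|^{2\gamma}$, where $\chi$ is a centered Gaussian field with stationary increments and variogram $2\sum_j |s_j|^{2\gamma}$. Integrating this convergence against the exceedance density of $X(\mathbf 0)$ yields
\[
\mathbb P\bigl(\max_{\bt \in Q_u(T)} X(\bt) > u\bigr) = H_{2\gamma}(T)\, \tail(u)\bigl(1 + o(1)\bigr),
\]
with the finite-$T$ Pickands functional $H_{2\gamma}(T) := \mathbb E \exp\bigl(\max_{\bs \in [0,T]^d}(\chi(\bs) - \sum_j |s_j|^{2\gamma})\bigr)$; a classical subadditivity argument then gives $T^{-d} H_{2\gamma}(T) \to H_{2\gamma} \in (0,\infty)$ as $T \to \infty$.

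Covering and Bonferroni deliver the final asymptotic. Partition the interior of $A_u$ into $N_u = \Leb(A_u)\, u^{d/\gamma} T^{-d}(1 + o(1))$ essentially disjoint translates $Q_1,\ldots,Q_{N_u}$ of $Q_u(T)$, with the Jordan hypothesis making the boundary layer negligible. Writing $P_i := \mathbb P(\max_{Q_i} X > u)$ and $P_{ij} := \mathbb P(\max_{Q_i} X > u,\, \max_{Q_j} X > u)$, Bonferroni gives
\[
\sum_i P_i - \sum_{i < j} P_{ij} \leq \mathbb P\bigl(\max_{\bt \in A_u} X(\bt) > u\bigr) \leq \sum_i P_i,
\]
and stationarity together with the previous step yields $\sum_i P_i = T^{-d} H_{2\gamma}(T) \Leb(A_u) u^{d/\gamma} \tail(u)(1+o(1))$. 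Sending $u \to \infty$ first and then $T \to \infty$ produces the constant $H_{2\gamma}$ and the claimed asymptotic, provided the double sum is negligible.

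The whole weight of the proof rests on showing $\sum_{i<j} P_{ij} = o\bigl(\sum_i P_i\bigr)$, and the admissible rate $\kappa$ is pinned down precisely to make this possible. Split pairs into \emph{far} ($\|\bt_i - \bt_j\|_\infty \geq 1$) and \emph{near}. For far pairs, the compact support of $\Xi$ forces $X_{|Q_i}$ and $X_{|Q_j}$ to be independent, so $P_{ij} = P_iP_j$ and $\sum_{\mathrm{far}} P_{ij} \leq \bigl(\sum_i P_i\bigr)^2$; under the slow-blow-up bound $\Leb(A_u) = O(e^{\kappa u^2/2})$ with $\kappa < 1$, one has $\sum_i P_i \to 0$, so this contribution is indeed negligible compared to $\sum_i P_i$. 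The harder case is near pairs: here a Berman/Slepian comparison of $X$ on $Q_i \cup Q_j$ with an auxiliary stationary Gaussian field whose covariance strictly dominates $r_\Xi$ near the origin yields a bivariate exceedance bound of the form $P_{ij} \leq C\, \tail(u)^{1+\eta}$ with $\eta > 0$ bounded below uniformly over near pairs, summed against the finitely many relative positions of such pairs. This is the technically heaviest part of the argument and is what ultimately fixes the admissible range of $\kappa$.
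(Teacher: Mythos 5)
First, a point of comparison: the paper offers no proof of this statement at all --- Theorem \ref{Thm:Pit} is imported verbatim from \citet{piterbarg1996} (Theorem 7.2) and used as a black box inside the proof of Theorem \ref{Thm:Gumbel}. So what can be judged is only whether your reconstruction would actually prove it. Your outline follows the classical Pickands double-sum route that does underlie Piterbarg's argument: reduction to the isotropic case via the linear map associated with $D_{\Xi}$ (producing the factor $|\det D_{\Xi}^{-1}|$), Pickands' lemma on cubes of side $Tu^{-1/\gamma}$, subadditivity giving $H_{2\gamma}(T)/T^{d}\to H_{2\gamma}\in(0,\infty)$, a Bonferroni sandwich, and a double-sum estimate, with $u\to\infty$ before $T\to\infty$. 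That strategy is the right one, and your treatment of the far pairs (independence from the compact support of $\Xi$, plus $\kappa<1$ so that $\sum_i P_i\to 0$ and $(\sum_i P_i)^2=o(\sum_i P_i)$) is sound.

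The gap is exactly in the step you flag as carrying the whole weight. Your claimed uniform bound $P_{ij}\le C\,\tail(u)^{1+\eta}$ with $\eta>0$ for all near pairs cannot hold: if $Q_i$ and $Q_j$ are adjacent Pickands cubes sharing a face, then $P_{ij}$ is bounded below by the exceedance probability over a fixed cube of side $\asymp Tu^{-1/\gamma}$ straddling the common face, which by Pickands' lemma is of exact order $\tail(u)$; no extra power of $\tail(u)$ is available, and a Slepian/Berman comparison cannot produce one. The standard resolution is different in kind. Non-adjacent cubes within the same unit block are controlled by the double-sum lemma (Lemma 6.3 in \citet{piterbarg1996}), which gives a bound of the form $P_{ij}\le C\,T^{2d}\exp\bigl(-c\|\bk\|^{2\gamma}\bigr)\tail(u)$, where $\bk$ is the separation measured in units of $Tu^{-1/\gamma}$ --- the gain comes from the separation, not from an improved power of $\tail(u)$. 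Adjacent pairs are then absorbed either by inserting thin separating corridors whose relative contribution vanishes, or by writing $P_{ij}=P_i+P_j-\mathbb{P}\bigl(\max_{Q_i\cup Q_j}X>u\bigr)$ and using the asymptotic additivity of the Pickands functional, i.e. that the defect $H_{2\gamma}(Q_i)+H_{2\gamma}(Q_j)-H_{2\gamma}(Q_i\cup Q_j)$ is $o(T^{d})$ as $T\to\infty$. Without one of these devices the lower Bonferroni bound does not close, so as written your argument does not yet yield \eqref{eq:Pit}; the rest of the outline (including the role of the Jordan hypothesis and the slow blow-up rate $\kappa$) is essentially correct.
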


	Next, we estimate
	\begin{align*}
	P_{n,\bp}(\lambda):=\mathbb{P}\biggr(\sup_{\bt\in B_{\bp}}\int\Xi(\bt-\bz)\,\mathrm{d}W_{\bz} \leq \Lambda_{\bp}\biggr)
	\end{align*}
	using Theorem \ref{Thm:Pit}, where $u\triangleq\Lambda_{\bp}.$
	By Theorem \ref{Thm:Pit} we can conclude that there exists a constant $\kappa>0$ such that \eqref{eq:Pit} holds. Since
	\begin{align*}
	e^{\kappa\frac{\Lambda_{k}}{2}}\sim e^{\kappa\lambda}\Bigl(\frac{K}{\bh_{\bp}^{\be}}\Bigr)^{\kappa}\biggl(2\log\Bigl(\frac{K}{\bh_{\bp}^{\be}}\Bigr)\biggr)^{\frac{\kappa\cdot C_d}{2}}
	\end{align*}
	and
	\begin{align*} 
	\mathrm{mes}(B_{\mathbf{p}})=\Leb(B_{\mathbf{p}})\sim\prod_{j=1}^{d}\Bigl(\frac{1}{h_{p_j-1}}-\frac{1}{h_{p_j}}\Bigr)=\frac{1}{2^d}\prod_{j=1}^{d}\Bigl(\frac{1}{h_{p_j}}\Bigr),
	\end{align*}
	the sets $B_{\mathbf{p}}$ are too large if $\kappa<1,$ hence we need to split them up into subsets such that Theorem \ref{Thm:Pit} is applicable. 
	Divide each side of $B_{\bp}$ into $N_{\kappa}^{k_j}:=\lceil\frac{1}{2}h_{p_j}^{\kappa-1}\rceil$ subintervals of equal length $L_{\kappa}^{p_j}:=\frac{1}{2}\frac{1}{h_{p_j}}\lceil\frac{1}{2}h_{p_j}^{\kappa-1}\rceil^{-1}$. This yields a division of $B_{\bp}$ into $N_{\kappa}^{\bp}:=\prod_{j=1}^dN_{\kappa}^{p_j}$ sub-blocks $SB_{\bp,1},\ldots,SB_{\bp,N_{\kappa}^{\bp}}$. As we will show in Step I.3, we obtain for some $\delta_{SB}>0,$
	\begin{align*}
	P_{n,\bp}(\lambda)&=\prod_{j=1}^{N_{\kappa}^{\bp}}\mathbb{P}\biggl(\sup_{\bt\in SB_{\bp,j}}\int\Xi(\bt-\bz)\,\mathrm{d}W_{\bz} \leq \Lambda_{\bp}\biggr)+o\big(n^{-\delta_{SB}}\big)\\
	&=\prod_{j=1}^{N_{\kappa}^{\bp}}\mathbb{P}\biggl(\sup_{\bt\in [\mathbf{0},L_{\kappa}^{\mathbf{k}}]}\int\Xi(\bt-\bz)\,\mathrm{d}W_{\bz} \leq \Lambda_{\bp}\biggr)+o\big(n^{-\delta_{SB}}\big)\\
	&=\biggl(\mathbb{P}\biggl(\sup_{\bt\in [\mathbf{0},L_{\kappa}^{\mathbf{k}}]}\int\Xi(\bt-\bz)\,\mathrm{d}W_{\bz} \leq \Lambda_{\bp}\biggr)\biggr)^{N_{\kappa}^{\bp}}+o\big(n^{-\delta_{SB}}\big),
	\end{align*} 
	where $L_{\kappa}^{\bp}=\prod_{j=1}^dL_{\kappa}^{k_j},$ since for each $\bp$ all sub-blocks are of equal length. Hence,
	\begin{align*}
	P_n(\lambda)&:=\prod_{\bp\in\mathcal{P}^d}P_{n,\bp}(\lambda)=\prod_{\bp\in\mathcal{P}^d}\biggl\{\biggl(1-\mathbb{P}\biggl(\sup_{\bt\in [\mathbf{0},L_{\kappa}^{\mathbf{k}}]}\int\Xi(\bt-\bz)\,\mathrm{d}W_{\bz} > \Lambda_{\bp}\biggr)\biggr)^{N_{\kappa}^{\bp}}+o\big(n^{-\delta_{SB}}\big)\biggr\}\\
	&=\biggl\{\prod_{\bp\in\mathcal{P}^d}\biggl(1-\mathbb{P}\biggl(\sup_{\bt\in [\mathbf{0},L_{\kappa}^{\mathbf{k}}]}\int\Xi(\bt-\bz)\,\mathrm{d}W_{\bz} > \Lambda_{\bp}\biggr)\biggr)^{N_{\kappa}^{\bp}}\biggr\}\cdot\Bigl\{1+o\Bigl(\#\mathcal{P}^dn^{-\delta_{SB}}\Bigr)\Bigr\}.
	\end{align*}
	An application of Theorem \ref{Thm:Pit} yields
	\begin{align*}
	P_n(\lambda)=\prod_{\bp\in\mathcal{P}^d}\biggl(1-(1+o(1))\tail(\Lambda_{\bp})\Lambda_{\bp}^{\frac{d}{\gamma}}L_{\kappa}^{\bp}H_{2\gamma}|\det D_{\Xi}^{-1}|\biggr)\biggr)^{N_{\kappa}^{\bp}}(1+o(1)).
	\end{align*}
	Using 
	\begin{align*}
	\tail(u)=\frac{1}{\sqrt{2\pi}}\exp\Bigl(-\frac{1}{2}u^2\Bigr)u^{-1}\big(1+o(1)\big),\quad\text{as}\quad u\to\infty,
	\end{align*}
	and inserting $\Lambda_{\bp}=\frac{\lambda}{\omega_{\bh_{\bp}}}+\omega_{\bh_{\bp}}$ yields
	\begin{align*}
	P_n(\lambda)&=\prod_{\bp\in\mathcal{P}^d}\biggl(1-(1+o(1))e^{-\frac{1}{2}\Lambda_{\bp}^2}\Lambda_{\bp}^{\frac{d}{\gamma}-1}L_{\kappa}^{\bp}\frac{H_{2\gamma}|\det D_{\Xi}^{-1}|}{\sqrt{2\pi}}\biggr)\biggr)^{N_{\kappa}^{\bp}}(1+o(1))\\
	&=\prod_{\bp\in\mathcal{P}^d}\biggl(1-(1+o(1))e^{-\lambda}\frac{h_{\bp}^{\be}}{K}\Bigl(2\log\Bigl(\tfrac{K}{\bh_{\bp}^{\be}}\Bigr)\Bigr)^{-\frac{C_d}{2}}\Lambda_{\bp}^{\frac{d}{\gamma}-1}L_{\kappa}^{\bp}\frac{H_{2\gamma}|\det D_{\Xi}^{-1}|}{\sqrt{2\pi}}\biggr)\biggr)^{N_{\kappa}^{\bp}}(1+o(1))\\
	&=\prod_{\bp\in\mathcal{P}^d}\biggl(1-(1+o(1))\frac{1}{2^d}e^{-\lambda}\Bigl(2\log\Bigl(\tfrac{K}{\bh_{\bp}^{\be}}\Bigr)\Bigr)^{-\frac{C_d}{2}}\Lambda_{\bp}^{\frac{d}{\gamma}-1}\frac{H_{2\gamma}|\det D_{\Xi}^{-1}|}{K\sqrt{2\pi}}\frac{1}{N_{\kappa}^{\bp}}\biggr)\biggr)^{N_{\kappa}^{\bp}}(1+o(1))\\
	&=\prod_{\bp\in\mathcal{P}^d}\biggl(1-(1+o(1))\frac{1}{2^d}e^{-\lambda}\Bigl(2\log\Bigl(\tfrac{K}{\bh_{\bp}^{\be}}\Bigr)\Bigr)^{\frac{d}{2\gamma}-\frac{C_d-1}{2}}\frac{H_{2\gamma}|\det D_{\Xi}^{-1}|}{K\sqrt{2\pi}}\frac{1}{N_{\kappa}^{\bp}}\biggr)\biggr)^{N_{\kappa}^{\bp}}(1+o(1)).
	\end{align*}
	Plugging in  $C_d=2d+d/\gamma-1$, yields
	\begin{align*}
	P_n(\lambda)&\sim\prod_{\bp\in\mathcal{P}^d}\biggl(1-(1+o(1))e^{-\lambda}\Bigl(\log\Bigl(\frac{K}{\bh_{\bp}^{\be}}\Bigr)\Bigr)^{-d}\frac{H_{2\gamma}|\det D_{\Xi}^{-1}|}{K\sqrt{2\pi}}\frac{1}{N_{\kappa}^{\bp}}\biggr)\biggr)^{N_{\kappa}^{\bp}}.
	\end{align*}
	For sufficiently large $n$ and any fixed $\lambda\in\R$ we have that
	\begin{align*}
	0<\big(1+o(1)\big)e^{-\lambda}\bigg(\log\bigg(\frac{K}{\bh_{\bp}^{\be}}\bigg)\bigg)^{-d}\frac{H_{2\gamma}|\mathrm{det}D_{\Xi}^{-1}|}{K\sqrt{2\pi}}=:x_{n,\bp}(\lambda)<1.
	\end{align*}
	Hence,
	\begin{align*}
	&\bigg|P_n(\lambda)-\exp\bigg(-\sum_{\bp\in\mathcal{P}^d}x_{n,\bp}(\lambda)\bigg)\bigg|
	=\bigg|\exp\bigg(\sum_{\bp\in\mathcal{P}^d}N_{\kappa}^{\mathbf{p}}\log\bigg(1-\frac{x_{n,\mathbf{p}}}{N_{\kappa}^{\mathbf{p}}}\bigg)\bigg)-\exp\bigg(-\sum_{\bp\in\mathcal{P}^d}x_{n,\bp}(\lambda)\bigg)\bigg|\\
	&~\leq\biggl|\exp\bigg(-\sum_{\bp\in\mathcal{P}^d}N_{\kappa}^{\mathbf{p}}\bigg(\frac{x_{n,\mathbf{p}}(\lambda)}{N_{\kappa}^{\mathbf{p}}}+3\frac{x_{n,\mathbf{p}}(\lambda)^2}{N_{\kappa}^{2\mathbf{p}}}\bigg)\bigg)-\exp\bigg(-\sum_{\bp\in\mathcal{P}^d}x_{n,\bp}(\lambda)\bigg)\biggr|\\
	&~\leq\exp\bigg(-\sum_{\bp\in\mathcal{P}^d}x_{n,\bp}(\lambda)\bigg)\bigg|1+\exp\bigg(-3\sum_{\bp\in\mathcal{P}^d}\frac{x_{n,\mathbf{p}}^2(\lambda)}{N_{\kappa}^{2\mathbf{p}}}\bigg)\bigg|,
	\end{align*}
	where we used the series expansion of $x\mapsto\log(1+x),$ i.\,e.
	\begin{align*}
	\log(1+x)=\sum_{k=1}^{\infty}(-1)^{k+1}\frac{x^k}{k},\quad |x|<1,
	\end{align*}
	and the estimate
	\begin{align*}
	\sum_{k=3}^{\infty}\frac{1}{k}\biggl(\frac{x_{n}(\lambda)}{N_{\kappa}^{\mathbf{p}}}\biggr)^{k}
	=\biggl(\frac{x_{n}(\lambda)}{N_{\kappa}^{\mathbf{p}}}\biggr)^{2}\sum_{k=3}^{\infty}\frac{1}{k}\biggl(\frac{x_{n}(\lambda)}{N_{\kappa}^{\mathbf{p}}}\biggr)^{k-2}\leq \biggl(\frac{x_{n}(\lambda)}{N_{\kappa}^{\mathbf{p}}}\biggr)^{2}\sum_{k=3}^{\infty}\frac{1}{k^2}
	\leq\frac{5}{2}\biggl(\frac{x_{n}(\lambda)}{N_{\kappa}^{\mathbf{p}}}\biggr)^{2},
	\end{align*}
	for sufficiently large $n\in\mathbb{N}.$
	Thus, we obtain 
	\begin{align*}
	P_n(\lambda)&\sim\prod_{\bp\in\mathcal{P}^d}\exp\biggl(-(1+o(1))e^{-\lambda}\Bigl(\log\Bigl(\frac{K}{\bh_{\bp}^{\be}}\Bigr)\Bigr)^{-d}\frac{H_{2\gamma}|\det D_{\Xi}^{-1}|}{K\sqrt{2\pi}}\biggr)\\
	&=\exp\biggl(-e^{-\lambda}\frac{H_{2\gamma}|\det D_{\Xi}^{-1}|}{K\sqrt{2\pi}}\sum_{\bp\in\mathcal{P}^d}\Bigl(\log\Bigl(\frac{K}{\bh_{\bp}^{\be}}\Bigr)\Bigr)^{-d}\biggr)\Bigl(1+\exp(-\exp(-\lambda)n^{-\delta_e})\Bigr),
	\end{align*}
	for some $\delta_e>0.$
	We now estimate the sum $\sum_{\bp\in\mathcal{P}^d}\Bigl(\log\Bigl(\frac{K}{\bh_{\bp}^{\be}}\Bigr)\Bigr)^{-d}.$ Write
	\begin{align*}
	\sum_{\bp\in\mathcal{P}^d}\Bigl(\log\Bigl(\frac{K}{\bh_{\bp}^{\be}}\Bigr)\Bigr)^{-d} =\sum_{p_1=p_{\min}}^{p_{\max}}\sum_{p_2=p_{\min}}^{p_{\max}}\ldots\sum_{p_d=p_{\min}}^{p_{\max}}\biggl(\frac{1}{\log(K)+p_1+\cdots+p_d}\biggr)^d.
	\end{align*}
	Note that for a positive, monotonically decreasing function $g$ we have that
	\begin{align}\label{eq:EstSumInt}
	\int_{p_{\min}}^{p_{\max}+1}g(x)\,\mathrm{d}x\leq\sum_{p=p_{\min}}^{p_{\max}}g(p)\leq\int_{p_{\min}-1}^{p_{\max}}g(x)\,\mathrm{d}x.
	\end{align}
	Applying \eqref{eq:EstSumInt} subsequently to the above sums yields
	\begin{align*}
	\sum_{\bp\in\mathcal{P}^d}\Bigl(\log\Bigl(\frac{K}{\bh_{\bp}^{\be}}\Bigr)\Bigr)^{-d} &\sim\int_{[p_{\min},p_{\max}]^d}\biggl(\frac{1}{\log(K)+z_1+\cdots+z_d}\biggr)^d\,\mathrm{d}\bz\\
	&\sim\int_{[\delta\log(n),\Delta\log(n)]^d}\biggl(\frac{1}{\log(K)+z_1+\cdots+z_d}\biggr)^d\,\mathrm{d}\bz=:I_{n,d}
	\end{align*}
	By induction with respect to $d\in\mathbb{N}$ we now show that 
	\begin{align}\label{eq:induction}
	I_{n,d}=\frac{(-1)^{d-1}}{(d-1)!}\sum_{k=0}^d(-1)^k\binom{d}{k}\log(\log(K)+(k\delta+(d-k)\Delta)\log(n)).
	\end{align}
	For $d=1$, we find
	\begin{align*}
	\int_{[\delta\log(n),\Delta\log(n)]}\biggl(\frac{1}{\log(K)+z}\biggr)^d\,\mathrm{d}z= \log\big(\log(K)+\Delta\log(n)\big)-\log\big(\log(K)+\delta\log(n)\big).
	\end{align*}
	Hence,  assertion \eqref{eq:induction} holds for $d=1.$ Next, we consider $I_{n,d+1}.$ We have
	\begin{align*}
	I_{n,d+1}&=\int_{[\delta\log(n),\Delta\log(n)]^{d+1}}\biggl(\frac{1}{\log(K)+z_1+\cdots+z_d+z_{d+1}}\biggr)^{d+1}\,\mathrm{d}\bz\\
	&=-\frac{1}{d}\bigg(\int_{[\delta\log(n),\Delta\log(n)]^d}\biggl(\frac{1}{\log(K)+\Delta\log(n)+z_1+\cdots+z_d}\biggr)^d\,\mathrm{d}(z_1,\ldots,z_d)\\
	&\qquad-\int_{[\delta\log(n),\Delta\log(n)]^d}\biggl(\frac{1}{\log(K)+\delta\log(n)+z_1+\cdots+z_d}\biggr)^d\,\mathrm{d}(z_1,\ldots,z_d)\bigg).
	\end{align*}
	Now we plug in \eqref{eq:induction}  twice and obtain
	\begin{align*}
	I_{n,d+1}&=\frac{(-1)^{d}}{(d)!}\biggl(\sum_{k=0}^d(-1)^k\binom{d}{k}\log(\log(K)+(k\delta+(d+1-k)\Delta)\log(n))\\
	&\qquad-\sum_{k=0}^d(-1)^k\binom{d}{k}\log(\log(K)+((k+1)\delta+(d-k)\Delta)\log(n))\biggr).
	\end{align*}
	An index shift in the second sum yields
	\begin{align*}
	I_{n,d+1}&=\frac{(-1)^{d}}{(d)!}\biggl(\sum_{k=1}^d(-1)^k\biggl[\binom{d}{k}-\binom{d}{k-1}\biggr]\log(\log(K)+(k\delta+(d+1-k)\Delta)\log(n))\\
	&\qquad+\log(\log(K)+(d+1)\Delta\log(n))-(-1)^{d+2}\log(\log(K)+(d+1)\delta\log(n))\biggr)\\
	&=\frac{(-1)^{d}}{(d)!}\sum_{k=0}^{d+1}(-1)^k\binom{d+1}{k}\log(\log(K)+(k\delta+(d+1-k)\Delta)\log(n)).
	\end{align*}
	The last identity follows using the recursive relation of the binomial coefficient:
	\begin{align*}
	\binom{d}{j}=\binom{d-1}{j-1}+\binom{d-1}{j}
	\end{align*}
	and concludes the proof of \eqref{eq:induction}.
	Furthermore,
	\begin{align*}
	\frac{(d-1)!}{(-1)^d}I_{n,d}=\log\left(\frac{\log(n)^{\sum\limits_{j \text{ even}}\binom{d}{j}}\displaystyle\prod_{j\text{ even}}(k\delta+(d-k)\Delta)^{\binom{d}{j}}}{\log(n)^{\sum\limits_{j \text{ odd}}\binom{d}{j}}\displaystyle\prod_{j\text{ odd}}(k\delta+(d-k)\Delta)^{\binom{d}{j}}}\right)=\log\left(\frac{\displaystyle\prod_{j\text{ even}}(k\delta+(d-k)\Delta)^{\binom{d}{j}}}{\displaystyle\prod_{j\text{ odd}}(k\delta+(d-k)\Delta)^{\binom{d}{j}}}\right),
	\end{align*}
	since  $\sum_{j\text{ even}}\binom{d}{j}=\sum_{j\text{ odd}}\binom{d}{j}=2^{d-1}$, which follows using the recursive relation of the binomial coefficient again.
	Hence, the statement of the theorem holds true for scales on the dyadic grid.\\
	\smallskip
	
	\textbf{Step I.3: Negligibility of the remainder terms.}~\\
	We first show that, asymptotically, the slight enlargement of the domain of the scales from the beginning of Step I  does not have an impact. 
	\begin{align*}
	&\mathbb{P}\Big( \Big|\sup_{\bh\in[2^{-p_{\min}},h_{\max}]^d}\sup_{\bt\in[\bh,\be]}\omega_{\bh}(Z_{\bt,\bh}-\omega_{\bh})-\sup_{\bh\in[h_{\min},h_{\max}]^d}\sup_{\bt\in[\bh,\be]}\omega_{\bh}(Z_{\bt,\bh}-\omega_{\bh})\Big|>\varepsilon\Big)\\
	&=\mathbb{P}\Big( \sup_{\bh\in[2^{-p_{\min}},h_{\max}]^d}\sup_{\bt\in[\bh,\be]}\omega_{\bh}(Z_{\bt,\bh}-\omega_{\bh})-\sup_{\bh\in[h_{\min},h_{\max}]^d}\sup_{\bt\in[\bh,\be]}\omega_{\bh}(Z_{\bt,\bh}-\omega_{\bh})>\varepsilon\Big)\\
	&~\leq2\mathbb{P}\Big( \sup_{\bh\in[2^{-p_{\min}},h_{\min}]^d}\sup_{\bt\in[\bh,\be]}\omega_{\bh}(Z_{\bt,\bh}-\omega_{\bh})>\varepsilon\Big).
	\end{align*}
	Furthermore,
	\begin{align*}
	&\mathbb{P}\Big( \sup_{\bh\in[2^{-p_{\min}},h_{\min}]^d}\sup_{\bt\in[\bh,\be]}\omega_{\bh}(Z_{\bt,\bh}-\omega_{\bh})>\varepsilon\Big)\\
	&~\leq
	\mathbb{P}\Big( \sup_{\bh\in[2^{-p_{\min}},2^{-p_{\min}+\lceil4\log\log(n)\rceil}]^d}\sup_{\bt\in[\bh,\be]}\omega_{\bh}(Z_{\bt,\bh}-\omega_{\bh})>\varepsilon\Big)=o(1),
	\end{align*}
	which is an immediate consequence of the previous calculations since the region $[2^{-p_{\min}},2^{-p_{\min}+\lceil4\log\log(n)\rceil}]^d$ is covered by a dyadic grid of cardinality $O\big(\log\log(n)^d\big).$
	Next, we show that the contribution of the separating regions, $\mathcal{R}_{\bp},\,\bp\in\mathcal{P}^d$ are asymptotically negligible. We write
	\begin{align*}
	M_{n,\mathrm{dyad}}=\max\Bigl\{\sup_{\bh\in\mathcal{H}^d_{\mathrm{dyad}}}\sup_{\bt\in\mathcal{B}}\omega_{\bh}(Z_{\bt,\bh}-\omega_{\bh})\,,\,\sup_{\bh\in\mathcal{H}_{\mathrm{dyad}}}\sup_{\bt\in\mathcal{R}}\omega_{\bh}(Z_{\bt,\bh}-\omega_{\bh})\Bigr\}.
	\end{align*}
	The second term converges to $-\infty$. To see this, consider
	\begin{align*}
	\mathbb{P}\Bigl(\sup_{\bh\in\mathcal{H}^d_{\mathrm{dyad}}}\sup_{\bt\in\mathcal{R}}\omega_{\bh}(Z_{\bt,\bh}-\omega_{\bh})>-\sqrt{\log(n)}\Bigr).
	\end{align*}
	Let, for $J\subset\{1,\ldots,d\},\,0\leq|J|<d$ and $\bp\in\mathcal{P}^d,$ $\mathcal{I}_{\bp,J}=\bigtimes_{j=1}^dI_{p_j,J},$ where $\mathrm{Leb}^1(I_{p_j,J})=1$ if $j\in J^C$ and $\mathrm{Leb}(I_{p_j,J})=O\Big(\frac{1}{h_{p_j}}\Big)$ if $j\in J.$ Then
	\begin{align*}
	&\mathbb{P}\Bigl(\sup_{\bh\in\mathcal{H}^d_{\mathrm{dyad}}}\sup_{\bt\in\mathcal{R}}\omega_{\bh}(Z_{\bt,\bh}-\omega_{\bh})>-\sqrt{\log(n)}\Bigr)
	\leq\sum_{\bp\in\mathcal{P}^d}\sum_{\substack{J\subset\{1,\ldots,d\}\\0\leq |J|<d}}\mathbb{P}\Bigl(\sup_{\bt\in\mathcal{I}_{\bp,J}}\omega_{\bh_{\bp}}(Z_{\bt}-\omega_{\bh_{\bp}})>-\sqrt{\log(n)}\Bigr)\\
	&~\leq\sum_{\bp\in\mathcal{P}^d}\sum_{\substack{J\subset\{1,\ldots,d\}\\0\leq |J|<d}}\mathbb{P}\Bigl(\sup_{\bt\in\mathcal{I}_{\bp,J}}Z_{\bt}>-C_1+\omega_{\bh_{\bp}}\Bigr),
	\end{align*}
	for a constant $C_1>0.$ An application of Borel's inequality yields the existence of constants $C_2,\delta_{\mathrm{B}}$ such that
	\begin{align*}
	\sum_{\bp\in\mathcal{P}^d}\sum_{\substack{J\subset\{1,\ldots,d\}\\0\leq |J|<d}}\mathbb{P}\Bigl(\sup_{\bt\in\mathcal{I}_{\bp,J}}Z_{\bt}>-C_1+\omega_{\bh_{\bp}}\Bigr)\leq\exp\bigg(-C\bigg(\sqrt{\log\Big(\frac{1}{\bh_{\bp}^{\be}}\Big)}-\sqrt{\log\Big(\prod_{j\in J}\frac{1}{h_{p_j}}\Big)}\bigg)^2\bigg)\leq n^{-\delta_{\mathrm{B}}}.
	\end{align*}
	This yields
	\begin{align*}
	\mathbb{P}\bigg(\sup_{\bh\in\mathcal{H}^d_{\mathrm{dyad}}}\sup_{\bt\in\mathcal{B}}\omega_{\bh}(Z_{\bt,\bh}-\omega_{\bh})\bigg)\geq\mathbb{P}\big(M_{n,\mathrm{dyad}}\leq\lambda\big)\geq\mathbb{P}\bigg(\sup_{\bh\in\mathcal{H}^d_{\mathrm{dyad}}}\sup_{\bt\in\mathcal{B}}\omega_{\bh}(Z_{\bt,\bh}-\omega_{\bh})\bigg)-o(n^{-\delta_{\text{B}}}).
	\end{align*}
	~\\\textbf{Step II: The dyadic grid is sufficiently dense.}~\\
	We now show
	\begin{align*}
	\Delta_{\gamma,n}=\bigg|\max_{\bh\in\mathcal{H}^d_{\mathrm{dyad}}}\sup_{\bt\in\mathcal{T}}\omega_{\bh}\big(Z_{\bt,\bh}-\omega_{\bh}\big)-\sup_{\bh\in[h_{\mathrm{min}},h_{\mathrm{max}}]^d}\sup_{\bt\in\mathcal{T}}\omega_{\bh}\big(Z_{\bt,\bh}-\omega_{\bh}\big)\bigg|=o_{\mathbb{P}}(1).
	\end{align*}
	Let $\varepsilon>0$. 
	\begin{align*}
	\mathbb{P}\bigl(\Delta_{n,\gamma}>\varepsilon\bigr)&\leq\mathbb{P}\Bigl(\max_{\mathbf{p}\in\mathcal{P}}\Bigl|\sup_{\bt\in\mathcal{T}}\omega_{\bh_{\mathbf{p}}}\big(Z_{\bt,\bh_{\mathbf{p}}}-\omega_{\bh_{\mathbf{p}}}\big)-\max_{\bh\in[\bh_{\mathbf{p}},\bh_{\mathbf{p+1}}]}\sup_{\bt\in\mathcal{T}}\omega_{\bh}\big(Z_{\bt,\bh}-\omega_{\bh}\big)\Bigr|>\varepsilon \Bigr)\\
	&\leq\mathbb{P}\Bigl(\max_{\mathbf{p}\in\mathcal{P}}\Bigl|\omega_{\bh_{\mathbf{p}}}\Bigl(\sup_{\bt\in\mathcal{T}}\bigl|Z_{\bt,\bh_{\mathbf{p}}}-Z_{\bt,\bh}\bigr|+\max_{\bh\in[\bh_{\mathbf{p}},\bh_{\mathbf{p+1}}]} \bigl|\omega_{\bh}-\omega_{\bh_{\mathbf{p}}}\bigr|\Big)\Bigr|>\varepsilon \Bigr).
	\end{align*}
	\textbf{Step II.1: Fineness of the dyadic grid.}~\\
	Let $h\in[h_{\min},h_{\max}].$ Set $p=\lfloor\log(\grid/h)\rfloor$ and assign the element $h_{\text{dyad}}$ of the dyadic grid to $h$:
	\begin{align}\label{eq:hdyad}
	h_{\text{dyad}}=\mathrm{argmin}\big\{|g-h|\,|\,g\in\{2^{-p},\ldots,2^{-p_{\min}}\}\big\}.
	\end{align}
	Obviously $2^p|h_{\text{dyad}}-h|\leq1/2$, hence
	\begin{align*}
	2^ph_{\text{dyad}}&\in\bigg[2^ph-\frac{1}{2},2^ph+\frac{1}{2}\bigg]\cap\mathbb{N}
	\subset\bigg[\grid-\frac{1}{2},2\grid+\frac{1}{2}\bigg]\cap\mathbb{N}\\
	&\subset\{1,\ldots\lceil2\grid\rceil\}.
	\end{align*}
	Since $2^{-p+1}\grid\geq h\geq2^{-p}\grid$ for sufficiently large $n$, we find
	\begin{align*}
	h_{\text{dyad}}h&\geq h^2-2^{-p-1}h\geq2^{-2p}(\grid)^2-2^{-p-1}h\\
	&\geq2^{-2p}\big((\grid)^2-\grid/2\big).
	\end{align*}
	This yields
	\begin{align}\label{eq:gridwidth}
	\frac{|h-h_{\text{dyad}}|}{\sqrt{hh_{\text{dyad}}}}&\leq\frac{2^{-p-1}}{2^{-p}\sqrt{(\grid)^2-\grid/2}}\notag\\
	&=\frac{1}{2\sqrt{(\grid)^2-\grid/2}}.
	\end{align}
	Let $\bh\in[h_{\min},h_{\max}]^d$ and define $\bh_{\mathrm{dyad}}$ component-wise via \eqref{eq:hdyad}. Then
	\begin{align*}
	|\omega_{\bh}-\omega_{\bh_{\mathrm{dyad}}}|&=\Bigg|\sqrt{2\log\big(\tfrac{K}{\bh^ {\be}}\big)}+\tfrac{\log(\sqrt{2\log(\frac{K}{\bh^ {\be}})})}{\sqrt{2\log(\frac{K}{\bh^ {\be}})}}-\sqrt{2\log\big(\tfrac{K}{\bh_{\mathrm{dyad}}^ {\be}}\big)}-\tfrac{\log(\sqrt{2\log(\frac{K}{\bh_{\mathrm{dyad}}^ {\be}})})}{\sqrt{2\log(\frac{K}{\bh_{\mathrm{dyad}}^ {\be}})}}\Bigg|\\
	&=O\bigg(\frac{1}{\sqrt{\log(n)}}\bigg)\biggl(\biggl|\log\big(\tfrac{K}{\bh^ {\be}}\big)-\log\big(\tfrac{K}{\bh_{\mathrm{dyad}}^ {\be}}\big)\biggr| +o(1)\biggr)\\
	&=O\bigg(\frac{1}{\sqrt{\log(n)}}\bigg)\biggl(\sum_{j=1}^d\frac{|h_{j,\mathrm{dyad}}-h_j|}{\sqrt{h_{j}h_{j,\mathrm{dyad}}}}+o(1)\biggr)=o\bigg(\frac{1}{\sqrt{\log(n)}}\bigg), 
	\end{align*}
	where the last estimate follows from \eqref{eq:gridwidth}.\\
	
	\textbf{Step II.2: Estimation of the covering numbers.}~\\
	We now show that  there exists a constant $C_{\mathrm{cov}}$, depending only on the dimension $d$ and the function $\Xi$ via the constants $L$ and $\gamma$ from condition \eqref{S1} such that for $\varepsilon\in(0,d),$ 
	\begin{align}\label{eq:covering}
	\mathcal{N}\big(\mathcal{T}\times\mathcal{H},\rho,\varepsilon\big)\leq C_{\mathrm{cov}} \biggl(\frac{1}{\varepsilon}\biggr)^{\frac{2d}{\gamma}}\biggl(\frac{1}{h_{\min}}-\frac{1}{h_{\max}}\biggr)^d,
	\end{align}
	where 
	\begin{align}\label{eq:rho}
	\rho^2\big((\bt,\bh),(\bs,\bl)\big)=\mathbb{E}|Z_{\bt,\bh}-Z_{\bs,\bl}|^2
	\end{align}
	and $\mathcal{N}\big(\mathcal{T}\times\mathcal{H},\rho,\varepsilon\big)$ denotes the covering numbers of $\mathcal{T}\times\mathcal{H}$ with respect to $\rho.$ To this end, we first show that 
	\begin{align}\label{eq:dist}
	\rho^2\big((\bt,\bh),(\bs,\bl)\big)\leq2L\sum_{j=1}^d\biggl|\frac{t_j-s_j}{h_j}\biggr|^{2\gamma}+4\cdot2^{d-1}\sum_{j=1}^d\bigg|\frac{l_j-h_j}{\sqrt{h_jl_j}}\bigg|^2+4\cdot2^{d-1}L\sum_{j=1}^d\bigg|\frac{l_j-h_j}{\sqrt{h_jl_j}}\bigg|^{2\gamma},
	\end{align}
	where $L$ is the constant from Assumption \ref{S1}. In a second step we construct an $\varepsilon$-covering with respect to $\rho$ which satisfies inequality \eqref{eq:covering}.
	\begin{align*}
	&\rho^2\big((\bt,\bh),(\bs,\bl)\big)=\int\biggl|\frac{1}{\sqrt{\bh^{\be}}}\Xi\biggl(\frac{\bz-\bt}{\bh}\biggr)-\frac{1}{\sqrt{\bl^{\be}}}\Xi\biggl(\frac{\bz-\bs}{\bl}\biggr)\biggr|^2\,\mathrm{d}\bz\\
	&~~\leq2\biggl(\int\biggl|\frac{1}{\sqrt{\bh^{\be}}}\Xi\biggl(\frac{\bz-\bt}{\bh}\biggr)-\frac{1}{\sqrt{\bh^{\be}}}\Xi\biggl(\frac{\bz-\bs}{\bh}\biggr)\biggr|^2\,\mathrm{d}\bz
	+\int\biggl|\frac{1}{\sqrt{\bh^{\be}}}\Xi\biggl(\frac{\bz-\bs}{\bh}\biggr)-\frac{1}{\sqrt{\bl^{\be}}}\Xi\biggl(\frac{\bz-\bs}{\bl}\biggr)\biggr|^2\,\mathrm{d}\bz\biggr)\\
	&~~=:2\rho_1^2+2\rho_2^2.
	\end{align*}
	Assumption \eqref{S1} immediately gives
	\begin{align*}
	\rho_1^2\leq L\biggl\|\frac{\bt-\bs}{\bh}\biggr\|_{2}^{2\gamma}.
	\end{align*}
	Let $\pmb{\iota}^{j}:=(l_1,\ldots,l_j,h_{j+1},\ldots,h_d),\; j=2,\ldots,d.$ Without loss of generality $\bh\leq\bl$ (else consider $\pmb{\iota}^{j}:=(l_1,\ldots,l_j,h_{j+1}\wedge l_{j+1},\ldots,h_d\wedge l_d)$). we find
	\begin{align*}
	\rho_2^2\leq2^{d-1}\sum_{j=1}^d\int\biggl|\frac{1}{\sqrt{\bh^{\be}}}\Xi\biggl(\frac{\bz-\bs}{\bh}\biggr)-\frac{1}{\sqrt{(\pmb{\iota}^{j})^{\be}}}\Xi\biggl(\frac{\bz-\bs}{\pmb{\iota}^{j}}\biggr)\biggr|^2\,\mathrm{d}\bz=:2^{d-1}\bigl(\rho_{2,1}^2+\ldots+\rho_{2,d-1}^2\bigr).
	\end{align*}
	We now estimate $\rho_{2,1}^2.$
	\begin{align*}
	\rho_{2,1}^2&=\frac{1}{h_2\cdots h_d}\int\biggl|\frac{1}{\sqrt{h_1}}\Xi\biggl(\frac{\bz-\bs}{\bh}\biggr)-\frac{1}{\sqrt{l_1}}\Xi\biggl(\frac{\bz-\bs}{\pmb{\iota}^1}\biggr)\biggr|^2\,\mathrm{d}\bz\\
	&=\int\biggl|\Xi\biggl(\bz-\frac{\bs}{\bh}\biggr)-\frac{\sqrt{h_1}}{\sqrt{l_1}}\Xi\biggl(\frac{z_1h_1-s_1}{l_1},z_2-\frac{s_2}{h_2},\ldots,z_d-\frac{s_d}{h_d}\biggr)\biggr|^2\,\mathrm{d}\bz\\
	&\leq 2\biggl(1-\sqrt{\frac{h_1}{l_1}}\biggr)^2\int\biggl|\Xi\biggl(\bz-\frac{\bs}{\bh}\biggr)\biggr|^2\,\mathrm{d}\bz
	+2\frac{h_1}{l_1}\int\biggl|\Xi\biggl(\bz-\frac{\bs}{\bh}\biggr)-\Xi\biggl(\frac{z_1h_1-s_1}{l_1},z_2-\frac{s_2}{h_2},\ldots,z_d-\frac{s_d}{h_d}\biggr)\biggr|^2\,\mathrm{d}\bz\\
	&=:2\biggl(\frac{\sqrt{l_1}-\sqrt{h_1}}{\sqrt{l_1}}\biggr)^2+2\widetilde{\rho}_{2,1}^2\leq 2\biggl|\frac{l_1-h_1}{\sqrt{l_1h_1}}\biggr|^2+2\widetilde{\rho}_{2,1}^2
	\end{align*}
	We now estimate the term $\widetilde{\rho}_{2,1}^2.$ 
	\begin{align*}
	\widetilde{\rho}_{2,1}^2\leq\frac{h_1}{l_1}\sup_{\mathbf{u}\in[-|h_1-l_1|+s_1,|h_1-l_1|+s_1]\times\{s_2\}\times\cdots\times\{s_d\}}\int\biggl|\Xi\bigg(\bz-\frac{\bs}{\bh}\bigg)-\Xi\bigg(\bz-\frac{\mathbf{u}}{\bl}\bigg)\biggr|^2\,\mathrm{d}\bz.
	\end{align*}
	Now we can apply \eqref{S1} again and find
	\begin{align*}
	\widetilde{\rho}_{2,1}^2&\leq\frac{h_1}{l_1}\sup_{\mathbf{u}\in[-|h_1-l_1|+s_1,|h_1-l_1|+s_1]\times\{s_2\}\times\cdots\times\{s_d\}}\biggl\|\frac{\bs}{\bh}-\frac{\mathbf{u}}{\bl}\biggr\|_{2}^{2\gamma}\leq\frac{h_1}{l_1}\biggl|\frac{h_1-l_1}{h_1}\biggr|_{l_2}^{2\gamma}\\
	&=\frac{h_1}{l_1}\frac{l_1^{\gamma}}{h_1^{\gamma}}\biggl|\frac{h_1-l_1}{\sqrt{h_1l_1}}\biggr|^{2\gamma}\leq\biggl|\frac{h_1-l_1}{\sqrt{h_1l_1}}\biggr|^{2\gamma},
	\end{align*}
	where the last estimate follows since $\gamma\leq1$ and $\bh\leq\bl.$
	The terms $\rho^2_{2,2},\ldots,\rho^2_{2,d-1}$ can be estimated analogously. In total, we now obtain the estimate \eqref{eq:dist}.
	Based on this, we  now construct an $\varepsilon$-covering with respect to $\rho$ which satisfies inequality \eqref{eq:covering}.\\
	
	To this end, define the Grid
	\begin{align*}
	\mathcal{H}_{a}:=\bigg\{a^k\,\bigg|\,k\in\bigg\{\left\lceil\frac{\log(h_{\min})}{\log(a)}\right\rceil,\ldots,\left\lfloor\frac{\log(h_{\max})}{\log(a)}\right\rfloor\bigg\}\bigg\},
	\end{align*}
	where
	\begin{align*}
	a:=1+\biggl(\frac{(F(\varepsilon))^{\frac{2}{\gamma}}}{2}-\sqrt{(F(\varepsilon))^{\frac{2}{\gamma}}+\frac{1}{4}(F(\varepsilon))^{\frac{4}{\gamma}}}\biggr),
	\end{align*}
	and $F(\varepsilon)=\varepsilon/(2d2^{\frac{d-1}{2}}(\sqrt{L}+1)).$
	
	The grid $\mathcal{H}_a$ is constructed such that 
	\begin{align*}
	\biggl|\frac{a^k-a^{k+1}}{\sqrt{a^ka^{k+1}}}\biggr|^{\gamma}<F(\varepsilon).
	\end{align*}
	
	Notice that there exists $a_0>0$ such that $a\in(a_0,1)$,  if $\varepsilon\in(0,d).$
	Let 
	\begin{align*}
	\mathcal{G}:=\bigg\{(\bt,\bh)\,|\,\bh\in\mathcal{H}_a^d,\;\bt\in\bigtimes_{j=1}^d\Big\{k\cdot G(\varepsilon)^{1/\gamma}h_j\,|\,1\leq k\leq G(\varepsilon)^{-1/\gamma}h_j^{-1}\Big\}\bigg\},
	\end{align*}
	where $G(\varepsilon):=\frac{\varepsilon a_0}{d\sqrt{2L}}.$ 
	We have
	\begin{align*}
	\big|\mathcal{G}\big|^{1/d}\leq\sum_{j=\left\lceil\frac{\log(h_{\min})}{\log(a)}\right\rceil}^{\left\lfloor\frac{\log(h_{\max})}{\log(a)}\right\rfloor}G(\varepsilon)^{-1/\gamma}a^{-j}
	& =G(\varepsilon)^{-1/\gamma}a\frac{\bigl(\frac{1}{a}\big)^{1+\left\lceil\frac{\log(h_{\min})}{\log(a)}\right\rceil}-\bigl(\tfrac{1}{a}\big)^{1+\left\lfloor\frac{\log(h_{\max})}{\log(a)}\right\rfloor}}{a-1}\\
	&\leq G(\varepsilon)^{-1/\gamma}\biggl(\frac{1}{a^2}\frac{1}{h_{\min}}-\frac{1}{h_{\max}}\biggr)\leq(2\varepsilon)^{-1/\gamma}\frac{1}{a_0^2}\biggl(\frac{1}{h_{\min}}-\frac{1}{h_{\max}}\biggr).
	\end{align*}
	
	Fix $(\bt_0,\bh_0)\in\mathcal{T}\times\mathcal{H}$ and set
	\begin{align*}
	(\bt_{\mathrm{grid}},\bh_{\mathrm{grid}}):=\mathrm{argmin}\{\rho((\bt,\bh);(\bt_0,\bh_0))\,|\,(\bt,\bh)\in\mathcal{G}\}.
	\end{align*}
	Hence
	\begin{align*}
	\biggl|\frac{h_{0,j}-h_{\mathrm{grid},j}}{\sqrt{h_{0,j}h_{\mathrm{grid},j}}}\biggr|^{\gamma}<\frac{1}{2}F(\varepsilon)\quad\text{and thus}\quad\biggl|\frac{h_{0,j}-h_{\mathrm{grid},j}}{\sqrt{h_{0,j}h_{\mathrm{grid},j}}}\biggr|<\biggl|\frac{h_{0,j}-h_{\mathrm{grid},j}}{\sqrt{h_{0,j}h_{\mathrm{grid},j}}}\biggr|^{\gamma}.
	\end{align*}
	Also $|t_{0,j}-t_{\mathrm{grid}}|^{\gamma}<G(\varepsilon) h_{\mathrm{grid},j},$ which implies
	\begin{align*}
	\biggl|\frac{t_{0,j}-t_{\mathrm{grid},j}}{h_{\mathrm{grid},j}\wedge h_{0,j}}\biggr|^{\gamma}<\frac{1}{2}G(\varepsilon) \biggl(\frac{h_{\mathrm{grid,j}}}{h_{\mathrm{grid},j}\wedge h_{0,j}}\biggr)^{\gamma}\leq \frac{1}{2}G(\varepsilon) \biggl(\frac{1}{a_0}\biggr)^{\gamma}\leq\varepsilon\frac{1}{2d\sqrt{2 L}}.
	\end{align*}
	Then
	\begin{align*}
	\rho((\bt_{\mathrm{grid}},\bh_{\mathrm{grid}});(\bt_0,\bh_0))\leq \varepsilon,
	\end{align*}
	hence the grid $\mathcal{G}$ defines an $\varepsilon$-covering with the desired properties.\\\\
	\textbf{Step II.3: Proof of $\max_{\mathbf{p}\in\mathcal{P}^d}\omega_{\bh_{\mathbf{p}}}\bigl|\sup_{\bt\in\mathcal{T}}Z_{\bt,\bh_{\mathbf{p}}}-\sup_{\bh\in[\bh_{\mathbf{p}},\bh_{\mathbf{p+1}}]}\sup_{\bt\in\mathcal{T}}Z_{\bt,\bh}\bigr|=o(1)$.}~\\
	First, we estimate
	\begin{align*}
	&\mathbb{P}\biggl(\max_{\mathbf{p}\in\mathcal{P}^d}\omega_{\bh_{\mathbf{p}}}\bigl|\sup_{\bt\in\mathcal{T}}Z_{\bt,\bh_{\mathbf{p}}}-\sup_{\bh\in[\bh_{\mathbf{p}},\bh_{\mathbf{p+1}}]}\sup_{\bt\in\mathcal{T}}Z_{\bt,\bh}\bigr|>\varepsilon\biggr)\\
	&\leq\mathbb{P}\biggl(\max_{\mathbf{p}\in\mathcal{P}^d}\sup_{\bh\in[\bh_{\mathbf{p}},\bh_{\mathbf{p+1}}]}\sup_{\bt\in\mathcal{T}}\;\omega_{\bh_{\mathbf{p}}}\bigl|Z_{\bt,\bh_{\mathbf{p}}}-Z_{\bt,\bh}\bigr|>\varepsilon\biggr).
	\end{align*}
Now, we use \eqref{eq:gridwidth}, which yields $\rho\big((\bt,\bh);(\bt,\bh_{\bp})\big)\leq C_{\Delta_n,\gamma}/\grid^{\gamma}=:\widetilde C_{n,\gamma}.$ Thus,
\begin{align*}
&\mathbb{P}\biggl(\max_{\mathbf{p}\in\mathcal{P}^d}\omega_{\bh_{\mathbf{p}}}\bigl|\sup_{\bt\in\mathcal{T}}Z_{\bt,\bh_{\mathbf{p}}}-\sup_{\bh\in[\bh_{\mathbf{p}},\bh_{\mathbf{p+1}}]}\sup_{\bt\in\mathcal{T}}Z_{\bt,\bh}\bigr|>\varepsilon\biggr)
\leq\varepsilon^{-1}\omega_{\bh_{\min}}\mathbb{E}\biggl[\sup_{\rho((\bt,\bh);(\bs,\bl))\leq\widetilde C_{n,\gamma}}\bigl|Z_{\bs,\bl}-Z_{\bt,\bh}\bigr|\biggr],
\end{align*}
by Markov's inequality. An application of Dudley's Theorem
yields
\begin{align}\label{eq:Dudley}
\mathbb{E}\Biggl[\sup_{\rho((\bt,\bh);(\bs,\bl))\leq\widetilde C_{n,\gamma}}\bigl|Z_{\bs,\bl}-Z_{\bt,\bh}\bigr|\Biggr]\leq C\int_{0}^{\widetilde C_{n,\gamma}}\sqrt{\log\big(\mathcal{N}\big(\mathcal{T}\times\mathcal{H},\rho,\eta\big)\big)}\,\mathrm{d}\eta.
\end{align}
By inequality \eqref{eq:covering},
\begin{align*}
\mathbb{P}\biggl(\max_{\mathbf{p}\in\mathcal{P}^d}\omega_{\bh_{\mathbf{p}}}\bigl|\sup_{\bt\in\mathcal{T}}Z_{\bt,\bh_{\mathbf{p}}}-\sup_{\bh\in[\bh_{\mathbf{p}},\bh_{\mathbf{p+1}}]}\sup_{\bt\in\mathcal{T}}Z_{\bt,\bh}\bigr|>\varepsilon\biggr)=o(1)\quad\text{as}\quad n\to\infty.
\end{align*}
Hence, the supremum over the dyadic grid and the supremum over the full range $[h_{\min},h_{\max}]^d$ have the same limit.
\end{proof}
\begin{proof}[Proof of Theorem \ref{GWSMain}]
The first claim is an asymptotic statement as $\lambda\to\infty$. Let $\lambda>0$. As in the proof of Theorem \ref{Thm:Gumbel} write
	\begin{align*}
	&\mathbb{P}\biggl(\sup_{\bh\in\mathcal{H}}\sup_{\bt\in\mathcal{T}_{\bh}}\omega_{\bh}\bigl(Z_{\bt,\bh}-\omega_{\bh}\bigr)>2\lambda\biggr)\leq\mathbb{P}\biggl(\sup_{\bh\in[\bh_{\min},\bh_{\max}]}\sup_{\bt\in[\bh,\be]}\omega_{\bh}\bigl(Z_{\bt,\bh}-\omega_{\bh}\bigr)>2\lambda\biggr)\\
	&\leq\mathbb{P}\biggl(\sup_{\bh\in\mathcal{H}_{\mathrm{dyad}}^d}\sup_{\bt\in\mathcal{T}_{\bh}}\omega_{\bh}\bigl(Z_{\bt,\bh}-\omega_{\bh}\bigr)>\lambda\biggr)+\frac{C}{\lambda},
	\end{align*}
where the last estimate follows as in \eqref{eq:Dudley}.	
We showed in Step I.3 of the proof of Theorem \ref{Thm:Gumbel} that $\sup_{\bh\in\mathcal{H}^d_{\mathrm{dyad}}}\sup_{\bt\in\mathcal{R}}\omega_{\bh}(Z_{\bt,\bh}-\omega_{\bh})\to-\infty$ in probability. Therefore,
	\begin{align*}
	\mathbb{P}\biggl(\sup_{\bh\in\mathcal{H}_{\mathrm{dyad}}^d}\sup_{\bt\in\mathcal{T}_{\bh}}\omega_{\bh}\bigl(Z_{\bt,\bh}-\omega_{\bh}\bigr)>\lambda\biggr)=\mathbb{P}\bigl(M_{\mathcal{B}}>\lambda\bigr)+o(1),
	\end{align*}
as $\lambda\to\infty.$
	 As in Step I.1 in the proof of Theorem \ref{Thm:Gumbel}, we write
	\begin{align}\label{eq:MB}
	\mathbb{P}\bigl(M_{\mathcal{B}}\leq\lambda\bigr)
	=\prod_{\bp\in\mathcal{P}^d}\mathbb{P}\biggr(\sup_{\bt\in B_{\bp}}\int\Xi(\bt-\bz)\,\mathrm{d}W_{\bz} \leq \Lambda_{\min,\bp}\biggr),
	\end{align}
	where
	$
	\Lambda_{\min,\bp}:=\min_{\bp\leq\bq}\bigl(\frac{\lambda}{\omega_{\bh_{\bq}}}+\omega_{\bh_{\bq}}\bigr).
	$
	By Theorem 4.1.2 in \citet{adltay2007} we deduce that there exists a constant $D_{\gamma,1}$, depending only on the degree of average H\"older smoothness, $\gamma$, (see \eqref{S1} in Assumption \ref{Ass:Dict}) such that
	\begin{align*}
	\mathbb{P}\bigl(M_{\mathcal{B}}\leq\lambda\bigr)
	&\geq\prod_{\bp\in\mathcal{P}^d}\biggr(1-D_{\gamma,1}\mathrm{Leb}(B_{\bp})\Lambda_{\min,\bp}^{d/\gamma-1}\exp\Bigl(-\frac{1}{2}\Lambda_{\min, \bp}^2\Bigr)\biggr)\\
	&\geq\prod_{\bp\in\mathcal{P}^d}\biggr(1-D_{\gamma,1}\mathrm{Leb}(B_{\bp})\Lambda_{\bp}^{d/\gamma-1}\exp\Bigl(-\frac{1}{2}\Lambda_{\min, \bp}^2\Bigr)\biggr),
	\end{align*}
	since $\Lambda_{\min,\bp}\leq\Lambda_{\bp}.$ Also, $\Lambda_{\min,\bp}=\biggl(\frac{\lambda}{\omega_{\bh_{\bp_0}}}+\omega_{\bh_{\bp_0}}\biggr)$ for some $\bp\leq\bp_0=\bp_0(\lambda).$ Hence, $\Lambda_{\min,\bp}^2=(\lambda/\omega_{\bh_{\bp_0}})^2+2\lambda+\omega_{\bh_{\bp_0}}^2\geq  2\lambda+\omega_{\bh_{\bp}}^2,$ where we used that $\bh_\bp>\bh_{\bp+\be}\;\forall\; \bp\in\mathcal{P}^d.$ We obtain
	\begin{align*}
	\mathbb{P}\bigl(M_{\mathcal{B}}\leq\lambda\bigr)
	\geq \prod_{\bp\in\mathcal{P}^d}\biggr(1-C\mathrm{Leb}(B_{\bp})\bh_{\bp}^{\be}e^{-\lambda}\biggl(\Bigl(\tfrac{\lambda^2}{2\log(\frac{K}{\bh_{\bp}^{\be}})}\Bigr)^{\frac{d}{2\gamma}-\frac{1}{2}}+\omega_{\bh_{\bp}}^{\frac{d}{\gamma}-1}\biggr)\biggl(\tfrac{1}{\log(\frac{K}{\bh_{\bp}^{\be}})}\biggr)^{\frac{C_d}{2}}\biggr).
	\end{align*}
	We find, for sufficiently large $n,$
	\begin{align*}
	&\mathbb{P}\bigl(M_{\mathcal{B}}\leq\lambda\bigr)
	\geq \prod_{\bp\in\mathcal{P}^d}\biggr(1-Ce^{-\lambda}\biggl(\tfrac{1}{2\log(K/(\bh_{\bp}^{\be}))}\biggr)^{\frac{C_d}{2}}\biggl(\Bigl(\tfrac{\lambda^2}{2\log(\frac{K}{\bh_\bp^{\be}})}\Bigr)^{\frac{d}{2\gamma}-\frac{1}{2}}+\omega_{\bh_\bp}^{\frac{d}{\gamma}-1}\biggr)\\
	&~\geq\exp\biggl(\sum_{\bp\in\mathcal{P}^d}\log\biggl(1-Ce^{-\lambda}\biggl[\biggl(\tfrac{1}{2\log(K/(\bh_\bp^{\be}))}\biggr)^{\frac{C_d-d/\gamma+1}{2}}+\biggl(\tfrac{|\lambda|^{\frac{d}{\gamma}-1}}{(2\log(K/\bh_\bp^{\be}))^{\frac{C_d+d/\gamma-1}{2}}}\biggr)\biggr]\biggr).
	\end{align*}
	Recall that $C_d=2d+d/\gamma-1$ and hence $\frac{C_d-d/\gamma+1}{2}=d$ as well as $\frac{C_d+d/\gamma-1}{2}\geq2d-1\geq d.$
	\begin{align*}
	\mathbb{P}\bigl(M_{\mathcal{B}}\leq\lambda\bigr)
	&\geq\exp\Biggl(\sum_{\bp\in\mathcal{P}^d}\log\biggl(1-Ce^{-\lambda/2}\frac{2e^{-\lambda/2}|\lambda|^{\frac{d}{\gamma}-1}}{(2\log(K/(\bh_\bp^{\be})))^d}\biggr)\Biggr)\\&\geq\exp\Biggl(\sum_{\bp\in\mathcal{P}^d}\log\biggl(1-C\frac{e^{-\lambda/2}}{(2\log(K/(\bh_\bp^{\be})))^d}\biggr)\Biggr).
	\end{align*}
	Since $\log(1-x)\leq-x$ for $x<1$, we find, for sufficiently large $n$,
	\begin{align*}
	\mathbb{P}\bigl(M_{\mathcal{B}}\leq\lambda\bigr)
	\geq\exp\Biggl(-Ce^{-\lambda/2}\sum_{\bp\in\mathcal{P}^d}\frac{1}{(2\log(K/(\bh_\bp^{\be})))^d}\Biggr)\geq\exp\Biggl(-Ce^{-\lambda/2}\biggl(\frac{p_{\max}}{p_{\min}}\biggr)^d\Biggr).
	\end{align*}
	Since $p_{\max}/p_{\min}$ is bounded we find $F(\lambda)=e^{-Ce^{-\lambda/2}}$ for some $C>0$ which is independent of $n.$
	This concludes the proof of the first claim of this theorem. In order to proof the lower Gumbel-bound we proceed similar.
For $\lambda\in\R$, we have that
	\begin{align*}
	&\mathbb{P}\biggl(\sup_{\bh\in\mathcal{H}}\sup_{\bt\in\mathcal{T}_{\bh}}\omega_{\bh}\bigl(Z_{\bt,\bh}-\omega_{\bh}\bigr)\leq\lambda\biggr)\geq\mathbb{P}\biggl(\sup_{\bh\in[\bh_{\min},\bh_{\max}]}\sup_{\bt\in[\bh,\be]}\omega_{\bh}\bigl(Z_{\bt,\bh}-\omega_{\bh}\bigr)\leq\lambda\biggr)\\
	&=\mathbb{P}\biggl(\sup_{\bh\in\mathcal{H}_{\mathrm{dyad}}^d}\sup_{\bt\in[\bh,\be]}\omega_{\bh}\bigl(Z_{\bt,\bh}-\omega_{\bh}\bigr)\leq\lambda\biggr)+o(1)=\mathbb{P}\bigl(M_{\mathcal{B}}\leq\lambda\bigr)+o(1),
	\end{align*}
	where the estimates follows from the proof of Theorem \ref{Thm:Gumbel}. We make use of \eqref{eq:MB} again but
now, in contrast to before, we consider fixed $\lambda\in\R$. Hence, for sufficiently large $n$, $\Lambda_{\max,\bp}=\Lambda_\bp.$
	Again, by Theorem 4.1.2 in \citet{adltay2007}, we deduce that there exists a constant $D_{\gamma,1}$, depending only on the degree of average H\"older smoothness, $\gamma$, (see \eqref{S1} in Assumption \ref{Ass:Dict}) such that
	\begin{align*}
	\mathbb{P}\bigl(M_{\mathcal{B}}\leq\lambda\bigr)
	\geq\prod_{\bp\in\mathcal{P}^d}\biggr(1-D_{\gamma,1}\mathrm{Leb}(B_{\bp})\Lambda_{\bp}^{d/\gamma-1}\exp\Bigl(-\frac{1}{2}\Lambda_{\bp}^2\Bigr)\biggr).
	\end{align*} 
	From here the second claim now follows as in the proof of Theorem \ref{Thm:Gumbel}. 
	
\end{proof}

\subsection{Localization}
\begin{Lemma}\label{Le:Loc}
	If Assumption \ref{Ass:Dict} (c) is satisfied, the following holds true for a normed and uniformly bounded test-function $\Xi\in L^2[0,1]^d.$
	\begin{align*}
	\max_{i\in\{1,\ldots,N\}}\biggl|\frac{1}{\sqrt{\bh_{i}^{\be}}}\int(\sigma(\bz)-\sigma(\bt_i))\Xi\Big(\frac{\bt_i-\bz}{\bh_i}\Big)\,\mathrm{d}W_{\bz}\biggr|=O_{\mathbb{P}}\bigl(\sqrt{\log(n)h_{\max}}\bigr)=o_\mathbb{P}\Bigl(\log(n)^{-1/2}\Bigr).
	\end{align*}    
\end{Lemma}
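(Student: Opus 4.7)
The plan is to exploit that the quantity in question is a centered Gaussian field indexed by a polynomially-large parameter set, and then combine a variance bound with the standard Gaussian union-bound.

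First, I would observe that for each fixed $i$, the random variable
\[
X_i := \frac{1}{\sqrt{\bh_{i}^{\be}}}\int(\sigma(\bz)-\sigma(\bt_i))\Xi\Big(\frac{\bt_i-\bz}{\bh_i}\Big)\,\mathrm{d}W_{\bz}
\]
is a centered Gaussian random variable, being a Wiener integral against a standard Brownian sheet. Its variance is
\[
\mathrm{Var}(X_i) = \frac{1}{\bh_{i}^{\be}}\int(\sigma(\bz)-\sigma(\bt_i))^2\Xi^2\Big(\frac{\bt_i-\bz}{\bh_i}\Big)\,\mathrm{d}\bz.
\]
A componentwise substitution $\bu=(\bt_i-\bz)/\bh_i$ kills the $\bh_i^{\be}$ prefactor, and since $\sigma\in C^{1}[0,1]^d$ is Lipschitz on the compact cube while $\Xi$ is supported on $[0,1]^d$ with $\|\Xi\|_2=1$, this yields a bound of the form $\mathrm{Var}(X_i) \leq C\,h_{\max}^2$ with a constant depending only on $d$, $\|\sigma\|_{C^1}$, and $\|\Xi\|_\infty$.

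Next, I would apply the classical Gaussian tail estimate together with a union bound: for any $\lambda>0$,
\[
\mathbb{P}\Big(\max_{i\in I_N}|X_i|>\lambda\Big)\leq 2N\exp\!\Big(-\tfrac{\lambda^2}{2Ch_{\max}^2}\Big).
\]
By the growth assumption \eqref{growth}, $\log N = O(\log n)$, so taking $\lambda$ of order $h_{\max}\sqrt{\log n}$ with a sufficiently large implicit constant makes the right-hand side arbitrarily small. Hence $\max_{i\in I_N}|X_i|=O_{\mathbb P}(h_{\max}\sqrt{\log n})$.

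Finally, since $h_{\max}\leq 1$ we have $h_{\max}\sqrt{\log n}\leq \sqrt{h_{\max}\log n}$, which is the first rate in the statement. For the second equality, the upper bound in \eqref{R1} gives $h_{\max}=o(\log(n)^{-2})$, hence $h_{\max}\log(n) = o(\log(n)^{-1})$ and therefore $\sqrt{h_{\max}\log(n)}=o(\log(n)^{-1/2})$. No step is technically hard; the only point requiring a little care is the variance estimate and the observation that mere $L^2$-normalization of $\Xi$ combined with the $C^{1}$-regularity of $\sigma$ produces an $h_{\max}^2$ factor (as opposed to only $h_{\max}$), which is what makes the entire quantity negligible after multiplication by $\omega_i$ of order $\sqrt{\log n}$ elsewhere in the paper.
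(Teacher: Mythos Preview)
Your argument is correct and in fact yields the slightly sharper rate $O_{\mathbb P}(h_{\max}\sqrt{\log n})$, which you then relax to the stated $O_{\mathbb P}(\sqrt{h_{\max}\log n})$. The variance bound $\mathrm{Var}(X_i)\le Ch_{\max}^2$ follows exactly as you say from $\sigma\in C^1$ and $\mathrm{supp}(\Xi)\subset[0,1]^d$, and the Gaussian union bound over the $N=O(n^\kappa)$ indices then closes the argument. (Note that both your proof and the paper's implicitly use the growth condition \eqref{growth}, even though the lemma statement only cites Assumption~\ref{Ass:Dict}(c).)

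The paper takes a more elaborate route: it introduces a linearized comparison field $Y^{(2)}$ obtained by replacing $\sigma(\bz)-\sigma(\bt_i)$ with its first-order Taylor approximation $\mathrm{grad}(\sigma)(\bt_i)(\bz-\bt_i)$, bounds the difference of the canonical metrics of $Y^{(1)}$ and $Y^{(2)}$ by $Ch_{\max}$, and then applies a Sudakov--Fernique type comparison (Theorem~2.2.5 in \citet{adltay2007}) to control $|\mathbb E\max Y^{(1)}-\mathbb E\max Y^{(2)}|$ by $C\sqrt{h_{\max}\log N}$. The field $Y^{(2)}$ is then handled by factoring out $h_{i,k}$ and invoking the a.s.\ boundedness result of Theorem~\ref{GWSMain}, after which Borell's inequality concludes. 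This comparison step is precisely where the looser $\sqrt{h_{\max}}$ (rather than $h_{\max}$) enters. Your direct approach bypasses the comparison entirely and is both shorter and tighter; the paper's machinery is heavier than the lemma requires, though it illustrates tools (Gaussian comparison, Theorem~\ref{GWSMain}) that are used elsewhere in the proofs.
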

\begin{proof}[Proof of Lemma \ref{Le:Loc}]
	Define
	\begin{align*}
	Y_{\bt_i,\bh_i}^{(1)}:=\frac{1}{\sqrt{h_{i,1}\cdot\ldots\cdot h_{i,d}}}\int(\sigma(\bz)-\sigma(\bt_i))\Xi\Big(\frac{\bt_i-\bz}{\bh_i}\Big)\,\mathrm{d}W_{\bz},
	\end{align*}
	\begin{align*}
	Y_{\bt_i,\bh_i}^{(2)}:=\frac{1}{\sqrt{h_{i,1}\cdot\ldots\cdot h_{i,d}}}\int\mathrm{grad}(\sigma)( \bt_i)(\bz-\bt_i)\Xi\Big(\frac{\bt_i-\bz}{\bh_i}\Big)\,\mathrm{d}W_{\bz},
	\end{align*}
	and, for $l=1,2,$ $(i,j)\in\{1,\ldots,N\}^2,$ let
	$
	\gamma_{i,j}^{(l)}:=\mathbb{E}\bigl[(Y_{\bt_i,\bh_i}^{(l)}-Y_{\bt_j,\bh_j}^{(l)})^2\bigr].
	$
	
	We have that
	\begin{align}\label{abc}
	\sup_{(i,j)\in\{1,\ldots,N\}^2}|\gamma_{i,j}^{(1)}-\gamma_{i,j}^{(2)}|\leq C(\|\Xi\|_2^2+\|\Xi\|_{\infty}^2)h_{\max}.
	\end{align}
	Using Assumption \eqref{growth} and \eqref{abc}, an application of Theorem 2.2.5 in \citet{adltay2007} yields
	\begin{align}\label{def}
	\Bigg|\mathbb{E}\biggl[\max_{i\in\{1,\ldots,N\}}Y_{\bt_i,\bh_i}^{(1)}\biggr]-\mathbb{E}\biggl[\max_{i\in\{1,\ldots,N\}}Y_{\bt_i,\bh_i}^{(2)}\biggr]\Bigg|\leq C\sqrt{h_{\max}\log(N)}=O\bigl(\sqrt{h_{\max}\log(n)}\bigr).
	\end{align}
	Define for $k\in\{1,\ldots,d\}$ the quantities
	\begin{align*}
	Y_{\bt_i,\bh_i}^{(2,k)}:&=\frac{(\mathrm{grad}(\sigma))_k( \bt_i)}{\sqrt{h_{i,1}\cdot\ldots\cdot h_{i,d}}}\int(z_k-t_{i,k})\Xi\Big(\frac{\bt_i-\bz}{\bh_i}\Big)\,\mathrm{d}W_{\bz}\\
	&=\frac{h_{i,k}(\mathrm{grad}(\sigma))_k( \bt_i)}{\sqrt{h_{i,1}\cdot\ldots\cdot h_{i,d}}}\int\frac{z_k-t_{i,k}}{h_{i,k}}\Xi\Big(\frac{\bt_i-\bz}{\bh_i}\Big)\,\mathrm{d}W_{\bz}=:\frac{h_{i,k}(\mathrm{grad}(\sigma))_k( \bt_i)}{\sqrt{h_{i,1}\cdot\ldots\cdot h_{i,d}}}\int\widetilde\Xi_k\Big(\frac{\bt_i-\bz}{\bh_i}\Big)\,\mathrm{d}W_{\bz},
	\end{align*}
	where $\widetilde{\Xi}_k(\bz)=z_k\Xi(\bz).$
	By Theorem \ref{GWSMain} we obtain that 
	\begin{align*}
	\mathbb{E}\biggl[\max_{1\leq i\leq N}\frac{1}{\sqrt{h_{i,1}\cdot\ldots\cdot h_{i,d}}}\int\widetilde\Xi\Big(\frac{\bt_i-\bz}{\bh_i}\Big)\,\mathrm{d}W_{\bz}\biggr]=O\bigl(\sqrt{\log(n)}\bigr).
	\end{align*}
	Since $h_{i,k}(\mathrm{grad}(\sigma))_k( \bt_i)\leq Ch_{\max}$
	we conclude that $\mathbb{E}\bigl[\max_{i\in\{1,\ldots,N\}}Y_{\bt_i,\bh_i}^{(2)}\bigr]\leq C\sqrt{\log(n)}h_{\max}$ and hence, by \eqref{def}, there exists a positive constant $C_{Y^{(1)}}$ such that $\mathbb{E}\bigl[\max_{i\in\{1,\ldots,N\}}Y_{\bt_i,\bh_i}^{(1)}\bigr]\leq C_{Y^{(1)}}\sqrt{\log(n)}h_{\max}$. It follows by an application of Borell's inequality that, for $\lambda>C_{Y^{(1)}}$,
	\begin{align*}
	\mathbb{P}\biggl(\bigl|\max_{i\in\{1,\ldots,N\}}Y_{\bt_i,\bh_i}^{(1)}\bigr|>\lambda\sqrt{\log(n)h_{\max}}\biggr)\leq\exp\bigl(-(\lambda- C_{Y^{(1)}})^2\log(n)\bigr)=n^{-(\lambda- C_{Y^{(1)}})^2},
	\end{align*}
	
	where we also used that $\max_{i\in\{1,\ldots,N\}}$Var$(Y_{\bt_i,\bh_i}^{(1)})\leq h_{\max}^2\|\Xi\|_2^2=h_{\max}^2$ since $\Xi$ is normed. The assertion of the lemma now follows. 
\end{proof}   	
\subsection{A continuous limit}

\begin{Lemma}\label{Le:SumInt}
	Let $\Xi$ be a test function satisfying Assumption \ref{Ass:Dict} (c) and \ref{Ass:Dict} (d) and let $\{\zeta_{\bk}\,|\,\bk\in\mathbb{N}^ d\}$ be a field of independent, standard normally distributed random variables. Then
	\begin{align*}
	\max_{1\leq i\leq N}\biggl\{(\sqrt{\bh_{i}^{\be}})^{-1}\biggl(n^{-\frac{d}{2}}\sum_{\bj\in I_n^d}\zeta_{\bj}\Xi\Bigl(\frac{\bt_i-\bx_{\bj}}{\bh_i}\Bigr)-\int \Xi\Bigl(\frac{\bt_i-\bx_{\bj}}{\bh_i}\Bigr)\,\mathrm{d}W_{\bz}\biggr)\biggr\}=o_{\mathbb{P}}\biggl(\frac{1}{\sqrt{\log(n)}}\biggr).
	\end{align*} 
\end{Lemma}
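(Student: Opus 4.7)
The plan is to construct the random variables $\zeta_{\bj}$ on the same probability space as $W$ via the natural coupling $\zeta_{\bj} := n^{d/2}\int_{Q_{\bj}}\,\mathrm dW_{\bz}$, where $Q_{\bj} := \prod_{k=1}^{d}[(j_k-1)/n,\,j_k/n]$. Since this yields an i.i.d.\ family of standard normals, it suffices to prove the lemma under this coupling. Under it, the expression inside the maximum collapses into a single Gaussian stochastic integral
\[
D_i := \int F_i(\bz)\,\mathrm dW_{\bz}, \qquad F_i(\bz) := \frac{1}{\sqrt{\bh_i^{\be}}}\Bigl[\Xi\bigl((\bt_i-\bx_{[\bz]})/\bh_i\bigr)-\Xi\bigl((\bt_i-\bz)/\bh_i\bigr)\Bigr],
\]
where $\bx_{[\bz]}$ denotes the sampling point $\bx_{\bj}$ with $\bz\in Q_{\bj}$. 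Each $D_i$ is therefore centered Gaussian with variance $\sigma_i^2=\|F_i\|_{L^2}^2$.

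The second step is to bound $\sigma_i^2$ uniformly in $i$. After the change of variables $\by=(\bt_i-\bz)/\bh_i$ carried out separately within each cube $Q_{\bj}$, one recognizes $\sigma_i^2$ as the squared $L^2$-error of approximating $\Xi$ by the piecewise-constant function obtained by evaluating $\Xi$ at the corner of a partition of cubes of side length at most $(n h_{\min})^{-1}$. Applying \eqref{S1} (after Fubini to swap the sum over $\bj$ with the integral over the cube offset $\bs$, comparison with the best $L^2$-projection onto piecewise constants via the variance identity, and exploiting the uniform boundedness and boundary-vanishing of $\Xi$ from Assumption~\ref{Ass:Dict}(d)) yields
\[
\sigma_i^2 \leq C_{L,d,\gamma}\,(n h_{\min})^{-2\gamma}
\]
uniformly in $i\in I_N$.

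A standard Gaussian maximum inequality (Borell's inequality together with an entropy or union-bound argument) then gives
\[
\max_{1\leq i\leq N}|D_i| = O_{\mathbb{P}}\bigl(\max_i\sigma_i\sqrt{\log N}\bigr).
\]
Combining this with the growth condition \eqref{growth} ($N=O(n^{\kappa})$, so $\sqrt{\log N}=O(\sqrt{\log n})$) and the scale restriction \eqref{R1} ($nh_{\min}\gtrsim\log(n)^{15/d\vee 3}\log\log(n)^2$) produces the bound
\[
\max_{1\leq i\leq N}|D_i| = O_{\mathbb{P}}\Bigl(\sqrt{\log n}\cdot\log(n)^{-\gamma(15/d\vee 3)}\log\log(n)^{-2\gamma}\Bigr),
\]
which is $o_{\mathbb{P}}(\log(n)^{-1/2})$: the exponents in \eqref{R1} are tuned precisely so that the polynomial-in-$\log n$ factor is (at worst) balanced and the $\log\log(n)^{-2\gamma}$ piece supplies the additional strict decay required.

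The main technical obstacle is the variance bound of Step~2: since \eqref{S1} provides only an $L^2$-modulus of continuity (and not a pointwise one), a pointwise Riemann-sum comparison is not directly available. I would therefore bridge $\Pi_n\Xi-\Xi$ (pointwise sampling) and $\bar\Pi_n\Xi-\Xi$ ($L^2$-projection by cube averaging) through Jensen's inequality and the variance identity $\int_{Q_{\bj}}|f-\bar f|^2=\frac{1}{2|Q_{\bj}|}\int\!\!\int_{Q_{\bj}\times Q_{\bj}}|f(\bz)-f(\bw)|^2$, which reduces the estimate on each cube to an integrated version of \eqref{S1} over shifts of size $O((nh_{\min})^{-1})$; summation over $\bj$ then delivers the claimed $(nh_{\min})^{-2\gamma}$ rate.
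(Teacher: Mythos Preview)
Your coupling and the reduction to bounding $\sigma_i^2=\|F_i\|_2^2$ are correct, and the route via a Gaussian maximal inequality would be more direct than the paper's—\emph{if} the uniform bound $\sigma_i^2\leq C(nh_{\min})^{-2\gamma}$ could be obtained as you sketch. But the bridge in your last paragraph does not close. The variance identity does deliver $\|\bar\Pi_n\Xi-\Xi\|_2^2\leq C(nh_{\min})^{-2\gamma}$: once the offset is integrated over a full cube, summation over $\bj$ produces a genuine integral $\int|\Xi(\bz)-\Xi(\bz-\bs)|^2\,\mathrm d\bz$ to which \eqref{S1} applies. The difficulty is the remaining piece $\|\Pi_n\Xi-\bar\Pi_n\Xi\|_2^2$. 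Jensen only yields $\|\Pi_n\Xi-\bar\Pi_n\Xi\|_2^2\leq\|\Pi_n\Xi-\Xi\|_2^2$, and combined with the orthogonal decomposition $\|\Pi_n\Xi-\Xi\|_2^2=\|\Pi_n\Xi-\bar\Pi_n\Xi\|_2^2+\|\bar\Pi_n\Xi-\Xi\|_2^2$ this is circular. The obstruction is genuine: \eqref{S1} is an integrated modulus and does not by itself control the deviation of point samples of $\Xi$ from their cube averages.

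The paper handles this obstacle by a different device. Rather than bounding $\mathrm{Var}(D_i)$ directly, it introduces an auxiliary process $Y^{(2)}_{\bt}=\int\bigl[\Xi(\bt-\bz)-\Xi\bigl(\bt-\bz-\tfrac{1}{n\bh}\bigr)\bigr]\,\mathrm dW_{\bz}$, with a \emph{fixed} shift replacing the grid-dependent one in $D_i=:Y^{(1)}_{\bt}$. For $Y^{(2)}$ the variance bound is immediate from \eqref{S1}, and a chaining tail bound (Theorem~4.1.2 in Adler and Taylor, 2007) controls $\mathbb E[\max_{\bt}Y^{(2)}_{\bt}]$. The passage back to $Y^{(1)}$ is made through \emph{increment} variances, not variances: the paper bounds $\sup_{\bt,\bt'}\bigl|\mathbb E|Y^{(1)}_{\bt}-Y^{(1)}_{\bt'}|^2-\mathbb E|Y^{(2)}_{\bt}-Y^{(2)}_{\bt'}|^2\bigr|$ and applies a Gaussian comparison inequality (Theorem~2.2.5 ibid.) to transfer the bound on $\mathbb E[\max Y^{(2)}]$ to $\mathbb E[\max Y^{(1)}]$; Borell's inequality then supplies concentration. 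This detour through $Y^{(2)}$ is precisely what replaces your Step~2.
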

\begin{proof}[Proof of Lemma \ref{Le:SumInt}]
	Define the partial sum $S_{\mathbf{l}}$ as follows:
	\begin{align*}
	S_{\bl}=\sum_{\be\leq\mathbf{k}\leq\mathbf{l}}\zeta_{\bk},\quad\text{with}\quad S_{\bl}\equiv0\quad\text{if}\quad\prod_{j=1}^dl_j=0.
	\end{align*}
	Each random variable $\zeta_{\bk}$ can be expressed in terms of increments of the corresponding partial sum function, 
	\begin{align}\label{PS}
	\zeta_{\bk}=\sum_{\balpha\in\{0,1\}^d}(-1)^{|\balpha|}S_{\bk-\balpha}.
	\end{align}
	For each fixed $\bh\in\mathcal{H}^d$, there exists a set $\mathcal{T}_{\bh}$ such that 
	$
	\max_{1\leq i\leq N}F(\bt_i,\bh_i)=\max_{\bh\in\mathcal{H}^d}\max_{\bt\in\mathcal{T}_{\bh}}F(\bt,\bh). $ Let further $\frac{\be}{\bh}\mathcal{T}_{\bh}$ denote the set $\{(\bt/\bh)\,|\,\bt\in\mathcal{T}_{\bh}\}$.	
	For $\bt\in\mathcal{T}_{\bh}$ and $\bz_{\bj}:=(j_1/(nh_1),\ldots,j_d/(nh_d))$
	\begin{align*}
	&\sum_{\be\leq\bj\leq\mathbf{n}-\be}\int_{[\bz_{\bj},\bz_{\bj+\be})}\Xi(\bt-\bz_{\bj})\,\mathrm{d}W_{\bz}=\sum_{\be\leq\bj\leq\mathbf{n}-\be}\Xi\Bigl(\frac{\bt}{\bh}-\bz_{\bj}\Bigr)\sum_{\balpha\in\{0,1\}^d}(-1)^{|\balpha|}W(\bz_{\bj-\balpha})
	\\
	& \stackrel{\mathcal{D}}{=} \Big(\sqrt{\bh_{i}^{\be}}\Big)^{-1}n^{-\frac{d}{2}}\sum_{\be\leq\bj\leq\mathbf{n}-\be}\Xi\Bigl(\frac{\bt}{\bh}-\bz_{\bj}\Bigr)\sum_{\balpha\in\{0,1\}^d}(-1)^{|\balpha|}W(\bj-\balpha)\\
	&\stackrel{\mathcal{D}}{=} \Big(\sqrt{\bh_{i}^{\be}}\Big)^{-1}n^{-\frac{d}{2}}\sum_{\be\leq\bj\leq\mathbf{n}-\be}\Xi\Bigl(\frac{\bt}{\bh}-\bz_{\bj}\Bigr)\sum_{\balpha\in\{0,1\}^d}(-1)^{|\balpha|}S_{\bj-\balpha}
	=
	\Big(\sqrt{\bh_{i}^{\be}}\Big)^{-1}n^{-\frac{d}{2}}\sum_{\be\leq\bj\leq\mathbf{n}-\be}\Xi\Bigl(\frac{\bt-\bx_{\bj}}{\bh}\Bigr)\zeta_{\mathbf{j}},
	\end{align*} 		
	where the last equality follows from \eqref{PS}. 	
	For fixed $\bh\in\mathcal{H}^d$ and parameter $\bt\in\frac{\be}{\bh}\mathcal{T}_{\bh}$ consider the processes
	\begin{align*}
	Y_{\bt}^{(1)}=Y_{\bt}^{(1)}(\bh)=\int\Xi(\bt-\bz)\,\mathrm{d}W_{\bz}-\sum_{\be\leq\bj\leq\mathbf{n}-\be}\int_{[\bz_{\bj},\bz_{\bj+\be})}\Xi(\bt-\bz)\,\mathrm{d}W_{\bz}
	\end{align*}
	and
	\begin{align*}
	Y_{\bt}^{(2)}&=Y_{\bt}^{(2)}(\bh)=\int\Xi(\bt-\bz)\,\mathrm{d}W_{\bz}-\int\Xi\Big(\bt-\Big(\bz+\frac{1}{n\bh}\Big)\Big)\,\mathrm{d}W_{\bz}\\
	&=\int\Xi(\bt-\bz)\,\mathrm{d}W_{\bz}-\int\sum_{\be\leq\bj\leq\mathbf{n}-\be}I_{[\bz_{\bj},\bz_{\bj+\be})}(\bz)\Xi\Big(\bt-\Big(\bz+\frac{1}{n\bh}\Big)\Big)\,\mathrm{d}W_{\bz}.
	\end{align*}
	We first show that
	\begin{align}\label{Y1bounded}
	\mathbb{E}\bigg[\max_{\bt\in\frac{\be}{\bh}\mathcal{T}_{\bh}}Y^{(2)}_{\bt}\bigg]=O\biggl(\frac{\sqrt{\log(n)}}{(nh_{\min})^{\gamma}}\biggr),
	\end{align}
	where $\gamma$ is the degree of average smoothness (see \eqref{S1}).
	By Assumption \eqref{S1}, for $\bt\in\frac{\be}{\bh}\mathcal{T}_{\bh},$ we immediately obtain
	\begin{align*}
	d_2(\bt,\bt'):=\mathbb{E}\|Y_{\bt}^{(2)}-Y_{\bt'}^{(2)}\|_2^2\leq C\|\bt-\bt'\|_2^{2\gamma}.
	\end{align*}
	Hence, for all $\varepsilon>0$ we find
	\begin{align*}
	\mathcal{N}(\varepsilon,\tfrac{\be}{\bh}\mathcal{T}_{\bh},d_2)\leq \frac{C}{h_{i,1}\cdot\ldots\cdot h_{i,d}}\Bigl(\frac{1}{\varepsilon}\Bigr)^{d/\gamma},
	\end{align*}
	where $\mathcal{N}(\varepsilon,\frac{\be}{\bh}\mathcal{T}_{\bh},d_2)$ denotes the covering number of $\frac{\be}{\bh}\mathcal{T}_{\bh}$ with respect to the (pseudo-)distance $d_2$. Furthermore, also by Assumption \eqref{S1}, we find
	$
	\mathrm{Var}[Y_{\bt}^{(2)}]\leq L\Bigl(\frac{1}{nh_{\min}}\Bigr)^{2\gamma}.
	$ For $\lambda>\sqrt{L}\Bigl(\frac{1}{nh_{\min}}\Bigr)^{\gamma}(1+\sqrt{d/\gamma}),$ an application of Theorem 4.1.2 in \citet{adltay2007} yields
	\begin{align}\label{tailbound}
	\mathbb{P}\biggl(\max_{\bt\in\tfrac{\be}{\bh}\mathcal{T}_{\bh}}Y_{\bt}^{(2)}>\lambda\biggr)\leq\frac{C}{h_{i,1}\cdot\ldots\cdot h_{i,d}}\biggl(\frac{\lambda(nh_{\min})^{2\gamma}}{L\sqrt{d/\gamma}}\biggr)^{d/\gamma}\overline{\psi}\biggl(\frac{\lambda(nh_{\min})^{\gamma}}{L}\biggr),
	\end{align}
	where  $\overline{\psi}(x)=\frac{1}{\sqrt{2\pi}}\int_{x}^{\infty}\exp\big(-\frac{1}{2}z^2\big)\,\mathrm{d}z$ denotes the tail function of the standard normal distribution. We further obtain
	\begin{align*}
	\mathbb{E}\bigg[\max_{\bt\in\frac{\be}{\bh}\mathcal{T}_{\bh}}Y^{(2)}_{\bt}\bigg]&\leq \mathbb{E}\bigg[\max_{\bt\in\frac{\be}{\bh}\mathcal{T}_{\bh}}|Y^{(2)}_{\bt}|\bigg]=\int_{0}^{\infty}\mathbb{P}\bigg(\max_{\bt\in\frac{\be}{\bh}\mathcal{T}_{\bh}}|Y^{(2)}_{\bt}|>\lambda\bigg) \, d\lambda\leq2\int_{0}^{\infty}\mathbb{P}\bigg(\max_{\bt\in\frac{\be}{\bh}\mathcal{T}_{\bh}}Y^{(2)}_{\bt}>\lambda\bigg) \, d\lambda\\
	&=2\int_{0}^{\frac{\mathcal{C}\sqrt{\log(n)}}{(nh_{\min})^{\gamma}}}\mathbb{P}\bigg(\max_{\bt\in\frac{\be}{\bh}\mathcal{T}_{\bh}}Y^{(2)}_{\bt}>\lambda\bigg) \, d\lambda+2\int_{\frac{\mathcal{C}\sqrt{\log(n)}}{(nh_{\min})^{\gamma}}}^{\infty}\mathbb{P}\bigg(\max_{\bt\in\frac{\be}{\bh}\mathcal{T}_{\bh}}Y^{(2)}_{\bt}>\lambda\bigg) \, d\lambda\\
	&\leq \tfrac{2\mathcal{C}\sqrt{\log(n)}}{(nh_{\min})^{\gamma}} +2\int_{\frac{\mathcal{C}\sqrt{\log(n)}}{(nh_{\min})^{\gamma}}}^{\infty}\mathbb{P}\bigg(\max_{\bt\in\frac{\be}{\bh}\mathcal{T}_{\bh}}Y^{(2)}_{\bt}>\lambda\bigg) \, d\lambda.
	\end{align*}
	Assertion \eqref{Y1bounded} now follows from the tail bound \eqref{tailbound}, integration by parts and a proper adjustment of the constant $\mathcal{C}.$ 
	We now show that for fixed $\bh\in\mathcal{H}^d,$ $Y_{\bt}^{(1)}$ and $Y_{\bt}^{(2)}$ are close to each other.
	\begin{align}\label{ExpY1Y2}
	\biggl|\mathbb{E}\Big[\Big|Y_\bt^{(1)}-Y_{\bt'}^{(1)}\Big|^2\Big]-\mathbb{E}\Big[\Big|Y_\bt^{(2)}-Y_{\bt'}^{(2)}\Big|^2\Big]\Big|\leq C\bigg(\bigg(\frac{1}{nh_{\min}}\bigg)^{2\gamma}+\bigg(\frac{1}{\log(n)\log\log(n)}\bigg)^2\bigg).
	\end{align}
	An application of Theorem 2.2.5 in \citet{adltay2007} yields
	\begin{align*}
	\bigg|\mathbb{E}\bigg[\max_{\bt\in\frac{\be}{\bh}\mathcal{T}_{\bh}}Y^{(1)}_{\bt}\bigg]-\mathbb{E}\bigg[\max_{\bt\in\frac{\be}{\bh}\mathcal{T}_{\bh}}Y^{(2)}_{\bt}\bigg]\bigg|\leq C\sqrt{\log(N)\biggl(\bigg(\frac{1}{nh_{\min}}\bigg)^{2\gamma}+\frac{1}{\log(n)^2\log\log(n)^2}\biggr)}.
	\end{align*}
	Recall that, by Assumption \eqref{R1}, we have $\gamma\in[1/2,1]$ and $h_{\min}\geq\frac{C_{\min}\log(n)^3}{n\log\log(n)^2}$. Since, by Assumption \eqref{growth}, we also have $\log(N)=O(\log(n))$ and we obtain
	\begin{align*}
	\bigg|\mathbb{E}\bigg[\max_{\bt\in\frac{\be}{\bh}\mathcal{T}_{\bh}}Y^{(1)}_{\bt}\bigg]-\mathbb{E}\bigg[\max_{\bt\in\frac{\be}{\bh}\mathcal{T}_{\bh}}Y^{(2)}_{\bt}\bigg]\bigg|\leq C_{1,2}\frac{1}{\sqrt{\log(n)}\log\log(n)},
	\end{align*}   		
	for some positive constant $C_{1,2}.$ Let $\lambda>2C_{1,2}+1$. Then
	\begin{align}\label{C12}
	&\mathbb{P}\biggl(\biggl|\max_{\bt\in\frac{\be}{\bh}\mathcal{T}_{\bh}}Y^{(1)}_{\bt}-\max_{\bt\in\frac{\be}{\bh}\mathcal{T}_{\bh}}Y^{(2)}_{\bt}\biggr|>\frac{\lambda}{\sqrt{\log(n)}\log\log(n)}\biggr)\\
	&~\leq\mathbb{P}\biggl(\biggl|\max_{\bt\in\frac{\be}{\bh}\mathcal{T}_{\bh}}Y^{(1)}_{\bt}-\mathbb{E}\Bigl[\max_{\bt\in\frac{\be}{\bh}\mathcal{T}_{\bh}}Y^{(1)}_{\bt}\Bigr]\biggr|>\frac{\lambda}{2\sqrt{\log(n)}\log\log(n)}\biggr)\\
	&~~+\mathbb{P}\biggl(\biggl|\max_{\bt\in\frac{\be}{\bh}\mathcal{T}_{\bh}}Y^{(2)}_{\bt}-\mathbb{E}\Bigl[\max_{\bt\in\frac{\be}{\bh}\mathcal{T}_{\bh}}Y^{(2)}_{\bt}\Bigr]\biggr|>\frac{\lambda-2C_{1,2}}{2\sqrt{\log(n)}\log\log(n)}\biggr).
	\end{align}
	For $j=1,2$, Var$(Y_{\bt}^{(j)})\leq 2L(nh_{\min})^{-2\gamma}\leq 2L\log\log(n)^2/(\log(n)^3C_{\min}),$ an application of  Borell's inequality to each of the two terms in \eqref{C12} yields 
	\begin{align*}
	&\mathbb{P}\biggl(\biggl|\max_{\bt\in\frac{\be}{\bh}\mathcal{T}_{\bh}}Y^{(1)}_{\bt}-\max_{\bt\in\frac{\be}{\bh}\mathcal{T}_{\bh}}Y^{(2)}_{\bt}\biggr|>\frac{\lambda}{\sqrt{\log(n)}\log\log(n)}\biggr)\leq4\exp\biggl(-\frac{\log(n)^2C_{\min}}{16L\log\log(n)^4}\biggr)=o(n^{-\kappa}).
	\end{align*}
	We can use the latter result to show that $Y_{\bt}^{(1)}(\bh)$ and $Y_{\bt}^{(2)}(\bh)$ are close to each other, uniformly with respect to $\bh\in \mathcal{H}^d$.
	\begin{align*}
	&\mathbb{P}\biggl(\biggl|\max_{\bh\in\mathcal{H}^d}\max_{\bt\in\frac{\be}{\bh}\mathcal{T}_{\bh}}Y^{(1)}_{\bt}(\bh)-\max_{\bh\in\mathcal{H}^d}\max_{\bt\in\frac{\be}{\bh}\mathcal{T}_{\bh}}Y^{(2)}_{\bt}(\bh)\biggr|>\frac{\lambda}{\sqrt{\log(n)}\log\log(n)}\biggr)\\
	&~\leq \mathbb{P}\biggl(\max_{\bh\in\mathcal{H}^d}\biggl|\max_{\bt\in\frac{\be}{\bh}\mathcal{T}_{\bh}}Y^{(1)}_{\bt}(\bh)-\max_{\bt\in\frac{\be}{\bh}\mathcal{T}_{\bh}}Y^{(2)}_{\bt}(\bh)\biggr|>\frac{\lambda}{\sqrt{\log(n)}\log\log(n)}\biggr)\\
	&~\leq\sum_{\bh\in\mathcal{H}^d}\mathbb{P}\biggl(\biggl|\max_{\bt\in\frac{\be}{\bh}\mathcal{T}_{\bh}}Y^{(1)}_{\bt}(\bh)-\max_{\bt\in\frac{\be}{\bh}\mathcal{T}_{\bh}}Y^{(2)}_{\bt}(\bh)\biggr|>\frac{\lambda}{\sqrt{\log(n)}\log\log(n)}\biggr)=o(1)\quad\text{as}\quad n\to\infty.
	\end{align*}
	In the same way, an application of Borell's inequality yields a tail bound for  $\max_{\bt\in\frac{\be}{\bh}\mathcal{T}_{\bh}}Y^{(2)}_{\bt}(\bh)$  	which shows that $\max_{\bh\in\mathcal{H}^d}\max_{\bt\in\frac{\be}{\bh}\mathcal{T}_{\bh}}Y^{(2)}_{\bt}(\bh)=O_{\mathbb{P}}\bigl((\log(n)\log\log(n))^{-\frac{1}{2}}\bigr).$ The assertion of the Lemma now follows.	
\end{proof}

\subsection{Gaussian Coupling}

\begin{Lemma}\label{Le:coupling} Suppose that Assumptions \ref{Ass:Dict} and \ref{Ass:Noise} hold.
	Then,  
	\begin{align*}
	\lim_{n\to\infty} \mathbb{P}_{0}\Bigl(\mathcal{S}(Y)\leq q_{1-\alpha}\Bigr)\geq 1-\alpha,
	\end{align*} 
	where $q_{1-\alpha}$ is such that $\mathbb P_{0} \left[\mathcal S \left(Y\right) > q_{1-\alpha}\right] \leq \alpha$.	
\end{Lemma}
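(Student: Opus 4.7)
The plan is to deduce the lemma from a one-line Kolmogorov-distance approximation
\[
\sup_{q\in\mathbb{R}}\bigl|\mathbb{P}_0(\mathcal{S}(Y)\leq q)-\mathbb{P}_0(\mathcal{S}(W)\leq q)\bigr|\longrightarrow 0,
\]
evaluated at $q=q_{1-\alpha}$. Because $\mathcal{S}(W)$ is almost surely bounded (Theorem \ref{GWSMain}) and, by the same theorem, its upper tail is sandwiched between Gumbel tails, $\mathcal{S}(W)$ is continuous at $q_{1-\alpha}$; hence such an approximation immediately yields $\lim_n\mathbb{P}_0(\mathcal{S}(Y)\leq q_{1-\alpha})=\mathbb{P}_0(\mathcal{S}(W)\leq q_{1-\alpha})\geq 1-\alpha$.

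First I would reduce the problem to centered sums. Under $H_0=H_{I_N,n}$ we have $\langle Tf,\Phi_i\rangle=0$ for every $i\in I_N$, so
\[
\langle Y,\Phi_i\rangle_n = \frac{1}{n^d}\sum_{\bj\in I_n^d}\xi_{\bj}\Phi_i(\bx_{\bj})+\bigl(\langle\mathbb{E}[Y],\Phi_i\rangle_n-\langle Tf,\Phi_i\rangle\bigr).
\]
Dividing by $\sigma_i\asymp n^{-d/2}\|\Phi_i\|_2$ (by \eqref{Moments2}) and multiplying by $\omega_i=O(\sqrt{\log n})$, assumption \eqref{bias} renders the deterministic bias contribution $o_{\mathbb{P}}(\log(n)^{-1}\log\log(n)^{-2})$ uniformly in $i$. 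It therefore suffices to analyze $\max_{i\in I_N}\omega_i(V_i-\omega_i)$ with $V_i:=(\sigma_i n^d)^{-1}\sum_{\bj}\xi_{\bj}\Phi_i(\bx_{\bj})$.

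Second, I would invoke the high-dimensional Gaussian coupling of \citet{CheCheKat2014}. The Cram\'er condition \eqref{Moments} combined with \eqref{Moments2} supplies the sub-exponential envelope; \eqref{growth} yields $\log N=O(\log n)$; and the lower bound in \eqref{R1}, together with the boundedness of $\Phi_{\bh_i}$ in Assumption \ref{Ass:Dict}(d), controls the Lyapunov-type ratio $\max_{i,\bj}|\Phi_i(\bx_{\bj})|/(n^{d/2}\|\Phi_i\|_2)$ by a sufficiently negative power of $n$. These ingredients produce, on an enlarged probability space, independent standard normals $(\zeta_{\bj})_{\bj\in I_n^d}$ such that
\[
\max_{i\in I_N}\bigl|V_i-\tilde V_i\bigr|=o_{\mathbb{P}}\bigl(\log(n)^{-1}\bigr),\qquad \tilde V_i:=\frac{1}{\sigma_i n^d}\sum_{\bj\in I_n^d}\sigma(\bx_{\bj})\zeta_{\bj}\Phi_i(\bx_{\bj}),
\]
which survives multiplication by $\omega_i$. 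Third, I pass from $\tilde V_i$ to an integral against $W$: Lemma \ref{Le:Loc} shows that replacing $\sigma(\bx_{\bj})$ by $\sigma(\bt_i)$ in $\tilde V_i$ costs only $o_{\mathbb{P}}(\log(n)^{-1/2})$ (so that $\sigma(\bt_i)$ cancels with $\sigma_i\sim n^{-d/2}\sigma(\bt_i)\|\Phi_i\|_2$), and Lemma \ref{Le:SumInt} couples the resulting normed sum $n^{-d/2}\|\Phi_i\|_2^{-1}\sum_{\bj}\zeta_{\bj}\Phi_i(\bx_{\bj})$ to $\|\Phi_i\|_2^{-1}\int\Phi_i(\bz)\,dW_{\bz}$ with the same negligible rate. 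A stationarity change of variables then identifies the limiting expression with $S(W,i)$ in \eqref{scan3}, giving $\mathcal{S}(Y)=\mathcal{S}(W)+o_{\mathbb{P}}(1)$ and thus the stated approximation.

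The delicate step is the coupling rate in CCK: because $\omega_i\asymp\sqrt{\log n}$, the raw maximum-of-sums coupling error must decay strictly faster than $\log(n)^{-1}$ so that it is not blown up by the penalty, and one then needs an anti-concentration statement for $\mathcal{S}(W)$ near $q_{1-\alpha}$ to convert a Kolmogorov-distance bound into genuine probabilistic control. The poly-logarithmic exponent $15/d\vee 3$ on $h_{\min}$ in \eqref{R1} is calibrated precisely so that the sub-exponential summands $\xi_{\bj}\Phi_i(\bx_{\bj})$ satisfy the ratio and envelope hypotheses of \citet{CheCheKat2014} with a coupling rate that is faster than $1/\log n$ uniformly over the full multiscale dictionary $\mathcal{W}$; this, rather than any step involving $Y$ directly, is where the assumptions are binding.
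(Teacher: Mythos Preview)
Your proposal assembles the right ingredients---the Chernozhukov--Chetverikov--Kato coupling, Lemma~\ref{Le:Loc} for localization, Lemma~\ref{Le:SumInt} for the sum-to-integral passage---but the architecture differs from the paper's in a way that matters, and as written there is a gap.

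The central issue is your claim that CCK produces $\max_{i\in I_N}|V_i-\tilde V_i|=o_{\mathbb{P}}(\log(n)^{-1})$. Corollary~4.1 of \citet{CheCheKat2014}, which is what the paper invokes, couples only the \emph{maximum} $\max_i V_i$ to $\max_i\tilde V_i$, not each coordinate. Because the calibration weights $\omega_i=\omega_{\bh_i}$ vary with the scale, a coupling of the unweighted maxima does not transfer to $\mathcal{S}(Y)=\max_i\omega_i(V_i-\omega_i)$: the index realizing the outer maximum can be different for $V$ and for $\tilde V$, and the $i$-dependent affine shifts $x\mapsto\omega_i x-\omega_i^2$ break the argument. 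Your chain therefore stalls at the point where you multiply the coupling error by $\omega_i$.

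The paper circumvents this by a structural detour you do not mention. In Step~II.1--II.2 it first restricts to a dyadic grid of scales, then partitions the rescaled position space into blocks $B_{\bk}$ chosen so that different blocks use disjoint design points (hence are independent, since $\Phi$ has compact support). Within each block the scale-dependent thresholds are replaced by a single number $\mathbf{q}_{\bk}=\min_{\bp\geq\bk}(q_{1-\alpha}/\omega_{\bh_{\bp}}+\omega_{\bh_{\bp}})$, so the event becomes $\{\max_{(\bt,\bp)}\,(\text{unweighted sum})\leq\mathbf{q}_{\bk}\}$, to which the CCK max-coupling applies directly. The block-wise Gaussian copies are different for each $\bk$, but independence across blocks lets the product $\prod_{\bk}(1-\text{error}_{\bk})$ be controlled; this is also why the paper only proves the one-sided bound $\liminf\mathbb{P}_0(\mathcal{S}(Y)\leq q_{1-\alpha})\geq 1-\alpha$ rather than the full Kolmogorov statement you aim for. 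Lemma~\ref{Le:SumInt} is then applied inside each block. The explicit bounds on $B_1,B_2,B_4$ in the paper's Step~I, together with the choice $\Delta_n=(\log n\cdot\log\log n)^{-1/2}$, are what make the $h_{\min}$ exponent in \eqref{R1} just barely sufficient---your intuition about this is right, but the accounting is done per block, not globally.

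A version of your direct route can be salvaged if you appeal instead to the hyperrectangle Berry--Esseen bounds in the CCK framework (approximating $\mathbb{P}(V_i\leq q/\omega_i+\omega_i\ \forall i)$ directly), but that is not the coupling statement you wrote down, and the paper does not take this path. Minor point: you invoke \eqref{bias}, which is not among the hypotheses of the lemma; the paper's proof simply treats $\mathbb{P}_0$ as the pure-noise law $Y_{\bj}=\xi_{\bj}$.
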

\begin{proof}[Proof of Lemma \ref{Le:coupling}]  ~\\ 
	\textbf{Step I: Gaussian coupling}~\\Define 
	\begin{align}\label{X}
	\Xi_i:=\frac{\Phi_i}{n^{\frac{d}{2}}\|\sigma \Phi_i\|_2},\;\;
	X_{\bj}:=\bigl(\xi_{\bj}(\sigma\Xi_i)(\bx_{\bj})\bigr)_{i=1}^{N}\;\;
	S_{i,n}(\xi)=\sum_{\bj\in I_n^d}X_{\bj,i},\;\;
	\text{and}\;\; X:=\max_{1\leq i\leq N}S_{i,n}(\xi).
	\end{align}
	Consider a field of normally distributed random variables $\{\zeta_{\bj}\sim\mathcal{N}(0,\sigma^2(\bx_{\bj}))\,|\,\bj\in I_n^d\}$ and,
	as in \eqref{X}, define the quantities $Z_{\bj}:=\bigl(\zeta_{\bj}\sigma(\bx_{\bj})\Xi_i(\bx_{\bj})\bigr)_{i=1}^{N}$ and $Z:=\max_{1\leq i\leq N}S_{i,n}(\zeta).$
	By Corollary 4.1 in \citet{CheCheKat2014} it follows that 
	\begin{align}\label{Chernoresult}
	\mathbb{P}\bigl(|X-\widetilde{Z}|>16\Delta\bigr)
	\leq \frac{\bigl\{B_1+\Delta^{-1}(B_2+B_4)(d\vee \kappa)\log(n)\bigr\}(d\vee \kappa)\log(n)}{\Delta^{2}}+\frac{\Delta\log(n)}{n^d},
	\end{align}
	for some $\widetilde Z\stackrel{\mathcal{D}}{=}Z$ and all $\Delta>0$, where $B_1,$ $B_2$ and $B_4$ are defined as follows:
	\begin{align*}
	B_1=\mathbb{E}\biggl[\max_{1\leq i,l\leq N}\biggl|\sum_{\bj\in I_n^d} X_{\bj,i}X_{\bj,l}-\mathbb{E}[X_{\bj,k}X_{\bj,l}]\biggr|\biggr],\quad B_2=\mathbb{E}\biggl[\max_{1\leq i\leq N}\sum_{\bj\in I_n^d}|X_{\bj,i}|^3\biggr]
	\end{align*}  
	and 
	\begin{align*}
	B_4=\sum_{\bj\in I_n^d}\mathbb{E}\bigg[\max_{1\leq i\leq N} |X_{\bj,i}|^3 I\bigg\{\max_{1\leq i\leq N}|X_{\bj,i}|\geq \frac{\Delta}{(d\vee \kappa)\log(n)}\bigg\}\bigg].
	\end{align*}  
	Define
	\begin{align*}
	U_{\bj;l,i}:=X_{\bj,i}X_{\bj,l}-\mathbb{E}[X_{\bj,i}X_{\bj,l}] = (\xi_{\bj}^2-\sigma^2(\bx_{\bj}))\frac{\Phi_l(\bx_{\bj})\Phi_i(\bx_{\bj})}{n^d\|\sigma\Phi_l\|_2\|\sigma\Phi_i\|_2},
	\end{align*}
	\begin{align*}
	B_{i,l}:=\frac{\max_{\bj\in I_n^d}|\Phi_l(\bx_{\bj})\Phi_i(\bx_{\bj})|}{n^d\|\sigma \Phi_i\|_2\|\sigma\Phi_l\|_2}\leq\frac{\|\Phi\|_{\infty}^2}{c_{\sigma}^{2}(nh_{\min})^d}
	\quad\text{and}\quad
	v_{\bj;i,l}:=M\frac{\Phi_l^2(\bx_{\bj})\Phi_i^2(\bx_{\bj})}{n^{2d}\|\sigma \Phi_i\|_2^2\|\sigma\Phi_l\|_2^2}.
	\end{align*} 
	By the structure of the dictionary \eqref{Wavelet} and boundedness of $\Phi$ by Assumption \ref{Ass:Dict} d), we immediately obtain $B_{i,l}\leq\frac{\|\Phi\|_{\infty}^2}{c_{\sigma}^{2}(nh_{\min})^d},$ where $c_{\sigma}^{2}>0$  is a uniform lower bound on $\sigma^2$ which exists by Assumption \eqref{Moments2}.
	By the moment condition  \eqref{Moments} we obtain
	$
	\mathbb{E}|U_{\bj;i,l}|^J\leq  \frac{1}{2}J!\,v_{\bj;i,l}\cdot B_{i,l}^{J-2},\; J\geq 2.$
	Hence, we can bound $B_1$ using the Bernstein inequality \citep[cf.][Lemma 2.2.11]{VanWel1996} as follows. Using
	$
	\sum_{\bj\in I_n^d}v_{\bj;l,i}\leq2\frac{\|\Phi\|_{\infty}^4}{c_{\sigma}^{4}(nh_{\min})^d}
	$
	and
	\begin{align}\label{Bernstein}
	\mathbb{P}\biggl(\Bigl|\sum_{\bj\in I_n^d}U_{\bj;l,i}\Bigr|>x\biggr)\leq2\exp\biggl(-\tfrac{x^2}{2\sum_{\bj\in I_n^d}v_{\bj;l,i}+2xB_{i,l}}\biggr)\leq2\exp\biggl(-\tfrac{x^2(nh_{\min})^d}{4c_{\sigma}^{-4}\|\Phi\|_{\infty}^4+2x4c_{\sigma}^{-2}\|\Phi\|_{\infty}^2}\biggr),
	\end{align}
	we find, for an arbitrary constant $\mathcal{C}$,
	\begin{align*}
	B_1&\leq \mathcal{C}\sqrt{\log(n)}\biggl(\sum_{\bj\in I_n^d}v_{\bj;i,l}\biggr)^{\frac{1}{2}}+ N^2\max_{i,l}\mathbb{E}\biggl[\Big|\sum_{\bj\in I_n^d}U_{\bj;l,i}I\bigg\{\Big|\sum_{\bj\in I_n^d}U_{\bj;l,i}\Big|>\mathcal{C}\sqrt{\log(n)}\biggl(\sum_{\bj\in I_n^d}v_{\bj;i,l}\biggr)^{\frac{1}{2}}\bigg\}\Big|\biggr]\\
	&\leq O\biggl(\Bigl(\frac{1}{nh_{\min}}\Bigr)^{\frac{d}{2}}\sqrt{\log(n)}\biggr)+N^2\max_{i,l}\int_{\mathcal{C}\sqrt{\log(n)}\bigl(\sum_{\bj\in I_n^d}v_{\bj;i,l}\bigr)^{\frac{1}{2}}}^{\infty} \mathbb{P}\biggl(\biggl|\sum_{\bj\in I_n^d}U_{\bj;l,i}\biggr|>x\biggr)\,dx\\
	&+N^2\mathcal{C}\max_{i,l}\sqrt{\log(n)}\biggl(\sum_{\bj\in I_n^d}v_{\bj;i,l}\biggr)^{\frac{1}{2}}\mathbb{P}\biggl(\biggl|\sum_{\bj\in I_n^d}U_{\bj;l,i}\biggr|>\mathcal{C}\sqrt{\log(n)}\biggl(\sum_{\bj\in I_n^d}v_{\bj;i,l}\biggr)^{\frac{1}{2}}\biggr).
	\end{align*}
	For a sufficiently large choice of the constant $\mathcal{C}$, by \eqref{Bernstein}   we  find 
	\begin{align}\label{B1}
	B_1=O\biggl(\Bigl(\frac{1}{nh_{\min}}\Bigr)^{\frac{d}{2}}\sqrt{\log(n)}\biggr).
	\end{align}
	Recall that by assumption, the support of  $\Phi$ is contained in $[0,1]^d$ and $\Phi$ is uniformly bounded. Then, the moment assumptions \eqref{Moments} and \eqref{Moments2} imply
	\begin{align}\label{B2}
	B_2=\mathbb{E}\biggl[\max_{1\leq i\leq N}\sum_{\bj\in n[\bt_i-\bh_i,\bt_i]}|X_{\bj,i}|^3\biggr]\leq\frac{C}{(nh_{\min})^{\frac{d}{2}}}.
	\end{align} 
	For the estimation of the term $B_4$ we make use of the moment assumption \eqref{Moments} once more and find
	\begin{align*}
	\mathbb{E}[\exp(\theta|\xi_{\bj}^2-\sigma^2(\bx_{\bj})|)]\leq\exp\biggl(\theta\mathbb{E}|\xi_{\bj}^2-\sigma^2(\bx_{\bj})|+\frac{\theta^2\mathbb{E}|\xi_{\bj}^2-\sigma^2(\bx_{\bj})|^2}{2(1-\theta)}\biggr)\quad\text{for}\quad \theta\in[0,1).
	\end{align*}  
	Hence, for $\theta=0.5,$
	\begin{align}\label{MGF}
	\mathbb{E}[\exp(\theta|\xi_{\bj}^2-\sigma^2(\bx_{\bj})|)]\leq e^{\frac{3}{4}M},
	\end{align} 
	where the constant $M$ is a uniform upper bound on the fourth moments of the $\xi_{\bj}$ which exists due to  \eqref{Moments2}.
	We further obtain
	\begin{align*}
	B_4&\leq\frac{C}{(nh_{\min})^{\frac{3d}{2}}}\biggl(\sum_{\bj\in I_n^d}\mathbb{E}\Bigl[I\Bigl\{|\xi_{\bj}^2-\sigma^2(\bx_{\bj})|+\sigma^2(\bx_{\bj})>\Bigl(\frac{\Delta(nh_{\min})^{d/2}}{4\kappa\log(n)}\Bigr)^2\Bigr\}\Bigr]\biggr)^{\frac{1}{2}}\\
	&\leq\frac{Cn^d}{(nh_{\min})^{\frac{3d}{2}}}\biggl(\exp\Bigl(-\frac{c_{\sigma^2}}{16}(nh_{\min})^2\Delta^2/\log(n)^2\Bigr)\biggr)^{\frac{1}{2}}.
	\end{align*} 
	\textbf{Step II: The level is maintained}~\\
	Define
	\begin{align*}
	P_{0,n}:=\mathbb{P}_{0}\Bigl(\mathcal{S}(Y)\leq q_{1-\alpha}\Bigr)=\mathbb{P}_{0}\bigl(\langle Y,\Phi_i\rangle_n\leq q_{i,1-\alpha}\;\forall \;1\leq i\leq N\bigr)
	\end{align*}
	We now show that 
	$\lim_{n\to\infty} P_{0,n}\geq 1-\alpha$.
	We have that
	\begin{align*}
	P_{0,n}&=\mathbb{P}_{0}\biggl(\langle Y,\Phi_i\rangle_n\leq\sigma_i \Bigl(\tfrac{q_{1-\alpha}}{\omega_i}+\omega_i\Bigr)\;\forall \;1\leq i\leq N\biggr)\\
	&\geq\mathbb{P}_{0}\biggl(\langle Y,\Phi_i\rangle_n\leq \Bigl(\tfrac{q_{1-\alpha}}{\omega_i}+\omega_i\Bigr)\bigl(1-|r_n|\bigr)\;\forall \;1\leq i\leq N\biggr)\\
	&=\mathbb{P}_{0}\biggl(\tfrac{1}{n^{\frac{d}{2}}\|\Phi_{\bh_i}\|_2\sqrt{\bh_{i}^{\be}}}\sum_{\bj\in I_n^d}\xi_{\bj}\Phi_i(\bx_{\bj})\leq \Bigl(\tfrac{q_{1-\alpha}}{\omega_i}+\omega_i\Bigr)\bigl(1-|r_n|\bigr)\;\forall \;1\leq i\leq N\biggr),
	\end{align*}
	for $r_n=o(1/\log(n)),$ since $n^d\|\Phi_i\sigma\|_2^2=\mathrm{Var}(\langle Y,\Phi_i\rangle_n)+o(1/\log(n)),$ uniformly in $i.$\\
	\textbf{Step II.1: Reduction of the set of scales}~\\
	We now separate the variables $\bt_i$ and $\bh_{i}$ and proceed as in Step I in the proof of Theorem \ref{Thm:Gumbel} and consider only scales on a dyadic grid $\mathcal{H}_{\mathrm{dyad}}^d$.
	We find
	\begin{align*}
	P_{0,n}&\geq\mathbb{P}_{0}\biggl(\tfrac{1}{n^{\frac{d}{2}}\|\Phi_{\bh}\|_2\sqrt{\bh^{\be}}}\sum_{\bj\in I_n^d}\xi_{\bj}\Phi_{\bh}\Bigl(\frac{\bt-\bx_{\bj}}{\bh}\Bigr)\leq \Bigl(\tfrac{q_{1-\alpha}}{\omega_{\bh}}+\omega_{\bh}\Bigr)\bigl(1-|r_n|\bigr)\;\forall (\bt,\bh)\in\mathcal{T}_{\bh}\times\mathcal{H}_{\mathrm{dyad}}^d\biggr)\\
	&=\mathbb{P}_{0}\biggl(\tfrac{1}{n^{\frac{d}{2}}\|\Phi_{\bh_{\bp}}\|\sqrt{\bh_{\bp}^{\be}}}\sum_{\bj\in I_n^d}\xi_{\bj}\Phi_{\bh_{\bp}}\Bigl( \frac{\bt-\bx_{\bj}}{\bh_{\bp}}\Bigr)\leq \Bigl(\tfrac{q_{1-\alpha}}{\omega_{\bh_{\bp}}}+\omega_{\bh_{\bp}}\Bigr)\bigl(1-|r_n|\bigr)\;\forall (\bt,\bp)\in\mathcal{T}_{\bh}\times\mathcal{P}^d\biggr),
	\end{align*}
	where $\bh_{\bp}=(h_{p_1},\ldots, h_{p_d}).$
	Let further $\frac{\be}{\bh}\mathcal{T}_{\bh}$ denote the set $\{(\bt/\bh)\,|\,\bt\in\mathcal{T}_{\bh}\}$. Then
	\begin{align*}
	P_{n,0}\geq\mathbb{P}_{0}\biggl(\tfrac{1}{n^{\frac{d}{2}}\|\Phi_{\bh_{\bp}}\|\sqrt{\bh_{\bp}^{\be}}}\sum_{\bj\in I_n^d}\xi_{\bj}\Phi_{\bh_{\bp}}\Bigl(\bt- \frac{\bx_{\bj}}{\bh_{\bp}}\Bigr)\leq \Bigl(\tfrac{q_{1-\alpha}}{\omega_{\bh_{\bp}}}+\omega_{\bh_{\bp}}\Bigr)\bigl(1-|r_n|\bigr)\;\forall (\bt,p,q)\in\frac{\be}{\bh}\mathcal{T}_{\bh}\times\mathcal{P}^d\biggr),
	\end{align*}
	where, as usual, $\frac{\bx_{\bj}}{\bh_{\bp}}$ denotes component-wise division. Replace each set
	\begin{align*}
	\tfrac{\be}{\bh}\mathcal{T}_{\bh}\quad\text{by}\quad \tfrac{\be}{\bh}\widetilde{\mathcal{T}}_{\bh}:=\bigcup_{\widetilde{\bh}\leq\bh}\tfrac{\be}{\bh}\mathcal{T}_{\bh},
	\end{align*}
	where the inequality $\widetilde{\bh}\leq\bh$ is meant  component-wise. It follows that
	\begin{align*}
	P_{0,n}\geq\mathbb{P}_{0}\biggl(\tfrac{1}{n^{\frac{d}{2}}\|\Phi_{\bh_{\bp}}\|\sqrt{\bh_{\bp}^{\be}}}\sum_{\bj\in I_n^d}\xi_{\bj}\Phi_{\bh_{\bp}}\Bigl(\bt- \frac{\bx_{\bj}}{\bh_{\bp}}\Bigr)\leq \Bigl(\tfrac{q_{1-\alpha}}{\omega_{\bh_{\bp}}}+\omega_{\bh_{\bp}}\Bigr)\bigl(1-|r_n|\bigr)\;\forall (\bt,p,q)\in\frac{\be}{\bh}\widetilde{\mathcal{T}}_{\bh}\times\mathcal{P}^d\biggr),
	\end{align*}
	and the total number of $(\bt,\bh)$ considered is still of polynomial order in $n$.\\
	\textbf{Step II.2: Partition of the parameter set}~\\
	We partition the parameter set for $\bt$ in the same way as in step I.1 of the proof of Theorem \ref{Thm:Gumbel}. The blocks $B_{\bp}$ are constructed such that the suprema over different blocks are independent since supp$\Phi\subset[0,1]^d.$ This yields
	\begin{align*}
	\mathbf{q}_{\bk}:=\Bigl(\tfrac{q_{1-\alpha}}{\omega_{\bh_{\bk}}}+\omega_{\bh_{\bk}}\Bigr)=\min_{\bk\leq\bp}\Bigl(\tfrac{q_{1-\alpha}}{\omega_{\bh_{\bp}}}+\omega_{\bh_{\bp}}\Bigr).
	\end{align*}
	\begin{align*}
	P_{0,n}&\geq \prod_{\bk\in\mathcal{P}^d}\mathbb{P}\biggr(\max_{\bk\leq\bp}\max_{\bt\in B_{\bk}\cap\frac{\be}{\bh}\widetilde{\mathcal{T}}_{\bh}}\tfrac{1}{n^{\frac{d}{2}}\|\Phi_{\bh_{\bp}}\|\sqrt{\bh_{\bp}^{\be} }}\sum_{\bj\in I_n^d}\xi_{\bj}\Phi_{\bh_{\bp}}\Bigl( \bt-\frac{\bx_{\bj}}{\bh_{\bp}}\Bigr)\leq \mathbf{q}_{\bk}\bigl(1-|r_n|\bigr)\biggr)\\
	&\geq\prod_{\bk\in\mathcal{P}^d}\mathbb{P}\biggr(\max_{\bt\in B_{\bk}\cap\frac{\be}{\bh}\widetilde{\mathcal{T}}_{\bh}}\max_{\bp\in\mathcal{P}^d}\tfrac{1}{n^{\frac{d}{2}}\|\Phi_{\bh_{\bp}}\|\sqrt{\bh_{\bp}^{\be}}}\sum_{\bj\in I_n^d}\xi_{\bj}\Phi_{\bh_{\bp}}\Bigl( \bt-\frac{\bx_{\bj}}{\bh_{\bp}}\Bigr)\leq \mathbf{q}_{\bk}\bigl(1-|r_n|\bigr)\biggr).
	\end{align*}
	By the results of Step I, for each fixed multi-index $\bk$, we can now replace the variables $\xi_{\bj}$ by Gaussian ones. Each set of Gaussian random variables depends on $\bk$, but the corresponding distributions do not. By \eqref{Chernoresult}, it follows that
	\begin{align*}
	P_{0,n}&\geq\prod_{\bk\in\mathcal{P}^d}\biggl\{\mathbb{P}\biggl(\max_{\bt\in B_{\bk}\cap\frac{\be}{\bh}\widetilde{\mathcal{T}}_{\bh}}\max_{\bp\in\mathcal{P}^d}\tfrac{1}{n^{\frac{d}{2}}\|\Phi\|\sqrt{\bh_{\bp}^{\be}}}\sum_{\bj\in I_n^d}\zeta_{\bj}\Phi\Bigl( \bt-\frac{\bx_{\bj}}{\bh_{\bp}}\Bigr)>\mathbf{q}_{\bk}\bigl(1-|r_n|\bigr)-\Delta_n\biggr)\\
	&\times\biggl(1-C|\mathcal{P}|^d\Delta_n^{-2}\bigl\{B_1+\Delta_n^{-1}(B_2+B_4)\log(n)\bigr\}\log(n)+\frac{|\mathcal{P}|^d\Delta_n\log(n)}{n^d}\biggr)\biggr\}.
	\end{align*}
	By assumption, $h_{\min}\gtrsim n^{-1}\log(n)^{\frac{15}{d}\vee3}\log\log(n)^2$. Hence, choosing $\Delta_n=1/(\sqrt{\log(n)\log\log(n)})$, the results from Step I give, uniformly in $\bk$
	\begin{align*}
	|\mathcal{P}|^d\Delta_n^{-2}\bigl\{B_1+\Delta_n^{-1}(B_2+B_4)\log(n)\bigr\}\log(n)+\frac{|\mathcal{P}|^d\Delta_n\log(n)}{n^d}=O\biggl(\frac{1}{\log(n)^2\sqrt{\log(\log(n))}}\biggr).
	\end{align*}
	This yields
	\begin{align*}
	P_{0,n}&\geq\biggl\{\prod_{\bk\in\mathcal{P}^d}\mathbb{P}\biggl(\max_{\bt\in B_{\bk}\cap\frac{\be}{\bh}\widetilde{\mathcal{T}}_{\bh}}\max_{\bp\in\mathcal{P}^d}\tfrac{1}{n^{\frac{d}{2}}\|\Phi_{\bh_{\bp}}\|\sqrt{\bh_{\bp}^{\be}}}\sum_{\bj\in I_n^d}\zeta_{\bj}\Phi\Bigl( \bt-\frac{\bx_{\bj}}{\bh_{\bp}}\Bigr)\leq\mathbf{q}_{\bk}\bigl(1-|r_n|\bigr)-\Delta_n\biggr)\biggr\}\\
	&~\times\biggl(1-\frac{1}{\sqrt{\log\log(n)}|\mathcal{P}|^d}\biggr)^{|\mathcal{P}|^d}.
	\end{align*}
	Using Lemma \ref{Le:SumInt}, we replace the sum by the corresponding Wiener integral
	\begin{align*}
	P_{0,n}&\geq(1-o(1))\prod_{\bk\in\mathcal{P}^d}\mathbb{P}\biggl(\max_{\bt\in B_{\bk}}\int\Xi(\bt-\bz)\,\mathrm{d}W_{\bz}\leq\mathbf{q}_{\bk}\bigl(1-|r_n|\bigr)-o(1/\sqrt{\log(n)})\biggr),
	\end{align*}	
	where $\Xi_{\bp}=\Phi_{\bp}/\|\Phi_{\bp}\|.$ Finally,
	\begin{align*}
	P_{0,n}&\geq(1-o(1))\mathbb{P}\biggl(\max_{\bk\in\mathcal{P}^d}\omega_{\bk}\biggl(\max_{\bt\in B_{\bk}}\int\Xi(\bt-\bz)\,\mathrm{d}W_{\bz}-\omega_{\bk}\biggr)\leq q_{1-\alpha}-o(1)\biggr)\\
	&\geq(1-o(1))\mathbb{P}\biggl(\max_{\bk\in\mathcal{P}^d}\omega_{\bk}\biggl(\max_{\bt\in [\be,\be/\bh_{\bk}]}\int\Xi_{\bp}(\bt-\bz)\,\mathrm{d}W_{\bz}-\omega_{\bk}\biggr)\leq q_{1-\alpha}-o(1)\biggr),
	\end{align*}	
	which implies
	$
	\lim_{n\to\infty}P_{0,n}\geq1-\alpha.
	$
\end{proof}

\subsection{Unbounded Support}
\begin{Lemma}\label{Lemma:Unbounded}
	If $\Xi$ is of unbounded support but decays sufficiently fast, i.\,e., 
	\begin{align}\label{decay}
	\int(1+\|\bz\|^2)^{\frac{1}{2}}\Xi^2(\bz)\,d\bz<\infty,\tag{Dec}
	\end{align} 
	the results of  Theorem   \ref{GWSMain} still hold.
\end{Lemma}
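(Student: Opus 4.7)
My plan is to reduce to the compact-support case of Theorem \ref{GWSMain} via a truncation argument, controlling the tail contribution with the decay hypothesis \eqref{decay}.

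First I would choose a slowly growing truncation radius $R = R_n$ and split $\Xi = \Xi_R + \Xi^R$, where $\Xi_R(\bz) := \Xi(\bz)\mathbf{1}\{\|\bz\|_\infty \leq R\}$ and $\Xi^R := \Xi - \Xi_R$. By linearity of the Wiener integral, the Gaussian field $Z_{\bt,\bh} := \int \Xi\bigl((\bt-\bz)/\bh\bigr)\,dW_{\bz}/\sqrt{\bh^{\be}}$ splits additively as $Z_{\bt,\bh} = Z_{\bt,\bh}^{(R)} + Z_{\bt,\bh}^{(>R)}$. The strategy is then to apply (a mild modification of) Theorem \ref{GWSMain} to the compactly supported part and to show that the remainder is negligible on the scale of $\omega_{\bh}^{-1}$.

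For the truncated field $Z^{(R)}$, the function $\Xi_R/\|\Xi_R\|_2$ is supported on $[-R,R]^d$. The proof of Theorem \ref{GWSMain} can be repeated verbatim after inflating the separating buffers $R_{\bp}$ in \eqref{partition1} from width $1$ to width $2R_n$, which is what is needed to preserve block-independence of the local suprema. Since $R_n$ will be chosen to grow only polylogarithmically in $n$, while the block sidelengths $1/\bh_{\bp}^{\be}$ grow at least as fast as $n^{\delta}$, the Lebesgue-measure ratio of buffers to blocks still tends to zero and the Borel-inequality argument of Step I.3 in the proof of Theorem \ref{Thm:Gumbel} carries through, yielding both almost sure boundedness and the lower Gumbel bound for $Z^{(R)}$.

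For the remainder $Z^{(>R)}$, the decay assumption \eqref{decay} gives the variance control
\[
\mathrm{Var}\bigl(Z_{\bt,\bh}^{(>R)}\bigr) = \|\Xi^R\|_2^2 \leq \frac{1}{R}\int (1+\|\bz\|^2)^{1/2}\Xi^2(\bz)\,d\bz = O(R^{-1}).
\]
Combining Borell's inequality with a union bound over the polynomial-cardinality grid of $(\bt,\bh)$, or Dudley's entropy bound (as in \eqref{eq:Dudley}) together with a local H\"older-type estimate for $\Xi^R$, yields
\[
\sup_{\bh\in\mathcal{H}}\sup_{\bt\in\mathcal{T}_{\bh}} \omega_{\bh}\,\bigl|Z_{\bt,\bh}^{(>R)}\bigr| = O_{\mathbb{P}}\bigl(\sqrt{\log n}\,\cdot R^{-1/2}\bigr),
\]
which is $o_{\mathbb{P}}(1)$ as soon as $R_n / \log n \to \infty$. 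Choosing e.g.\ $R_n = (\log n)^2$ satisfies both requirements simultaneously, so combining the two pieces via $\sup|Z| \leq \sup|Z^{(R)}| + \sup|Z^{(>R)}|$ transfers both conclusions of Theorem \ref{GWSMain} to the unbounded-support setting.

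The main obstacle I anticipate is bookkeeping in the blocking step: the bands $R_{\bp}$ must be widened to width $2R_n$ to recover exact independence across blocks, and one has to verify that the negligibility argument for $\sup_{\bh}\sup_{\bt\in\mathcal{R}_{\bp}} \omega_{\bh}(Z_{\bt,\bh}-\omega_{\bh})\to-\infty$ still goes through with the enlarged separators. This is only a quantitative tightening of the Borell estimate in Step I.3 (the ratio of buffer to block volume degrades by a factor of $R_n^d$ which is a polylog, still dominated by the Gaussian tail $n^{-\delta_B}$), but it is the only place where the extension is not automatic.
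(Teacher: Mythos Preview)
Your approach is essentially the paper's: truncate $\Xi$ at a polylogarithmic radius, control the tail process via \eqref{decay} and a Borell/entropy bound, and rerun the blocking argument of Theorem~\ref{GWSMain} with the separating stripes widened to the truncation width (the paper uses a smooth cutoff $\chi_n$ on $[-\log(n)^3,\log(n)^3]^d$ and stripes of width $2\log(n)^3$). One arithmetic slip: your displayed bound on $\sup_{\bh,\bt}\omega_{\bh}|Z^{(>R)}_{\bt,\bh}|$ drops a factor of $\sqrt{\log n}$ coming from $\omega_{\bh}$ itself, so the correct order is $O_{\mathbb P}(\log n\cdot R^{-1/2})$ and $R_n=(\log n)^2$ is only borderline; taking $R_n=(\log n)^3$ as in the paper gives the needed $o_{\mathbb P}(1)$ with room to spare.
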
	
\begin{proof}[Proof of Lemma \ref{Lemma:Unbounded}]
	Let $\chi_n$ be a sequence of smooth  functions with supp$\chi_n=[-\log(n)^3,\log(n)^3]^d$. Define
	\begin{align*}
	\widetilde{\Xi}:=\chi_n\cdot\Xi\quad\text{and}\quad{\Xi}_{\Delta}:=\Xi-\widetilde{\Xi}
	\end{align*}
	and the corresponding process
	\begin{align*}
	\quad Z_{\Delta,\bt}:=\int\Xi_{\Delta}(\bt-\bz)\,\mathrm{d}W_{\bz}.
	\end{align*}
	Using \eqref{decay}, we obtain
	\begin{align*}
	\mathrm{Var}(Z_{\Delta,\bt})\leq\sum_{j=1}^d\int_{\{|t_j-z_j|>\log(n)^3\}}\Xi_{\Delta}(\bt-\bz)^2\,d\bz\leq\frac{C}{\log(n)^3}.
	\end{align*}
	With the same arguments as for \eqref{tailbound} in the proof of Lemma \ref{Le:SumInt} we obtain
	\begin{align*}
	\mathbb{P}\biggl(\sup_{\bt\in[\be,\be/\mathbf{h}]}Z_{\Delta,\bt}>\lambda/(\log(n)^2\log(\log(n)))\biggr)\leq C n^{-\lambda}.
	\end{align*}
	Hence, $Z_{\Delta,\bt}$ is asymptotically negligible. For the term $Z_{\Delta,\bt}$ the same arguments as in the proof of Theorem \ref{GWSMain} apply. We use the same slicing technique as in Step I.1. Then we partition the parameter set in the same way as in Step I.2 of the proof of Theorem \ref{GWSMain} separating the main blocks by stripes of widths and lengths $2\log(n)^3.$ Since $|1/h_{i}-1/h_{i+1}|\geq C/h_{\max}\gtrsim n^{\delta}$, those are still small in comparison. A careful inspection of the proof of Theorem \ref{GWSMain} shows that the same arguments apply in this case as well.
\end{proof}
\subsection{Proofs of the main results}\label{Sec:Proofs of the main reults}
\begin{proof}[Proof of Theorem \ref{Th:1}]
	The proof of this theorem is a combination of the subsequent applications of the auxiliary results of the previous subsection. The Gaussian approximation follows from Lemma \ref{Le:coupling}. The continuous approximation is valid due to Lemma \ref{Le:SumInt}. An application of Lemma \ref{Le:Loc} yields a distribution free approximation to which, finally, Theorem \ref{GWSMain} is applied to $\Xi=\Phi/\|\Phi\|_2.$
\end{proof}
\begin{proof}[Proof of Theorem \ref{Thm:Power}] Theorem \ref{Thm:Power} is a generalization of Theorems 4 and 6 in \citet{ExactScan} and some of the arguments of this proof are similar. 
	~\\
	\vspace{-0.5cm}
	\begin{itemize}
		\item[(a)] For $f\in \mathscr{S}_{\{\bt_\star\}}(\bh_\star, \mu_n)$ there exists exactly one $\bt_\star\in[\bh_\star,\be]$ such that $\mathrm{supp}(f)=[\bt_\star-\bh_\star,\bt_\star].$ In the worst case scenario, the signal is spread equally within its support because it is less likely to be detected with local tests other than $\mathcal{S}(Y,i_\star),$ i.e.
		\begin{align*}
		f_\star=\mu_nn^{-\frac{d}{2}}\frac{\|\Phi_{i}\|_2}{\|\varphi_{i}\|_1}I_{[\bt_\star-\bh_\star,\bt_\star]}.
		\end{align*}
		Recall that, throughout the proof of assertion (a), $\bh_i\equiv\bh_\star.$ Define 
		$\overline{\mathcal{N}}:=\{i\in I_N\,|\,|\bt_\star-\bt_i|\leq3\bh_\star\}.$ The set $\overline{\mathcal{N}}$ is chosen such that  for all  $i\in I_N\backslash\overline{\mathcal{N}}$,  $[\bt_\star-\bh_\star,\bt_\star]\cap[\bt_i-\bh_\star,\bt_i]=\emptyset$, which implies that $\langle f_\star,\varphi_{i}\rangle=0$.
		Define further
		\begin{align}\label{eq:DefV}
		V_{i}:=\frac{n^{\frac{d}{2}}}{\sigma(\bt_i)\|\Phi_{i}\|_2}\langle f_\star,\varphi_{i}\rangle+Z_{\bt_i,\bh_i}.
		\end{align}
		Let $q^{G}_{1-\alpha}$ denote the $(1-\alpha)$-quantile of the Gumbel limit distribution. Then
		\begin{align*}
		\mathbb{P}_{f_\star}\Big(\max_{1\leq i\leq N}\mathcal{S}(Y,i)>q_{1-\alpha}\Big)=\mathbb{P}_{f_\star}\Big(\max_{i\in I_N}\omega_{\bh_{\star}}(V_{i}-\omega_{\bh_{\star}})>q^{G}_{1-\alpha}\Big)+o(1)
		\end{align*}
		by continuity of the limit distribution. Furthermore,
		\begin{align*}
		&\mathbb{P}_{f_\star}\Big(\max_{i\in I_N}\omega_{\bh_{\star}}(V_{i}-\omega_{\bh_{\star}})>q^{G}_{1-\alpha}\Big)
		=\mathbb{P}_{f_\star}\Big(\max_{i\in \overline{\mathcal{N}}}\omega_{\bh_{\star}}(Z_{\bt_i,\bh_\star}-\omega_{\bh_{\star}})>q^{G}_{1-\alpha}\Big)\\
		&~~+\mathbb{P}_{f_\star}\Big(\max_{i\in \overline{\mathcal{N}}}\omega_{\bh_{\star}}(V_{i}-\omega_{\bh_{\star}})>q^{G}_{1-\alpha}\,\wedge\,\max_{i\in I_N\backslash\overline{\mathcal{N}}}\omega_{\bh_{\star}}(Z_{\bt_i,\bh_\star}-\omega_{\bh_{\star}})\leq q^{G}_{1-\alpha}\Big)\\
		&\geq \mathbb{P}_{f_\star}\Big(\max_{i\in I_N\backslash\overline{\mathcal{N}}}\omega_{\bh_{\star}}(Z_{\bt_i,\bh_\star}-\omega_{\bh_{\star}})>q^{G}_{1-\alpha}\Big)
		\\
		&+\mathbb{P}_{f_\star}\Big(\omega_{\bh_{\star}}(V_{i_{\star}}-\omega_{\bh_{\star}})>q^{G}_{1-\alpha}\Big)\cdot\mathbb{P}_{f_\star}\Big(\max_{i\in I_N\backslash\overline{\mathcal{N}}}\omega_{\bh_{\star}}(Z_{\bt_i,\bh_\star}-\omega_{\bh_{\star}})\leq q^{G}_{1-\alpha}\Big),
		\end{align*}
		by independence of $Z_{\bt_i,\bh_\star},i\in I_N\backslash\overline{\mathcal{N}}$ and $Z_{\bt_\star,\bh_\star}.$ 
		Let $\tau_{\overline{\mathcal{N}}}=\{\bt_i\,|\,i\in\overline{\mathcal{N}}\}.$
		Since 
		\begin{align}\label{eq:Nneg}
		\sup_{\bt\in\tau_{\overline{\mathcal{N}}}}Z_{\bt,\bh_\star}\stackrel{\mathcal{D}}{=}	\sup_{\bt\in\frac{1}{\bh_\star}\tau_{\overline{\mathcal{N}}}}Z_{\bt}\leq\sup_{\bt\in[\mathbf{0},\mathbf{3}]}Z_{\bt}=O_{\mathbb{P}}(1),
		\end{align}
		we have that
		\begin{align*}
		&\mathbb{P}_{f_\star}\Big(\max_{i\in I_N}\omega_{\bh_{\star}}(V_{i}-\omega_{\bh_{\star}})>q^{G}_{1-\alpha}\Big)
		\geq \mathbb{P}_{f_\star}\Big(\max_{i\in I_N}\omega_{\bh_{\star}}(Z_{\bt_i,\bh_\star}-\omega_{\bh_{\star}})>q^{G}_{1-\alpha}\Big)
		\\
		&+\mathbb{P}_{f_\star}\Big(\omega_{\bh_{\star}}(V_{i_\star}-\omega_{\bh_{\star}})>q^{G}_{1-\alpha}\Big)\cdot\mathbb{P}_{f_\star}\Big(\max_{i\in I_N}\omega_{\bh_{\star}}(Z_{\bt_i,\bh_\star}-\omega_{\bh_{\star}})\leq q^{G}_{1-\alpha}\Big)+o(1)\\
		&=\alpha+\mathbb{P}_{f_\star}\Big(\omega_{\bh_{\star}}(V_{i_\star}-\omega_{\bh_{\star}})>q^{G}_{1-\alpha}\Big)\cdot(1-\alpha)+o(1)
		\end{align*}
		by Theorem \ref{Thm:Gumbel}. 
		Since
		\begin{align*}
		\mathbb{P}_{f_\star}\Big(\omega_{\bh_{\star}}(V_{i_\star}-\omega_{\bh_{\star}})>q^{G}_{1-\alpha}\Big)
		=\mathbb{P}\bigg(\mathcal{N}(0,1)+\frac{\mu_n}{\sigma(\bt_\star)}>\frac{q^{G}_{1-\alpha}}{\omega_{\bh_\star}}+\omega_{\bh_\star}\bigg)
		\end{align*}
		and we conclude
		\begin{align*}
		&\mathbb{P}_{f_\star}\Big(\max_{i\in I_N}\omega_{\bh_{\star}}(V_{i}-\omega_{\bh_{\star}})>q^{G}_{1-\alpha}\Big)
		\geq\alpha+\tail\biggl(\omega_{\bh_\star}-\frac{\mu_n}{\sigma(\bt_\star)}\biggr)\cdot(1-\alpha)+o(1).
		\end{align*}
		Next we show the opposite direction, i.e.
		\begin{align*}
		\mathbb{P}_{f_\star}\Big(\max_{i\in I_N}\omega_{\bh_{\star}}(V_{i}-\omega_{\bh_{\star}})>q^{G}_{1-\alpha}\Big)
		\leq\alpha+\tail\biggl(\omega_{\bh_\star}-\frac{\mu_n}{\sigma(\bt_\star)}\biggr)\cdot(1-\alpha)+o(1).
		\end{align*}
		To this end, define
		\begin{align*}
		\mathcal{N}^{o}:=\{i\,|\,\langle\varphi_i,I_{[\bt_\star-\bh_{\star},\bt_\star]}\rangle>\|\varphi_i\|_1(1-\varepsilon_n)\},
		\end{align*}
		where $(\varepsilon_n)_{n\in\mathbb{N}}$ is a sequence of positive numbers such that $\varepsilon_n\to0$ but $\mu_n\varepsilon_n\to\infty.$
		Then, similar as in the previous step, we estimate
		\begin{align*}
		&\mathbb{P}_{f_\star}\Big(\max_{i\in I_N}\omega_{\bh_{\star}}(V_{i}-\omega_{\bh_{\star}})>q^{G}_{1-\alpha}\Big)\\
		&\leq \mathbb{P}_{f_\star}\Big(\max_{i\in I_N\backslash\overline{\mathcal{N}}}\omega_{\bh_{\star}}(Z_{\bt_i,\bh_\star}-\omega_{\bh_{\star}})>q^{G}_{1-\alpha}\Big)+\mathbb{P}_{f_\star}\Big(\max_{i\in \overline{\mathcal{N}}\backslash\mathcal{N}^o}\omega_{\bh_{\star}}(V_{i}-\omega_{\bh_{\star}})>q^{G}_{1-\alpha}\Big)
		\\
		&+\mathbb{P}_{f_\star}\Big(\max_{i\in \mathcal{N}^o}\omega_{\bh_{\star}}(V_{i}-\omega_{\bh_{\star}})>q^{G}_{1-\alpha},\max_{i\in I_N\backslash\overline{\mathcal{N}}}
		\omega_{\bh_{\star}}(Z_{\bt_i,\bh_\star}-\omega_{\bh_{\star}})\leq q^{G}_{1-\alpha}\Big)+o(1).
		\end{align*}
		As before, this gives
		\begin{align}\label{eq:PowerUpper}
		&\mathbb{P}_{f_\star}\Big(\max_{i\in I_N}\omega_{\bh_{\star}}(V_{i}-\omega_{\bh_{\star}})>q^{G}_{1-\alpha}\Big)\leq \alpha+ \mathbb{P}_{f_\star}\Big(\max_{i\in \overline{\mathcal{N}}\backslash\mathcal{N}^o}\omega_{\bh_{\star}}(V_{i}-\omega_{\bh_{\star}})>q^{G}_{1-\alpha}\Big)\notag\\
		&+(1-\alpha)\cdot \mathbb{P}_{f_\star}\Big(\max_{i\in \mathcal{N}^o}\omega_{\bh_{\star}}(V_{i}-\omega_{\bh_{\star}})>q^{G}_{1-\alpha}\Big) +o(1),
		\end{align}
		as $\overline{\mathcal{N}}$ is negligible and $(Z_{\bt_i,\bh_\star},i\in\mathcal{N}^o)$ and $(Z_{\bt_i,\bh_\star},i\in I_N\backslash\overline{\mathcal{N}})$ are independent. 
		Since $\overline{\mathcal{N}}\backslash\mathcal{N}^o\subset\overline{\mathcal{N}}$ we further obtain $\max_{i\in \overline{\mathcal{N}}\backslash\mathcal{N}^o}|Z_{\bt_i,\bh_\star}|=O_{\mathbb{P}}(1)$ by \eqref{eq:Nneg}. For all $i\in \overline{\mathcal{N}}\backslash\mathcal{N}^o$ we have $\langle\varphi_i,I_{[\bt_\star-\bh_{\star},\bt_\star]}\rangle\leq\|\varphi_i\|_1(1-\varepsilon_n)$. Hence,
		\begin{align*}
		&\mathbb{P}_{f_\star}\Big(\max_{i\in \overline{\mathcal{N}}\backslash\mathcal{N}^o}\omega_{\bh_{\star}}(V_{i}-\omega_{\bh_{\star}})>q^{G}_{1-\alpha}\Big)\leq\mathbb{P}_{f_\star}\Big(\omega_{\bh_\star}\Big(\frac{\mu_n}{\sigma(\bt_\star)}(1-\varepsilon_n)+O_{\mathbb{P}}(1)-\omega_{\bh_\star}\Big)>q^{G}_{1-\alpha}\Big)\\
		&=\mathbb{P}_{f_\star}\Big(\omega_{\bh_\star}\Big(\frac{\mu_n}{\sigma(\bt_\star)}-\omega_{\bh_\star}\Big)+\omega_{\bh_\star}\Big(-\frac{\mu_n}{\sigma(\bt_\star)}\varepsilon_n+O_{\mathbb{P}}(1)\Big)>q^{G}_{1-\alpha}\Big)=o(1),
		\end{align*}
		if $\frac{\mu_n}{\sigma(\bt_\star)}-\omega_{\bh_\star}\to C\in[-\infty,\infty).$ 
		If $\frac{\mu_n}{\sigma(\bt_\star)}-\omega_{\bh_\star}\to\infty$, the asymptotic expansion in \eqref{eq:ExpPower} holds in any case, since the asymptotic power is equal to 1. It remains to show that
		\begin{align*}
		\mathbb{P}_{f_\star}\Big(\max_{i\in \mathcal{N}^o}\omega_{\bh_{\star}}(V_{i}-\omega_{\bh_{\star}})>q^{G}_{1-\alpha}\Big)=
		\tail\bigg(\sqrt{2\log\left(\tfrac{1}{\bh_\star^{\be}}\right)}-\frac{\mu_n}{\sigma(\bt_{\star})}\bigg)+o(1).
		\end{align*}
		Obviously, $\langle\varphi_i,I_{[\bt_i-\bh_\star,\bt_i]}\rangle=\|\varphi_i\|_1$ and $\langle\varphi_i,I_{[\bt-\bh_\star,\bt]}\rangle$ is strictly decreasing in $\|\bt-\bt_i\|^{\frac{1}{2}\wedge\gamma}_2,$ such tat for any $i\in\mathcal{N}^o$, we have $\|\bt_i-\bt_{\star}\|_2^{\frac{1}{2}\wedge\gamma}\leq C\bh_\star^{\be}\varepsilon_n$ and 
		for all $i\in\mathcal{N}^o$ we have
		\begin{align*}
		\rho((\bt_\star,\bh_\star);(\bt_i,\bh_\star))=O\biggl(\sum_{j=1}^d\biggl|\frac{t_{i,j}-t_{\star,j}}{h_{\star,j}}\biggr|^{\gamma}\biggr)
		=o(1),
		\end{align*}
		where $\rho$ is defined in \eqref{eq:rho}. As in \eqref{eq:Dudley} in the proof of Theorem \ref{Thm:Gumbel} we conclude
		\begin{align*}
		\max_{i\in\mathcal{N}^o}\omega_{\bh_\star}|Z_{\bt,\bh_\star}-Z_{\bt_\star,\bh_\star}|=o_{\mathbb{P}}(1).
		\end{align*}	
		Since $\varphi$ is at least as smooth as $\Phi$, we obtain
		\begin{align*}
		\biggl\|\varphi\biggl(\frac{\bt}{\bh_\star}-\cdot\biggr)-\varphi\biggl(\frac{\bt_\star}{\bh_\star}-\cdot\biggr)\biggr\|_2=O\biggl(\mu_n\biggl\|\frac{\bt}{\bh_\star}-\frac{\bt_\star}{\bh_\star}\biggr\|_2^{\gamma}\biggr)=o(1),
		\end{align*}
		for all $i\in\mathcal{N}^o.$
		By assumption, $\sigma\in C^1[0,1]^d$ and is bounded below, which gives
		\begin{align*}
		\frac{n^{\frac{d}{2}}\langle f,\varphi_{i}\rangle}{\sigma(\bt_i)\|\Phi_{i_\star}\|_2}=\frac{n^{\frac{d}{2}}\langle f,\varphi_{i_\star}\rangle}{\sigma(\bt_\star)\|\Phi_{i_\star}\|_2}(1+o(1)),
		\end{align*}
		uniformly in $i$.
		This yields
		\begin{align*}
		\mathbb{P}_{f_\star}\Big(\max_{i\in \mathcal{N}^o}\omega_{\bh_{\star}}(V_{i}-\omega_{\bh_{\star}})>q^{G}_{1-\alpha}\Big)=
		\mathbb{P}_{f_\star}\Big(\mathcal{N}(0,1)+o_{\mathbb{P}}(1)+\frac{\mu_n}{\sigma(\bt_\star)}(1+o(1))>\frac{q^{G}_{1-\alpha}}{\omega_{\bh_\star}}+\omega_{\bh_\star}\Big).
		\end{align*}		
		
		The first assertion of this theorem now follows from Slutzky's lemma.
		\item[(b)] In order to show the second assertion of this theorem we can proceed as in part (a) and we only need to show the first direction. Since now the scales vary, a suitable set $\overline{\mathcal{N}}$ might be considerably larger than in the first part. For any $i\in I_N,$ set
		\begin{align*}
		\widetilde{\mathcal{N}}:=\{i\in I_N\,|\,(\bt_i-\bh_i)\vee(\bt_\star-\bh_\star)\leq\bt_i\wedge\bt_\star\}\subset[\bt_\star-\bh_\star,\bt_\star+\bh_i].
		\end{align*}
		We now show that $ \widetilde{\mathcal{N}}$ is asymptotically negligible compared to $I_N,$ i.\,e.
		\begin{align*}
		\mathbb{P}\biggl(\max_{i\in\widetilde{\mathcal{N}}}\omega_{\bh_i}(Z_{\bt_i,\bh_i}-\omega_{\bh_i})>q^{G}_{1-\alpha}\biggr)=o(1).
		\end{align*}
		Split the set $I_N$ into
		\begin{align*}
		I_N=\{i\,|\,\bh_\star/\bh_i\leq\log(n)\}\cup\{i\,|\,\bh_\star/\bh_i>\log(n)\}=:I_{N,1}\cup I_{N,2}.
		\end{align*}
		Then
		\begin{align*}
		&\mathbb{P}\biggl(\max_{i\in\widetilde{\mathcal{N}}\cap I_{N,1}}\omega_{\bh_i}(Z_{\bt_i,\bh_i}-\omega_{\bh_i})>q^{G}_{1-\alpha}\biggr)\\
		&\leq\mathbb{P}\biggl(\max_{i\in\widetilde{\mathcal{N}}\cap I_{N,1}}\omega_{\bh_i}(O_{\mathbb{P}}(\sqrt{\log\log(n)})-\sqrt{2d\delta\log(n)})>q^{G}_{1-\alpha}\biggr)=o(1).
		\end{align*}
		For $I_{N,2}$ we proceed as in the proof of Theorem \ref{Thm:Gumbel} and obtain 
		\begin{align*}
		&\mathbb{P}\biggl(\max_{i\in\widetilde{\mathcal{N}}\cap I_{N,2}}\omega_{\bh}(Z_{\bt_i,\bh_i}-\omega_{\bh_i})>q^{G}_{1-\alpha}\biggr)\\
		&~=1-\prod_{\bp\in\mathcal{P}^d}\biggl\{1-\mathbb{P}\biggl(\sup_{\bt\in \bh_\star B_{\bp}}\int\Xi(\bt-\bz)\,\mathrm{d}W_{\bz}>\frac{q_{1-\alpha}^G}{\omega_{\bh_{\bp}}}+\omega_{\bh_{\bp}}\biggr)\biggr\}+o(1).
		\end{align*}
		By Borell's inequality we estimate
		\begin{align*}
		&\mathbb{P}\biggl(\sup_{\bt\in \bh_\star B_{\bp}}\int\Xi(\bt-\bz)\,\mathrm{d}W_{\bz}>\frac{q_{1-\alpha}^G}{\omega_{\bh_{\bp}}}+\omega_{\bh_{\bp}}\biggr)\leq\exp\biggl(-\frac{1}{2}\biggl(\frac{q_{1-\alpha}^G}{\omega_{\bh_{\bp}}}+\omega_{\bh_{\bp}}-\mathbb{E}\Bigl[\sup_{\bt\in \bh_\star B_{\bp}}Z_{\bt}\Bigr]\biggr)^2\biggr)\\
		&~\leq\exp\biggl(-\frac{1}{2}\Bigl(\sqrt{2\log(1/\bh_{\bp}^{\be})}-\sqrt{2\log(\bh_\star^{\be}/\bh_{\bp}^{\be})}\Bigr)^2\biggr)\big(1+o(1)\big).
		\end{align*}
		Hence, there exists some positive constant $C_\star>0,$ independent of $n$, such that
		\begin{align*}
		\mathbb{P}\biggl(\sup_{\bt\in \bh_\star B_{\bp}}\int\Xi(\bt-\bz)\,\mathrm{d}W_{\bz}>\frac{q_{1-\alpha}^G}{\omega_{\bh_{\bp}}}+\omega_{\bh_{\bp}}\biggr)\leq\exp\biggl(-C_\star\log(1/\bh_\star^{\be})\biggr)\big(1+o(1)\big)
		=O\bigl(\bh_\star^{\be}\bigr)^{C_\star}.
		\end{align*}
		Hence, there exists a positive constant $\delta_\star>0$ such that
		\begin{align*}
		&\mathbb{P}\biggl(\sup_{\bh\in\mathcal{H}_2}\sup_{\bt\in\mathcal{N}(\bh)}\omega_{\bh}(Z_{\bt,\bh}-\omega_{\bh})>q^{G}_{1-\alpha}\biggr)\\
		&~\sim1-\prod_{\bp\in\mathcal{P}^d}\biggl\{1-\mathbb{P}\biggl(\sup_{\bt\in \bh_\star B_{\bp}}\int\Xi(\bt-\bz)\,\mathrm{d}W_{\bz}>\frac{q_{1-\alpha}^G}{\omega_{\bh_{\bp}}}+\omega_{\bh_{\bp}}\biggr)\biggr\}
		~\leq1-\prod_{\bp\in\mathcal{P}^d}\Bigl\{1-n^{-\delta_\star}\Bigr\}.
		\end{align*}
		We further have
		\begin{align*}
		\log\biggl(\prod_{\bp\in\mathcal{P}^d}\Bigl\{1-n^{-\delta_\star}\Bigr\}\biggr)=\sum_{\bp\in\mathcal{P}^d}\log\Bigl\{1-n^{-\delta_\star}\Bigr\}\sim-\sum_{\bp\in\mathcal{P}^d}n^{-\delta_\star}\to0 \quad\text{as}\quad n\to\infty,
		\end{align*}
		since $|\mathcal{P}|=O(\log(n)).$ This implies
		\begin{align*}
		\prod_{\bp\in\mathcal{P}^d}\Bigl\{1-n^{-\delta_\star}\Bigr\}\to1 \quad\text{as}\quad n\to\infty,
		\end{align*}
		We finally obtain
		\begin{align*}
		\mathbb{P}\biggl(\max_{i\in\widetilde{\mathcal{N}}\cap I_{N,2}}\omega_{\bh_i}(Z_{\bt,\bh_i}-\omega_{\bh_i})>q^{G}_{1-\alpha}\biggr)=o(1),
		\end{align*}
		which concludes the proof of the second assertion of this theorem.
	\end{itemize} 
	
\end{proof}

\begin{proof}[Proof of Lemma \ref{Le:WRadon}]
	
	The elements $\Phi_i\in\mathcal{W}$ need to satisfy the following requirement
	\begin{align*}
	\langle f,\varphi_i\rangle_{L^2(\R^d)}=\langle Tf,\Phi_i\rangle_{L^2(\R\times\mathbb{S}^{d-1})}.
	\end{align*}
	On the one hand we find
	\begin{align*}
	\langle f,\varphi_i\rangle_{L^2(\R^d)}=h_i^{-d/2}\int f(\bx)\varphi((\bx-\bt_i)/h_i)\,d\bx=\frac{h_i^{d/2}}{(2\pi)^d}\int\mathcal{F}_df(\bxi)\mathcal{F}_d\varphi(h_i\bxi)\exp\left(\textup{i}\langle\bxi,\bt_i\rangle\right)\,\mathrm d\bxi.
	\end{align*}
	Introducing polar coordinates $(\bxi\mapsto s\pmb{\vartheta})$ yields
	\begin{align*}
	\langle f,\varphi_i\rangle_{L^2(\R^d)}=\frac{h_i^{d/2}}{(2\pi)^d}\int_{\mathbb{S}^{d-1}}\int_{\R^+}\mathcal{F}_df(s\pmb{\vartheta})\mathcal{F}_d\varphi(h_is\pmb{\vartheta})\exp\left(\textup{i}\langle s\pmb{\vartheta},\bt_i\rangle\right)s^{d-1}\,\mathrm ds\,\mathrm d\pmb{\vartheta}.
	\end{align*}
	On the other hand
	\begin{align*}
	\langle Tf,\Phi_i\rangle_{L^2(\R\times\mathbb{S}^{d-1})}&=2\int_{\mathbb{S}^{d-1}}\int_{\R^+}Tf(u,\pmb{\vartheta})\Phi_i(u,\pmb{\vartheta})\,\mathrm du\,\mathrm d\pmb{\vartheta}\\
	&=\frac{2}{2\pi}\int_{\mathbb{S}^{d-1}}\int_{\R^+}\mathcal{F}_1(Tf(\cdot,\pmb{\vartheta}))(s)\mathcal{F}_1(\Phi_i(\cdot,\pmb{\vartheta}))(s)\,\mathrm ds\,\mathrm d\pmb{\vartheta}
	\end{align*}
	by Plancherel's theorem. Since $(\mathcal{F}_1Tf(\cdot,\pmb{\vartheta}))(s)=\mathcal{F}_df(s\pmb{\vartheta})$ by Theorem 1.1 in \cite{n86} we further deduce
	\begin{align*}
	\langle Tf,\Phi_i\rangle_{L^2(\R\times\mathbb{S}^{d-1})}=\frac{2}{2\pi}\int_{\mathbb{S}^{d-1}}\int_{\R^+}\mathcal{F}_df(s\pmb{\vartheta})\mathcal{F}_1(\Phi_i(\cdot,\pmb{\vartheta}))(s)\,\mathrm ds\,\mathrm d\pmb{\vartheta}.
	\end{align*}
	This yields the condition
	\begin{align*}
	\mathcal{F}_1(\Phi_i(\cdot,\pmb{\vartheta}))(s)=\frac{h_i^{(2-d)/2}}{2(2\pi)^{d-1}}(\mathcal{F}_d\varphi)(h_is\pmb{\vartheta})\exp\left(\textup{i}\langle s\pmb{\vartheta},\bt_i\rangle\right)|h_is|^{d-1},
	\end{align*}
	which implies
	\begin{align*}
	\Phi_i(u,\pmb{\vartheta})=\frac{h_i^{-d/2}}{2(2\pi)^{d}}\mathcal{F}_1\left(\left(\mathcal{F}_d\varphi\right)\left(\cdot\pmb{\vartheta}\right)|\cdot|^{d-1}\right)\biggl(\frac{u-\langle\pmb{\vartheta},\bt_i\rangle}{h_i}\biggr).
	\end{align*}
	Note that due to the rotational invariance of $\varphi$, the function 
	\[
	\Phi \left(x\right) := \frac{1}{2(2\pi)^{d}}\mathcal{F}_1\left(\left(\mathcal{F}_d\varphi\right)\left(\cdot\pmb{\vartheta}\right)|\cdot|^{d-1}\right)\left(x\right), \qquad x\in \mathbb R
	\]
	is in fact independent of $\pmb{\vartheta}$.
\end{proof}

\begin{proof}[Proof of Lemma \ref{Le:AHCbRadon}]
	
	Since
	\begin{align*}
	\mathcal{F}_1(g(x-\cdot))(v)=\int g(x-z)e^{izv}\,\mathrm d z=e^{ixv}\int g(y)e^{-iyv}\,\mathrm d z=2\pi e^{ixv}\mathcal{F}_1^{-1}g(v),
	\end{align*}
	we obtain with Plancherel's theorem
	\begin{align*}
	&\int_{\mathbb{S}^{d-1}}\int_{\R}\left|\Xi\left(\langle\bt,\pmb{\vartheta}\rangle-u\right)-\Xi\left(\langle\bs,\pmb{\vartheta}\rangle-u\right)\right|^2\,\mathrm du\,\mathrm d\pmb{\vartheta}\\
	=& 2\pi\int_{\mathbb{S}^{d-1}}\int_{\R}\Bigl|\exp\left(\textup{i}\langle u\pmb{\vartheta},\bt\rangle\right)-\exp\left(\textup{i}\langle u\pmb{\vartheta},\mathbf{s}\rangle\right)\Bigr|^2 \big|\big(\mathcal{F}_1^{-1}\Xi\big)(u)\big|^2 \,\mathrm du\,\mathrm d\pmb{\vartheta}\notag\\
	&=C_{\varphi,d}\int_{\mathbb{S}^{d-1}}\int_{\R^+}\Bigl|\exp\left(\textup{i}\langle u\pmb{\vartheta},\bt\rangle\right)-\exp\left(\textup{i}\langle u\pmb{\vartheta},\mathbf{s}\rangle\right)\Bigr|^2\Bigl|(\mathcal{F}_d)\varphi(u\pmb{\vartheta})|u\pmb{\vartheta}|^{d-1}\Bigr|^2\,\mathrm du\,\mathrm d\pmb{\vartheta}.
	\end{align*}
	We now go back to Euclidean coordinates and obtain
	\begin{align*}
	&\int_{\mathbb{S}^{d-1}}\int_{\R}\left|\Xi\left(\langle\bt,\pmb{\vartheta}\rangle-u\right)-\Xi\left(\langle\bs,\pmb{\vartheta}\rangle-u\right)\right|^2\,\mathrm du\,\mathrm d\pmb{\vartheta}\\
	&=C_{\varphi,d}\int_{\R^d}\Bigl|\exp\left(\textup{i}\langle\pmb{\omega},\bt\rangle\right)-\exp\left(\textup{i}\langle\pmb{\omega},\mathbf{s}\rangle\right)\Bigr|^2\bigl|(\mathcal{F}_d\varphi)(\pmb{\omega})\bigr|^2\|\pmb{\omega}\|_2^{d-1}\,\mathrm d\pmb{\omega}\\
	&=C_{\varphi,d}\sum_{j,k=1}^d(t_k-s_k)(t_j-s_j)\int_{\mathbb{R}^d}\omega_j\omega_k\|\pmb{\omega}\|_2^{d-1}\big|(\mathcal{F}_d\varphi)(\pmb{\omega})\bigr|^2\,\mathrm d\pmb{\omega}+o(\|\bs-\bt\|_2^2)
	\end{align*} 
	using Taylor's Theorem. This implies \eqref{S2} with
	\begin{align*}
	D_{\Xi}^{-2}:= 	\mathrm{diag}\biggl(C_{\varphi,d}\int_{\mathbb{R}^d}\omega_1^2\|\pmb{\omega}\|_2^{d-1}\big|(\mathcal{F}_d\varphi)(\pmb{\omega})\bigr|^2\,\mathrm d\pmb{\omega}\biggr),
	\end{align*}
	by rotational symmetry, if $(\mathcal{F}_d\varphi)$ decays sufficiently fast.
\end{proof}
\begin{proof}[Proof of Theorem \ref{Thm:Radon}]
In this example of the Radon transform, the Gaussian approximation has a slightly different structure as the integral is with respect to white noise on a non-Euclidean space. We now briefly recall the necessary definitions, following \citet{adltay2007}, Chapter 1.4.3.
Consider the $\sigma$-finite measure $\nu$ on $\mathcal{B}\bigl(\R\times\mathbb{S}^{d-1}\bigr)$ defined by
\[A\mapsto \nu(A) = \int_{\mathbb{S}^{d-1}} \int_\R 1_{A}(s,u) \, du \, d\pmb{\vartheta},\]
where $d\pmb{\vartheta}$ is the common surface-measure on $\mathbb{S}^{d-1}$, as discussed in the main text.
Define 
\[\bigl(\mathcal{B}(\R\times\mathbb{S}^{d-1})\bigr)_{\nu}:=\bigl\{A\in\mathcal{B}\bigl(\R\times\mathbb{S}^{d-1}\bigr)\,\bigl|\,\nu(A)<\infty\bigr\}\]
and let $W$  be Gaussian noise on $\mathcal{Z}$ based on $\nu$, i.e.~a Gaussian random set function such that 
for $A,B\in\bigl(\mathcal{B}(\R\times\mathbb{S}^{d-1})\bigr)_{\nu},\;A\cap B=\emptyset$ we have $W(A\cup B)=W(A)+W(B)$ a.s., 
\[ W(A) \sim \mathcal{N}\big(0, \nu(A) \big),\]
and $W(A)$ and $W(B)$ are independent. The integral with respect to the (random, signed) measure $W$ is now defined as $L^2$-limit of integrals of elementary functions. Analogously to the Wiener sheet on a Euclidean space, a point-indexed version can be defined via of the set-indexed white noise $W$ can be defined via parametrization with polar coordinates. The resulting parametrized Wiener integral is of the same structure as the integral with respect to $(W_\bz)_{\bz\in[0,1]^d}$ since $\mathrm{d}\pmb{\vartheta}$ does not depend on the variable $u$. The results of the auxiliary Lemmas \ref{Le:SumInt} and \ref{Le:coupling} therefore transfer to this setting.\\
Let now $\Xi_i:=\Phi_{i}/\|\Phi_{i}\|_{L^2(\R\times\mathbb{S}^{d-1})}$ and $\Phi$ as in \eqref{DefPhiRadon}.	Since $\varphi$ has bounded support, its Fourier transform, $\mathcal{F}_d\varphi$ is smooth. Hence, the smoothness of $\mathcal{F}_d\varphi(\cdot\pmb{\vartheta})|\cdot|^{d-1}$ is determined by the smoothness of $|\cdot|^{d-1}.$ Hence,  for all $d\in \mathbb{N},$ $\mathcal{F}_d\varphi(\cdot\pmb{\vartheta})|\cdot|^{d-1}$ is at least $d-1$ times weakly differentiable with square integrable weak derivative, i.\,e. $\mathcal{F}_d\varphi(\cdot\pmb{\vartheta})|\cdot|^{d-1}\in H^{d-1}(\R)$ implying that $\mathcal{F}_1(\mathcal{F}_d\varphi(\cdot\pmb{\vartheta})|\cdot|^{d-1})(\xi)|\xi|^{(d-1)}\in L^2(\R).$ Hence, $\Phi$ decays fast enough such that the results of the limit theorem still hold by Lemma \ref{Lemma:Unbounded}. Let
	\begin{align*}
	\Delta_{\bt,h}:=\int\Xi_i\biggl(\frac{u-\langle\pmb{\vartheta,\bt}\rangle}{h}\biggr)\,\mathrm{d}W_{u,\pmb{\vartheta}}-\int\bigl(\Xi_i\cdot I_{[-1,1]}\bigr)\biggl(\frac{u-\langle\pmb{\vartheta,\bt}\rangle}{h}\biggr)\,\mathrm{d}W_{u,\pmb{\vartheta}}.
	\end{align*}
	Since \eqref{decay} holds, 
	by a change of variables, there exists a constant $\tau>0$ such that
	\begin{align*}
	\mathrm{Var}(\Delta_{\bt,h})\leq C\int_{\{|u|>\rho/h\}}\Xi_{i}(u)^2\,du\leq\frac{C}{n^{\tau}}.
	\end{align*}
	With the same arguments as used to prove \eqref{tailbound} in the proof of Lemma \ref{Le:SumInt} we obtain
	\begin{align*}
	\mathbb{P}\biggl(\sup_{h\in\mathcal{H}}\sup_{\bt\in[\mathbf{0},\be-\pmb{\rho}]}\Delta_{\bt,h}>\lambda/(\log(n)^2\log(\log(n)))\biggr)\leq C n^{-\lambda}.
	\end{align*}
	Moreover, from Lemma \ref{Le:AHCbRadon} we know that condition \eqref{S2} holds with
	\begin{align*}
	(D_{\Xi}D_{\Xi}^*)^{-1}= 
	\mathrm{diag}\biggl(C_{\varphi,d}\int_{\mathbb{R}^d}\omega_1^2\|\pmb{\omega}\|_2^{d-1}\big|(\mathcal{F}_d\varphi)(\pmb{\omega})\bigr|^2\,\mathrm d\pmb{\omega}\biggr),
	\end{align*}
	where $C_{\varphi,d}=4\pi\|\mathcal{F}_1\big((\mathcal{F}_d\varphi)(\cdot\pmb{\vartheta})|\cdot|^{d-1}\big)(u-\langle \bt_i,\pmb{\vartheta}\rangle)\|_{L^2(\R\times\mathbb{S}^{d-1})}^{-1}$.
	The assertion of this Theorem now follows.
\end{proof}

\begin{proof}[Proof of Theorem \ref{Thm:Conv2}]

	Let $\Xi_i:=\Phi_{\bh_i}/\|\Phi_{\bh_i}\|_2$
	\begin{align*}
	\Delta_{\bt,\bh}:=\frac{1}{\sqrt{\bh^{\be}}}\int\Xi_i\biggl(\frac{\bt-\bz}{\bh}\biggr)\,\mathrm{d}W_{\bz}-\frac{1}{\sqrt{\bh^{\be}}}\int\bigl(\Xi_i\cdot I_{[\mathbf{0},\be]^d}\bigr)\biggl(\frac{\bt-\bz}{\bh}\biggr)\,\mathrm{d}W_{\bz}.
	\end{align*}
	By \eqref{D1} and  $\varphi\in H^{2a+\gamma\vee 1/2}(\mathbb{R}^d)$ it is straightforward to see that \eqref{decay} holds, hence,
	by a change of variables, there exists a constant $\tau>0$ such that
	\begin{align*}
	\mathrm{Var}(\Delta_{\bt,\bh})\leq\sum_{j=1}^d\int_{\{|x_j|>\rho/h_j\}}\Xi_{i}(\bx)^2\,d\bx\leq\frac{C}{n^{\tau}},
	\end{align*}
	where the constant $C$ is independent of $i$.
	With the same arguments as used to prove \eqref{tailbound} in the proof of Lemma \ref{Le:SumInt} we obtain
	\begin{align*}
	\mathbb{P}\biggl(\sup_{\bt\in[\mathbf{h}+\pmb{\rho},\be-\pmb{\rho}]}\Delta_{\bt,\bh}>\lambda/(\log(n)^2\log(\log(n)))\biggr)\leq C n^{-\lambda}.
	\end{align*}
	We now show that condition \eqref{S1} is satisfied. By Plancherel's Theorem and \eqref{DictConvolution} we obtain
	\begin{align*}
	\int\big|\Phi_{\bh}(\bt-\bz)-\Phi_{\bh}(\bs-\bz)\big|^2\,\mathrm{d}\bz=
	\frac{1}{(2\pi)^d}\int \big|\exp(i\langle\bt,\bxi\rangle)-\exp(i\langle\bs,\bxi\rangle\big|^2\biggl|\frac{\mathcal{F}_d\varphi(-\bxi)}{\mathcal{F}_dk(\bxi/\bh)}\biggr|^2	\,\mathrm{d}\bxi.
	\end{align*}
	This yields			
	\begin{align*}
	&(2\pi)^d\int\big|\Phi_{\bh}(\bt-\bz)-\Phi_{\bh}(\bs-\bz)\big|^2\,\mathrm{d}\bz
	\leq2^{2-2\gamma}\int \big|\langle\bt-\bs,\bxi\rangle\big|^{2\gamma}\biggl|\frac{\mathcal{F}_d\varphi(\bxi)}{\mathcal{F}_dk(\bxi/\bh)}\biggr|^2	\,\mathrm{d}\bxi.
	\end{align*}
	Observe that
	\begin{align*}
	\|\Phi_{\bh}\|_2&=\frac{1}{(2\pi)^d}\int\biggl|\frac{\mathcal{F}_d\varphi(-\bxi)}{\mathcal{F}_d\psf(\bxi/\bh)}\biggr|^2\,\mathrm{d}\bxi\geq\frac{1}{\min\{h_j|j\in I_d\}^{4a}(2\pi)^d\overline{C}}\min_{j\in I_d}\int|\mathcal{F}_d\varphi(\bxi)|^2|\xi_j|^{4a}\,\mathrm{d}\bxi,
	\end{align*}
	which implies
	\begin{align*}
	\|\Phi_{\bh}\|_2\geq\frac{C_{\Phi}}{\min\{h_j|j\in I_d\}^{4a}}>0.
	\end{align*}
	Thus
	\begin{align*}
	&\frac{(2\pi)^d}{\|\Phi_{\bh}\|_2}\int\big|\Phi_{\bh}(\bt-\bz)-\Phi_{\bh}(\bs-\bz)\big|^2\,\mathrm{d}\bz
	\leq\frac{2^{2-2\gamma}}{\|\Phi_{\bh}\|_2}\int \big|\langle\bt-\bs,\bxi\rangle\big|^{2\gamma}\biggl|\frac{\mathcal{F}_d\varphi(\bxi)}{\mathcal{F}_dk(\bxi/\bh_i)}\biggr|^2	\,\mathrm{d}\bxi\\
	&\leq\|\bt-\bs\|_2^{2\gamma}\frac{1}{\overline{c}C_{\Phi}}\int\,\|\bxi\|_2^{4a+2\gamma}|\phi(\bxi)|^2\mathrm{d}\bxi,
	\end{align*}
	and \eqref{S1} is satisfied since $\varphi\in H^{2a+\gamma\vee1/2}(\R^d)$. Now, an application of Theorem \ref{GWSMain} and Lemma \ref{Lemma:Unbounded} concludes the proof of claim (a).
	In order to show claim (b), by Taylor's theorem we deduce 	
	\begin{align*}
	&(2\pi)^d\int\big|\Phi_{\bh}(\bt-\bz)-\Phi_{\bh}(\bs-\bz)\big|^2\,\mathrm{d}\bz
	=\int \big|\langle\bt-\bs,\bxi\rangle\big|^2\biggl|\frac{\mathcal{F}_d\varphi(\bxi)}{\mathcal{F}_dk(\bxi/\bh_i)}\biggr|^2	\,\mathrm{d}\bxi+o(\|\bs-\bt\|_2^2)\\
	&=\int \biggl|\frac{\sum_{j=1}^d\xi_j(t_j-s_j)\mathcal{F}_d\partial_j\varphi(\bxi)}{\mathcal{F}_dk(\bxi/\bh_i)}\biggr|^2	\,\mathrm{d}\bxi+o(\|\bs-\bt\|_2^2)
	=(\bt-\bs)^TD_{\Xi,i}^{-2}(\bt-\bs)+o(\|\bs-\bt\|_2^2),
	\end{align*}
	where
	\begin{align*}
	D_{\Xi,i}^{-2}=\frac{1}{\|\Phi_{\bh_i}\|_2}\biggl(\int \biggl|\frac{\mathcal{F}_d\partial_k\varphi(\bxi)\mathcal{F}_d\partial_j\varphi(\bxi)}{\mathcal{F}_dk(\bxi/\bh_i)}\biggr|^2	\,\mathrm{d}\bxi\biggr)_{j,k=1}^d.
	\end{align*}		
	Hence, also the stronger condition \eqref{S2} is satisfied. Notice that by \eqref{D1}, the determinant of the matrix $D_{\Xi,i}^{-2}$ is uniformly bounded from below and above.		
	We now show the second claim of this Theorem. With the same arguments as is the proof of Theorem \ref{Thm:Gumbel} (and Lemma \ref{Lemma:Unbounded}) 
	for scales in the dyadic grid $\mathcal{H}_{\mathrm{dyad}}^d$ (and for  approximations of the $\Xi_i$ of bounded support growing logarithmically in $n$) because the bounds on the canonical metric $\rho$ in \eqref{eq:dist} remain the same up to a change in the constants by \eqref{D1}. Therefore, we can proceed precisely as in Theorem \ref{Thm:Gumbel} and find
	\begin{align*}
	\mathbb{P}(\mathcal{S}(W)\leq\lambda)\sim\exp\bigg(-e^\lambda\frac{H_2}{K\sqrt{2\pi}}\sum_{\bp\in\mathcal{P}^d}\sqrt{\mathrm{det}\big(D_{\Xi,\bp}^{-2}\big)}\log\bigg(\frac{K}{\bh_{\bp}^{\be}}\bigg)^{-d}\bigg).
	\end{align*}
	Assertion (b) of this theorem now follows from the uniform boundedness of $\mathrm{det}\big(D_{\Xi,\bp}^{-2}\big)$. 
	Claim (c) s an immediate consequence from the previous calculations and an application of Lemma \ref{Lemma:Unbounded}.
\end{proof}

\begin{proof}[Proof of Lemma \ref{Lemma:Power1}]
	We have that
	\begin{align*}
	&\mathbb{P}\bigl( \langle Y,\Phi_i\rangle_n>q_{i,1-\alpha}\;\;\forall\; i\in\mathcal{I}_{2,\alpha}\bigr)\\
	&~~=\mathbb{P}\bigl( \langle \xi\,,\,\Phi_i\rangle_n+\langle f\,,\,\phi_i\rangle+o(1/\log(n)^2)>q_{i,1-\alpha}\;\;\forall\; i\in\mathcal{I}_{2,\alpha}\bigr)\\
	&~~\geq\mathbb{P}\bigl( \langle \xi\,,\,\Phi_i\rangle_n+ o(1/\log(n)^2)>-q_{i,1-\alpha}\;\;\forall\; i\in\mathcal{I}_{2,\alpha}\bigr)
	\end{align*}
	Thus, the claim of the lemma now follows by an application of Theorem \ref{Th:1}.
\end{proof}
\begin{proof}[Proof of Theorem \ref{Thm:Conv1}]
	By \eqref{eq:tildePhi}, we see that $\mathrm{supp}\Phi_{\bh}\subset[0,1]^2$. It follows from the proof of Theorem \ref{Thm:Conv2} that \eqref{S1} holds, i.e., assertion (a) follows.\\
	Since under the assumptions of $(b)$ the assumptions of Theorem \ref{Thm:Conv2} (b) are satisfied, assertion (b) is an immediate consequence of the latter.\\
	Furthermore, by \eqref{Def:XiDec}, \eqref{eq:XiDec1}, and \eqref{eq:XiDec2} it follows that \eqref{eq:Phi_h_convergence} holds. Therefore, (c) follows from Theorem \ref{Th:1} (c) and Corollary \ref{Cor} (b).

\end{proof}
Finally, the claim of Lemma \ref{Lemma:Power3} follows by the same arguments as Lemma \ref{Lemma:Power1}, which concludes this section. 

\appendix

\section{The full width at half maximum of a convolution kernel}\label{appA}
In a deconvolution problem with convolution kernel $\psf$, the so-called full width at half maximum (FWHM) is a common standard measure for the spread of a convolution kernel in optics, see e.\,g. \citep{HW:94,p06}. There is a common understanding that objects which are closer to each other than a distance of approximately the FWHM cannot be identified as separate objects. To get a visual idea of the FWHM in our exemplary situations, we depict a convolution kernel $\psi$ from the family $\left\{\psf_{a,b} ~\big|~ a \in \mathbb N, b >0\right\}$ defined in Fourier space via \eqref{Def:PSF} with parameters $a = 2$ and $b = 0.0243$ in Figure \ref{fig:fwhm} and indicate its FWHM as well. 

\begin{figure}[!htb]
\setlength\fheight{4.5cm} \setlength\fwidth{4.5cm}
\centering
\subfigure[chosen kernel $\psf_{2,0.0243}$ (top view)]{
\label{subfig:sim_kernel}
%
%
\begin{tikzpicture}

\begin{axis}[%
width=\fwidth,
height=\fheight,
axis on top,
scale only axis,
xmin=0.5,
xmax=101.5,
y dir=reverse,
xtick = \empty,
ytick = \empty,
ymin=0.5,
ymax=101.5,
colormap={mymap}{[1pt] rgb(0pt)=(0,0,0.5625); rgb(7pt)=(0,0,1); rgb(23pt)=(0,1,1); rgb(39pt)=(1,1,0); rgb(55pt)=(1,0,0); rgb(63pt)=(0.5,0,0)},
colorbar,
colorbar style={
scaled ticks = false
},
point meta min=0,
point meta max=0.00504717057403539
]
\addplot [forget plot] graphics [xmin=0.5,xmax=101.5,ymin=0.5,ymax=101.5] {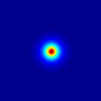};
\end{axis}
\end{tikzpicture}%
}
\subfigure[central slice of the kernel and FWHM]{
\label{subfig:sim_kernel_FWHM}
%
%
\begin{tikzpicture}
\begin{axis}[%
width=\fwidth,
height=\fheight,
scale only axis,
xticklabels = \empty,
scaled ticks = false,
ytick = {0, 0.00252358528701764, 0.00504717057403539},
yticklabels = {0, $\frac{\max\left(\psf_{2,0.0243}\right)}{2}$, $\max\left(\psf_{2,0.0243}\right)$},
xtick = \empty,
xmin=0,
xmax=65,
ymin=0,
ymax=0.0051
]
\addplot [color=blue,solid,forget plot]
  table[row sep=crcr]{%
1	5.83810149975011e-06\\
2	7.42747096554553e-06\\
3	9.39665848648507e-06\\
4	1.19342377424082e-05\\
5	1.5094113995671e-05\\
6	1.91399311668625e-05\\
7	2.41927503956756e-05\\
8	3.0629674305562e-05\\
9	3.86800705596594e-05\\
10	4.88926207178054e-05\\
11	6.16679197939795e-05\\
12	7.78132396202358e-05\\
13	9.79946917085141e-05\\
14	0.000123406216440661\\
15	0.000155116925440343\\
16	0.00019489114241867\\
17	0.000244393405351569\\
18	0.000306211169386346\\
19	0.000382855684144864\\
20	0.000478060210886637\\
21	0.000595467243724543\\
22	0.000740305689240437\\
23	0.000917554367682834\\
24	0.00113414589517699\\
25	0.00139616673263642\\
26	0.00171180010004986\\
27	0.00208653503684177\\
28	0.00252715156943235\\
29	0.00303210436007108\\
30	0.00359688969540957\\
31	0.00418885853697467\\
32	0.00475068555217558\\
33	0.00504717057403539\\
34	0.00475068555217558\\
35	0.00418885853697467\\
36	0.00359688969540957\\
37	0.00303210436007108\\
38	0.00252715156943235\\
39	0.00208653503684177\\
40	0.00171180010004986\\
41	0.00139616673263642\\
42	0.00113414589517699\\
43	0.000917554367682834\\
44	0.000740305689240437\\
45	0.000595467243724543\\
46	0.000478060210886637\\
47	0.000382855684144864\\
48	0.000306211169386346\\
49	0.000244393405351569\\
50	0.00019489114241867\\
51	0.000155116925440343\\
52	0.000123406216440661\\
53	9.79946917085142e-05\\
54	7.78132396202357e-05\\
55	6.16679197939795e-05\\
56	4.88926207178055e-05\\
57	3.86800705596594e-05\\
58	3.06296743055619e-05\\
59	2.41927503956756e-05\\
60	1.91399311668626e-05\\
61	1.50941139956709e-05\\
62	1.1934237742408e-05\\
63	9.39665848648491e-06\\
64	7.42747096554537e-06\\
65	5.83810149975011e-06\\
};

\coordinate (A) at (axis cs:28,0.00252358528701764);
\coordinate (B) at (axis cs:38,0.00252358528701764);

\draw [<->,color=black!50!green,solid] (A) -- (B);

\coordinate (C) at (axis cs:33,0.00252358528701764);
\node (D) at (axis cs:17.5,0.00375) {FWHM};
\draw [->,solid] (D) -- (C);

\end{axis}
\end{tikzpicture}%
}
\caption{The function $\psf_{2,0.0243}$ and its full width at half maximum (FWHM).}
\label{fig:fwhm}
\end{figure}

\section{A mathematical model for STED microscopy}\label{appB}
STED (stimulated emission depletion) super-resolution microscopy \citep{HW:94,KH:99,H:07} allows to image samples marked by fluorescent dyes on a sub-diffraction spatial resolution. 

Just as in confocal microscopy (cf. \citep{p06} for an overview or \citep{HW:16} for the mathematical treatment) the specimen is illuminated with a diffraction-limited spot for excitation. The specimen is also irradiated with a ring-like beam distribution for inhibition. This distribution prevents fluorophores from emitting fluorescence by stimulating photon emission at a longer wavelength than the ordinary one. This red-shifted emission light can be removed by a filter and hence is not seen through the microscope. Molecules that emit photons through this channel cannot emit any photons at the usual wavelength. Consequently, the light is collected from a significantly smaller region than in standard confocal microscopy, which enhances the resolution (cf. \citep{aem15} for a more detailed description from a statistical perspective).

With this technique the specimen is imaged along a grid, where for each grid point several excitation pulses (say $t$) are applied and measured. The corresponding measurements are well described by a binomial model
\[
Y_{\bj} \stackrel{\text{independent}}{\sim} \text{Bin}\left(t, \left(\psf \ast f\right) \left(\bx_{\bj}\right)\right), \qquad \bj \in \left\{1,...,n\right\}^2.
\]
Here $\text{Bin}\left(k,p\right)$ denotes the Binomial distribution with parameters $k \in \mathbb N$ and $p \in \left[0,1\right]$, $n^2$ is the number of pixels in the grid $\bigl\{\bx_{\bj} ~\big|~ \bj \in \left\{1, ..., n\right\}^2\bigr\}$ and $f \left(\bx\right)$ is the probability that a photon emitted at grid point $\bx$ is recorded at the detector in a single excitation pulse. 

The convolution kernel or point spread function (psf) can be computed by means of scalar diffraction theory \citep{bw99} as the absolute square of the Fraunhofer diffraction pattern. In case of a circular aperture (which is the case in our experimental setup) using the paraxial approximation it simplifies to the Airy pattern
\begin{equation}\label{eq:airy}
\bx \mapsto \left\vert 2A \left(\frac{2 \pi r}{\lambda} \frac{b}{f_e} \left\Vert \bx \right\Vert_2\right)\right\vert^2
\end{equation}
where $r$ is the refractive index of the image space (here $r \approx 1$ for air), $\lambda$ is the wavelength of the incoming light, $b$ is the aperture radius of the exit lens and $f_e$ its focal length. Here $A\left(\xi\right)= J_1 \left(\xi\right) / \xi$ with the Bessel function $J_1$ of first kind. Instead of using the complicated convolution kernel in \eqref{eq:airy}, we suggest to approximate the psf by a function $\psf$ from the two parameter family \eqref{Def:PSF} by matching FWHM and kurtosis of the kernel. These values are available from measurements of the experimental psf, which yield $a = 2$ and $b = 0.016$, cf. Figure \ref{fig:exp_kernels} for details. The plots show that these two parameters provide a remarkably good matching of the kernel functions. Consequently, we believe that
\[
Y_{\bj} \stackrel{\text{independent}}{\sim} \text{Bin}\left(t, \left(\psf_{2,0.016} \ast f\right) \left(\bx_{\bj}\right)\right), \qquad \bj \in \left\{1,...,n\right\}^2.
\]
is a highly accurate model for our experimental data.

\begin{figure}[!htb]
\setlength\fheight{5cm} \setlength\fwidth{5cm}
\centering
\begin{tabular}{ll}
%
%
\begin{tikzpicture}

\begin{axis}[%
width=\fwidth,
height=\fheight,
axis on top,
scale only axis,
xmin=0.5,
xmax=80.5,
y dir=reverse,
xtick = \empty,
ytick = \empty,
ymin=0.5,
ymax=80.5,
colormap={mymap}{[1pt] rgb(0pt)=(0,0,0.5625); rgb(7pt)=(0,0,1); rgb(23pt)=(0,1,1); rgb(39pt)=(1,1,0); rgb(55pt)=(1,0,0); rgb(63pt)=(0.5,0,0)},
colorbar,
colorbar style = {
scaled ticks = false
},
point meta min=0,
point meta max=0.00641469284880538
]
\addplot [forget plot] graphics [xmin=0.5,xmax=80.5,ymin=0.5,ymax=80.5] {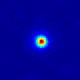};

\coordinate (A) at (370,150);
\coordinate (B) at (460,150);

\draw [color=white,solid,line width=2.0pt] (A) -- (B) node[midway,above] {\tiny 100nm};
\end{axis}
\end{tikzpicture}%
&
%
%
\begin{tikzpicture}

\begin{axis}[%
width=\fwidth,
height=\fheight,
axis on top,
scale only axis,
xmin=0.5,
xmax=80.5,
y dir=reverse,
xtick = \empty,
ytick = \empty,
ymin=0.5,
ymax=80.5,
colormap={mymap}{[1pt] rgb(0pt)=(0,0,0.5625); rgb(7pt)=(0,0,1); rgb(23pt)=(0,1,1); rgb(39pt)=(1,1,0); rgb(55pt)=(1,0,0); rgb(63pt)=(0.5,0,0)},
colorbar,
colorbar style = {
scaled ticks = false
},
point meta min=0,
point meta max=0.00641469284880538
]
\addplot [forget plot] graphics [xmin=0.5,xmax=80.5,ymin=0.5,ymax=80.5] {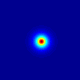};

\coordinate (A) at (370,150);
\coordinate (B) at (460,150);

\draw [color=white,solid,line width=2.0pt] (A) -- (B) node[midway,above] {\tiny 100nm};

\end{axis}
\end{tikzpicture}%
\\
\hspace*{.5cm}(a) experimental kernel (top view)
&
\hspace*{.5cm}(b) approximating kernel $\psf_{2,0.016}$ (top view)
\\[0.3cm]
\input{exp_kernel_slice.tikz} 
&
\input{approx_kernel_slice.tikz} 
\\
\hspace*{.5cm}(c) experimental kernel (slices)
&
\hspace*{.5cm}(d) approximating kernel $\psf_{2,0.016}$ (slices)
\end{tabular}
\caption{Experimentally measured kernel and function $\psf_{2,0.016}$. For the experimental kernel, we computed FWHM $= 75.9501$ nm and kurtosis $= 3.102$, and for $\psf_{a,b}$ we have FWHM $= 77.5881$ nm and kurtosis $= 3$.}
\label{fig:exp_kernels}
\end{figure}

\section*{Acknowledgements}

The authors gratefully  acknowledge   financial  support  by  the  German  Research Foundation DFG through subproject A07 of CRC 755. Funding through the VW foundation is also gratefully acknowledged. Futhermore we thank Haisen Ta and Jan Keller who are with the lab of Stefan Hell at the Department of NanoBiophotonics, Max Planck Institute for Biophysical Chemistry for providing the experimental data and expertise. We also thank two anonymous referees and the editors for a number of insightful questions and constructive comments which helped us to improve the quality of the paper substantially.

\bibliography{MISCAT_ref}

\begin{thebibliography}{}

\bibitem[Abramovich and Silverman, 1998]{Abramovich}
Abramovich, F. and Silverman, B.~W. (1998).
\newblock Wavelet decomposition approaches to statistical inverse problems.
\newblock {\em Biometrika}, 85:115--129.

\bibitem[Adler and Taylor, 2007]{adltay2007}
Adler, R.~J. and Taylor, J.~E. (2007).
\newblock {\em Random Fields and Geometry}.
\newblock Springer, New York.

\bibitem[Albani et~al., 2016]{Scherzer2016}
Albani, V., Elbau, P., de~Hoop, M.~V., and Scherzer, O. (2016).
\newblock Optimal convergence rates results for linear inverse problems in
  {H}ilbert spaces.
\newblock {\em Numer. Funct. Anal. Optim.}, 37(5):521--540.

\bibitem[Anderssen, 1986]{a86}
Anderssen, R.~S. (1986).
\newblock The linear functional strategy for improperly posed problems.
\newblock In {\em Inverse Problems}, pages 11--30. Springer.

\bibitem[Arias-Castro et~al., 2005]{CastroDonoho1}
Arias-Castro, E., Donoho, D.~L., and Huo, X. (2005).
\newblock Near-optimal detection of geometric objects by fast multiscale
  methods.
\newblock {\em IEEE Trans. Inform. Theory}, 51(7):2402--2425.

\bibitem[Aspelmeier et~al., 2015]{aem15}
Aspelmeier, T., Egner, A., and Munk, A. (2015).
\newblock Modern statistical challenges in high-resolution fluorescence
  microscopy.
\newblock {\em Annu. Rev. Stat. Appl.}, 2:163--202.

\bibitem[Bertero et~al., 2009]{bbdv09}
Bertero, M., Boccacci, P., Desider{\`a}, G., and Vicidomini, G. (2009).
\newblock Image deblurring with {P}oisson data: from cells to galaxies.
\newblock {\em Inverse Probl.}, 25(12):025004, 18.

\bibitem[Bissantz et~al., 2009]{bchm09}
Bissantz, N., Claeskens, G., Holzmann, H., and Munk, A. (2009).
\newblock Testing for lack of fit in inverse regression---with applications to
  biophotonic imaging.
\newblock {\em J. R. Stat. Soc. Ser. B Stat. Methodol.}, 71(1):25--48.

\bibitem[Bissantz et~al., 2007]{BissantzSiam}
Bissantz, N., Hohage, T., Munk, A., and Ruymgaart, F. (2007).
\newblock Convergence rates of general regularization methods for statistical
  inverse problems and applications.
\newblock {\em SIAM J. Numer. Anal.}, 45(6):2610--2636.

\bibitem[Born and Wolf, 1999]{bw99}
Born, M. and Wolf, E. (1999).
\newblock {\em Principles of Optics}.
\newblock Cambridge University Press, Cambridge, seventh edition.

\bibitem[Burger et~al., 2013]{bfh13}
Burger, M., Flemming, J., and Hofmann, B. (2013).
\newblock Convergence rates in regularization if the sparsity assumption fails.
\newblock {\em Inverse Probl.}, 29(2):025013.

\bibitem[Butucea, 2007]{b07}
Butucea, C. (2007).
\newblock Goodness-of-fit testing and quadratic functional estimation from
  indirect observations.
\newblock {\em Ann. Statist.}, 35(5):1907--1930.

\bibitem[Butucea and Comte, 2009]{ButuceaCompte}
Butucea, C. and Comte, F. (2009).
\newblock Adaptive estimation of linear functionals in the convolution model
  and applications.
\newblock {\em Bernoulli}, 15(1):69--98.

\bibitem[Cavalier and Golubev, 2006]{RiskHull}
Cavalier, L. and Golubev, Y. (2006).
\newblock Risk hull method and regularization by projections of ill-posed
  inverse problems.
\newblock {\em Ann. Statist.}, 34(4):1653--1677.

\bibitem[Cavalier et~al., 2003]{cglt03}
Cavalier, L., Golubev, Y., Lepski, O., and Tsybakov, A. (2003).
\newblock Block thresholding and sharp adaptive estimation in severely
  ill-posed inverse problems.
\newblock {\em Teor. Veroyatnost. i Primenen.}, 48(3):534--556.

\bibitem[Cavalier and Tsybakov, 2002]{ct02}
Cavalier, L. and Tsybakov, A. (2002).
\newblock Sharp adaptation for inverse problems with random noise.
\newblock {\em Probab. Theory Related Fields}, 123(3):323--354.

\bibitem[Chan and Walther, 2013]{ChanWalther}
Chan, H.~P. and Walther, G. (2013).
\newblock Detection with the scan and the average likelihood ratio.
\newblock {\em Statist. Sinica}, 23(1):409--428.

\bibitem[Chernousova and Golubev, 2014]{cg14}
Chernousova, E. and Golubev, Y. (2014).
\newblock Spectral cut-off regularizations for ill-posed linear models.
\newblock {\em Math. Methods Statist.}, 23(2):116--131.

\bibitem[Chernozhukov et~al., 2014]{CheCheKat2014}
Chernozhukov, V., Chetverikov, D., and Kato, K. (2014).
\newblock Gaussian approximation of suprema of empirical processes.
\newblock {\em Ann. Statist.}, 42:1564--1597.

\bibitem[Cohen et~al., 2004]{Reiss2004}
Cohen, A., Hoffmann, M., and Rei{\ss}, M. (2004).
\newblock Adaptive wavelet {G}alerkin methods for linear inverse problems.
\newblock {\em SIAM J. Numer. Anal.}, 42(4):1479--1501.

\bibitem[Dedecker et~al., 2014]{DMR:2014}
Dedecker, J., Merlev{\`e}de, F., and Rio, E. (2014).
\newblock Strong approximation of the empirical distribution function for
  absolutely regular sequences in {$\mathbb R\sp d$}.
\newblock {\em Electron. J. Probab.}, 19(9):1--56.

\bibitem[Dickhaus, 2014]{d14}
Dickhaus, T. (2014).
\newblock {\em Simultaneous statistical inference}.
\newblock Springer, Heidelberg.
\newblock With applications in the life sciences.

\bibitem[Donoho, 1995]{Donoho1995}
Donoho, D.~L. (1995).
\newblock Nonlinear solution of linear inverse problems by wavelet-vaguelette
  decomposition.
\newblock {\em Appl. Comput. Harmon. Anal.}, 2(2):101--126.

\bibitem[D\"umbgen and Spokoiny, 2001]{DuemSpok2001}
D\"umbgen, L. and Spokoiny, V. (2001).
\newblock Multiscale testing of qualitative hypotheses.
\newblock {\em The Annals of Statistics}, 29(1):124--152.

\bibitem[D{\"u}mbgen and Walther, 2008]{DuemWalt}
D{\"u}mbgen, L. and Walther, G. (2008).
\newblock Multiscale inference about a density.
\newblock {\em Ann. Statist.}, 36(4):1758--1785.

\bibitem[Eckle et~al., 2016]{ebd16}
Eckle, K., Bissantz, N., and Dette, H. (2016).
\newblock Multiscale inference for multivariate deconvolution.
\newblock arXiv:1611.05201.

\bibitem[Eckle et~al., 2017]{ebdpe17}
Eckle, K., Bissantz, N., Dette, H., Proksch, K., and Einecke, S. (2017+).
\newblock Multiscale inference for a multivariate density with applications to
  x-ray astronomy.
\newblock To appear in: Annals of the Institute of Statistical Mathematics; DOI
  10.1007/s10463-017-0605-1.

\bibitem[Fan, 1991]{f91}
Fan, J. (1991).
\newblock Asymptotic normality for deconvolution kernel density estimators.
\newblock {\em Sankhy\=a Ser. A}, 53(1):97--110.

\bibitem[Friedenberg and Genovese, 2013]{SourceDetection}
Friedenberg, D.~A. and Genovese, C.~R. (2013).
\newblock Straight to the source: detecting aggregate objects in astronomical
  images with proper error control.
\newblock {\em J. Amer. Statist. Assoc.}, 108(502):456--468.

\bibitem[Genovese et~al., 2012]{Filaments}
Genovese, C.~R., Perone-Pacifico, M., Verdinelli, I., and Wasserman, L. (2012).
\newblock The geometry of nonparametric filament estimation.
\newblock {\em J. Amer. Statist. Assoc.}, 107(498):788--799.

\bibitem[Goldenshluger, 1999]{Goldenshluger1999}
Goldenshluger, A. (1999).
\newblock On pointwise adaptive nonparametric deconvolution.
\newblock {\em Bernoulli}, 5(5):907--925.

\bibitem[Hell, 2007]{H:07}
Hell, S. (2007).
\newblock Far-field optical nanoscopy.
\newblock {\em Science}, 316:1153 -- 1158.

\bibitem[Hell and Wichmann, 1994]{HW:94}
Hell, S.~W. and Wichmann, J. (1994).
\newblock Breaking the diffraction resolution limit by stimulated emission:
  stimulated-emission-depletion fluorescence microscopy.
\newblock {\em Opt. Lett.}, 19(11):780--782.

\bibitem[Hohage and Werner, 2016]{HW:16}
Hohage, T. and Werner, F. (2016).
\newblock Inverse problems with poisson data: statistical regularization
  theory, applications and algorithms.
\newblock {\em Inverse Probl.}, 32:093001, 56.

\bibitem[Holzmann et~al., 2007]{Holzmannetal07}
Holzmann, H., Bissantz, N., and Munk, A. (2007).
\newblock Density testing in a contaminated sample.
\newblock {\em J. Multivariate Anal.}, 98(1):57--75.

\bibitem[Ingster et~al., 2014]{ilm14}
Ingster, Y., Laurent, B., and Marteau, C. (2014).
\newblock Signal detection for inverse problems in a multidimensional
  framework.
\newblock {\em Math. Methods Statist.}, 23(4):279--305.

\bibitem[Ingster, 1993]{i93}
Ingster, Y.~I. (1993).
\newblock Asymptotically minimax hypothesis testing for nonparametric
  alternatives. {I-III}.
\newblock {\em Math. Methods Statist.}, 2:85--114, 171--189, 249--268.

\bibitem[Ingster et~al., 2012]{iss12}
Ingster, Y.~I., Sapatinas, T., and Suslina, I.~A. (2012).
\newblock Minimax signal detection in ill-posed inverse problems.
\newblock {\em Ann. Statist.}, 40(3):1524--1549.

\bibitem[Johnstone et~al., 2004]{jkpr04}
Johnstone, I.~M., Kerkyacharian, G., Picard, D., and Raimondo, M. (2004).
\newblock Wavelet deconvolution in a periodic setting.
\newblock {\em J. R. Stat. Soc. Ser. B Stat. Methodol.}, 66(3):547--573.

\bibitem[Johnstone and Paul, 2014]{jp14}
Johnstone, I.~M. and Paul, D. (2014).
\newblock Adaptation in some linear inverse problems.
\newblock {\em Stat}, 3(1):187--199.

\bibitem[Johnstone and Silverman, 1991]{Johnstone}
Johnstone, I.~M. and Silverman, B.~W. (1991).
\newblock Discretization effects in statistical inverse problems.
\newblock {\em J. Complexity}, 7:1--34.

\bibitem[Kabluchko, 2011]{Kabluchko2011}
Kabluchko, Z. (2011).
\newblock Extremes of the standardized {G}aussian noise.
\newblock {\em Stochastic Process. Appl.}, 121(3):515--533.

\bibitem[Kazantsev et~al., 2002]{Kazantsev2002791}
Kazantsev, I., Lemahieu, I., Salov, G., and Denys, R. (2002).
\newblock Statistical detection of defects in radiographic images in
  nondestructive testing.
\newblock {\em Signal Processing}, 82(5):791 -- 801.

\bibitem[Kerkyacharian et~al., 2010]{Kerkyacharian2010}
Kerkyacharian, G., Kyriazis, G., Le~Pennec, E., Petrushev, P., and Picard, D.
  (2010).
\newblock Inversion of noisy {R}adon transform by {SVD} based needlets.
\newblock {\em Appl. Comput. Harmon. Anal.}, 28(1):24--45.

\bibitem[Klar and Hell, 1999]{KH:99}
Klar, T.~A. and Hell, S.~W. (1999).
\newblock Subdiffraction resolution in far-field fluorescence microscopy.
\newblock {\em Opt. Lett.}, 24(14):954--956.

\bibitem[Koml{\'o}s et~al., 1975]{KMT}
Koml{\'o}s, J., Major, P., and Tusn{\'a}dy, G. (1975).
\newblock An approximation of partial sums of independent {${\mathrm RV}$}'s and
  the sample {${\mathrm DF}$}. {I}.
\newblock {\em Z. Wahrscheinlichkeitstheorie und Verw. Gebiete}, 32:111--131.

\bibitem[Kou, 2017]{k17}
Kou, J. (2017).
\newblock Identifying the support of rectangular signals in gaussian noise.
\newblock arXiv preprint 1703.06226.

\bibitem[Laurent et~al., 2011]{llm11}
Laurent, B., Loubes, J.-M., and Marteau, C. (2011).
\newblock Testing inverse problems: a direct or an indirect problem?
\newblock {\em J. Statist. Plann. Inference}, 141(5):1849--1861.

\bibitem[Laurent et~al., 2012]{llm12}
Laurent, B., Loubes, J.-M., and Marteau, C. (2012).
\newblock Non asymptotic minimax rates of testing in signal detection with
  heterogeneous variances.
\newblock {\em Electron. J. Stat.}, 6:91--122.

\bibitem[Li et~al., 2016]{lmsw16}
Li, H., Munk, A., Sieling, H., and Walther, G. (2016).
\newblock The essential histogram.
\newblock arXiv:1612.07216.

\bibitem[Lin, 2017]{l17}
Lin, G.~D. (2017).
\newblock Recent developments on the moment problem.
\newblock {\em arXiv 1703:01027}.

\bibitem[Mair and Ruymgaart, 1996]{Mair}
Mair, B.~A. and Ruymgaart, F.~H. (1996).
\newblock Statistical inverse estimation in {H}ilbert scales.
\newblock {\em SIAM J. Appl. Math.}, 56(5):1424--1444.

\bibitem[Marteau and Math{\'e}, 2014]{mm14}
Marteau, C. and Math{\'e}, P. (2014).
\newblock General regularization schemes for signal detection in inverse
  problems.
\newblock {\em Math. Methods Statist.}, 23(3):176--200.

\bibitem[Math{\'e} and Pereverzev, 2002]{mp02}
Math{\'e}, P. and Pereverzev, S.~V. (2002).
\newblock Direct estimation of linear functionals from indirect noisy
  observations.
\newblock {\em J. Complexity}, 18(2):500--516.

\bibitem[Meister, 2009]{Meister}
Meister, A. (2009).
\newblock {\em Deconvolution problems in nonparametric statistics}, volume 193
  of {\em Lecture Notes in Statistics}.
\newblock Springer-Verlag, Berlin.

\bibitem[Natterer, 1986]{n86}
Natterer, F. (1986).
\newblock {\em The mathematics of computerized tomography}.
\newblock B. G. Teubner, Stuttgart; John Wiley \& Sons, Ltd., Chichester.

\bibitem[Nickl and Rei{\ss}, 2012]{NicklReiss}
Nickl, R. and Rei{\ss}, M. (2012).
\newblock A {D}onsker theorem for {L}\'evy measures.
\newblock {\em J. Funct. Anal.}, 263(10):3306--3332.

\bibitem[O'Sullivan, 1986]{os86}
O'Sullivan, F. (1986).
\newblock A statistical perspective on ill-posed inverse problems.
\newblock {\em Statist. Sci.}, 1(4):502--527.
\newblock With comments and a rejoinder by the author.

\bibitem[Pawley, 2006]{p06}
Pawley, J.~E., editor (2006).
\newblock {\em Handbook of Biological Confocal Microscopy}.
\newblock Springer.

\bibitem[Pickands, 1969]{p69}
Pickands, III, J. (1969).
\newblock Upcrossing probabilities for stationary {G}aussian processes.
\newblock {\em Trans. Amer. Math. Soc.}, 145:51--73.

\bibitem[Piterbarg, 1996]{piterbarg1996}
Piterbarg, V.~I. (1996).
\newblock {\em Asymptotic methods in the theory of {G}aussian processes and
  fields}, volume 148 of {\em Translations of Mathematical Monographs}.
\newblock American Mathematical Society, Providence, RI.
\newblock Translated from the Russian by V. V. Piterbarg, Revised by the
  author.

\bibitem[Rio, 1993]{Rio}
Rio, E. (1993).
\newblock Strong approximation for set-indexed partial-sum processes, via {KMT}
  constructions. {II}.
\newblock {\em Ann. Probab.}, 21(3):1706--1727.

\bibitem[Rohde, 2008]{Rohde2008}
Rohde, A. (2008).
\newblock Adaptive goodness-of-fit tests based on signed ranks.
\newblock {\em Ann. Statist.}, 36(3):1346--1374.

\bibitem[Rufibach and Walther, 2010]{RufibachWalther}
Rufibach, K. and Walther, G. (2010).
\newblock The block criterion for multiscale inference about a density, with
  applications to other multiscale problems.
\newblock {\em J. Comput. Graph. Statist.}, 19(1):175--190.
\newblock With supplementary data available online.

\bibitem[Schmidt-Hieber et~al., 2013]{SHMunkDuem2013}
Schmidt-Hieber, J., Munk, A., and D{\"u}mbgen, L. (2013).
\newblock Multiscale methods for shape constraints in deconvolution: Confidence
  statements for qualitative features.
\newblock {\em Ann. Statist.}, 41(3):1299--1328.

\bibitem[Schwartzman et~al., 2008]{sdt09}
Schwartzman, A., Dougherty, R.~F., and Taylor, J.~E. (2008).
\newblock False discovery rate analysis of brain diffusion direction maps.
\newblock {\em Ann. Appl. Stat.}, 2(1):153--175.

\bibitem[Sharpnack and Arias-Castro, 2016]{ExactScan}
Sharpnack, J. and Arias-Castro, E. (2016).
\newblock Exact asymptotics for the scan statistic and fast alternatives.
\newblock {\em Electron. J. Stat.}, 10(2):2641--2684.

\bibitem[Snyder et~al., 1995]{SHLFW:95}
Snyder, D.~L., Helstrom, C.~W., Lanterman, A.~D., White, R.~L., and Faisal, M.
  (1995).
\newblock Compensation for readout noise in ccd images.
\newblock {\em J. Opt. Soc. Am.}, 12(2):272--283.

\bibitem[Snyder et~al., 1993]{SWH:93}
Snyder, D.~L., White, R.~L., and Hammoud, A.~M. (1993).
\newblock Image recovery from data acquired with a charge-coupled-device
  camera.
\newblock {\em J. Opt. Soc. Am.}, 10(5):1014--1023.

\bibitem[S{\"o}hl and Trabs, 2012]{st12}
S{\"o}hl, J. and Trabs, M. (2012).
\newblock A uniform central limit theorem and efficiency for deconvolution
  estimators.
\newblock {\em Electron. J. Stat.}, 6:2486--2518.

\bibitem[Ta et~al., 2015]{t15}
Ta, H., Keller, J., Haltmeier, M., Saka, S.~K., Schmied, J., Opazo, F.,
  Tinnefeld, P., Munk, A., and Hell, S.~W. (2015).
\newblock Mapping molecules in scanning far-field fluorescence nanoscopy.
\newblock {\em Nat. Commun.}, 6:7977.

\bibitem[Tsybakov, 2000]{Tsybakov}
Tsybakov, A. (2000).
\newblock On the best rate of adaptive estimation in some inverse problems.
\newblock {\em C. R. Acad. Sci. Paris S\'er. I Math.}, 330(9):835--840.

\bibitem[van~der Vaart and Wellner, 1996]{VanWel1996}
van~der Vaart, A. and Wellner, J. (1996).
\newblock {\em Weak convergence and empirical processes. With applications to
  statistics.}
\newblock Springer, New York.

\bibitem[Walther, 2010]{Walther}
Walther, G. (2010).
\newblock Optimal and fast detection of spatial clusters with scan statistics.
\newblock {\em Ann. Statist.}, 38(2):1010--1033.

\bibitem[Willer, 2009]{w09}
Willer, T. (2009).
\newblock Optimal bounds for inverse problems with {J}acobi-type
  eigenfunctions.
\newblock {\em Statist. Sinica}, 19(2):785--800.

\end{thebibliography}

\end{document}